\documentclass[11pt,a4paper]{article}
\usepackage[T1]{fontenc}
\usepackage{lmodern}
\usepackage[margin=2.6cm]{geometry}
\usepackage{microtype}
\usepackage{needspace}
\usepackage{amsmath,amssymb,amsthm,booktabs,array,longtable}
\usepackage[colorlinks=true,linkcolor=blue,citecolor=blue,urlcolor=blue]{hyperref}
\hypersetup{pdftitle={Quantum gates from the middle convolution of twisted Burau representations}}

\newtheorem{theorem}{Theorem}[section]
\newtheorem{proposition}[theorem]{Proposition}
\newtheorem{lemma}[theorem]{Lemma}
\newtheorem{corollary}[theorem]{Corollary}
\newtheorem{problem}[theorem]{Problem}
\theoremstyle{definition}
\newtheorem{definition}[theorem]{Definition}
\newtheorem{remark}[theorem]{Remark}
\newtheorem{example}[theorem]{Example}

\newcommand{\C}{\mathbb C}
\newcommand{\Z}{\mathbb Z}
\newcommand{\R}{\mathbb R}
\newcommand{\Q}{\mathbb Q}
\newcommand{\e}{\mathrm e}
\newcommand{\Ht}{\widetilde H}
\newcommand{\Hq}{\overline H}
\newcommand{\Ker}{\operatorname{Ker}}
\newcommand{\Img}{\operatorname{Im}}
\newcommand{\Hom}{\operatorname{Hom}}
\newcommand{\sig}{\operatorname{sig}}
\newcommand{\spec}{\operatorname{spec}}
\newcommand{\KLM}{\mathcal{KLM}}
\newcommand{\LM}{\mathcal{LM}}
\newcommand{\ee}[1]{\e(#1)}
\newcommand{\lm}{\lambda}
\newcommand{\Wdef}{\mathcal W_{\mathrm{def}}}
\newcommand{\PB}{P\!B}

\title{Quantum gates from the middle convolution\\of twisted Burau representations}
\author{Haru Negami\\[2pt]
\small Chiba University, 1-33 Yayoi-cho, Inage-ku, Chiba 263-8522, Japan\\
\small \texttt{negamiharu@gmail.com}}
\date{September 25, 2026}

\begin{document}
\maketitle

\begin{abstract}
We study braid representations obtained from character-twisted Burau
inputs by the \mbox{Katz--Long--Moody} (KLM) construction, which corresponds
to multiplicative middle convolution for KZ-type equations. For $n\ge2$, write $q=e^{2\pi ia}$ and
$\tau=e^{2\pi ic}$ with $a,c>0$ and $nc+(n+1)a<1$. The known
invariant-form and signature results then give a positive-definite
invariant form for $\lambda=e^{2\pi il}$ with $0<l<1-nc-(n+1)a$,
yielding a unitary realization that, for each fixed input, is unique
up to simultaneous unitary conjugation. We address two questions
about this family.

First, we determine the spectra and invariant summands of the
$M=n(n+1)$-dimensional output and their consequences for quantum gates.
The generator spectrum recovers $(q,\tau)$. The representation
decomposes into two trivial summands, two reduced Burau summands, and an
$n(n-1)$-dimensional summand $\mathcal R_n$. The latter is isomorphic
to Long's reduced construction and is generically irreducible for
$n=3,4$. If $a$ or $c$ is irrational, every generator is non-Clifford
in every finite-dimensional Weyl--Heisenberg frame; if $1,a,c$ are
linearly independent over $\mathbb{Q}$, every generator is entangling
for every nontrivial bipartition. The braid image on the full output
is not projectively dense. For general positive-definite unitary
inputs, irreducibility of the full output as a braid representation
at a nonexceptional parameter forces the free-group input operators
to commute.

Second, we construct one orthonormal basis in which every generator has
a single $2(n+1)$-dimensional block, $n-2$ two-dimensional blocks, and
identity elsewhere. Each generator is a product of $2n$ rank-one phase
reflections. Using this block structure, we synthesize each generator,
in a fixed binary encoding, as a word of length $O(M)$ in one-qubit
unitaries and CNOT gates, without auxiliary qubits and including
direction preparation and its inverse. We also give finite-alphabet
approximation, conditioning and error bounds, and a matched-resource
comparison with a Quantum Shannon decomposition reference compiler.
This gate-count bound is linear in $M$, hence exponential in the
number of encoded qubits, and excludes classical preprocessing.
\end{abstract}

\medskip\noindent
\textbf{Keywords.} Braid group representations; Long--Moody
construction; invariant Hermitian forms; irreducibility; quantum gates;
entanglement; circuit synthesis.

\noindent
\textbf{MSC 2020.} 20F36, 20C15, 15B57, 81P68.

\section{Introduction}\label{sec:intro}

In topological quantum computation, exchanging non-Abelian anyons induces
unitary operators on an associated state space. These operators form a
representation of the braid group with noncommutative image
\cite{FLW,NayakReview,DRWReview}. These unitary operations are called
quantum gates. A quantum circuit specifies a sequence of such gates and
the subsystems on which they act. Our construction is algebraic;
an anyon realization of the resulting representations is not assumed.

\newpage
The Katz--Long--Moody (KLM) construction is an algebraic procedure
that produces a new representation of $F_n\rtimes B_n$ from a given
one, where $F_n$ is the free group of rank $n$ and $B_n$ is the braid
group on $n$ strands acting by Artin automorphisms
\cite{HN,FiniteCompanion}. It extends the Long--Moody construction
by equipping its output with a compatible action of $F_n$, so it can
be iterated for fixed $n$.

The construction and its application to
character-twisted Burau representations are established in
\cite{Long,BT,HN}. The invariant-form and signature results of
\cite{NKLM,Nsig} provide a criterion for positive-definite unitary
realizations. We first apply this criterion to make the parameter
domain explicit for the Burau family. With this preparation, the
present paper addresses two questions:
\begin{samepage}\begin{enumerate}
\renewcommand{\labelenumi}{(\theenumi)}
\setlength{\itemsep}{1pt}
\item What are the spectra and invariant summands, and which
quantum-gate properties do they imply?
\item How can all generators be expressed as short words in elementary
unitaries in one common coordinate system?
\end{enumerate}\end{samepage}
Items~(\ref{main:spectral}) and~(\ref{main:synthesis}) of
Section~\ref{subsec:main-results} state the corresponding results.
The spectral analysis gives non-Clifford and entangling properties;
the common coordinates give elementary factorizations with
approximation bounds.

\paragraph{Reading guide.}
Sections~\ref{subsec:intro-klm}--\ref{subsec:intro-forms} introduce
the objects and recall the prior results. Section~\ref{subsec:main-results}
separates the positive parameter domain obtained from those results
from the two main parts of the present analysis.
Tables~\ref{tab:results-burau} and~\ref{tab:results-other} collect
the conditions and conclusions, distinguishing applications of prior
results, proved statements, examples, and numerical observations.

Sections~\ref{sec:prelim}--\ref{sec:gauge} supply the general tools;
Proposition~\ref{prop:irreducible-input} gives the additional restriction
on seeds with irreducible output. Section~\ref{subsec:burau-signature}
applies the signature formula to the Burau family.
Section~\ref{sec:spectrum} then gives its spectra and invariant
summands, followed by gate consequences in Section~\ref{sec:gates}
and common-basis synthesis in Section~\ref{sec:synthesis}.

Appendix~\ref{app:quantum-background} explains the quantum-computing
terminology, with mathematical correspondences in
Tables~\ref{tab:quantum-dictionary} and~\ref{tab:circuit-dictionary}.
The principal notation is summarized in Table~\ref{tab:notation}
on page~\pageref{tab:notation}. Appendix~\ref{app:repro} lists the
reproducibility files. Appendix~\ref{app:additional-definite-regions}
continues the signature calculation beyond the domain used for the
main structure and synthesis results.

\setcounter{tocdepth}{1}
\tableofcontents

\subsection{Braid groups and the Katz--Long--Moody construction}
\label{subsec:intro-klm}\label{subsec:klm}

\begin{definition}[Braid groups]\label{def:braid-groups}
The \emph{braid group} $B_n$ is generated by
$\sigma_1,\ldots,\sigma_{n-1}$ with relations
\[
 \sigma_i\sigma_{i+1}\sigma_i=\sigma_{i+1}\sigma_i\sigma_{i+1}
 \quad(1\le i<n-1),\qquad
 \sigma_i\sigma_j=\sigma_j\sigma_i\quad(|i-j|\ge2).
\]
Geometrically, $\sigma_i$ exchanges the $i$th and $(i+1)$st strands;
group multiplication concatenates braids. The \emph{pure braid group}
is $\PB_n:=\Ker(B_n\to\mathfrak S_n)$, where
$\sigma_i\mapsto(i\ i+1)$ and $\mathfrak S_n$ is the symmetric group.
\end{definition}

\begin{definition}[Representations and irreducibility]\label{def:representation}
A finite-dimensional complex \emph{representation} of a group $G$
is a homomorphism $\rho:G\to\operatorname{GL}(V)$, where $V$ is
a finite-dimensional complex vector space. A subspace $W\subseteq V$
is \emph{invariant} if $\rho(\gamma)W\subseteq W$ for every
$\gamma\in G$. The representation is \emph{irreducible} if
$V\ne0$ and its only invariant subspaces are $0$ and $V$.
Its image is the subgroup $\rho(G)$.
\end{definition}
A representation of $B_n$ is thus specified by invertible matrices
satisfying the displayed braid relations.

The KLM construction takes a representation of a larger group as its
input. Write $F_n=\langle x_1,\ldots,x_n\rangle$ for the free group
and $\Gamma=F_n\rtimes B_n$ for the semidirect product defined by
\[
 \sigma_i x_i\sigma_i^{-1}=x_{i+1},\qquad
 \sigma_i x_{i+1}\sigma_i^{-1}=x_{i+1}^{-1}x_ix_{i+1},\qquad
 \sigma_i x_j\sigma_i^{-1}=x_j\quad(j\ne i,i+1).
\]
These are the Artin-action conventions of \cite[Section~2.1]{NKLM}.

\begin{definition}[Input representation]\label{def:input-representation}
An \emph{input} or \emph{seed} is a representation
$\rho:F_n\rtimes B_n\to\operatorname{GL}(V)$, with $\dim V=N$.
Following \cite[Definition~8]{NKLM}, write
\[
 g_j:=\rho(x_j),\qquad s_i:=\rho(\sigma_i).
\]
Thus the $s_i$ satisfy the braid relations and
\[
 s_i g_i s_i^{-1}=g_{i+1},\qquad
 s_i g_{i+1}s_i^{-1}=g_{i+1}^{-1}g_i g_{i+1},\qquad
 s_i g_j s_i^{-1}=g_j\quad(j\ne i,i+1).
\]
\end{definition}
In particular, arbitrary quantum gates do not by themselves constitute
an input: these compatibility relations must also hold.

\begin{definition}[Twisted Long--Moody operators]
\label{def:lm-operators}
Following \cite[Definitions~8 and~10]{NKLM}, let
$\lambda\in\C\setminus\{0,1\}$.
In the direct-sum basis of $V^{\oplus n}$,
define $G_i:=\rho^{\mathrm{LM}}_\lambda(x_i)$ to be the identity
except in block row $i$, where
\begin{equation}\label{eq:free-matrices}
 (G_i)_{i,\bullet}=
 \bigl(\lambda(g_1-I_N),\ldots,\lambda(g_{i-1}-I_N),\
       \lambda g_i,\ g_{i+1}-I_N,\ldots,g_n-I_N\bigr).
\end{equation}
The braid operators are
\begin{equation}\label{eq:braid-matrices}
\begin{aligned}
 S_i&:=\rho^{\mathrm{LM}}_\lambda(\sigma_i)
 =s_i^{\oplus n}
 \begin{pmatrix}
 I_{N(i-1)}&0&0\\
 0&R_i&0\\
 0&0&I_{N(n-i-1)}
 \end{pmatrix},\\[2pt]
 R_i&:=\begin{pmatrix}0&g_i\\I_N&I_N-g_{i+1}\end{pmatrix}.
\end{aligned}
\end{equation}
The factors $I_0$ at the ends are omitted. Equivalently, for
$v=(v_1,\ldots,v_n)\in V^{\oplus n}$, with each $v_j\in V$, and
$1\le i\le n-1$, all components are given by
\[
\begin{aligned}
S_i(v_1,\ldots,v_n)=\bigl(&s_i v_1,\ldots,s_i v_{i-1},\ s_i g_i v_{i+1},\\
 &s_i\bigl(v_i+(I_N-g_{i+1})v_{i+1}\bigr),\
 s_i v_{i+2},\ldots,s_i v_n\bigr).
\end{aligned}
\]
Empty initial or final strings of components are omitted.
These matrices define the representation
$\rho^{\mathrm{LM}}_\lambda:F_n\rtimes B_n\to
\operatorname{GL}(V^{\oplus n})$ of
\cite[Definitions~8 and~10]{NKLM}; we denote this module by $\LM_\lambda(V)$.
\end{definition}
We call the $j$th summand of $V^{\oplus n}$ the $j$th \emph{slot};
$v_j\in V$ is the component of $v=(v_1,\ldots,v_n)$ in that slot.
Thus $S_i$ mixes only slots $i$ and $i+1$, and acts by $s_i$ on
every other slot.

\begin{definition}[Katz--Long--Moody quotient]\label{def:klm-quotient}
Following \cite[Lemma~3 and Definition~11]{NKLM}, define subspaces of
$V^{\oplus n}$ by
\[
\begin{aligned}
K&:=\{(v_1,\ldots,v_n)\in V^{\oplus n}:\
                 v_j\in\Ker(g_j-I_N)\text{ for every }j\},\\
L_\lambda&:=\bigcap_{j=1}^n\Ker(G_j-I_{nN}),
\qquad G_j=\rho^{\mathrm{LM}}_\lambda(x_j).
\end{aligned}
\]
Thus $K$ is the direct sum of the local fixed spaces, with
$\Ker(g_j-I_N)$ in slot $j$, and $L_\lambda$ is the common fixed
space of the output free-group operators.
Put $P_s:=g_1\cdots g_s$. The same subspace $L_\lambda$ has the
component description
\[
L_\lambda=\left\{(v_1,\ldots,v_n)\in V^{\oplus n}:\
\begin{array}{l}
v_j=g_{j+1}v_{j+1}\quad(1\le j<n),\\
v_n=\lambda P_n v_n
\end{array}\right\}.
\]
These are invariant subspaces and $K\cap L_\lambda=0$ for
$\lambda\ne1$ \cite[Lemma~3, Definition~11, and proof of Theorem~16(iii)]{NKLM}.
The \emph{Katz--Long--Moody representation} is the induced action on
\begin{equation}\label{eq:quotient}
 Q_\lambda:=\KLM_\lambda(V)
 =\LM_\lambda(V)/(K+L_\lambda),\qquad
 \rho^{\mathrm{KLM}}_\lambda(\gamma)[v]
 =[\rho^{\mathrm{LM}}_\lambda(\gamma)v].
\end{equation}
Equivalently,
\begin{equation}\label{eq:L}
 \iota_\lambda:\Ker(\lambda P_n-I_N)\longrightarrow V^{\oplus n},\qquad
 v\longmapsto(g_2\cdots g_nv,\ g_3\cdots g_nv,\ldots,g_nv,v)
\end{equation}
has image $L_\lambda$. A parameter is \emph{exceptional} when
$\Ker(\lambda P_n-I_N)\ne0$, and \emph{nonexceptional} otherwise.
Following \cite[Section~6]{FiniteCompanion}, set
\[
 U_0:=V^{\oplus n}/K,\qquad
 Q_\lambda\cong U_0\big/\bigl((K+L_\lambda)/K\bigr),\qquad
 R:=\dim U_0=nN-\dim K.
\]
The subscript $0$ labels the fixed intermediate quotient; it is not
specialization at $\lambda=0$. At nonexceptional parameters,
$Q_\lambda$ is canonically identified with $U_0$.
The braid representation on the whole space $U_0$ is called the
\emph{full output}, in contrast to its individual invariant summands
and Long's reduced construction (Proposition~\ref{prop:long-reduced}).
\end{definition}
Thus KLM constructs compatible matrix actions on a new space from the
input matrices. To obtain a unitary realization, we must additionally
establish positive definiteness of an invariant form on that space.

\subsection{Character-twisted Burau inputs}
\label{subsec:intro-burau}
The Burau representation supplies the concrete input family studied here.
We label the strands of $B_{n+1}$ by $0,\ldots,n$ and its generators
by $\sigma_0,\ldots,\sigma_{n-1}$; the subgroup $B_n$ braids strands
$1,\ldots,n$.

\begin{definition}[Burau and reduced Burau representations]\label{def:burau}
For $q\in\C^\times$, the \emph{unreduced Burau representation}
$\rho_q:B_{n+1}\to\operatorname{GL}_{n+1}(\C)$ is given in the basis
$e_0,\ldots,e_n$ by
\[
 \rho_q(\sigma_j)
 =I_j\oplus\begin{pmatrix}1-q&q\\1&0\end{pmatrix}\oplus I_{n-j-1}
 \qquad(0\le j<n).
\]
The linear functional
$\ell_q(v_0,\ldots,v_n):=\sum_{j=0}^n q^jv_j$ is invariant.
The restriction of $\rho_q$ to the $n$-dimensional space
$V_{\mathrm{red}}:=\Ker\ell_q\subset\C^{n+1}$ is the
\emph{reduced Burau representation}, denoted by
$\mathrm{Bur}^{\mathrm{red}}_{n+1}(q)$.
The vector $\mathbf1_{n+1}=(1,\ldots,1)^T$ is fixed; whenever
$1+q+\cdots+q^n\ne0$, it gives
$\C^{n+1}=\C\mathbf1_{n+1}\oplus V_{\mathrm{red}}$.
The same formulas with $n+1$ replaced by $n$ (and the generators
relabelled as $\sigma_1,\ldots,\sigma_{n-1}$) define $\rho_q$ on $B_n$
and $\mathrm{Bur}^{\mathrm{red}}_n(q)$, of dimension $n-1$;
the relation between $V_{\mathrm{red}}$ and
$\mathrm{Bur}^{\mathrm{red}}_n(q)$ on $\mathcal T_n$ is stated in
Proposition~\ref{prop:decomp}.
\end{definition}
For the seed, the matrices above are written in the basis
$e_0,\ldots,e_n$. They need not be unitary with respect to the
standard inner product. We refer to this basis, before unitarization,
as the \emph{raw Burau basis}, and to its coordinates as raw Burau coordinates.
For $q=e^{2\pi ia}$ with $0<a<1/(n+1)$ the representation preserves
a positive definite Hermitian form, so one common change of basis
makes its matrices unitary \cite{Squier}.

\begin{definition}[Character-twisted Burau seed]\label{def:burau-seed}
Identify $F_n\rtimes B_n$ with the subgroup $B_{1,n}\subset B_{n+1}$
whose induced permutation fixes strand $0$, using
$x_1=\sigma_0^2$ and $x_{j+1}=\sigma_jx_j\sigma_j^{-1}$.
Let $\tau\in\C^\times$ be a scalar parameter. The input matrices are
\[
 g_j^{\mathrm{tB}}:=\tau\rho_q(x_j),\qquad s_i^{\mathrm{tB}}:=\rho_q(\sigma_i).
\]
The scalar $\tau$ defines a one-dimensional character
$\chi_\tau:F_n\rtimes B_n\to\C^\times$ by
$\chi_\tau(x_j)=\tau$ and $\chi_\tau(\sigma_i)=1$.
The input representation is the tensor product of this character
with the restriction of $\rho_q$.
The Artin relations ensure that it is a representation of
$F_n\rtimes B_n$ \cite{Long,BT}.
The superscript $\mathrm{tB}$ distinguishes this family from a general
input: in the general construction we substitute
$g_j=g_j^{\mathrm{tB}}$ and $s_i=s_i^{\mathrm{tB}}$.
\end{definition}
Here $q$ is the Burau parameter (often written $t$), $\tau$ is the
scalar twisting parameter, and $\lambda$ is the KLM
convolution parameter. In particular, $\tau$ is not a parameter for
projectivizing the output. For $|\tau|=1$ and the above range of $a$,
the entire seed preserves a common positive definite form.

Braid representations used in quantum computation also include Jones
representations at suitable roots of unity, which support universal
quantum computation \cite{FLW}, and Ising-anyon representations,
whose braiding gates lie in the Clifford group and are not universal
by braiding alone \cite{AGW}. Once a common positive-definite invariant
inner product and an orthonormal basis are chosen, the braid generators
act as unitary quantum gates \cite{BarencoGates}. Their computational
capabilities depend on the representation and the chosen encoding. Our purpose is to apply KLM to an explicit family of
compatible input operators and analyze the resulting gates.

\subsection{Invariant Hermitian forms and the signature formula}
\label{subsec:intro-forms}
There are two conventions for the word ``unitary'' in the relevant
literature. Long \cite[Definition~2.2]{Long} requires preservation of
a nondegenerate Hermitian form, without assuming positive definiteness.
N--KLM \cite[Definition~7]{NKLM} follows this convention.
We first recall the preservation result in that sense, and then state
the positive-definite convention used for quantum gates in this paper.

\begin{definition}[Hermitian forms and signatures]\label{def:hermitian}
In a chosen basis, a \emph{Hermitian form} is
$h(v,w)=v^\dagger H_0w$, with $H_0^\dagger=H_0$.
Its \emph{radical} is $\Ker H_0$; it is \emph{nondegenerate} if
$\Ker H_0=0$, and \emph{positive definite} if
$v^\dagger H_0v>0$ for all $v\ne0$.
It is \emph{negative definite} if $-H_0$ is positive definite,
and \emph{definite} if it has either sign.
It is \emph{positive semidefinite} if $v^\dagger H_0v\ge0$ for all $v$.
Its signature $\sig(H_0)=(n_+(H_0),n_-(H_0))$ counts positive and
negative eigenvalues with multiplicity.
A representation preserves $h$ if
$\rho(\gamma)^\dagger H_0\rho(\gamma)=H_0$ for every $\gamma$.
\end{definition}

\begin{theorem}[{Preservation of an invariant form,
\cite[Theorems~8--9]{NKLM}}]\label{thm:klm-form}
Suppose the input $\rho$ preserves a nondegenerate Hermitian form $H_0$.
Let $|\lambda|=1$, $\lambda\ne1$, and choose a square root of $\lambda$.
The block matrix $\Ht(\lambda)$ on $V^{\oplus n}$ with blocks
\begin{equation}\label{eq:blocks}
 \Ht_{jk}(\lambda)=
 \begin{cases}
 \lambda^{-1/2}H_0(g_j^{-1}-I_N)(g_k-I_N),&j<k,\\
 \lambda^{-1/2}H_0(g_j^{-1}-\lambda I_N)(g_j-I_N),&j=k,\\
 \lambda^{1/2}H_0(g_j^{-1}-I_N)(g_k-I_N),&j>k
 \end{cases}
\end{equation}
is Hermitian and invariant under the whole representation
$\rho^{\mathrm{LM}}_\lambda$. Its radical is exactly $K+L_\lambda$.
Consequently it induces a nondegenerate invariant form
$\Hq(\lambda)$ on $Q_\lambda$.
\end{theorem}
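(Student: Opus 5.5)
The plan is to verify the three assertions of Theorem~\ref{thm:klm-form} in turn: Hermiticity, invariance, and the identification of the radical. The descent to $Q_\lambda$ then follows formally.

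\emph{Hermiticity.} Use that the input preserves $H_0$, so $g_j^\dagger H_0=H_0g_j^{-1}$. Then for $j<k$,
\[
\Ht_{jk}^\dagger=\bar\lambda^{-1/2}(g_k^\dagger-I)(g_j^{-\dagger}-I)H_0
=\lambda^{1/2}H_0(g_k^{-1}-I)(g_j-I)=\Ht_{kj},
\]
using $|\lambda|=1$ and $g^{-\dagger}H_0=H_0g$. For the diagonal blocks one expands $(g_j^{-1}-\lambda)(g_j-1)=2\cdot I-\lambda g_j-g_j^{-1}+(\lambda-1)I$. After multiplying by $\lambda^{-1/2}$ this becomes
\[
H_0\bigl((\lambda^{1/2}+\lambda^{-1/2})I-\lambda^{1/2}g_j-\lambda^{-1/2}g_j^{-1}\bigr),
\]
which is visibly Hermitian.

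\emph{Invariance.} A cleaner route than brute force is to write $\Ht=D^\dagger\,C\,D$, where $D=\operatorname{diag}(g_j-I)$ and $C$ is a block Toeplitz-type matrix: $\lambda^{\mp1/2}H_0$ off the diagonal (with signs according to $j<k$ or $j>k$), and a diagonal correction. Invariance must be checked only on generators $G_i$ and $S_i$. For $G_i$, which differs from the identity only in block row $i$, the identity $G_i^\dagger\Ht G_i=\Ht$ reduces to comparing row and column $i$. This uses $H_0g_i=g_i^{-\dagger}H_0$ and the telescoping identity
\[
\sum_{k}\Ht_{ik}(g_k-I)\text{-type terms}.
\]
For $S_i$, one factors $S_i=s_i^{\oplus n}\cdot(I\oplus R_i\oplus I)$. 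The factor $s_i^{\oplus n}$ conjugates the blocks by $s_i$ and permutes the roles of $g_i,g_{i+1}$ via the Artin relations. The $R_i$ part then undoes this on the $2\times2$ block of slots $i,i+1$ and on their interaction with the other slots. This mirrors the computation for Long's form, and I would organize it by checking the $2\times2$ block and the mixed blocks $(i,k),(i+1,k)$ separately.

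\emph{Radical.} The inclusion $K+L_\lambda\subseteq\Ker\Ht$ is direct. Vectors in $K$ are killed by $D$. For $v\in L_\lambda$, use $v_j=g_{j+1}v_{j+1}$ and $v_n=\lambda P_nv_n$ to sum the block row, which telescopes to zero. The reverse inclusion is the main obstacle. Here I would pass to $w=Dv$, i.e.\ $w_j=(g_j-I)v_j$. Then $\Ht v=0$ becomes a linear system in the $w_j$ involving partial sums $\sum_{k>j}w_k$ and $\sum_{k<j}w_k$. Taking differences of consecutive block equations gives recursive relations $v_j=g_{j+1}v_{j+1}$ modulo $\Ker(g_j-I)$, together with a boundary relation forcing $v_n-\lambda P_nv_n\in$ a fixed-space correction. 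Nondegeneracy of $H_0$ and $\lambda\neq1$ are used to cancel $H_0$ and separate the $K$ component. This exhibits $v$ as a sum of an element of $L_\lambda$ and an element of $K$. Finally, since the radical is invariant and equals $K+L_\lambda$, $\Ht$ descends to a nondegenerate invariant Hermitian form $\Hq(\lambda)$ on $Q_\lambda$.
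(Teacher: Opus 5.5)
The paper does not prove this statement itself; it is imported from \cite[Theorems~8--9]{NKLM}, so your proposal has to stand on its own.

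Your Hermiticity computation is correct. So is the inclusion $K+L_\lambda\subseteq\Ker\Ht$: the telescoping closes once you set $v_0:=g_1v_1=\lambda^{-1}v_n$, so that $(g_k-I)v_k=v_{k-1}-v_k$ for all $k$.

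\emph{Reverse inclusion.} This is the genuine gap. You flag it as the main obstacle but do not resolve it, and the mechanism you sketch would not work. The relation $v_j\equiv g_{j+1}v_{j+1}$ modulo $\Ker(g_j-I)$ already fails on $K\subseteq\Ker\Ht$. For $v\in K$ one has $v_j-g_{j+1}v_{j+1}=v_j-v_{j+1}$ with $v_{j+1}\in\Ker(g_{j+1}-I)$, and this need not lie in $\Ker(g_j-I)$. Also, a condition on $v_n-\lambda P_nv_n$ alone cannot recover the decomposition.

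The missing idea is that the $L_\lambda$-component is read off from the augmentation $S:=\sum_kw_k$, where $w_k:=(g_k-I)v_k$. Cancel $\lambda^{-1/2}H_0$, substitute $\sum_{k>j}w_k=S-\sum_{k<j}w_k-w_j$, and divide by $\lambda-1$. The $j$th block equation of $\Ht v=0$ then becomes
\[
 w_j=(g_j^{-1}-I)X_j,\qquad X_j:=\sum_{k<j}w_k+\frac{S}{\lambda-1}.
\]
Hence $X_{j+1}=X_j+w_j=g_j^{-1}X_j$, so $X_j=P_{j-1}^{-1}X_1$ (with $P_0=I$) and $X_{n+1}=P_n^{-1}X_1$. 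By definition $X_{n+1}=\lambda X_1$, so $X_1\in\Ker(\lambda P_n-I)$. Now $y:=\iota_\lambda(-\lambda X_1)\in L_\lambda$ has $y_j=-P_j^{-1}X_1$, hence $(g_j-I)y_j=(g_j^{-1}-I)P_{j-1}^{-1}X_1=w_j$ for every $j$, i.e.\ $v-y\in K$. At a nonexceptional $\lambda$ this gives $S=0$ and $v\in K$ directly.

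\emph{Invariance.} This is the bulk of the theorem, and it is only described; two of your structural claims are also inaccurate.

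First, write $G_i=I+E$ with $E$ supported in block row $i$. The difference $G_i^\dagger\Ht G_i-\Ht$ is not confined to row and column $i$. For $j,k\ne i$ its $(j,k)$ block is $E_{ij}^\dagger\Ht_{ik}+\Ht_{ji}E_{ik}+E_{ij}^\dagger\Ht_{ii}E_{ik}$. This is a scalar multiple of $H_0(g_j^{-1}-I)\,Y\,(g_k-I)$, where $Y:=\lambda^{-1/2}(g_i^{-1}-I)+\lambda^{1/2}(g_i-I)+\lambda^{-1/2}(g_i^{-1}-\lambda)(g_i-I)$ vanishes identically. The row-$i$ and column-$i$ blocks still have to be computed.

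Second, for $S_i$, conjugation by $s_i^{\oplus n}$ does not merely permute $g_i$ and $g_{i+1}$. The Artin relations give $s_i^{-1}g_{i+1}s_i=g_i$ but $s_i^{-1}g_is_i=g_ig_{i+1}g_i^{-1}$, and the $R_i$-factor must absorb exactly this.

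A convenient way to organize the braid case is to write $\Ht=\lambda^{-1/2}H^-+\lambda^{1/2}(H^-)^\dagger$. Here $H^-$ is $\lambda$-independent, with blocks $H_0(g_j^{-1}-I)(g_k-I)$ for $j<k$, $H_0(I-g_j^{-1})$ on the diagonal, and zero below. Since $S_i$ does not involve $\lambda$, it suffices to verify $S_i^\dagger H^-S_i=H^-$.

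As written, your proposal proves Hermiticity and the easy inclusion, but neither invariance nor the reverse inclusion.
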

The ambient form is nondegenerate precisely when $K=L_\lambda=0$;
the quotient form is nondegenerate even at an exceptional parameter.
Neither assertion implies positive definiteness of the output,
even when the input form is positive definite.

\begin{definition}[Eigenangles and rank data]\label{def:eigenangles}
Write $\ee{x}:=\exp(2\pi ix)$. For a unit-modulus eigenvalue
$\ee{\theta}$, its \emph{eigenangle} is the representative
$\theta\in[0,1)$, counted with algebraic multiplicity.
Assume now that $H_0>0$ and choose an $H_0$-orthonormal basis, so
$H_0=I_N$. Write $\lambda=\ee l$ with $0<l<1$ and fix
$\lambda^{1/2}=\exp(\pi il)$.
Let $a_j^{(k)}$ and $\beta_k(l)$ be the eigenangles of $g_j$ and
$\lambda P_n$, respectively, and put
\[
 \begin{gathered}
 \kappa:=\dim K,\quad R=nN-\kappa,\quad
 r(l):=\dim\Ker(\lambda P_n-I_N),\\
 m(l):=Nl+\sum_{j,k}a_j^{(k)}-\sum_k\beta_k(l)\in\Z.
 \end{gathered}
\]
\end{definition}

\begin{theorem}[{Signature formula,
\cite[Theorem~1.1 and Corollary~1.2]{Nsig}}]\label{thm:import-sig}
For positive-definite input and every $l\in(0,1)$,
\[
 \sig\Hq(\ee l)=\bigl(R-m(l),\ m(l)-r(l)\bigr).
\]
Thus the canonical quotient form is positive definite if and only if
$m(l)=r(l)$, negative definite if and only if $m(l)=R$, and definite exactly when
$m(l)\in\{r(l),R\}$. At nonexceptional parameters, $r(l)=0$.
\end{theorem}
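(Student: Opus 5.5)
The statement is imported from \cite{Nsig}, but here is how I would reprove it directly from Theorem~\ref{thm:klm-form}. The first observation is that $\Ht(\lambda)$ depends only on the tuple $(g_1,\ldots,g_n)\in U(N)^n$ and on $l$. The braid operators $s_i$ enter only through invariance, which we do not need for counting signs. So I would prove the signature identity for \emph{arbitrary} tuples of unitaries, with $H_0=I_N$, and then deform freely inside $U(N)^n\times(0,1)$.

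\textbf{Step 1: reduce to the ambient form.} Since the radical of $\Ht(\lambda)$ is exactly $K+L_\lambda$, we have $\sig\Hq(\ee l)=(n_+(\Ht),n_-(\Ht))$. Next I would use $(g_j-I)^\dagger=g_j^{-1}-I$ to write
\[
\Ht(\lambda)=D^\dagger(C_\lambda\otimes I_N)D+\lambda^{-1/2}(1-\lambda)\,\mathrm{diag}\bigl((g_j-I)^\dagger\bigr)D ,
\]
where $D=\mathrm{diag}(g_j-I)$ and $C_\lambda$ is the scalar $n\times n$ matrix with entries $\lambda^{\mp1/2}$ off the diagonal and $\lambda^{-1/2}$ on it. This exhibits $\Ht=D^\dagger T D$ with $T$ Hermitian on $\Img D\cong U_0$. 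It makes precise that $\kappa=\dim\Ker D$ only removes zero eigenvalues, and it reduces everything to a form $T$ on the $R$-dimensional space $U_0$.

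\textbf{Step 2: rigidity and the commuting model.} The rank of $T$ equals $R-r(l)$. Therefore the signature is locally constant on the open set where $\dim K$ is fixed and $\lambda P_n$ has no eigenvalue $1$. The right-hand side $(R-m(l),m(l)-r(l))$ is locally constant on the same set, because $m(l)$ is a continuous $\Z$-valued function away from jumps of some $\beta_k$ across $0$ and of the $a_j^{(k)}$ across $0$.

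I would then connect any tuple, within a fixed stratum, to a simultaneously diagonal tuple. There the problem splits into $N$ rank-one problems with characters $g_j=\ee{a_j}$. For $N=1$ the matrix $T$ is an explicit $n'\times n'$ Toeplitz-type matrix, where $n'$ counts the $a_j\neq0$. Its signature can be computed by induction on $n'$ using Schur complements, or by diagonalizing via the classical Deligne--Mostow hypergeometric computation. The expected answer there is $n_-=\lfloor l+\sum a_j\rfloor$, which equals $m(l)$ when $r=0$.

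\textbf{Step 3: wall crossings.} This is the main obstacle. The path of Step~2 must cross walls where an eigenvalue of $\lambda P_n$ passes through $1$ (equivalently some $\beta_k$ jumps by $1$). I would show that both sides change in the same way across such a wall.

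At a simple crossing, $\iota_\lambda$ from \eqref{eq:L} gives an explicit null vector $w$ of $\Ht$. First-order perturbation gives the derivative of the corresponding eigenvalue of $T$ as the pairing $w^\dagger\,\partial\Ht\,w$. I would compute this pairing and show that its sign matches the direction in which $\beta_k$ crosses $0$. That direction is governed by $\tfrac{d}{dt}\arg$ of the eigenvalue of $\lambda P_n$, and the conclusion is that $n_-$ changes by exactly $-\Delta\sum_k\beta_k$.

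Walls where some $a_j^{(k)}$ hits $0$ change $\kappa$ and $R$ simultaneously. I would handle them by checking that the vanishing eigenvalue of $T$ has the sign predicted by the jump of $m$, again by a first-order computation using the explicit form of the diagonal blocks. At an exceptional $l$ itself, I would take the one-sided limit $l\to l^\pm$ and subtract the $r(l)$ null directions, which yields the correction $m(l)-r(l)$.

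Finally, the equivalences stated after the formula (positive definite iff $m=r$, negative definite iff $m=R$) are immediate. At nonexceptional parameters $r(l)=0$.
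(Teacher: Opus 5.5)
The paper contains no proof of this statement. It is quoted from \cite[Theorem~1.1, Corollary~1.2]{Nsig}, and only some ingredients of that work reappear here: the pencil of Remark~\ref{rem:pencil} and the wall jumps of \cite[Prop.~6.8]{Nsig}. Your route is a genuinely different homotopy argument: forget the braid part, use local constancy of both sides off the walls, deform to a commuting tuple, and reduce to rank one. It can be completed, and it gives a self-contained proof for arbitrary unitary tuples, with the closed formula reduced to one explicit rank-one computation. As written, however, it contains an error and leaves the two substantive computations undone.

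\emph{Step~1.} The displayed factorization is wrong. The diagonal correction is $\lambda^{-1/2}(1-\lambda)(g_j-I)$, not $\lambda^{-1/2}(1-\lambda)(g_j-I)^\dagger(g_j-I)$. Already for $N=1$, $g_1=\ee a$, the $(1,1)$ entry of your right-hand side is $(2\lambda^{-1/2}-\lambda^{1/2})\,4\sin^2(\pi a)\notin\R$, instead of $4\sin(\pi a)\sin(\pi(a+l))$. The factorization is not needed anyway, since $K$ lies in the radical.

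What is needed is the radical identity itself for \emph{arbitrary} unitary tuples. Theorem~\ref{thm:klm-form} is stated for seeds, and your deformation leaves the class of seeds, since a seed has mutually conjugate $g_j$. The identity does hold. Put $y_k=(g_k-I)x_k$ and $S_j=\lambda\sum_{k<j}y_k+\sum_{k\ge j}y_k$. Then the $j$th block row of $\lambda^{1/2}\Ht x$ is $g_j^{-1}S_j-S_{j+1}$, and $S_{n+1}=\lambda S_1$. So $\Ht x=0$ forces $S_1\in\Ker(\lambda P_n-I)$ and $x\in K+L_\lambda$.

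\emph{Step~3.} You leave the sign of $w^\dagger\partial\Ht\,w$ uncomputed and treat only simple crossings. The wall statement, however, needs $r(l)\ge2$ as well (e.g.\ $r=n-1$ in Theorem~\ref{thm:twisted-burau}(2)). Both issues are settled in the $l$-direction by the pencil $\Ht(\ee l)=\cos(\pi l)D^\dagger D+\sin(\pi l)B$, with $D$ the augmentation map; this is a blockwise identity valid for any unitary tuple. For $w,w'\in L_{\lambda_0}$, $\lambda_0=\ee{l_0}$, the relation $\Ht(\lambda_0)w'=0$ gives
\[
 w^\dagger\,\partial_l\Ht\,w'=-\frac{\pi}{\sin(\pi l_0)}\,(Dw)^\dagger(Dw'),
 \qquad D\iota_{\lambda_0}(v)=(P_n-I)v=(\lambda_0^{-1}-1)v\ne0 .
\]
So $\partial_l\Ht$ is negative definite on all of $L_{\lambda_0}$. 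Hence raising $l$ through $l_0$ moves exactly $r(l_0)$ eigenvalues from positive to negative, and $n_-(l_0)=n_-(l_0^-)$. Since $m$ is right-continuous with jump $r(l_0)$ at $l_0$, this gives both the propagation across walls and the value $m(l_0)-r(l_0)$ on the wall.

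Crossings in $g$-directions then never need to be computed, and no transversality is needed. Along a path $g(t)$ in the stratum, the set of $t$ for which the identity holds at $(g(t),l)$ for one (equivalently every) nonexceptional $l$ is open and closed, by local constancy of both sides. Fixing $\kappa$ along a path fixes each $\dim\Ker(g_j-I)$ by semicontinuity, so walls where some $a_j^{(k)}\to0$ never occur.

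\emph{Base case.} The rank-one signature is where the offset $\sum_{j,k}a_j^{(k)}-\sum_k\beta_k$ really enters, and you only state the expected answer. Appealing to Deligne--Mostow would first require identifying $\Ht$ with a twisted intersection form, which you have not done. A direct proof is available. Take $N=1$ and $g_j=\ee{a_j}$ with $0<a_j<1$; trivial $g_j$ only contribute zero rows and columns. A rank-one-update determinant computation gives the leading principal minors
\[
 d_k=4^k\prod_{j\le k}\sin(\pi a_j)\,\sin^{k-1}(\pi l)\,\sin\bigl(\pi(l+a_1+\cdots+a_k)\bigr).
\]
Perturb so that no partial sum $l+a_1+\cdots+a_k$ is an integer. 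Then $\operatorname{sgn}d_k=(-1)^{\lfloor l+a_1+\cdots+a_k\rfloor}$, and consecutive floors differ by $0$ or $1$. Jacobi's sign-change rule therefore gives $n_-=\lfloor l+\sum_ja_j\rfloor=m(l)$.
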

This determines when the canonical form yields a positive inner
product after a possible change of sign. For reducible output it is
a sufficient condition for the existence of a positive definite
invariant inner product, but need not be
necessary: different invariant summands may require different signs.

\paragraph{Positive input and finite images.}
Positive definiteness does not imply that the input or output image
is finite. A componentwise criterion for the existence of a
positive-definite invariant inner product is given in
\cite[Theorem~3.3]{FiniteCompanion}. Under the additional assumptions
that the full input image is finite and $\lambda\ne1$ is a root of
unity, finiteness of the full KLM image is equivalent to definiteness
of the canonical forms on all nonzero convolutions of irreducible
free-group input constituents, at every embedding of a coefficient
number field containing $\lambda$
\cite[Theorem~2.10 and Corollary~3.10]{FiniteCompanion}.
The signs may differ between constituents and embeddings.
This arithmetic condition concerns the joint image of the generators;
finite order of each generator alone does not suffice.
Our gate family is studied without a finite-image assumption.

\begin{definition}[Unitary and unitarizable representations]\label{def:unitary}
From now on, a representation on a space equipped with a
\emph{positive definite} form $H$ is called \emph{unitary} if
$\rho(\gamma)^\dagger H\rho(\gamma)=H$ for every $\gamma$.
It is \emph{unitarizable} if such an $H$ exists, equivalently if one
change of basis makes all its matrices unitary for the standard inner
product. We write
$U(N)=\{U\in\operatorname{GL}_N(\C):U^\dagger U=I_N\}$.
\end{definition}
If $A_i^\dagger H A_i=H$ and $H=\mathsf{C}^\dagger\mathsf{C}>0$,
then $U_i=\mathsf{C} A_i\mathsf{C}^{-1}$ is a unitary realization
in one common orthonormal basis. For fixed input the operators $S_i$
in \eqref{eq:braid-matrices} are independent of $\lambda$; hence so
are their induced operators $A_i$ on $U_0$.
At nonexceptional parameters, $\lambda$ varies the invariant form on this fixed
braid representation, whereas $q$ and $\tau$ vary the input.
This independence does not apply to the free-group action or to the
additional quotient at exceptional parameters.

\begin{remark}[KZ-type equations]
On the pure braid group $\PB_{n+1}\cong F_n\rtimes\PB_n$, the KLM
construction corresponds to Haraoka's multiplicative middle convolution
for Knizhnik--Zamolodchikov (KZ)-type equations
\cite[Theorem~6 and Corollary~7]{NKLM}; see also \cite{HN,Katz,DRMiddle}.
Thus the algebraic transformation also has an interpretation in terms
of monodromy and integral transformations of differential equations.
The gate results below use its matrix formulation.
\end{remark}

The construction of a unitary realization follows the sequence
\[
 \text{compatible seed}\ \longmapsto\ (A_i)
 \ \longmapsto\ H=\mathsf C^\dagger\mathsf C>0
 \ \longmapsto\ U_i=\mathsf C A_i\mathsf C^{-1},
\]
where $H$ is invariant under every $A_i$ and the same factor
$\mathsf C$ is used for all generators.

\subsection{Parameter domain and main results}\label{subsec:main-results}
Appendix~\ref{app:quantum-background} defines the quantum terminology:
projective order (Definition~\ref{def:projective-entangling}), entangling
operators and frames (Definitions~\ref{def:entangling-gates}
and~\ref{def:wh-frame}), and encodings and phase reflections
(Definitions~\ref{def:encodings}--\ref{def:phase-reflections}).

\paragraph{Parameter domain from the signature formula.}
Fix $n\ge2$. For the $(n+1)$-dimensional character-twisted Burau
seed of Definition~\ref{def:burau-seed}, take $q=\ee a$, $\tau=\ee c$,
and put
\[
 M=n(n+1),\qquad
 \mathcal T_n=\{(a,c):a>0,\ c>0,\ nc+(n+1)a<1\}.
\]
On $\mathcal T_n$, the seed eigenvalues give $K=0$ and $U_0=\C^M$.
Applying \cite[Cor.~6.10]{Nsig} to the boundary spectrum computed
in Lemma~\ref{lem:fulltwist} gives a positive canonical form for
\[
 0<l<1-nc-(n+1)a.
\]
Theorem~\ref{thm:twisted-burau} records this explicit specialization;
Proposition~\ref{prop:twisted-unitarization} applies the known
unitarization criterion. For fixed $(q,\tau)$, the braid matrices are
independent of $l$ and their unitary realizations are simultaneously
unitarily equivalent. These facts specify the family studied below.

\paragraph{Main results for this family.}
On $\mathcal T_n$, we establish the following structural and synthesis
statements.
\begin{enumerate}
\renewcommand{\labelenumi}{(\theenumi)}
\item \textbf{Invariant summands and spectral consequences.}\label{main:spectral}
The representation decomposes into two trivial summands, the two
$(n-1)$-dimensional summands $\mathrm{Bur}^{\mathrm{red}}_n(\tau)$ and
$\mathrm{Bur}^{\mathrm{red}}_n(q)$, and $\mathcal R_n(q,\tau)$ of dimension $n(n-1)$,
which is generically irreducible for $n=3,4$
(Propositions~\ref{prop:decomp} and~\ref{prop:generic-irred}).

The generator spectrum recovers the ordered pair $(q,\tau)$;
a generator has finite projective order exactly when $a,c\in\Q$
(Proposition~\ref{prop:charpoly}, Corollary~\ref{cor:injective}, and
Theorem~\ref{thm:nonclifford}).

If at least one of $a,c$ is irrational, every generator is non-Clifford
in every Weyl--Heisenberg frame (Theorem~\ref{thm:nonclifford}).
If $1,a,c$ are $\Q$-linearly independent, every generator is entangling
for every nontrivial unitary bipartite identification
(Theorem~\ref{thm:entangling}).

\item \textbf{Common-basis synthesis and approximation bounds.}\label{main:synthesis}
For $l$ in the positive interval above, one common orthonormal basis
gives small coordinate blocks (Theorem~\ref{thm:syn-blocks}).
Applying the standard reflection-length formula to the spectrum gives
minimal rank-one phase-reflection length $2n$, also modulo scalar phase
(Corollary~\ref{cor:syn-length}).

In the fixed binary encoding, we construct $O(M)$-letter words in
one-factor $U(2)$ matrices and all-to-all CNOT, without auxiliary
qubits and including direction preparation and its inverse
(Definition~\ref{def:elementary-generators} and Theorem~\ref{thm:syn-cost}).

For a fixed finite inverse-closed dense-generating local $SU(2)$
alphabet with exact CNOT, we give projective operator-norm
approximation bounds for generators and braid words, together with
conditioning estimates for constructing the common basis
(Corollary~\ref{cor:syn-finite-precision} and
Sections~\ref{subsec:syn-error-margins} and~\ref{app:syn-counts}).
\end{enumerate}

\paragraph{A further result for general inputs.}
Beyond the recalled augmentation criteria, we prove that irreducibility
of the nonzero full braid output at nonexceptional parameters forces
the free-group operators of a positive-definite unitary input to commute
(Proposition~\ref{prop:irreducible-input}). This restricts which seeds
can have irreducible full output; it does not assume a Burau seed.

Tables~\ref{tab:results-burau}--\ref{tab:results-other} summarize the
parameter regimes and the evidence for each conclusion. In Table~\ref{tab:results-burau},
the canonical-form column applies the signature formula of \cite{Nsig};
the structure and gate column records the subsequent analysis.
The rows can
overlap; they are not a disjoint classification of seeds. A dash means
that no conclusion is recorded here. Canonical definiteness and the
existence of an arbitrary positive invariant form are different
questions, as Example~\ref{ex:empty-window} shows.

In Table~\ref{tab:results-burau}, the seed is that of
Definition~\ref{def:burau-seed}, with $n\ge2$, $q=\ee a$,
$\tau=\ee c$, and $0<a<1/(n+1)$. Put $M=n(n+1)$ and
$\kappa=\dim K$. All windows in $l$ concern the canonical form at
nonexceptional $\lambda=\ee l$, $0<l<1$; there $Q_\lambda\cong U_0$.
Definitions~\ref{def:resonance} and~\ref{def:definite-window}
specify the exceptional parameters and definite windows.

\begingroup
\small
\setlength{\tabcolsep}{4pt}
\renewcommand{\arraystretch}{1.08}
\setlength{\LTcapwidth}{\linewidth}
\begin{longtable}{@{}>{\raggedright\arraybackslash}p{0.25\linewidth}
>{\raggedright\arraybackslash}p{0.29\linewidth}
>{\raggedright\arraybackslash}p{0.25\linewidth}
>{\raggedright\arraybackslash}p{0.15\linewidth}@{}}
\caption{Burau seed: parameter regimes, canonical forms, and consequences.}
\label{tab:results-burau}\\
\toprule
Parameters & Quotient and canonical form & Structure and gate consequences & Evidence \\
\midrule
\endfirsthead
\multicolumn{4}{@{}l}{\tablename~\thetable\ (continued)}\\
\toprule
Parameters & Quotient and canonical form & Structure and gate consequences & Evidence \\
\midrule
\endhead
\midrule
\multicolumn{4}{r@{}}{Continued on the next page}\\
\endfoot
\bottomrule
\endlastfoot
$\boldsymbol{(a,c)\in\mathcal T_n}$.
 & $K=0$, $\dim U_0=M$. Positive for
 $0<l<1-nc-(n+1)a$; never negative.
 & Decomposition: item~(\ref{main:spectral}). For $n=3,4$,
 $\mathcal R_n$ is generically irreducible. Finite projective generator
 order iff $a,c\in\Q$; otherwise non-Clifford in every frame.
 If $1,a,c$ are $\Q$-independent: entangling in every nontrivial
 unitary bipartition. Positive window: $O(M)$ words in a fixed encoding.
 & Form: application\par \ref{thm:twisted-burau}.\par
 Structure and\par gates:\par \ref{prop:decomp}, \ref{prop:generic-irred},
 \ref{thm:nonclifford}, \ref{thm:entangling}.\par
 Synthesis:\par \ref{thm:syn-cost}.
 \\\addlinespace[4pt]
\textbf{Beyond the positive triangle.}\par
$c>0$, $c+2a<1$,\par $nc+(n+1)a\ge1$.
 & $K=0$, $\dim U_0=M$. No positive canonical window.
 & May overlap Negative A below; absence of a positive canonical form
 does not decide unitarizability.
 & Signature\par application:\par
 \ref{prop:twisted-unitarization}, \ref{prop:additional-definite-regions}.
 \\\addlinespace[4pt]
\textbf{Additional positive window.}\par
$n=2$, $\tfrac14<a<\tfrac13$,\par $1-2a<c<\tfrac12$.
 & $K=0$, $\dim U_0=6$. Positive for $0<l<1-2c$.
 & Unitary realization. The main structure, gate, and synthesis
 results are not extended here.
 & Signature\par application:\par \ref{prop:additional-definite-regions}.
 \\\addlinespace[4pt]
\textbf{Negative A.}\par
$1-\tfrac1n<c<1-2a$.
 & $K=0$, $\dim U_0=M$. Negative for $n(1-c)<l<1$.
 & $-\Ht$ gives a unitary realization. Further conclusions: ---.
 & Signature\par application:\par \ref{prop:additional-definite-regions}.
 \\\addlinespace[4pt]
\textbf{Negative B.}\par
$1-\tfrac an<c<1$.
 & $K=0$, $\dim U_0=M$. Negative for $n+1-nc-a<l<1$.
 & $-\Ht$ gives a unitary realization. Further conclusions: ---.
 & Signature\par application:\par \ref{prop:additional-definite-regions}.
 \\\addlinespace[4pt]
\textbf{Untwisted specialization.}\par $c=0$, $\tau=1$.
 & $\kappa=n^2$, $\dim U_0=n$. Canonical window: ---.
 & $U_0\cong\mathbf1\oplus\mathrm{Bur}^{\mathrm{red}}_n(q)$;
 the augmentation identifies $U_0$ with $V_{\mathrm{red}}$.
 & Proved remark\par \ref{rem:chosen-region}(b).
 \\\addlinespace[4pt]
\textbf{Resonant line.}\par $c=1-2a$, $\tau q^2=1$.
 & $K\ne0$; the dimension changes. Canonical window: ---.
 & Excluded from the additional-window classification. Further
 conclusions: ---.
 & Scope of\par Appendix\par \ref{app:additional-definite-regions}.
 \\
\end{longtable}
\endgroup

Finite projective order of the generators does not imply finiteness
of the joint braid image. The additional windows classify
the canonical form only, without extending the main structure or
synthesis theorems beyond their stated domains.

\begingroup
\small
\setlength{\tabcolsep}{4pt}
\renewcommand{\arraystretch}{1.08}
\setlength{\LTcapwidth}{\linewidth}
\begin{longtable}{@{}>{\raggedright\arraybackslash}p{0.25\linewidth}
>{\raggedright\arraybackslash}p{0.29\linewidth}
>{\raggedright\arraybackslash}p{0.25\linewidth}
>{\raggedright\arraybackslash}p{0.15\linewidth}@{}}
\caption{Other seeds, comparison examples, and general results.}
\label{tab:results-other}\\
\toprule
Seed and parameters & Quotient and invariant forms & Structure and gate consequences & Evidence \\
\midrule
\endfirsthead
\multicolumn{4}{@{}l}{\tablename~\thetable\ (continued)}\\
\toprule
Seed and parameters & Quotient and invariant forms & Structure and gate consequences & Evidence \\
\midrule
\endhead
\midrule
\multicolumn{4}{r@{}}{Continued on the next page}\\
\endfoot
\bottomrule
\endlastfoot
\textbf{Two scalar constituents.}\par
$n=3$, $N=2$; equal diagonal $g_j$ with eigenangles $1/10,9/10$;
$s_i=I_2$.
 & $K=0$, $\dim U_0=6$. No definite canonical quotient form, even at
 walls. At $\lambda=-1$, changing one summand's sign gives
 $H_{\mathrm{pos}}>0$, invariant under the full KLM action.
 & Two three-dimensional unreduced Burau summands. Unitarizable;
 finiteness and gate properties: ---.
 & Explicit example\par \ref{ex:empty-window}.
 \\\addlinespace[4pt]
\textbf{Hecke specialization.}\par
$n\ge2$, $0<a<1$;\par $s_i=sI_N$, $s\in U(1)$;\par
$g_j=g\in U(N)$,\par $\spec(g)\subseteq\{1,\ee a\}$.
 & Let $\nu>0$ be the $\ee a$-eigenspace dimension. Then
 $\kappa=n(N-\nu)$ and $\dim U_0=n\nu$. The nonexceptional definite locus is nonempty iff
 $a\in(0,1/n)$ or $a\in((n-1)/n,1)$.
 & $\nu$ copies of the $n$-dimensional unreduced Burau representation
 at $\ee a$, with each braid generator multiplied by $s$.
 & Known specialization:\par Example\par \ref{ex:hecke};\par
 Definition\par \ref{def:klm-quotient}.
 \\\addlinespace[4pt]
\textbf{Finite Clifford reference example.}\par
$n=3$, $N=1$,\par $g_j=i$, $s_i=1$;\par $\lambda\ne0,1$.
 & $K=0$, $\dim U_0=3$. At exceptional $\lambda=i$,
 $\dim Q_i=2$; otherwise $\dim Q_\lambda=3$.
 Canonical window: ---.
 & At $\lambda=i$: a two-dimensional representation $\varphi$, with
 ordinary/projective braid image orders $96/24$ and projective one-qubit
 Clifford image. Nonexceptionally: $\mathbf1\oplus\varphi$, with orders $96/96$.
 & Exact reference example:\par Section\par \ref{subsec:clifford-benchmark}.
 \\\addlinespace[4pt]
\textbf{Jones--Temperley--Lieb endpoint sectors.}\par
Restrict $B_{n+1}$ sectors of dimension $\ge2$ to $B_{1,n}$;
$q=\ee{1/\ell}$. Tested $(n,\ell)$ as in Remark~\ref{rem:numerics}(b).
 & Tests concern the ambient Long--Moody matrices $S_i$.
 Quotient dimensions, canonical windows, and unitarizability: ---.
 & Finite-precision tests suggest Jordan blocks and show large norms
 of sampled words. An exact obstruction remains open.
 & Numerical observation:\par Remark\par \ref{rem:numerics}(b).
 \\\addlinespace[4pt]
\textbf{General unitary seeds.}\par
The full $F_n\rtimes B_n$ seed preserves a common positive-definite
form. Work on fixed $U_0$ at nonexceptional parameters.
 & $R=\dim U_0=nN-\dim K$.
 For a fixed seed, canonical unitary realizations with
 $l\in\Wdef$ are simultaneously unitarily equivalent, without an
 irreducibility assumption.
 & If $U_0\ne0$ and its braid restriction is irreducible, then
 $R\le N$, the $g_j$ commute, and $\Wdef=\e^{-1}(E^c)$.
 & Recalled:\par \ref{prop:augmentation}.\par
 Seed restriction:\par \ref{prop:irreducible-input}.\par
 Standard transport:\par \ref{thm:gauge}.
 \\
\end{longtable}
\endgroup

\begin{table}[htbp]
\centering\small
\caption{Principal notation. Indexed local matrices are distinguished
from scalar dimensions and subspaces.}\label{tab:notation}
\begin{tabular}{@{}p{0.26\linewidth}p{0.68\linewidth}@{}}
\toprule
Symbol & Meaning \\
\midrule
$g_j,\ s_i$ & General input operators $\rho(x_j)$ and $\rho(\sigma_i)$. \\
$g_j^{\mathrm{tB}},\ s_i^{\mathrm{tB}}$ & Character-twisted Burau input operators of Definition~\ref{def:burau-seed}. \\
$G_j,\ S_i$ & Long--Moody free-group and braid operators on $V^{\oplus n}$. \\
$R_i$ & The $2N\times2N$ block in \eqref{eq:braid-matrices}. \\
$K,\ L_\lambda$ & Subspaces removed in the KLM quotient. \\
$U_0,\ Q_\lambda$ & $V^{\oplus n}/K$ and $V^{\oplus n}/(K+L_\lambda)$. \\
$A_i,\ U_i$ & Braid operator on $U_0$ and its unitary realization in a common basis. \\
$H_0$ & The $N\times N$ invariant Hermitian form on the seed. \\
$\widetilde H,\ \overline H$ & Canonical forms on $V^{\oplus n}$ and $Q_\lambda$. \\
$H_l,\ \mathsf C_l$ & In the synthesis section, $H_l=\widetilde H(\ee l)>0$ and its upper triangular Cholesky factor, $\mathsf C_l^\dagger\mathsf C_l=H_l$. \\
$H_0^{\mathrm{end}}$ & The $M\times M$ limit $\lim_{l\downarrow0}H_l$ in Proposition~\ref{prop:syn-endpoint-rates}. \\
Slot-major order & Coordinate order defined in Section~\ref{sec:synthesis}. \\
$N,\ R,\ M$ & Seed dimension, $R=\dim U_0$, and $M=n(n+1)$, the output dimension in the main results. \\
$E,\ E^c,\ \Wdef$ & Exceptional and nonexceptional values of $\lambda$; $\Wdef$ is a set of angles $l$. \\
$q,\ \tau,\ \lambda$ & Burau parameter, scalar twisting parameter, and convolution parameter. \\
$\mathcal T_n,\ d,\ \omega$ & The chosen input region; $d=\tfrac32a+c$ and $\omega=\ee d=q^{3/2}\tau$ with $q^{1/2}=\ee{a/2}$. \\
\bottomrule
\end{tabular}
\end{table}

\subsection{Relation to prior work and scope}
\paragraph{Prior results used.}
The construction and character rescaling originate in \cite{BT,HN,Long}.
Reduced Burau inputs and their $n(n-1)$-dimensional reduced outputs
already occur in \cite[Example~2.1(b), Theorem~2.11]{Long}, whose
Example~4.3 gives explicit twelve-dimensional $B_4$ matrices.
The proof of that theorem uses the difference coordinates underlying
Proposition~\ref{prop:long-reduced}. We record them in our Artin-action
conventions and verify that, on $\mathcal T_n$, their restriction
identifies the augmentation kernel $\mathcal R_n$ with Long's reduced output.
The invariant form is from \cite{NKLM}, and \cite{Nsig} determines its
signature and definite windows for general unitary inputs.
The $\lambda$-independence and unitarization criterion recalled here
are also in \cite[Cor.~1.2, Prop.~6.11]{Nsig}.
The componentwise unitarizability, finite-image criteria, and
exceptional-quotient analysis used here are from \cite{FiniteCompanion}.

\paragraph{Applications of prior results.}
We compute the Burau boundary spectrum and substitute it into the known
signature formula to obtain the domain $\mathcal T_n$ and its positive
window (Section~\ref{subsec:burau-signature}). Appendix~\ref{app:additional-definite-regions}
continues this calculation in the stated Burau strip. These are
applications of the existing criterion. The generator characteristic
polynomial also specializes \cite[Theorem~3.1]{NHecke}; we give a direct
coordinate proof in Proposition~\ref{prop:charpoly}.
Section~\ref{sec:gauge} records the standard polar-decomposition
arguments needed to compare unitary realizations.

\paragraph{New results.}
Using these inputs, we determine the invariant summands and their
spectra, prove generic irreducibility of $\mathcal R_n$ for $n=3,4$,
and derive the parameter-recovery, non-Clifford, and entangling
conclusions.
The synthesis results construct small coordinate blocks simultaneously
for every generator, and turn this common-basis structure into
$O(M)$ elementary words with error and conditioning bounds.
The general seed restriction of Proposition~\ref{prop:irreducible-input}
is a separate conclusion beyond the recalled augmentation facts.

The synthesis uses unitary reflection length \cite[arXiv version, Example~2.15]{HLR},
Householder and sparse-isometry methods \cite{IKV,MIC,GoubaultHouseholder},
and state preparation and controlled gates
\cite{SBM,MottonenStatePreparation,BarencoGates}.
For a general $m$-qubit state with $s$ nonzero amplitudes,
Li--Luo \cite[Theorem~1]{LiLuo} give an ancilla-free
$O(ms/\log m+m)$ bound in the same one-qubit/CNOT model.
Related Toffoli-cost refinements appear in \cite{RupprechtWoelk}.

Here the common basis gives core reflection directions supported on
two slot labels and at most $2(n+1)$ coordinates.
This special support permits $O(n)$ preparation per core direction;
Section~\ref{subsec:syn-elementary} compares it with the general-support
bound. Including spectator directions, controlled phases, and inverse
preparation gives the $O(M)$ word bound of Theorem~\ref{thm:syn-cost}.
Finite-alphabet approximation uses one-qubit Solovay--Kitaev results
\cite{DawsonNielsenSK,KuperbergSK}.
Quantum braid implementations already occur in the Jones path model
\cite{AJL}; Burau unitarity and density are also studied via
non-semisimple topological quantum field theory (TQFT) \cite{GLPS}.
Squier-form Cholesky unitarization appears
in a two-dimensional $B_3$ quantum-switch model
\cite[Section~4]{KolpakovSwitch}. The ordered-block Cholesky argument
here supplies the small supports in a \emph{single} basis for this KLM
family. This coordinate-block property is distinct from tensor-local
Yang--Baxter localization \cite{RW}, and our preprocessing model differs
from sparse-unitary oracle access \cite{JordanWocjan}.

\paragraph{Scope.}
Unitary equivalence preserves spectra and the decomposition into
invariant summands up to isomorphism. It does not, in general, preserve
word length in a fixed elementary generating set. The spectral
conclusions concerning Clifford groups and entanglement quantify over
all relevant frames and tensor identifications. The synthesis bounds
use the common coordinates and encoding constructed in
Section~\ref{sec:synthesis}.

The full braid image preserves proper summands, so it is not
projectively dense (Section~\ref{sec:gates}). The local operations used
in Section~\ref{app:fixed-encoding} are independently supplied.

The elementary word lengths include the unitary preparation of
reflection directions. Classical preprocessing to obtain the matrices,
Cholesky factors, eigenvectors, and rotation angles is a separate
resource, as are conversion from a preassigned physical encoding and
hardware costs.
Numerical diagnostics and reference compilers are distinguished from
validated error enclosures and hardware comparisons. The circuit bounds
are not quantum-speedup or fault-tolerance results.

\section{The fixed braid action and the canonical Hermitian family}\label{sec:prelim}

We recall the braid action and invariant-form machinery of
\cite{HN,NKLM,Nsig,FiniteCompanion}, fixing the notation needed below.
The braid action on $U_0$ is independent of $\lambda$; the canonical
Hermitian form varies with $\lambda$.

\subsection{Conventions and exceptional parameters}
Neither the input nor the output is assumed to have finite image.
We use $\spec(T)$ for the eigenvalue multiset of $T$, and
$z^{\times r}$ means multiplicity $r$.

\begin{definition}[Unitary KLM seed and coordinate convention]\label{def:seed}
From this section onward, the seed $\rho:F_n\rtimes B_n\to
\operatorname{GL}(V)$ preserves one common positive definite inner
product on the whole group. We choose an orthonormal basis for it,
so the Hermitian matrix of the input invariant form in \eqref{eq:blocks} is $H_0=I_N$.
Except where raw coordinates are explicitly specified, the seed
operators $g_j=\rho(x_j)$ and $s_i=\rho(\sigma_i)$ are written in
this basis and belong to $U(N)$.
\end{definition}
The algebraic construction of Section~\ref{subsec:intro-klm} still
allows every $\lambda\in\C\setminus\{0,1\}$; the unit-circle
restriction below concerns the canonical Hermitian form.
We retain the notation $P_s$, $K$, $L_\lambda$, $U_0$, $Q_\lambda$,
$\kappa$, and $R$ introduced above.

\begin{definition}[Exceptional parameters and chambers]\label{def:resonance}
For the later Hermitian-form discussion, retain the parameter sets of
\cite[Section~2]{Nsig}:
\[
\begin{aligned}
 E&:=\{\lambda\in\C:|\lambda|=1,\ \lambda\ne1,\
                  \Ker(\lambda P_n-I)\ne0\},\\
 E^c&:=\{\lambda\in\C:|\lambda|=1,\ \lambda\ne1,\
                  \Ker(\lambda P_n-I)=0\}.
\end{aligned}
\]
Thus $E$ and $E^c$ are the exceptional and nonexceptional sets of
Definition~\ref{def:klm-quotient}, restricted to $|\lambda|=1$,
$\lambda\ne1$, in the terminology of \cite[Section~6]{FiniteCompanion}.
The angle values of exceptional parameters are the spectral walls.
The parametrization $l\mapsto\ee l$, $0<l<1$, identifies these sets
with their angular inverse images $\e^{-1}(E)$ and $\e^{-1}(E^c)$.
A \emph{chamber} is a connected component of $\e^{-1}(E^c)$.
Here $E^c$ is the complement of $E$ only among complex numbers of
modulus one other than $1$, not in $\C\setminus\{0,1\}$.
This notation does not restrict the algebraic construction below.
\end{definition}

\subsection{\texorpdfstring{$\lambda$}{lambda}-independence of the braid action and the augmentation map}
\begin{proposition}[{$\lm$-independence of the braid part, \cite[Thm.~0.1(1)]{HN}, \cite[Cor.~1.2]{Nsig}}]
\label{prop:lambda-free}
The matrices \eqref{eq:braid-matrices} do not involve $\lm$: the
restriction of $\LM_\lm$ to $B_n$ is the Long--Moody functor of
\cite{Long}. Since $K$ is invariant, $S_i$ induces an
operator $A_i$ on $U_0$. Consequently, for every
$\lambda\in\C\setminus\{0,1\}$ with $\Ker(\lambda P_n-I)=0$,
the braid representation on $Q_\lambda$, identified with $U_0$,
is the same fixed family $(A_i)$. If $\Ker(\lambda P_n-I)\ne0$,
$Q_\lambda$ is the further
quotient of $U_0$ by the image of $L_\lambda$. In particular, on
$|\lambda|=1$, these two cases correspond to $\lambda\in E^c$ and
$\lambda\in E$, respectively.
\end{proposition}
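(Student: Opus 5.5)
The statement is largely structural, so I will verify it directly from Definitions~\ref{def:lm-operators} and~\ref{def:klm-quotient}, in four steps.

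\textbf{Step 1: $\lambda$ does not enter $S_i$.} Inspect \eqref{eq:braid-matrices}. Each $S_i$ is assembled only from $s_i$, $g_i$, $g_{i+1}$, $I_N$ and the block $R_i$, and none of these involves $\lambda$. By contrast, $\lambda$ appears only in the free-group operators $G_i$ of \eqref{eq:free-matrices}. So the restriction of $\rho^{\mathrm{LM}}_\lambda$ to $B_n$ is a single family of matrices, independent of $\lambda$. After translating Long's conventions into the Artin-action conventions fixed above, I would identify this family with the Long--Moody functor of \cite{Long}, citing \cite[Thm.~0.1(1)]{HN}.

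\textbf{Step 2: $K$ is $S_i$-invariant, so $A_i$ is defined on $U_0$.} Take $v\in K$, so that $g_jv_j=v_j$ for every $j$.
\begin{itemize}
\item For a slot $j\ne i,i+1$, the relation $s_ig_js_i^{-1}=g_j$ shows $s_iv_j\in\Ker(g_j-I)$.
\item In slot $i$ the new component is $s_ig_iv_{i+1}=s_iv_{i+1}$ (this uses $g_{i+1}v_{i+1}=v_{i+1}$, since $g_i$ is applied to a vector fixed by $g_{i+1}$ only after rewriting). It lies in $\Ker(g_i-I)$ because $s_i^{-1}g_is_i=s_i^{-1}s_i^{-1}g_{i+1}s_is_i$. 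It is simpler to use $g_i s_i=s_ig_{i+1}$ and the fact that $v_{i+1}$ is fixed by $g_{i+1}$.
\item In slot $i+1$, the term $(I-g_{i+1})v_{i+1}$ vanishes, leaving $s_iv_i$. Applying $s_ig_{i+1}s_i^{-1}=g_{i+1}^{-1}g_ig_{i+1}$, one checks that $s_iv_i$ is fixed by $g_{i+1}$.
\end{itemize}
This is exactly the invariance already asserted in Definition~\ref{def:klm-quotient} via \cite[Lemma~3]{NKLM}, so I may simply quote it. Hence $S_i$ induces a well-defined operator $A_i$ on $U_0=V^{\oplus n}/K$, and $A_i$ is independent of $\lambda$ by Step~1.

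\textbf{Step 3: the nonexceptional case.} By \eqref{eq:L}, $L_\lambda=\iota_\lambda(\Ker(\lambda P_n-I_N))$. I would confirm this from the component description: the conditions $v_j=g_{j+1}v_{j+1}$ determine all components from $v_n$, and the last condition reads $v_n\in\Ker(\lambda P_n-I)$. So if this kernel is $0$, then $L_\lambda=0$ and $K+L_\lambda=K$. Therefore $Q_\lambda=U_0$, the identification being the identity map. Because the braid action on $Q_\lambda$ is induced from $S_i$, it equals $(A_i)$.

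\textbf{Step 4: the exceptional case and the unit circle.} If $\Ker(\lambda P_n-I)\ne0$, then by the third isomorphism theorem $Q_\lambda\cong U_0/((K+L_\lambda)/K)$, which is the quotient of $U_0$ by the image of $L_\lambda$. Here one uses that $L_\lambda$ is invariant, again by \cite[Lemma~3]{NKLM}. Finally, Definition~\ref{def:resonance} sorts $\lambda$ with $|\lambda|=1$, $\lambda\ne1$ into $E^c$ or $E$ according to whether $\Ker(\lambda P_n-I)$ is zero, which gives the last sentence of the statement.

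I expect the only delicate point to be the $K$-invariance in Step~2, specifically the slot-$i$ computation, and I would handle it by citing \cite{NKLM} rather than recomputing it. Everything else is bookkeeping.
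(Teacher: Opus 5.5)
Your four steps follow the paper's own treatment of this recalled statement, which has no separate proof block. The paper reads off that $\lambda$ is absent from \eqref{eq:braid-matrices} and quotes invariance of $K$ from \cite[Lemma~3]{NKLM}. It uses $L_\lambda=\iota_\lambda(\Ker(\lambda P_n-I))=0$ at nonexceptional $\lambda$ to get $Q_\lambda=U_0$, and otherwise writes $Q_\lambda\cong U_0/((K+L_\lambda)/K)$. Steps~1, 3 and~4 are fine as written.

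The explicit computations in Step~2 are wrong. Since you ultimately fall back on the citation, they are not load-bearing, but you should either delete them or correct them.

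\emph{Slot $i$.} The equality $s_ig_iv_{i+1}=s_iv_{i+1}$ is false in general, because $v_{i+1}$ is fixed by $g_{i+1}$, not by $g_i$. Also, $g_is_i=s_ig_{i+1}$ is not one of the seed relations. The relation $s_ig_is_i^{-1}=g_{i+1}$ reads $g_{i+1}s_i=s_ig_i$, as in the proof of Lemma~\ref{lem:augmentation-intertwining}. The correct check uses both Artin relations:
\begin{enumerate}
\item The first relation gives $s_ig_iv_{i+1}=g_{i+1}s_iv_{i+1}$.
\item Rewrite $s_ig_{i+1}s_i^{-1}=g_{i+1}^{-1}g_ig_{i+1}$ as $g_ig_{i+1}s_i=g_{i+1}s_ig_{i+1}$.
\item Then $g_i(g_{i+1}s_iv_{i+1})=g_{i+1}s_ig_{i+1}v_{i+1}=g_{i+1}s_iv_{i+1}$, so the new slot-$i$ component lies in $\Ker(g_i-I)$.
\end{enumerate}

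\emph{Slot $i+1$.} Here you invoke the wrong relation. After $(I-g_{i+1})v_{i+1}=0$, the component $s_iv_i$ is fixed by $g_{i+1}$ because $g_{i+1}s_iv_i=s_ig_iv_i=s_iv_i$. This uses only the first relation. The relation $s_ig_{i+1}s_i^{-1}=g_{i+1}^{-1}g_ig_{i+1}$ is the one needed in slot $i$, not here.

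Your check for the slots $j\ne i,i+1$ is correct.
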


Throughout, $S_i$ denotes the Long--Moody operator on
$V^{\oplus n}$ and $A_i$ its induced operator on $U_0$, explicitly
$A_i[v]=[S_iv]$. Thus $g_j,s_i$ are seed operators, whereas $G_j,S_i$
are the operators of $\rho^{\mathrm{LM}}_\lambda$.
For matrix calculations on $U_0$, we choose one basis of this fixed
vector space, independently of $\lambda$.
We use $\mathsf{C}$ (or $\mathsf{C}_l$ when the parameter $l$ is shown)
for a unitarizing factor, so that $S_i$ is reserved for the LM braid action.
When $K=0$, we use the direct-sum basis of $V^{\oplus n}$ and
identify the two spaces, and hence the two operators, without further
comment.

\begin{definition}[Augmentation or multiplication map]\label{def:augmentation}
Following \cite[Proposition~6.1]{Nsig} and
\cite[Lemma~6.1]{FiniteCompanion}, put
\[
D:V^{\oplus n}\longrightarrow V,\qquad
D(v_1,\dots,v_n):=\sum_{j=1}^n(g_j-I)v_j.
\]
Since $D(K)=0$, it induces $\bar D:U_0\to J$, where
$J:=\sum_j\operatorname{Im}(g_j-I)$ and $\bar D[v]=Dv$.
The bar indicates passage to the quotient, not complex conjugation.
\end{definition}

\begin{lemma}[Equivariance of the augmentation map]\label{lem:augmentation-intertwining}
The augmentation map and its induced map on $U_0$ satisfy
\[
D S_i=s_iD,\qquad \bar D A_i=s_i\bar D
\qquad(1\le i<n).
\]
\end{lemma}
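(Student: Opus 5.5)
The plan is a direct componentwise computation of $DS_iv$ for $v=(v_1,\ldots,v_n)\in V^{\oplus n}$, using the explicit formula for $S_i$ from Definition~\ref{def:lm-operators} together with the compatibility relations of Definition~\ref{def:input-representation}. The quotient statement then follows formally.

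First I rewrite the compatibility relations as commutation rules for moving $s_i$ past the $g$'s:
\[
g_js_i=s_ig_j\ (j\ne i,i+1),\qquad g_{i+1}s_i=s_ig_i,\qquad g_ig_{i+1}s_i=g_{i+1}s_ig_{i+1}.
\]
The first comes from $s_ig_js_i^{-1}=g_j$, the second from $s_ig_is_i^{-1}=g_{i+1}$, and the third from $s_ig_{i+1}s_i^{-1}=g_{i+1}^{-1}g_ig_{i+1}$. Combining the second and third gives
\[
g_is_ig_i=g_ig_{i+1}s_i=g_{i+1}s_ig_{i+1}.
\]

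Next I expand
\[
DS_iv=\sum_{j\ne i,i+1}(g_j-I)s_iv_j+(g_i-I)s_ig_iv_{i+1}+(g_{i+1}-I)s_i\bigl(v_i+(I-g_{i+1})v_{i+1}\bigr)
\]
and collect the terms by the slot they act on:
\begin{itemize}
\item For $j\ne i,i+1$, the term is $s_i(g_j-I)v_j$ by the first rule.
\item The $v_i$ term is $(g_{i+1}-I)s_iv_i=s_i(g_i-I)v_i$ by the second rule.
\item The $v_{i+1}$ term has operator coefficient
\[
g_is_ig_i-s_ig_i+g_{i+1}s_i-g_{i+1}s_ig_{i+1}-s_i+s_ig_{i+1}.
\]
\end{itemize}
This last coefficient is the only step needing care, and I expect it to be the main, though minor, obstacle. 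By the identity $g_is_ig_i=g_{i+1}s_ig_{i+1}$ the first and fourth summands cancel. By $g_{i+1}s_i=s_ig_i$ the second and third summands cancel. What remains is $s_i(g_{i+1}-I)$. Summing over all slots gives $DS_iv=s_iDv$.

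For the induced statement, $D(K)=0$ by Definition~\ref{def:augmentation}, and $S_iK\subseteq K$, so $A_i[v]=[S_iv]$ is well defined. Then
\[
\bar DA_i[v]=DS_iv=s_iDv=s_i\bar D[v].
\]
The right-hand side lies in $J$ because $D$ maps into $J$.
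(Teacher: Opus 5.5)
Your proposal is correct and follows the paper's proof. The paper also checks $DS_iv=s_iDv$ slot by slot with the Artin relations, the only nontrivial step being the $(i+1)$st-slot identity $(g_i-I)s_ig_i+(g_{i+1}-I)s_i(I-g_{i+1})=s_i(g_{i+1}-I)$, and derives the quotient statement from $D(K)=0$; you simply write out the cancellations the paper leaves implicit.
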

\begin{proof}
For the $i$th slot one uses
$(g_{i+1}-I)s_i=s_i(g_i-I)$.  For the $(i+1)$st slot,
\[
(g_i-I)s_i g_i+(g_{i+1}-I)s_i(I-g_{i+1})
 =s_i(g_{i+1}-I),
\]
and for every other slot $g_j$ commutes with $s_i$.  These are
exactly the Artin relations for the seed. The quotient identity follows
because $D$ vanishes on $K$.
\end{proof}

\subsection{The canonical Hermitian family}\label{subsec:form}
We now vary the invariant form while keeping the braid operators
$(A_i)$ fixed. Write $\lambda=\ee l$ with
$0<l<1$, and use the square root $\lambda^{1/2}=\exp(\pi il)$.
In the coordinates of Definition~\ref{def:seed}, the canonical matrix
is \eqref{eq:blocks} with $H_0=I_N$. Theorem~\ref{thm:klm-form}
identifies its radical with $K+L_\lambda$ and its induced form on
$Q_\lambda$ with $\Hq(\lambda)$. For $\lambda\in E^c$ this is a
nondegenerate invariant form on the fixed space $U_0$, satisfying
\[
 A_i^\dagger\Hq(\lambda)A_i=\Hq(\lambda)\qquad(1\le i<n).
\]
Since $r(l)=0$ there, Theorem~\ref{thm:import-sig} gives
\[
 \sig\Hq(\ee l)=\bigl(R-m(l),\,m(l)\bigr),
 \qquad \ee l\in E^c.
\]
The two entries count the positive and negative eigenvalues, respectively.

\begin{definition}[Definite window]\label{def:definite-window}
With $m(l)$, $r(l)$, and $R$ as in Theorem~\ref{thm:import-sig}, put
\[
 \Wdef:=\{l\in(0,1):\ee l\in E^c,\ m(l)\in\{0,R\}\}.
\]
By the signature formula, this is exactly the set of angles corresponding to nonexceptional parameters for which $\Hq(\ee l)$ is definite. The sets $E$ and $E^c$
contain complex parameter values, whereas $\Wdef$ contains angles.
\end{definition}

\begin{remark}[Walls]\label{rem:walls}
At $\lambda\in E$ the quotient $Q_\lambda$ has dimension $R-r(l)$, and the signature
jumps by $r(l)$ across the wall \cite[Prop.~6.8]{Nsig}; by
\cite[Prop.~6.11(ii)]{Nsig} the wall quotients of a unitarizable
representation are unitarizable as well. More precisely, the unitary
seed makes $P_n$ unitary and hence semisimple. Put
$\mu=\lambda^{-1}\ne1$ and $V_\mu=\Ker(P_n-\mu I)$.
Then \cite[Proposition~6.3]{FiniteCompanion}
gives a $B_n$-equivariant splitting
\[
 U_0\cong V_\mu\oplus Q_\lambda\qquad\text{as $B_n$-modules},
\]
where $V_\mu$ carries the restricted seed braid action. This is a
splitting of braid representations, not of the full
$F_n\rtimes B_n$-representation. Definiteness of a wall quotient is
not part of $\Wdef$: the statements below concern the fixed space $U_0$.
\end{remark}

\begin{remark}[{A linear-combination formula for the invariant forms, \cite[Prop.~6.1]{Nsig}}]\label{rem:pencil}
With the augmentation map $D$ defined above and
$B:=\Ht(-1)$ (the value at $l=\tfrac12$, with the notation of
\cite[Proposition~6.1]{Nsig}), one has
$\Ht(\ee l)=\cos(\pi l)\,D^{\dagger}D+\sin(\pi l)\,B$,
because each block of \eqref{eq:blocks} is a linear combination of
$\ee{\mp l/2}$ with $l$-independent coefficients. Both coefficient
forms are braid-invariant, so the canonical family lies in a real
subspace of dimension at most two of the space of invariant Hermitian
forms of $(A_i)$; this does not assert that all invariant forms lie in
that subspace. This identity will be used to derive the endpoint
estimates in Proposition~\ref{prop:syn-endpoint-rates}.
\end{remark}


\section{Definite canonical forms and unitary realizations}\label{sec:certificate}

This section recalls the definiteness criterion of \cite{Nsig}, then
proves the seed restriction in Proposition~\ref{prop:irreducible-input}.
The examples distinguish definiteness of the canonical family from
unitarizability of the representation.

\subsection{Definiteness and unitarization}\label{subsec:recalled-criterion}
We apply the invariant forms recalled in Section~\ref{subsec:form}
to realize the fixed braid action by unitary matrices for the standard
inner product.

\begin{theorem}[{Definiteness and unitarization, \cite[Prop.~6.11(ii),(iii)]{Nsig}}]\label{thm:certificate}
If $\Wdef\neq\emptyset$, the braid representation on $U_0$ is
unitarizable. More precisely, for $l\in\Wdef$ choose
$\varepsilon\in\{\pm1\}$ with $H_+:=\varepsilon\Hq(\ee l)>0$ and an
invertible $\mathsf{C}$ with $\mathsf{C}^{\dagger}\mathsf{C}=H_+$; then $U_i:=\mathsf{C}A_i\mathsf{C}^{-1}$ are
unitary braid generators.

If the representation of $B_n$ on $U_0$ is irreducible, the following
are equivalent:
\[
U_0\ \text{is unitarizable as a $B_n$-representation},\qquad
\Wdef\neq\emptyset,\qquad \Wdef=\e^{-1}(E^c) .
\]
Irreducibility here is irreducibility of the braid restriction, not
of the $F_n\rtimes B_n$-module.
\end{theorem}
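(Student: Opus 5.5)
The first assertion follows directly from what is already recorded. For $l\in\Wdef$ we have $\ee l\in E^c$, so Proposition~\ref{prop:lambda-free} identifies $Q_{\ee l}$ with the fixed space $U_0$. Theorem~\ref{thm:klm-form} then gives a nondegenerate form $\Hq(\ee l)$ on $U_0$ with $A_i^\dagger\Hq(\ee l)A_i=\Hq(\ee l)$. By Definition~\ref{def:definite-window} and Theorem~\ref{thm:import-sig}, this form is definite. We choose $\varepsilon$ so that $H_+=\varepsilon\Hq(\ee l)>0$ and take a Cholesky factor $\mathsf C$ with $\mathsf C^\dagger\mathsf C=H_+$. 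Then
\[
U_i^\dagger U_i=\mathsf C^{-\dagger}A_i^\dagger H_+A_i\mathsf C^{-1}=\mathsf C^{-\dagger}H_+\mathsf C^{-1}=I .
\]
The braid relations transfer because every $A_i$ is conjugated by the same $\mathsf C$.

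For the equivalences under irreducibility of the braid restriction, we prove three implications. The implication $\Wdef=\e^{-1}(E^c)\Rightarrow\Wdef\ne\emptyset$ needs a preliminary remark: $\e^{-1}(E^c)$ is nonempty, since $E$ is finite (the spectrum of $P_n$ has at most $N$ points). The implication $\Wdef\neq\emptyset\Rightarrow$ unitarizable is the first part. The remaining step, unitarizable $\Rightarrow\Wdef=\e^{-1}(E^c)$, uses Schur's lemma. Suppose $H>0$ is invariant under all $A_i$. Fix any $l$ with $\ee l\in E^c$; then $\Hq(\ee l)$ is another nondegenerate invariant Hermitian form on $U_0$. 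The operator $T:=H^{-1}\Hq(\ee l)$ commutes with every $A_i$, because
\[
A_i^{-1}H^{-1}\Hq A_i=H^{-1}(A_i^\dagger)^{-1}\cdot A_i^\dagger\Hq A_i\cdot A_i^{-1}\cdots,
\]
or more cleanly because $A_i^\dagger HA_i=H$ and $A_i^\dagger\Hq A_i=\Hq$ give $A_i^{-1}TA_i=T$. By irreducibility over $\C$ and Schur's lemma, $T=cI$. Hence $\Hq(\ee l)=cH$. Since both forms are Hermitian and $H>0$, $c$ is real, and $c\neq0$ by nondegeneracy. Thus $\Hq(\ee l)$ is definite, and $l\in\Wdef$ by the definition of $\Wdef$ together with the signature characterization.

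The only delicate point is applying Schur's lemma to a form rather than an endomorphism. We must check that $T$ is a genuine intertwiner of the braid action on $U_0$, which is immediate from the two invariance identities. Reality of $c$ follows from
\[
cH=\Hq=\Hq^\dagger=\bar cH .
\]
We also note that irreducibility is used only for the braid restriction, consistent with the statement. The case $U_0=0$ is excluded by the definition of irreducibility.
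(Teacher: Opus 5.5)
Your proof is correct and follows essentially the same route as the paper. The forward direction uses invariance, definiteness from the signature formula, and conjugation by the common factor $\mathsf C$; the converse applies Schur's lemma to $H^{-1}\Hq(\ee l)$ and shows the scalar is real and nonzero. Your observation that $E$ is finite, so that $\e^{-1}(E^c)\neq\emptyset$, makes explicit the small step the paper leaves implicit in closing the cycle of implications.
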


\begin{proof}[Proof (recalled from \cite{Nsig})]
The forward claim follows from the invariance recalled above from
\cite[Theorems~8--9]{NKLM} and Theorem~\ref{thm:import-sig}, by
conjugation with $\mathsf{C}$. For the
converse let $H_{\mathrm{pos}}$ be a positive-definite invariant form on the
irreducible braid representation. For each $\ee l\in E^c$, invariance of
both forms gives that $H_{\mathrm{pos}}^{-1}\Hq(\ee l)$ commutes with every $A_i$, so
it is a scalar by Schur's lemma; since both forms are Hermitian and
$\Hq(\ee l)$ is non-degenerate, the scalar is real and non-zero.
Hence $\Hq(\ee l)$ is definite for every $\ee l\in E^c$.
\end{proof}

\begin{definition}[Unitary realizations]\label{def:certified}
For each $l\in\Wdef$, choose $\varepsilon\in\{\pm1\}$ so that
$\varepsilon\Hq(\ee l)$ is positive definite.
An invertible $\mathsf{C}$ satisfying $\mathsf{C}^\dagger \mathsf{C}=\varepsilon\Hq(\ee l)$ is a
\emph{unitarizing factor}; the family $U_i=\mathsf{C}A_i\mathsf{C}^{-1}$ is the associated
\emph{unitary realization}, and its $U_i$ are \emph{unitary generators}.
In the fixed coordinates on $U_0$, put
$H_+:=\varepsilon\Hq(\ee l)$ and
$\langle v,w\rangle_{H_+}:=v^\dagger H_+w$. Then $\mathsf C$ is an
isometric isomorphism
\[
 \mathsf C:(\C^R,\langle\ ,\ \rangle_{H_+})
 \longrightarrow(\C^R,\langle\ ,\ \rangle_{\mathrm{std}}),
 \qquad U_i\mathsf C=\mathsf C A_i,
\]
where $\langle v,w\rangle_{\mathrm{std}}=v^\dagger w$.
The same map $\mathsf C$ intertwines every generator. Its matrix is
unitary for the standard inner product on both sides only when $H_+=I_R$.
One choice of $\mathsf C$ is the upper triangular Cholesky factor of
$\varepsilon\Hq(\ee l)$ with positive diagonal.
\end{definition}

The definiteness condition above is sufficient but, without braid
irreducibility, need not be necessary. In contrast,
\cite[Theorem~3.3]{FiniteCompanion} characterizes unitarizability of
both the free-group restriction and the full KLM output by definiteness
of the nonzero convolutions of the irreducible free-group input
constituents; their definite signs need not agree. This componentwise
criterion is not a converse to the global-window condition
$\Wdef\ne\emptyset$; see Example~\ref{ex:empty-window}.

\subsection{An obstruction to irreducible output}
\label{subsec:seed-obstruction}

The following consequences of the augmentation construction
\cite[Prop.~1.10]{HN}, its equivariance
\cite[Lemma~6.1]{FiniteCompanion} (Lemma~\ref{lem:augmentation-intertwining}),
and the invariant-form criterion \cite[Prop.~6.11(iii)]{Nsig}
are recalled for use in Proposition~\ref{prop:irreducible-input}.
\begin{proposition}[Consequences of augmentation]
\label{prop:augmentation}
For $\bar D:U_0\to J$ in Definition~\ref{def:augmentation}, put
$t:=\dim J\le N$ and recall that
$R=\dim U_0=\sum_j\operatorname{rank}(g_j-I)$.
\begin{enumerate}
\item If $R>t>0$, then $\Ker\bar D$ is a non-zero proper invariant
subspace and the braid representation on $U_0$ is reducible. In
particular $U_0$ is reducible whenever $R>N$, e.g.\ whenever $K=0$,
$n\ge2$ and $N>0$.
\item If $U_0\neq0$ is irreducible, then $\bar D$ is injective, the
form $G_0([v],[w]):=\langle\bar D[v],\bar D[w]\rangle_V$ is a
positive-definite invariant form on $U_0$, and hence $U_0$ is
unitarizable and $\Wdef=\e^{-1}(E^c)$.
\end{enumerate}
\end{proposition}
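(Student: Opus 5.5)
The plan rests on Lemma~\ref{lem:augmentation-intertwining}, which says that $\bar D:U_0\to J$ intertwines $A_i$ with $s_i$. Two preliminary facts come first. Since $J=\sum_j\operatorname{Im}(g_j-I)$ is a sum of images of the seed matrices, $J$ is $s_i$-invariant: the Artin relations give $s_i(g_j-I)s_i^{-1}=g_{j'}^{\pm}$-type expressions, and $\operatorname{Im}(h^{-1}gh-I)=h^{-1}\operatorname{Im}(g-I)$ with $h$ a product of the $g$'s. More directly, $\bar D A_i=s_i\bar D$ shows that $\operatorname{Im}\bar D$ is $s_i$-stable. The second fact is that $\bar D$ is surjective onto $J$, because $D(0,\dots,v_j,\dots,0)=(g_j-I)v_j$. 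Hence $\Ker\bar D$ is invariant under every $A_i$: if $\bar D[v]=0$ then $\bar D A_i[v]=s_i\bar D[v]=0$.

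\emph{Part 1.} Since $\bar D$ surjects onto $J$, we have $\dim\Ker\bar D=R-t$. If $R>t$ this is positive, and if $t>0$ it is less than $R$, so $\Ker\bar D$ is a nonzero proper invariant subspace. For the ``in particular'' clause, $t\le N<R$ forces $R>t$. If $t=0$, then every $g_j=I$, so $K=V^{\oplus n}$ and $R=0$, which contradicts $R>N\ge0$; thus $R>N$ implies $R>t>0$. For the example, $K=0$ gives $R=nN\ge 2N>N$ when $n\ge2$ and $N>0$.

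\emph{Part 2.} Suppose $U_0\ne0$ is irreducible. The invariant subspace $\Ker\bar D$ is then either $0$ or $U_0$. It cannot be $U_0$: that would give $J=0$, hence all $g_j=I$ and $U_0=0$. Therefore $\bar D$ is injective. Positive-definiteness of $G_0$ follows because $\langle\cdot,\cdot\rangle_V$ is positive definite (Definition~\ref{def:seed}) and $\bar D$ is injective. Invariance follows from
\[
G_0(A_i[v],A_i[w])=\langle s_i\bar D[v],s_i\bar D[w]\rangle_V=G_0([v],[w]),
\]
using $s_i\in U(N)$. This shows $U_0$ is unitarizable, and the equality $\Wdef=\e^{-1}(E^c)$ then comes from the equivalence in Theorem~\ref{thm:certificate} under braid irreducibility.

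The main delicate point is part~1's use of surjectivity of $\bar D$ onto $J$, which gives exactly $\dim\Ker\bar D=R-t$, together with the separate handling of the degenerate case $t=0$. Everything else is a direct combination of Lemma~\ref{lem:augmentation-intertwining} with Schur-type reasoning that has already been packaged in Theorem~\ref{thm:certificate}.
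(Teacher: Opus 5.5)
Your proof is correct, and since the paper only recalls this proposition from its sources without giving a proof, you have reconstructed exactly the standard argument those citations supply: surjectivity of $\bar D$ onto $J$ gives $\dim\Ker\bar D=R-t$, Lemma~\ref{lem:augmentation-intertwining} makes $\Ker\bar D$ invariant and $G_0$ invariant under the unitary $s_i$, and the Schur-lemma equivalence of Theorem~\ref{thm:certificate} yields $\Wdef=\e^{-1}(E^c)$. Your first, Artin-relation justification that $J$ is $s_i$-stable is incomplete as written, but it is superseded by the direct argument via $\bar D A_i=s_i\bar D$ and is not needed for any of the conclusions.
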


The following restriction on the seed operators is the additional
conclusion proved here beyond these augmentation criteria.
\begin{proposition}[A restriction on seeds with irreducible output]
\label{prop:irreducible-input}
Let $\rho:F_n\rtimes B_n\to\operatorname{GL}(V)$ be a seed preserving
one common positive-definite form on the whole group, written in an
orthonormal basis as in Definition~\ref{def:seed}.
If $U_0\neq0$ is irreducible as a $B_n$-representation, then the seed
operators are of the form $g_j=I+(\zeta-1)\Pi_j$ with a common
$\zeta\neq1$ and orthogonal projectors $\Pi_j$ with
$\Pi_j\Pi_k=0$ for $j\ne k$; in particular the $g_j$ commute.
Thus non-commuting unitary seeds never have an irreducible braid
restriction at nonexceptional parameters.
\end{proposition}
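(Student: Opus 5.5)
The plan is to combine Schur's lemma with the two-term formula of Remark~\ref{rem:pencil}. Both coefficient forms there, $D^\dagger D$ and $B=\Ht(-1)$, are braid-invariant, and irreducibility will force them to be proportional. Comparing their blocks will then pin down the $g_j$.

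\textbf{Step 1 (two invariant forms on $U_0$).} Since $U_0\ne0$ is irreducible, Proposition~\ref{prop:augmentation}(2) shows that $\bar D$ is injective. Hence $G_0=\bar D^\dagger\bar D$ is a positive-definite invariant form on $U_0$. The set $E$ is finite, so we may fix some $\ee l\in E^c$. Then $\Hq(\ee l)$ is a nondegenerate invariant Hermitian form on the same space $U_0$. Exactly as in the proof of Theorem~\ref{thm:certificate}, $G_0^{-1}\Hq(\ee l)$ commutes with every $A_i$, so Schur's lemma gives $\Hq(\ee l)=c\,G_0$ with $c$ real.

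Both $\Ht(\ee l)$ and $D^\dagger D$ are Hermitian forms on $V^{\oplus n}$ that vanish on $K$. The first does so because $K$ lies in its radical by Theorem~\ref{thm:klm-form}; the second because $D(K)=0$. Since they agree on $U_0=V^{\oplus n}/K$, they agree on all of $V^{\oplus n}$, giving $\Ht(\ee l)=c\,D^\dagger D$. Substituting Remark~\ref{rem:pencil} and using $\sin(\pi l)\ne0$, we get
\[
 B=\mu\,D^\dagger D,\qquad \mu=\frac{c-\cos(\pi l)}{\sin(\pi l)}\in\R .
\]

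\textbf{Step 2 (block comparison).} Because the $g_j$ are unitary, $(g_j^{-1}-I)=(g_j-I)^\dagger$, and the $(j,k)$ block of $D^\dagger D$ is $X_{jk}:=(g_j^{-1}-I)(g_k-I)$. By \eqref{eq:blocks} at $\lambda=-1$ with $\lambda^{1/2}=i$, the $(j,k)$ block of $B$ is $-iX_{jk}$ for $j<k$ and $iX_{jk}$ for $j>k$.

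For $j<k$ the two comparisons give $-iX_{jk}=\mu X_{jk}$ and $iX_{kj}=\mu X_{kj}$, where $X_{kj}=X_{jk}^\dagger$. If $X_{jk}\ne0$ this would force $\mu=-i$ and $\mu=i$ simultaneously, which is impossible. Hence $(g_j-I)^\dagger(g_k-I)=0$, so $\Img(g_j-I)\perp\Img(g_k-I)$ for $j\ne k$. When $n=1$ this step is vacuous.

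For the diagonal blocks, $B_{jj}=-i(g_j^{-1}+I)(g_j-I)=-i(g_j-g_j^{-1})$ and $(D^\dagger D)_{jj}=2I-g_j-g_j^{-1}$. Evaluate both on a unitary eigenvector of $g_j$ with eigenvalue $\ee\theta\ne1$. This gives $2\sin 2\pi\theta=\mu(2-2\cos2\pi\theta)$, that is, $\cot(\pi\theta)=\mu$, which has a unique solution $\theta\in(0,1)$. So every nontrivial eigenvalue of every $g_j$ equals one common $\zeta=\ee\theta\ne1$.

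\textbf{Step 3 (conclusion).} Each $g_j$ is unitary, hence normal, with spectrum in $\{1,\zeta\}$. Therefore $g_j=I+(\zeta-1)\Pi_j$, where $\Pi_j$ is the orthogonal projector onto $\Img(g_j-I)$. By Step 2 these images are mutually orthogonal, so $\Pi_j\Pi_k=0$ for $j\ne k$, and the $g_j$ commute. If $U_0\ne0$, some $g_j\ne I$, so $\zeta$ is actually attained.

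The main point needing care is the passage in Step 1 from an equality of forms on $U_0$ to an equality of block matrices on $V^{\oplus n}$. This uses that both forms vanish on $K$ (and not on $K+L_\lambda$), which is why $\lambda$ must be chosen nonexceptional. The reality of $\mu$, which underlies both block comparisons, must also be checked; it follows from the reality of the Schur scalar for two Hermitian forms.
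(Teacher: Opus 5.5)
Your proof is correct and follows essentially the same route as the paper. Schur's lemma against the positive form $G_0=\bar D^\dagger\bar D$ gives $B=\mu D^\dagger D$ with $\mu\in\R$; the off-diagonal blocks then force $(g_j-I)^\dagger(g_k-I)=0$, and the diagonal blocks evaluated on eigenvectors force $\mu=\cot(\pi\theta)$, hence a common $\zeta$. The only difference is that the paper applies Schur directly to $B=\Ht(-1)$, which is legitimate even if $B$ is degenerate on $U_0$ because only $G_0>0$ is used. Your detour through a nonexceptional $\Hq(\ee{l})$ and the pencil formula is therefore harmless but unnecessary. Contrary to your closing remark, nonexceptionality is not actually needed, since $K$ lies in the radical of $\Ht(\lambda)$ for every admissible $\lambda$.
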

\begin{proof}
Use the positive-definite invariant form $G_0$ from
Proposition~\ref{prop:augmentation}(2).
Both $G_0$ (pulled back to
$V^{\oplus n}$, where it is $D^{\dagger}D$) and
$B:=\Ht(-1)$ vanish on $K$ and are invariant, so by Schur's lemma
$B=\mu\,D^{\dagger}D$ on $V^{\oplus n}$ with $\mu\in\R$.
This uses the positive-definite form $G_0$ on $U_0$; it permits
$\mu=0$ and does not require $-1$ to be nonexceptional or $B$ to be
nondegenerate on $U_0$.

Writing $\delta_j:=g_j-I$, the blocks of $D^{\dagger}D$ are
$\delta_j^{\dagger}\delta_k$, while by Remark~\ref{rem:pencil} (and
\cite[Prop.~6.1]{Nsig}) the blocks of $B$ are $-i\delta_j^{\dagger}\delta_k$
for $j<k$ and $-i(g_j-g_j^{\dagger})$ for $j=k$. Comparing the blocks
$j<k$ gives $(\mu+i)\delta_j^{\dagger}\delta_k=0$, hence $\delta_j^{\dagger}\delta_k=0$:
the images of the $\delta_j$ are mutually orthogonal.

On an eigenvector of
$g_j$ with eigenangle $a\in(0,1)$ the diagonal blocks give
$2\sin(2\pi a)=\mu\cdot4\sin^2(\pi a)$, i.e.\ $\mu=\cot(\pi a)$; since
$a\mapsto\cot(\pi a)$ is injective on $(0,1)$, all non-trivial eigenvalues of all
$g_j$ equal one $\zeta=\ee a$, so $g_j=I+(\zeta-1)\Pi_j$ with $\Pi_j$
the orthogonal projector onto the $\zeta$-eigenspace, and
$\delta_j^{\dagger}\delta_k=|\zeta-1|^2\Pi_j\Pi_k=0$ gives $\Pi_j\Pi_k=0$.
\end{proof}

In the family of Section~\ref{sec:twisted}, $K=0$ and
$\dim U_0=n(n+1)>\dim V=n+1$, so Proposition~\ref{prop:augmentation}(1) already forces
reducibility of the full braid restriction. The relevant irreducibility
question is therefore that of its summand $\mathcal R_n$
(Proposition~\ref{prop:generic-irred}); the obstruction does not
exclude irreducible summands of outputs from noncommuting seeds.

\begin{corollary}[A signature obstruction to irreducibility]
\label{cor:reducibility}
If the signature pair $(n_+(l),n_-(l))$ of the non-degenerate
invariant form $\Hq(\ee l)$ on the fixed space $U_0$ takes at least
three distinct values as $l$ varies over
$\{l\in(0,1):\ee l\in E^c\}$, then the braid representation on
$U_0$ is reducible.
\end{corollary}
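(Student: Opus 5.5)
We argue by contraposition: we assume the braid representation on $U_0$ is irreducible and show that the signature pair takes at most two values on $\{l\in(0,1):\ee l\in E^c\}$. The tool is Theorem~\ref{thm:certificate} together with Proposition~\ref{prop:augmentation}(2). If $U_0=0$, the signature is $(0,0)$ throughout and there is nothing to prove, so suppose $U_0\neq0$. By Proposition~\ref{prop:augmentation}(2), $U_0$ then carries the positive-definite invariant form $G_0$. Equivalently, we may invoke the converse part of Theorem~\ref{thm:certificate}, which gives $\Wdef=\e^{-1}(E^c)$.

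Next we fix any $l$ with $\ee l\in E^c$. The quotient form $\Hq(\ee l)$ is nondegenerate on $U_0$ by Theorem~\ref{thm:klm-form}, and it is invariant under the fixed family $(A_i)$ by Proposition~\ref{prop:lambda-free}. Repeating the Schur-lemma argument from the proof of Theorem~\ref{thm:certificate}, the operator $G_0^{-1}\Hq(\ee l)$ commutes with every $A_i$. It is therefore a scalar $c(l)$. This scalar is real because both forms are Hermitian, and nonzero because $\Hq(\ee l)$ is nondegenerate. Hence $\Hq(\ee l)=c(l)\,G_0$ is definite, and its signature is $(R,0)$ if $c(l)>0$ and $(0,R)$ if $c(l)<0$.

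So under irreducibility the signature pair lies in the two-element set $\{(R,0),(0,R)\}$ for every admissible $l$, and it cannot take three distinct values. This gives the contrapositive of the statement. As a consistency check, Theorem~\ref{thm:import-sig} expresses the signature as $(R-m(l),m(l))$, and the argument forces $m(l)\in\{0,R\}$, which matches $\Wdef=\e^{-1}(E^c)$.

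There is no substantial obstacle here. The point requiring care is that Schur's lemma is applied to the braid restriction on the fixed space $U_0$, which is legitimate because at nonexceptional $\lambda$ every $\Hq(\ee l)$ lives on that same space with the same operators $A_i$ (Proposition~\ref{prop:lambda-free}). A second check is that the realness of the scalar uses Hermitian symmetry: from $\Hq=cG_0$ with both sides Hermitian and $G_0>0$ we get $\bar c=c$.
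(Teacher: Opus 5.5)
Your argument is correct, but it uses a different reference form from the paper's.

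\textbf{The paper's route.} The paper's proof is the bare Schur observation. If $U_0$ were irreducible, every invariant Hermitian form would be a real multiple of one fixed nondegenerate invariant form; any single $\Hq(\ee{l_0})$ will do. Nonzero real multiples of a form of signature $(p,q)$ have only the signatures $(p,q)$ and $(q,p)$. No positivity is needed, and this is exactly why the statement asks for three distinct values.

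\textbf{Your route.} You compare every $\Hq(\ee l)$ with the positive-definite augmentation form $G_0$ of Proposition~\ref{prop:augmentation}(2). This step relies on the standing positive-definite seed hypothesis:
\begin{itemize}
\item unitarity of the $s_i$ gives invariance of $G_0$ via $\bar D A_i=s_i\bar D$;
\item positivity of the inner product on $V$, together with injectivity of $\bar D$, gives $G_0>0$.
\end{itemize}
In exchange you get a sharper conclusion: the signature lies in $\{(R,0),(0,R)\}$. This is precisely the ``stronger test'' the paper records right after its proof. Under the paper's hypotheses, a single indefinite canonical form at a nonexceptional parameter already forces reducibility, so your argument does not need the three-value hypothesis at all.

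\textbf{Trade-off.} The paper's version applies verbatim whenever the $\Hq(\ee l)$ are nondegenerate braid-invariant forms on the fixed space $U_0$. This includes inputs carrying only a nondegenerate, possibly indefinite, form as in Theorem~\ref{thm:klm-form}. Your version needs positivity and the augmentation proposition, but gives the stronger criterion.

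\textbf{Minor attribution point.} The invariance of $\Hq(\ee l)$ comes from Theorem~\ref{thm:klm-form}. Proposition~\ref{prop:lambda-free} only identifies the induced braid operators with the fixed $A_i$.
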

\begin{proof}
If it were irreducible, every invariant Hermitian form would be a
real multiple of one fixed non-degenerate invariant form (Schur), and
its non-zero multiples have only the two signatures $(p,q)$, $(q,p)$.
\end{proof}
This is a general Schur-lemma observation. Under the positive-definite
seed hypothesis of this paper, Proposition~\ref{prop:augmentation}(2)
gives a stronger test: a single indefinite canonical form at a
nonexceptional parameter already forces reducibility.

\subsection{Examples: empty windows and Hecke specializations}\label{subsec:definiteness-examples}

\begin{example}[An empty canonical definite locus]\label{ex:empty-window}
The following example shows that, without the irreducibility assumption,
the braid representation can be unitarizable even though its canonical
invariant Hermitian form is indefinite for every admissible parameter
$\lambda$.
Let $n=3$, $N=2$, and take the unitary seed
\[
g_1=g_2=g_3=\begin{pmatrix}\ee{1/10}&0\\0&\ee{9/10}\end{pmatrix},\qquad
s_1=s_2=I_2 ,
\]
which satisfies the Artin relations. Here $K=0$, $U_0=V^{\oplus3}=\C^6$,
and a fixed permutation of coordinates splits the braid
representation into two three-dimensional scalar summands with
$N=1$ and angles $a=\tfrac1{10}$, $a=\tfrac9{10}$.

For a scalar
summand of angle $a$, Theorem~\ref{thm:import-sig} specializes to
$\sig\Hq_a(\ee l)=(3-j,\,j-r_a(l))$, $j=\lfloor l+3a\rfloor$, with
$r_a(l)=1$ iff $l+3a\in\Z$. The walls are $l=\tfrac3{10},\tfrac7{10}$,
and adding the summand signatures gives, off the walls,
\[
\sig\Hq(\ee l)=\begin{cases}(4,2),&0<l<\tfrac3{10},\\
(3,3),&\tfrac3{10}<l<\tfrac7{10},\\ (2,4),&\tfrac7{10}<l<1,\end{cases}
\]
and $(3,2)$, $(2,3)$ at the two walls. So no canonical quotient form
is definite, even including wall parameters, and
Corollary~\ref{cor:reducibility} already predicts reducibility.

Nevertheless the six-dimensional braid representation is
unitarizable: at $l=\tfrac12$ the summand forms have signatures $(3,0)$
and $(0,3)$, so in the summand-ordered basis
\[
H_{\mathrm{pos}}:=\Hq_{1/10}(-1)\oplus\bigl(-\Hq_{9/10}(-1)\bigr)
\]
is a positive-definite invariant form. A sign change on one invariant
summand produces a positive form outside the canonical family.
The two-summand family in \cite[Rem.~7.8(c)]{Nsig} exhibits the
same phenomenon at an exceptional parameter.
The summands are the three-dimensional
unreduced Burau representations of $B_3$ at $\ee{1/10}$ and
$\ee{9/10}$, respectively, each in Squier's definite range
\cite{Squier}.

This is also an instance of the componentwise sign correction in
\cite[Theorem~3.3]{FiniteCompanion}: at $\lambda=-1$ the two
free-group constituents have definite canonical forms of opposite
signs. In this example the positive form $H_{\mathrm{pos}}$ is invariant under the
full KLM action at $\lambda=-1$, not only under its braid restriction.
This unitarizability statement alone does not imply finiteness of the
image: the arithmetic finite-image criteria also require the stated
arithmetic hypotheses and definiteness at all relevant coefficient-field
embeddings \cite[Section~2.2]{FiniteCompanion}.
\end{example}

\begin{example}[Hecke specialization]\label{ex:hecke}
The following specialization illustrates how the general signature
formula recovers the classical Burau--Squier parameter range in which
a definite canonical invariant Hermitian form exists.
Let $0<a<1$, $s_i=sI_N$ with a scalar $s\in U(1)$, and $g_i=g\in U(N)$
with $\spec(g)\subseteq\{1,\ee a\}$ and $\nu:=\dim\Ker(g-\ee a)>0$.
Then \cite[Cor.~7.7, Rem.~7.8]{Nsig} give $\dim U_0=n\nu$ and, for
$\ee l\in E^c$, $\sig\Hq=(\nu(n-j),\nu j)$ with $j=\lfloor l+na\rfloor$ (and
$(\nu(n-J),\nu(J-1))$ on the wall $l+na=J\in\Z$, where $r(l)=\nu$). Thus
$\Wdef\neq\emptyset$ iff $a\in(0,\tfrac1n)\cup(\tfrac{n-1}n,1)$, the
Burau--Squier interval. As a $B_n$-representation on $U_0$, the family
is $\nu$ copies of the $n$-dimensional unreduced Burau representation
at $\ee a$, with every braid generator multiplied by the scalar $s$
\cite{NHecke}. Accordingly the normalized generators $T_i:=s^{-1}A_i$
satisfy the Hecke relation
\[
 (T_i-I)(T_i+\ee a I)=0,
\]
because each Burau block satisfies this quadratic identity.
In particular $n=3$, $\nu=1$ gives a three-dimensional
unreduced $B_3$-representation. This is a consistency check with
\cite{Squier}, not a new representation.
\end{example}


\section{Uniqueness of unitary realizations up to unitary equivalence}\label{sec:gauge}

Different definite forms and unitarizing factors can give different
matrices $U_i$ from the same fixed operators $A_i$. We recall the
standard polar-decomposition argument for unitary equivalence and
apply it to the $\lambda$-independence of \cite[Prop.~6.11(i)]{Nsig}.
This gives one explicit unitary intertwiner for all braid generators
and, in Section~\ref{sec:mult}, for the multiplicativity isomorphism
of \cite[Thm.~0.2]{HN}.

\begin{definition}[Intertwiners and equivalence]\label{def:intertwiners}
For representations $\rho_1,\rho_2$ of the same group, an
\emph{intertwiner} is a linear map $T$ satisfying
$T\rho_1(\gamma)=\rho_2(\gamma)T$ for every group element $\gamma$.
The representations are \emph{equivalent} if such a $T$ is invertible,
and \emph{unitarily equivalent} if $T$ can be chosen unitary for their
specified positive inner products.
\end{definition}

\begin{definition}[Polar decomposition]\label{def:polar-decomposition}
Let $T:\mathcal H_1\to\mathcal H_2$ be an invertible linear map
between finite-dimensional complex Hilbert spaces. Put
\[
P=(T^\dagger T)^{1/2},\qquad
W=TP^{-1}=T(T^\dagger T)^{-1/2},
\]
where $\dagger$ denotes the adjoint for the specified inner products
and $P$ is the unique positive-definite square root of $T^\dagger T$.
Then $W$ is unitary. The factorization $T=WP$ is the
\emph{polar decomposition} of $T$, and $W$ is its
\emph{unitary polar part}.
\end{definition}

\begin{lemma}[Unitary intertwiners from polar decomposition]\label{lem:polar}
Let $(U^{(1)}_i)$, $(U^{(2)}_i)$ be unitary families on Hilbert spaces
of equal dimension and $T$ an invertible intertwiner,
$TU^{(1)}_i=U^{(2)}_iT$. Then $W:=T(T^{\dagger}T)^{-1/2}$ is a unitary
intertwiner. If $(U^{(1)}_i)$ is irreducible, $W$ is unique up to a
phase, and every invariant Hermitian form $H^{(2)}$ of
$(U^{(2)}_i)$ satisfies $T^{\dagger}H^{(2)}T=c\,H^{(1)}$ with $c\in\R$
for any nonzero invariant Hermitian form $H^{(1)}$ of $(U^{(1)}_i)$.
\end{lemma}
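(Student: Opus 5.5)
The proof has three parts: unitarity of $W$, the intertwining property, and then uniqueness and the form relation under irreducibility. Write $U^{(k)}_i$ for the families and use $U^{\dagger}=U^{-1}$ throughout.

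\emph{Step 1 (intertwining and unitarity).} Take adjoints in $TU^{(1)}_i=U^{(2)}_iT$. Unitarity gives
\[
T^\dagger U^{(2)}_i=U^{(1)}_iT^\dagger .
\]
Hence
\[
T^\dagger T\,U^{(1)}_i=T^\dagger U^{(2)}_iT=U^{(1)}_iT^\dagger T,
\]
so the positive operator $T^\dagger T$ commutes with every $U^{(1)}_i$. Its positive square root $P$ is a polynomial in $T^\dagger T$ (interpolate $\sqrt{\cdot}$ on its spectrum), so $P$ and $P^{-1}$ also commute with every $U^{(1)}_i$. It follows that
\[
WU^{(1)}_i=TP^{-1}U^{(1)}_i=TU^{(1)}_iP^{-1}=U^{(2)}_iW.
\]
Unitarity of $W$ is the standard computation $W^\dagger W=P^{-1}T^\dagger TP^{-1}=I$, which works because the dimensions are equal and $T$ is invertible.

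\emph{Step 2 (uniqueness up to phase).} Let $W'$ be another unitary intertwiner. Then $W^{-1}W'$ is a unitary self-intertwiner of the irreducible family $(U^{(1)}_i)$. By Schur's lemma it equals $cI$, and unitarity forces $|c|=1$.

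\emph{Step 3 (forms).} Invariance of $H^{(2)}$ means $U^{(2)\dagger}_iH^{(2)}U^{(2)}_i=H^{(2)}$, and the same holds for $H^{(1)}$. From this:

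\begin{enumerate}
\item Pulling back with the intertwining relation shows that $H':=T^\dagger H^{(2)}T$ is an invariant Hermitian form for $(U^{(1)}_i)$.
\item For nonzero $H^{(1)}$, rewrite invariance as $H^{(1)}U^{(1)}_i=U^{(1)\dagger\,-1}_iH^{(1)}$, and similarly for $H'$. Since the $U^{(1)}_i$ are unitary, $U^{(1)\dagger\,-1}_i=U^{(1)}_i$.
\item Therefore $H^{(1)}$ and $H'$ both commute with every $U^{(1)}_i$.
\item By irreducibility and Schur, $H^{(1)}=\alpha I$ with $\alpha\neq0$ and $H'=\beta I$.
\item Hermiticity gives $\alpha,\beta\in\R$, so $H'=(\beta/\alpha)H^{(1)}$ with $c=\beta/\alpha\in\R$.
\end{enumerate}

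This argument needs no nondegeneracy of $H^{(2)}$.

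The main obstacle is Step 3. Schur's lemma applies to operators, so each invariant form must first be turned into an operator commuting with the representation. That conversion works here only because the $U^{(1)}_i$ are unitary, which makes invariant forms and commutants coincide.
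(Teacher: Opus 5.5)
Your proof is correct and follows essentially the same route as the paper: $T^{\dagger}T$, and hence its inverse square root, commutes with every $U^{(1)}_i$; Schur's lemma gives uniqueness (the paper applies it to $W'W^{\dagger}$ on the second space, you to $W^{-1}W'$ on the first, which is equivalent); and $T^{\dagger}H^{(2)}T$ is an invariant Hermitian form of $(U^{(1)}_i)$. Your Step~3 just makes explicit the Schur step the paper leaves implicit, using unitarity to identify invariant Hermitian forms with self-adjoint elements of the commutant, so that both forms are real multiples of $I$.
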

\begin{proof}
$T^{\dagger}T=(U^{(2)}_iTU^{(1)-1}_i)^{\dagger}(U^{(2)}_iTU^{(1)-1}_i)
=U^{(1)}_i(T^{\dagger}T)U^{(1)-1}_i$, so the positive operator
$T^{\dagger}T$ and its inverse square root commute with every
$U^{(1)}_i$; hence $W$ intertwines and
$W^{\dagger}W=(T^{\dagger}T)^{-1/2}(T^{\dagger}T)(T^{\dagger}T)^{-1/2}=I$.
No irreducibility is used for existence. Under irreducibility,
uniqueness up to phase is Schur's lemma applied to $W'W^{\dagger}$, and
$T^{\dagger}H^{(2)}T$ is an invariant Hermitian form of $(U^{(1)}_i)$,
hence a real multiple of $H^{(1)}$.
\end{proof}

\begin{theorem}[Unitary equivalence]\label{thm:gauge}
Let $l_1,l_2\in\Wdef$, put $\lambda_\nu:=\ee{l_\nu}$ for $\nu\in\{1,2\}$, and suppose
$\varepsilon_\nu\Hq(\lambda_\nu)>0$ with
$\mathsf{C}_\nu^{\dagger}\mathsf{C}_\nu=\varepsilon_\nu\Hq(\lambda_\nu)$. Write
$U_i^{(\nu)}:=\mathsf{C}_\nu A_i\mathsf{C}_\nu^{-1}$. Then
\[
U_i^{(2)}=W\,U_i^{(1)}\,W^{\dagger}\quad(i=1,\dots,n-1),\qquad
W:=\text{unitary polar part of } \mathsf{C}_2\mathsf{C}_1^{-1},
\]
with the unitary polar part as in Definition~\ref{def:polar-decomposition},
independently of the chambers and signs. Hence the unitary equivalence
class of the braid representation on $U_0$ is an invariant of the seed
alone; for any braid word $b$, $U^{(2)}(b)=W U^{(1)}(b)W^{\dagger}$.
The statement is restricted to $\Wdef$: a definite wall quotient has
dimension $R-r(l)<R$ and is not conjugate to the representation on $U_0$.
\end{theorem}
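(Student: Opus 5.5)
The plan is to apply Lemma~\ref{lem:polar} to the invertible map $T:=\mathsf C_2\mathsf C_1^{-1}$. We first need to check that $T$ intertwines the two unitary families. Because $l_1,l_2\in\Wdef$, both $\lambda_1$ and $\lambda_2$ lie in $E^c$. By Proposition~\ref{prop:lambda-free}, the braid representation on $Q_{\lambda_\nu}$ is therefore the same fixed family $(A_i)$ on $U_0$, written in one fixed basis that does not depend on $\lambda$. This is the only point at which the hypothesis $l_\nu\in\Wdef$, as opposed to a wall parameter, enters. At a wall the quotient has the smaller dimension $R-r(l)$, so no invertible map to $U_0$ can exist; this accounts for the final sentence of the statement.

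With this in place, the intertwining is a one-line computation:
\[
 T U_i^{(1)}=\mathsf C_2\mathsf C_1^{-1}\mathsf C_1A_i\mathsf C_1^{-1}
 =\mathsf C_2A_i\mathsf C_1^{-1}
 =\mathsf C_2A_i\mathsf C_2^{-1}\,\mathsf C_2\mathsf C_1^{-1}=U_i^{(2)}T .
\]
Here both $U_i^{(1)}$ and $U_i^{(2)}$ are unitary for the standard inner product on $\C^R$. This holds because $A_i^\dagger\varepsilon_\nu\Hq(\lambda_\nu)A_i=\varepsilon_\nu\Hq(\lambda_\nu)$ and $\mathsf C_\nu^\dagger\mathsf C_\nu=\varepsilon_\nu\Hq(\lambda_\nu)$, which give $U_i^{(\nu)\dagger}U_i^{(\nu)}=I$. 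The signs $\varepsilon_\nu$ and the chambers containing $l_\nu$ play no role beyond making each $\varepsilon_\nu\Hq(\lambda_\nu)$ positive. This already shows that the conclusion is independent of chambers and signs.

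Lemma~\ref{lem:polar} then gives the unitary $W=T(T^\dagger T)^{-1/2}$ with $WU_i^{(1)}=U_i^{(2)}W$, that is, $U_i^{(2)}=WU_i^{(1)}W^\dagger$. This uses no irreducibility, which matters because the representation on $U_0$ is in general reducible (Proposition~\ref{prop:augmentation}(1)). The key point is that $T^\dagger T$ commutes with every $U_i^{(1)}$, so its inverse square root, being a polynomial in it, commutes with them as well. Since $W$ does not depend on $i$ and conjugation is multiplicative, the relation extends to every braid word $b$, including inverses because $U_i^{-1}=U_i^\dagger$: $U^{(2)}(b)=WU^{(1)}(b)W^\dagger$. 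The seed-invariance claim follows at once. The pairs $(A_i)$ and $\mathsf C_\nu$ depend only on the seed together with the choices made, and any two such choices are related by $W$.

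The main obstacle is a conceptual one: making sure that $A_i$ really is the same matrix at both parameters, so that $\lambda$ affects only the form and never the operators. This rests on Proposition~\ref{prop:lambda-free} and on fixing a single basis of $U_0$. It is precisely what fails on the walls, where the further quotient by the image of $L_\lambda$ changes the space.
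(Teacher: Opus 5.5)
Your proposal is correct and follows the paper's proof: $T=\mathsf C_2\mathsf C_1^{-1}$ intertwines the two unitary families because both conjugate the same $\lambda$-independent operators $(A_i)$ (Proposition~\ref{prop:lambda-free}), and Lemma~\ref{lem:polar} then gives its unitary polar part as an intertwiner without any irreducibility assumption. Your explicit intertwining computation, the unitarity check, and the extension to braid words spell out what the paper's two-line proof leaves implicit.
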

\begin{proof}
$\mathsf{C}_2\mathsf{C}_1^{-1}$ intertwines the two unitary families because both are
conjugates of the same $(A_i)$ (Proposition~\ref{prop:lambda-free});
apply Lemma~\ref{lem:polar}.
\end{proof}

Note that $T:=\mathsf{C}_2\mathsf{C}_1^{-1}$ itself is in general not unitary:
$T^{\dagger}T=\mathsf{C}_1^{-\dagger}(\mathsf{C}_2^{\dagger}\mathsf{C}_2)\mathsf{C}_1^{-1}
=\mathsf{C}_1^{-\dagger}\varepsilon_2\Hq(\lambda_2)\mathsf{C}_1^{-1}$; the conjugator
$W$ is its unitary polar part.

\begin{remark}
As $F_n\rtimes B_n$-modules the $\KLM_\lm(V)$ can vary with
$\lm$: two of them are non-isomorphic whenever the operators induced
by a free generator on the KLM quotients have different spectral
multisets (the ambient spectra of \eqref{eq:free-matrices} must first
be adjusted for the subspace removed in forming the quotient), and
monodromy of the associated KZ-type systems deforms with $\lm$. The collapse in Theorem~\ref{thm:gauge} is exactly the
forgetting of the $F_n$-part.
\end{remark}


\subsection{Compatibility with composition of KLM functors}\label{sec:mult}

We record the standard application of polar decomposition to
composition of KLM functors. This supplementary result is not used
in the Burau structure or synthesis arguments below. The number of
functor iterations is unrelated to circuit depth.

\begin{definition}[Almost trivial modules]\label{def:almost-trivial}
An \emph{almost trivial} $\C[F_n]$-module is one-dimensional with all
but possibly one of the standard generators acting trivially. Following
\cite[\S5.1]{HN}, let $\mathbf{Mod}^{\mathrm{NT}}$ consist of the
$F_n\rtimes B_n$-modules $V$ such that, for every almost trivial
$\C[F_n]$-module $\mathcal A$,
\[
\Hom_{\C[F_n]}(\mathcal A,V)=0=\Hom_{\C[F_n]}(V,\mathcal A),
\]
where $V$ is restricted to $F_n$ in these Hom spaces.
\end{definition}
For $V\in\mathbf{Mod}^{\mathrm{NT}}$ and
$\lambda_1,\lambda_2\in\C^\times$ with
$\lambda_1,\lambda_2,\lambda_1\lambda_2\ne1$,
\cite[Thm.~0.2]{HN} supplies the natural module isomorphism
\[
\Phi:\KLM_{\lambda_1}(\KLM_{\lambda_2}(V))\to\KLM_{\lambda_1\lambda_2}(V).
\]

\begin{proposition}[Unitary realization]\label{prop:mult}
Suppose the two modules above are equipped with positive-definite
braid-invariant forms $H^{(2)}$ and $H^{(1)}$ respectively, and let
$\mathsf{C}_\nu^{\dagger}\mathsf{C}_\nu=H^{(\nu)}$ for $\nu\in\{1,2\}$, $T:=\mathsf{C}_1\Phi \mathsf{C}_2^{-1}$,
$W:=T(T^{\dagger}T)^{-1/2}$. Then $W$ is a unitary intertwiner between
the unitarized braid representations; if they are irreducible, it is
unique up to a phase.
\end{proposition}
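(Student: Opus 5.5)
The argument follows Theorem~\ref{thm:gauge} and Lemma~\ref{lem:polar}. The point is that $\Phi$ intertwines the braid actions but need not be unitary for the chosen forms.

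\textbf{Step 1: $T$ is an invertible intertwiner.} The isomorphism $\Phi$ of \cite[Thm.~0.2]{HN} is a module isomorphism of $F_n\rtimes B_n$-modules. Hence it is invertible and satisfies $\Phi A^{(2)}_i=A^{(1)}_i\Phi$. Here $A^{(2)}_i$ and $A^{(1)}_i$ are the braid operators on $\KLM_{\lambda_1}(\KLM_{\lambda_2}(V))$ and $\KLM_{\lambda_1\lambda_2}(V)$, written in fixed bases.

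Unitarize with the given factors:
\[
U^{(2)}_i:=\mathsf C_2A^{(2)}_i\mathsf C_2^{-1},\qquad
U^{(1)}_i:=\mathsf C_1A^{(1)}_i\mathsf C_1^{-1}.
\]
Since $H^{(\nu)}$ is braid-invariant, $(\mathsf C_\nu^{-1})^\dagger A^{(\nu)\dagger}_i\mathsf C_\nu^\dagger\mathsf C_\nu A^{(\nu)}_i\mathsf C_\nu^{-1}=I$, so these are unitary families for the standard inner product.

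A direct computation then gives
\[
TU^{(2)}_i=\mathsf C_1\Phi A^{(2)}_i\mathsf C_2^{-1}=\mathsf C_1A^{(1)}_i\Phi\mathsf C_2^{-1}=U^{(1)}_iT.
\]
So $T$ is an invertible intertwiner from the family $(U^{(2)}_i)$ to the family $(U^{(1)}_i)$.

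\textbf{Step 2: apply Lemma~\ref{lem:polar}.} Take $(U^{(2)}_i)$ as the source family and $(U^{(1)}_i)$ as the target. The dimensions agree because $\Phi$ is invertible. The lemma shows that $W=T(T^\dagger T)^{-1/2}$ is unitary and intertwines. The argument is that $T^\dagger T$ commutes with every $U^{(2)}_i$, and hence so does its inverse positive square root. This holds for every braid word, since intertwining on generators passes to products and inverses.

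\textbf{Step 3: uniqueness.} Assume irreducibility. Let $W'$ be another unitary intertwiner. Then $W^\dagger W'$ commutes with the irreducible family $(U^{(2)}_i)$, so Schur's lemma makes it a scalar. That scalar is unitary, hence a phase.

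\textbf{Main obstacle.} The only real subtlety is bookkeeping. We must check that $\Phi$ intertwines the \emph{braid} actions in the coordinates where the $H^{(\nu)}$ are written. In particular, when either quotient is exceptional, we must identify the spaces as the actual KLM quotients rather than as $U_0$. This is handled by working directly on the modules named in the statement.
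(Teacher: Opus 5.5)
Your proof is correct and follows the paper's own argument: $\Phi$ intertwines the braid actions, so $T=\mathsf C_1\Phi\mathsf C_2^{-1}$ intertwines the two unitarized families, and Lemma~\ref{lem:polar} then gives both the unitarity and intertwining of $W$ and its uniqueness up to a phase. Your explicit check of $TU^{(2)}_i=U^{(1)}_iT$, and your Schur argument with $W^\dagger W'$ in place of the lemma's $W'W^\dagger$, simply spell out details the paper leaves to that lemma.
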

\begin{proof}
$\Phi$ intertwines the braid actions by functoriality, so $T$
intertwines the two unitary families; apply Lemma~\ref{lem:polar}.
\end{proof}

The canonical forms are constructed by Theorem~\ref{thm:klm-form}
and \eqref{eq:blocks}; Theorem~\ref{thm:import-sig} determines their
signatures. To use this signature criterion at successive stages,
each input must carry a positive-definite form invariant under the
whole group $F_n\rtimes B_n$. In particular, this must hold for the
intermediate module $\KLM_{\lambda_2}(V)$.

A sufficient condition is that its canonical quotient form be definite,
since that form is invariant under the whole group
\cite[Theorems~8--9]{NKLM}; changing its sign if necessary gives the
required positive inner product. This includes the nonexceptional
definite case, but $L_{\lambda_2}=0$ is not necessary:
the quotient $Q_{\lambda_2}$ at an exceptional parameter can also be used whenever it
carries a positive definite form invariant under the whole group.
Unitarizability of the intermediate braid restriction alone is not enough.
To obtain $H^{(\nu)}$ from a canonical output form, its definiteness
must also be checked, with a sign change if necessary.

For the canonical forms at each stage,
the complex parameters must satisfy $|\lambda_1|=|\lambda_2|=1$ and
$\lambda_1,\lambda_2,\lambda_1\lambda_2\ne1$, in accordance with
the parameter domain used here. Proposition~\ref{prop:mult} itself
only assumes positive-definite braid-invariant forms on its two
modules.


\section{The character-twisted Burau family}
\label{sec:twisted}

Use the Burau representation and character-twisted seed of
Definitions~\ref{def:burau} and~\ref{def:burau-seed}, with
$q=\ee a$, $\tau=\ee c$ and $V=\C^{n+1}$. Thus $N=n+1$.
In the general KLM formulas we take $g_j=g_j^{\mathrm{tB}}$ and
$s_i=s_i^{\mathrm{tB}}$; the character multiplies only the free-group
operators.
If $1+q+\cdots+q^n\ne0$, the input decomposes as a direct sum of
invariant subspaces
\[
V=\C(1,\dots,1)^T\oplus V_{\mathrm{red}},\qquad
V_{\mathrm{red}}=\Ker\ell_q,
\]
where $\ell_q$ is the invariant functional of Definition~\ref{def:burau}.
This condition holds whenever $0<a<1/(n+1)$; in that range the input
is unitarizable for $|\tau|=1$ \cite{Squier}.
Fix one unitarizing change of basis for the entire seed on $V$.
This change acts on the input space; the output factor $\mathsf C$
of Definition~\ref{def:certified} acts on $U_0$.
Characteristic polynomials are invariant under this simultaneous
conjugation and may be computed in the raw Burau basis, while the
coordinate supports used for synthesis refer to the particular common
basis of Section~\ref{sec:synthesis}.
\begin{definition}[Character-twisted reduced Burau seed]
\label{def:reduced-burau-seed}
The \emph{character-twisted reduced Burau seed} is the restriction
of Definition~\ref{def:burau-seed} to $V_{\mathrm{red}}=\Ker\ell_q$,
with operators
\[
g_j^{\mathrm{tB,red}}:=g_j^{\mathrm{tB}}|_{V_{\mathrm{red}}},\qquad
s_i^{\mathrm{tB,red}}:=s_i^{\mathrm{tB}}|_{V_{\mathrm{red}}}.
\]
After a simultaneous change of basis, $V_{\mathrm{red}}$ denotes
the image of this subspace in the new coordinates.
\end{definition}

Long's construction and its one-parameter extension are recalled in
\cite[Thm.~2.4, Cor.~2.6]{Long}; the rescaling displayed there
multiplies the braid generators by $s$.
Here we use the free-generator rescaling described in
\cite[\S4, proof of Theorem~4.1 and Theorems~4.2--4.3]{BT},
multiplying the free-group input operators by $\tau$ and leaving
the braid-group input operators unchanged. We first compute the spectral data
needed to apply the signature formula of \cite{Nsig}; this specifies
the unitary family used in the rest of the paper. We then determine
its generator spectra and invariant decomposition, and prove generic
irreducibility of the remaining summand for $n=3,4$. Throughout we write
\[
\mathcal T_n:=\{(a,c):\ a>0,\ c>0,\ nc+(n+1)a<1\},\qquad
d:=\tfrac32a+c,\qquad \omega:=\ee d ,
\]
and note that on $\mathcal T_n$ one has $0<a,c<d<\tfrac12$ and
$(n+1)a<1$. We set $M:=n(n+1)$ for the dimension of the KLM braid space in this region.

\subsection{The positive window: boundary spectrum and signature}\label{subsec:burau-signature}
The boundary-loop calculation below supplies the eigenangles in
\cite[Cor.~6.10]{Nsig}. Theorem~\ref{thm:twisted-burau} is its explicit
specialization to the character-twisted Burau seed.

\begin{lemma}[Spectrum of the boundary loop]\label{lem:fulltwist}
Let $X_n:=x_1\cdots x_n\in B_{1,n}$ be the boundary loop. For
the unreduced Burau representation of $B_{n+1}$ and every
$q\in\C^\times$,
\[
 \det\bigl(zI_{n+1}-\rho_q(X_n)\bigr)
 =(z-1)(z-q)^{n-1}(z-q^{n+1}).
\]
\end{lemma}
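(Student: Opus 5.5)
The plan is to write the boundary loop as a ratio of full twists and then read off the spectrum from the scalar action of the full twist on the reduced Burau representation. First, by induction on $n$ from $x_1=\sigma_0^2$ and $x_{j+1}=\sigma_jx_j\sigma_j^{-1}$, I would show
\[
X_n=x_1\cdots x_n=\sigma_0\sigma_1\cdots\sigma_{n-1}\sigma_{n-1}\cdots\sigma_1\sigma_0 .
\]
The inductive step uses $x_{j+1}=\sigma_j\cdots\sigma_1\sigma_0^2\sigma_1^{-1}\cdots\sigma_j^{-1}$ and telescoping of the product, with commuting generators moved past each other. I would then identify $X_n=\Delta_{n+1}^2\Delta_n^{-2}$, where $\Delta_{n+1}^2$ is the full twist on strands $0,\dots,n$ and $\Delta_n^2$ the full twist on strands $1,\dots,n$. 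This follows from the standard factorization of the full twist as $\Delta_{n+1}^2=\Delta_n^2\cdot(\sigma_0\cdots\sigma_{n-1}\sigma_{n-1}\cdots\sigma_0)$, up to the chosen ordering. The ordering is harmless, because $\Delta_n^2$ commutes with the loop around strand $0$: that loop equals $\Delta_{n+1}^2\Delta_n^{-2}$ and $\Delta_{n+1}^2$ is central.

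Next I would use two classical facts. Since $\Delta_{n+1}^2$ is central and $\mathrm{Bur}^{\mathrm{red}}_{n+1}(q)$ is generically irreducible, $\Delta_{n+1}^2$ acts on $V_{\mathrm{red}}=\Ker\ell_q$ as the scalar $q^{n+1}$; it fixes $\mathbf 1_{n+1}$. The scalar value can be fixed by comparing determinants: $\det\rho_q(\sigma_j)=-q$, and $\Delta_{n+1}^2$ has length $n(n+1)$. For $\Delta_n^2$: the generators $\sigma_1,\dots,\sigma_{n-1}$ fix $e_0$ and act on $\operatorname{span}(e_1,\dots,e_n)$ by the unreduced Burau representation of $B_n$. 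That representation has fixed vector $\mathbf 1_n$ and its reduced part is acted on by $q^n$. Hence $\rho_q(\Delta_n^2)$ on $\C^{n+1}$ has eigenvalues $1^{\times2}$ (from $e_0$ and from $\mathbf 1$ or $\mathbf 1_n$) and $(q^n)^{\times(n-1)}$, and it is diagonalizable for generic $q$.

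The two operators commute, so I would work generically, where $1+q+\cdots+q^n\ne0$ and everything is semisimple. Both operators preserve $\C\mathbf1_{n+1}$ and $V_{\mathrm{red}}$. On $\mathbf 1_{n+1}$ the product $X_n$ acts by $1$. On $V_{\mathrm{red}}$, $\Delta_{n+1}^2$ is the scalar $q^{n+1}$, so the spectrum of $X_n$ there is $q^{n+1}$ times the spectrum of $\Delta_n^{-2}|_{V_{\mathrm{red}}}$. The latter is the spectrum of $\Delta_n^{-2}$ on $\C^{n+1}$ with one eigenvalue $1$ removed, namely the one belonging to $\mathbf 1_{n+1}$, i.e.\ $\{1,(q^{-n})^{\times(n-1)}\}$. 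This yields the eigenvalues $q^{n+1}$ once and $q$ with multiplicity $n-1$, which together with the eigenvalue $1$ gives the claimed characteristic polynomial. Both sides are polynomials in $q^{\pm1}$ and $z$, so the identity extends from a Zariski-dense set of $q$ to every $q\in\C^\times$.

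The main obstacle is the word identity for $X_n$ with the paper's Artin conventions, where conjugation directions and orderings are easy to get wrong. As a safeguard I would check $n=1,2$ by direct matrix multiplication. If the full-twist factorization proves awkward, the fallback is to compute $\rho_q(\sigma_0\cdots\sigma_{n-1}\sigma_{n-1}\cdots\sigma_0)$ directly: it is identity plus a low-rank correction on the first row and column, so its characteristic polynomial can be obtained by a rank-two determinant lemma.
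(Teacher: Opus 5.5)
Your outline matches the paper's proof: $X_n=\theta_{n+1}^2\theta_n^{-2}$, full twists acting by scalars on the reduced Burau summands, bookkeeping on $\C\mathbf 1_{n+1}\oplus V_{\mathrm{red}}$, and polynomial continuation. Removing the eigenvalue $1$ of $\Delta_n^{-2}$ that belongs to $\mathbf 1_{n+1}$ is correct. The step that fails as written is ``the scalar value can be fixed by comparing determinants.'' Schur's lemma gives $\rho_q(\Delta_{n+1}^2)=1\oplus s\,I$ on $\C\mathbf 1_{n+1}\oplus V_{\mathrm{red}}$. The determinant $\det\rho_q(\Delta_{n+1}^2)=(-q)^{n(n+1)}=q^{n(n+1)}$ then yields only $s^n=q^{n(n+1)}$, because $\dim V_{\mathrm{red}}=n$. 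So $s\in q^{n+1}\mu_n$, and for $n\ge2$ the determinant cannot exclude $s=\zeta q^{n+1}$ with $\zeta\ne1$. Likewise your value $q^n$ for $\Delta_n^2$ on the $(n-1)$-dimensional $B_n$-reduced part is fixed only up to an $(n-1)$st root of unity when $n\ge3$. A wrong root of unity would change the eigenvalues $q$ and $q^{n+1}$ of $X_n$, so this step must be closed.

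One repair: $s(q)=(\operatorname{tr}\rho_q(\Delta_{n+1}^2)-1)/n$ is a Laurent polynomial, and $s^n=q^{n(n+1)}$ makes it a unit of $\C[q^{\pm1}]$. Hence $s=\zeta q^{n+1}$ with $\zeta$ constant. At $q=1$ the Burau matrices are permutation matrices and the pure braid $\Delta_{n+1}^2$ acts as $I$, so $s(1)=1$ and $\zeta=1$. The paper needs neither Schur's lemma nor generic irreducibility. It uses the Coxeter element $C_m=\rho_q(\sigma_0\cdots\sigma_{m-2})$, whose characteristic polynomial is $(z-1)(z^{m-1}+qz^{m-2}+\cdots+q^{m-1})$. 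Cayley--Hamilton then gives $C_m^m=q^m$ on the reduced summand, and hence the explicit formula $T_m=q^mI_m+(1-q)\mathbf 1_m(1,q,\ldots,q^{m-1})$.

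A smaller point concerns the word identity. With $x_n=\sigma_{n-1}\cdots\sigma_1\sigma_0^2\sigma_1^{-1}\cdots\sigma_{n-1}^{-1}$, commuting distant generators cancels nothing in $X_{n-1}x_n$. You first need the braid-relation consequence $\sigma_k\sigma_{k-1}^2\sigma_k^{-1}=\sigma_{k-1}^{-1}\sigma_k^2\sigma_{k-1}$, which rewrites $x_n=\sigma_0^{-1}\cdots\sigma_{n-2}^{-1}\sigma_{n-1}^2\sigma_{n-2}\cdots\sigma_0$. After that, the product does telescope to $\sigma_0\cdots\sigma_{n-1}\sigma_{n-1}\cdots\sigma_0$. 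The paper instead uses $X_n=\sigma_0\operatorname{sh}(X_{n-1})\sigma_0$. It also proves the full-twist factorization you quote as standard, via $\Delta=\delta C=\bar C\delta$ and centrality of $\Delta^2$; your observation that either ordering follows from centrality is correct. With the root-of-unity normalization added, your argument is complete. It differs from the paper only in obtaining the full-twist scalar abstractly rather than by explicit computation, at the cost of importing generic irreducibility of the reduced Burau representation.
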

\begin{proof}
\emph{Boundary-loop identity.}
Define the half twists
\[
 \theta_{n+1}:=\prod_{r=n-1,\ldots,0}(\sigma_0\cdots\sigma_r),
 \qquad
 \theta_n:=\prod_{r=n-1,\ldots,1}(\sigma_1\cdots\sigma_r),
\]
where the factors are ordered from left to right in decreasing $r$.
These act on strands $0,\ldots,n$ and $1,\ldots,n$, respectively.
Their squares are the full twists
$\theta_{n+1}^2=(\sigma_0\cdots\sigma_{n-1})^{n+1}$ and
$\theta_n^2=(\sigma_1\cdots\sigma_{n-1})^n$.

We first verify the boundary-loop identity for arbitrary $n$.
The definition of $x_j$ gives
\[
x_j=(\sigma_{j-1}\cdots\sigma_1)\sigma_0^2
(\sigma_1^{-1}\cdots\sigma_{j-1}^{-1}).
\]
The Artin relations then give
\[
 \sigma_0x_j\sigma_0^{-1}
 =(\sigma_{j-1}\cdots\sigma_2)\sigma_1^2
   (\sigma_2^{-1}\cdots\sigma_{j-1}^{-1})
 \qquad(j\ge2).
\]
Indeed, $\sigma_0$ commutes with $\sigma_k$ for $k\ge2$, and
$\sigma_0\sigma_1\sigma_0^2\sigma_1^{-1}\sigma_0^{-1}
=\sigma_1^2$. Writing $\operatorname{sh}$ for the shift
$\sigma_i\mapsto\sigma_{i+1}$, we obtain
$X_n=\sigma_0\operatorname{sh}(X_{n-1})\sigma_0$.
Starting from $X_1=\sigma_0^2$, induction yields
\[
 X_n=C\bar C,\qquad
 C:=\sigma_0\cdots\sigma_{n-1},\quad
 \bar C:=\sigma_{n-1}\cdots\sigma_0.
\]
Put $\Delta:=\theta_{n+1}$ and $\delta:=\theta_n$.
The Artin relations also give
$\Delta=\delta C=\bar C\delta$ and
$\Delta\sigma_i=\sigma_{n-1-i}\Delta$.
Thus $\Delta^2$ is central, and
\[
 \Delta^2=\bar C\delta^2C,
 \qquad
 \Delta^2=C\Delta^2C^{-1}=C\bar C\delta^2=X_n\delta^2.
\]
Consequently $X_n=\theta_{n+1}^2\theta_n^{-2}$.

\emph{Full-twist matrix.}
In the present Burau convention the full twist on
$m$ strands has matrix
\begin{equation}\label{eq:burau-full-twist}
 T_m:=\rho_q\bigl((\sigma_0\cdots\sigma_{m-2})^m\bigr)
 =q^mI_m+(1-q)\mathbf1_m(1,q,\ldots,q^{m-1}),
\end{equation}
where $\mathbf1_m=(1,\ldots,1)^T$.
To see this for arbitrary $m\ge2$, the matrix
$C_m:=\rho_q(\sigma_0\cdots\sigma_{m-2})$ has first row
$(1-q,q(1-q),\ldots,q^{m-2}(1-q),q^{m-1})$
and entries $1$ immediately below the diagonal, with all other
entries zero. Hence
\[
 \det(zI_m-C_m)
 =(z-1)(z^{m-1}+qz^{m-2}+\cdots+q^{m-1}).
\]
If $1+q+\cdots+q^{m-1}\ne0$, then
\[
 \C^m=\C\mathbf1_m\oplus\ker(1,q,\ldots,q^{m-1}).
\]
Both summands are $C_m$-invariant, and the characteristic polynomial
of its restriction to the second summand is
$z^{m-1}+qz^{m-2}+\cdots+q^{m-1}$.
The Cayley--Hamilton theorem on this summand, followed by multiplication
by $C_m-qI$, gives $C_m^m=q^mI$ there, whereas $C_m^m=I$ on the
fixed line. The projection onto that line is
$\mathbf1_m(1,q,\ldots,q^{m-1})/(1+q+\cdots+q^{m-1})$,
which proves~\eqref{eq:burau-full-twist}; the identity extends to
all $q\ne0$ by polynomial continuation.

\emph{Boundary spectrum.}
Put $[n]_q:=1+q+\cdots+q^{n-1}$. For $[n]_q\ne0$ and
$1+q+\cdots+q^n\ne0$, use the direct sum
$\C^{n+1}=L\oplus W\oplus L'$, where
\[
 \begin{aligned}
 L&=\C(1,\ldots,1)^T,\\
 W&=\bigl\{(0,w_1,\ldots,w_n)^T:
                 \textstyle\sum_{j=1}^n q^{j-1}w_j=0\bigr\},\\
 L'&=\C(-q[n]_q,1,\ldots,1)^T.
 \end{aligned}
\]
Here $W\oplus L'=V_{\mathrm{red}}$, while $W$ is the reduced
Burau space for the subgroup $B_n$ and $L'$ is fixed by that subgroup.
Formula~\eqref{eq:burau-full-twist} shows that
$\rho_q(\theta_{n+1}^2)$ acts on $L,W,L'$ by
$1,q^{n+1},q^{n+1}$, respectively, and
$\rho_q(\theta_n^2)$ acts by $1,q^n,1$.
Thus $\rho_q(X_n)$ acts by $1,q,q^{n+1}$ on these spaces,
whose dimensions are $1,n-1,1$. This gives the claimed
characteristic polynomial for generic $q$.
For each fixed $n$, both sides are Laurent polynomials in $q$
with polynomial dependence on $z$, so the identity holds for every
$q\ne0$, including values where the direct sum above does not exist.
\end{proof}
For the character-twisted seed, the corresponding matrix product is
\[
 P_n=g_1^{\mathrm{tB}}\cdots g_n^{\mathrm{tB}}=\tau^n\rho_q(X_n).
\]

\begin{theorem}[{Definite window; an instance of \cite[Cor.~6.10]{Nsig}}]\label{thm:twisted-burau}
Let $(a,c)\in\mathcal T_n$. Then for the character-twisted Burau seed:
\begin{enumerate}
\item $\spec(g_i^{\mathrm{tB}})=\{\tau^{\times n},\tau q^2\}$, hence $K=0$,
$R=n(n+1)$ and $U_0=V^{\oplus n}=\C^{n(n+1)}$;
\item $\spec(\lm P_n)=\{\ee{l+nc},\ \ee{l+nc+a}^{\times(n-1)},\
\ee{l+nc+(n+1)a}\}$, so
$\e^{-1}(E)=\{1-nc-(n+1)a,\ 1-nc-a,\ 1-nc\}$ with $r(l)=1,\,n-1,\,1$, respectively;
\item for $\ee l\in E^c$,
$\sig\Ht(\lm)=(n(n+1)-m(l),\,m(l))$ with
\[
m(l)=
\begin{cases}
0,&0<l<1-nc-(n+1)a,\\
1,&1-nc-(n+1)a<l<1-nc-a,\\
n,&1-nc-a<l<1-nc,\\
n+1,&1-nc<l<1.
\end{cases}
\]
Put $l_{\mathrm{first}}:=1-nc-(n+1)a$.
In particular $\Ht(\lm)$ is positive definite
exactly for $0<l<1-nc-(n+1)a$, i.e.\ $\Wdef=(0,1-nc-(n+1)a)$, and it
is never negative definite.
\end{enumerate}
\end{theorem}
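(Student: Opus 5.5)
The proof substitutes the explicit seed spectrum into the signature formula of Theorem~\ref{thm:import-sig}, with Lemma~\ref{lem:fulltwist} supplying the boundary spectrum. Throughout I use the positive-definite input form supplied by Squier's theorem. Since $(n+1)a<1$, the seed preserves a common positive-definite form for $|\tau|=1$, so the hypotheses of Definition~\ref{def:seed} hold after one simultaneous change of basis. That change does not affect spectra, so all eigenvalues below may be computed in the raw Burau basis.

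\emph{Item 1.} The block $\begin{pmatrix}1-q&q\\1&0\end{pmatrix}$ has eigenvalues $1$ and $-q$. Hence $\rho_q(\sigma_0^2)$ has spectrum $\{1^{\times n},q^2\}$. Every $x_j$ is conjugate to $x_1=\sigma_0^2$ in $B_{n+1}$, so $\spec(g_j^{\mathrm{tB}})=\{\tau^{\times n},\tau q^2\}$. The corresponding eigenangles are $c$ and $c+2a$. On $\mathcal T_n$ both lie in $(0,1)$: indeed $c>0$, and $c+2a\le 2c+3a\le nc+(n+1)a<1$ because $n\ge2$. So $1\notin\spec(g_j)$, which gives $\Ker(g_j-I)=0$, $K=0$, $R=nN=n(n+1)$, and $U_0=V^{\oplus n}$.

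\emph{Item 2.} Since $P_n=\tau^n\rho_q(X_n)$, Lemma~\ref{lem:fulltwist} gives $\spec(\lambda P_n)=\{\ee{l+nc},\ee{l+nc+a}^{\times(n-1)},\ee{l+nc+(n+1)a}\}$. The wall condition is that some angle $l+nc+ka$ with $k\in\{0,1,n+1\}$ is an integer. Since $0<nc+ka\le nc+(n+1)a<1$ and $0<l<1$, each such angle lies in $(0,2)$. The only possible integer value is therefore $1$, which occurs at $l=1-nc-ka$. These three values are distinct and lie in $(0,1)$, so they are the walls. The multiplicity of the eigenvalue $1$ at each wall gives $r(l)=1,\ n-1,\ 1$, respectively.

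\emph{Item 3.} I compute $m(l)$ from Definition~\ref{def:eigenangles}. The eigenangle sum of the $g_j$ is
\[
\sum_{j,k}a_j^{(k)}=n\bigl(nc+c+2a\bigr).
\]
Each raw angle $l+nc+ka$ lies in $(0,2)$, so its representative in $[0,1)$ is either the raw angle itself or the raw angle minus $1$. The sum of the raw angles is
\[
(n+1)l+n(n+1)c+\bigl((n-1)a+(n+1)a\bigr)=(n+1)l+n(n+1)c+2na .
\]
This cancels exactly against $Nl+\sum a_j^{(k)}$, because $N=n+1$. Hence $m(l)$ equals the number of eigenangles (with multiplicity) that wrapped past $1$, that is, the number of $k$ with $l>1-nc-ka$. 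Walking through the ordered walls $1-nc-(n+1)a<1-nc-a<1-nc$ gives the counts $0,1,n,n+1$.

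At nonexceptional $\lambda$ with $K=0$ we have $L_\lambda=0$, so $\Ht=\Hq$, and Theorem~\ref{thm:import-sig} gives $\sig=(n(n+1)-m,\,m)$. Positivity holds exactly when $m=0$. Negativity would require $m=n(n+1)$, which never occurs because $m\le n+1<n(n+1)$ for $n\ge2$. Walls are excluded from $\Wdef$ by definition, so $\Wdef=(0,1-nc-(n+1)a)$.

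The main obstacle is only the bookkeeping in the last step. One must check that the eigenangle normalization in $[0,1)$ is exactly what makes the unwrapped sums cancel, and must track the multiplicity $n-1$ of the middle wall. A further point is that $\Ht=\Hq$ requires both $K=0$ and nonexceptionality; both hold off the walls.
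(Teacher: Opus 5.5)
Your proposal is correct and follows essentially the same route as the paper: compute the seed spectrum, read off the boundary spectrum from Lemma~\ref{lem:fulltwist}, and observe that the eigenangle sums cancel. This cancellation is the paper's constant $w=0$ in \cite[Cor.~6.10]{Nsig}, so $m(l)$ is simply the number of wrapped boundary eigenangles. Deriving that count directly from Definition~\ref{def:eigenangles}, and noting that $K=L_\lambda=0$ gives $\Ht=\Hq$ off the walls, is the same computation made slightly more explicit.
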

\begin{proof}
(1) $\rho_q(\sigma_0^2)$ has eigenvalues $1^{\times n}$, $q^2$, and the
$g_i^{\mathrm{tB}}$ are conjugate to $\tau\rho_q(\sigma_0^2)$; $\tau\ne1$,
$\tau q^2\ne1$ since $0<c<c+2a\le nc+(n+1)a<1$.

(2) This follows from $P_n=\tau^n\rho_q(x_1\cdots x_n)$ and
Lemma~\ref{lem:fulltwist}.

(3) We apply
\cite[Cor.~6.10]{Nsig}: with $\gamma_k$ the eigenangles of $P_n$,
namely $nc$, $nc+a$ ($n-1$ times) and $nc+(n+1)a$, all $<1$ on
$\mathcal T_n$, and with $\sum_{j,k}a^{(k)}_j=n(nc+c+2a)$ by (1),
the constant $w=\sum_{j,k}a^{(k)}_j-\sum_k\gamma_k$ equals
$n(n+1)c+2na-\bigl(n(n+1)c+2na\bigr)=0$, and
$m(l)=\#\{k:\,1-\gamma_k\le l\}$. When $\ee l\in E^c$ no equality occurs, so
this agrees with the strict inequalities in the displayed step function.
At a wall the weak inequality is retained and the quotient signature is
$(R-m(l),m(l)-r(l))$ by Theorem~\ref{thm:import-sig}.
Since the largest eigenangle is $\gamma_{\max}=nc+(n+1)a$,
$l_{\mathrm{first}}=1-\gamma_{\max}$ is the first spectral wall.
Thus $m(l)=0$ precisely for $0<l<l_{\mathrm{first}}$, so the form is
positive definite exactly on that interval (and the corresponding
wall quotient is positive definite at $l=l_{\mathrm{first}}$).
Moreover the displayed step function gives $m(l)\le n+1<R=n(n+1)$
for $n\ge2$, so negative definiteness never occurs.
\end{proof}

\begin{proposition}[Unitarization and the canonical positive window]
\label{prop:twisted-unitarization}
For the character-twisted Burau seed, the following hold.
\begin{enumerate}
\item If $(a,c)\in\mathcal T_n$, the fixed braid representation on
$\C^{n(n+1)}$ is unitarizable. For every $l\in\Wdef$, any factor
$\mathsf C^\dagger\mathsf C=\Ht(\ee l)$ gives a unitary realization;
its unitary equivalence class depends only on $(q,\tau)$.
\item If $a,c>0$, $(n+1)a<1$, and
$c+2a<1\le nc+(n+1)a$, the canonical form is not positive definite
for any $l\in(0,1)$ with $\ee l\in E^c$.
\end{enumerate}
\end{proposition}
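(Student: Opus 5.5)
Part (1) follows by combining results already stated, and part (2) by substituting the boundary spectrum of Lemma~\ref{lem:fulltwist} into the signature formula of Theorem~\ref{thm:import-sig}, without the restriction that the eigenangles of $P_n$ lie in $[0,1)$.

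\emph{Part (1).} For $(a,c)\in\mathcal T_n$, Theorem~\ref{thm:twisted-burau} gives the following facts:
\begin{itemize}
\item $K=0$ and $U_0=\C^{n(n+1)}$;
\item $\Wdef=(0,1-nc-(n+1)a)$, which is nonempty;
\item the canonical form $\Ht(\ee l)$ is positive definite on $\Wdef$, so one may take $\varepsilon=+1$.
\end{itemize}
Since $K=0$ and $\ee l$ is nonexceptional, $\Hq(\ee l)=\Ht(\ee l)$ on $U_0$. The first part of Theorem~\ref{thm:certificate} then shows that every factor $\mathsf C^\dagger\mathsf C=\Ht(\ee l)$ yields unitary generators $U_i=\mathsf C A_i\mathsf C^{-1}$. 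The $A_i$ do not depend on $l$ (Proposition~\ref{prop:lambda-free}). Theorem~\ref{thm:gauge} therefore makes any two such realizations, for any $l_1,l_2\in\Wdef$ and any factors, unitarily conjugate by one $W$. Hence the unitary equivalence class depends only on the seed, which is determined by $(q,\tau)$.

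\emph{Part (2).} First we check that the hypotheses still give $K=0$. From $c>0$ and $c+2a<1$ we get $\tau\ne1$ and $\tau q^2\ne1$, so item (1) of Theorem~\ref{thm:twisted-burau} goes through unchanged. Its proof used only these two facts, and $R=n(n+1)$. The eigenangles of $g_j$ are $c$ ($n$ times) and $c+2a$, both in $[0,1)$. Hence
\[
\sum_{j,k}a_j^{(k)}=n\bigl((n+1)c+2a\bigr).
\]
By Lemma~\ref{lem:fulltwist}, $\lambda P_n$ has eigenvalues $\ee{l+\gamma_k}$ with real numbers
\[
\gamma_k\in\{nc,\ nc+a^{\times(n-1)},\ nc+(n+1)a\},
\]
and these may now exceed $1$. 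The eigenangle is $\beta_k(l)=l+\gamma_k-\lfloor l+\gamma_k\rfloor$. With $N=n+1$ and $\sum_k\gamma_k=n(n+1)c+2na$, the definition of $m(l)$ gives
\[
m(l)=(n+1)l+\sum_{j,k}a_j^{(k)}-\sum_k(l+\gamma_k)+\sum_k\lfloor l+\gamma_k\rfloor=\sum_k\lfloor l+\gamma_k\rfloor .
\]
All $\gamma_k\ge0$, so every term is nonnegative. The term for $\gamma_{\max}=nc+(n+1)a\ge1$ is at least $1$ for every $l>0$. Hence $m(l)\ge1$, whereas $r(l)=0$ for $\ee l\in E^c$. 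Theorem~\ref{thm:import-sig} says the form is positive definite if and only if $m(l)=r(l)$, so it is not positive definite at any such $l$.

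The only delicate point is the bookkeeping of floors once $l+\gamma_k$ leaves $[0,1)$. I would verify it directly from the definition of $m(l)$ in Definition~\ref{def:eigenangles}, rather than quoting \cite[Cor.~6.10]{Nsig}, whose displayed form in the proof of Theorem~\ref{thm:twisted-burau} assumed all $\gamma_k<1$.
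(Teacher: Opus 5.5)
Your proposal is correct and follows the paper's proof essentially step for step. Part (1) combines Theorem~\ref{thm:twisted-burau}, Theorem~\ref{thm:certificate}, Proposition~\ref{prop:lambda-free} and Theorem~\ref{thm:gauge}, and part (2) reduces $m(l)$ to $\sum_k\lfloor l+\gamma_k\rfloor\ge\lfloor nc+(n+1)a\rfloor\ge1$, exactly as the paper does. The one point you leave implicit, which the paper states, is that $(n+1)a<1$ keeps the Burau seed positive-definite unitary (Squier); this is the hypothesis needed to apply Theorem~\ref{thm:import-sig} outside $\mathcal T_n$.
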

\begin{proof}
(1) Theorem~\ref{thm:twisted-burau} gives a nonempty positive-definite
window. Apply Theorem~\ref{thm:certificate} to obtain the unitary
realizations, and Theorem~\ref{thm:gauge} for their unitary equivalence.
The braid operators are independent of $\lambda$ by
Proposition~\ref{prop:lambda-free}.

(2) The assumption $(n+1)a<1$ still gives a unitary Burau seed.
The inequality $c+2a<1$ keeps its eigenangles equal to $c$ and
$c+2a$, without a wrap. Lemma~\ref{lem:fulltwist} and
Definition~\ref{def:eigenangles} therefore give
\[
 m(l)=\lfloor l+nc\rfloor+(n-1)\lfloor l+nc+a\rfloor
       +\lfloor l+nc+(n+1)a\rfloor\ge1,
\]
since $nc+(n+1)a\ge1$. For $\ee l\in E^c$, the signature formula
therefore excludes positive definiteness.
\end{proof}

\begin{remark}\label{rem:chosen-region}\label{rem:chosen-certified-region}
(a) Appendix~\ref{app:additional-definite-regions} explains the
restriction to $\mathcal T_n$ and describes the additional definite
regions outside it.

(b) For the untwisted seed ($c=0$, $\tau=1$) with
$0<a<1/(n+1)$, one has $\kappa=n^2$ and $\dim U_0=n$, and $U_0\cong V_{\mathrm{red}}|_{B_n}
\cong\mathbf1\oplus\mathrm{Bur}^{\mathrm{red}}_n(q)$, the unreduced
Burau representation of $B_n$ at $q$.
Here each $g_j^{\mathrm{tB}}-I$ has rank one: the middle convolution
of this pseudo-reflection system is again a pseudo-reflection system.

Indeed $\Img(g_1^{\mathrm{tB}}-I)$ is spanned by $v_1:=-qe_0+e_1$.
Conjugating $x_1$ by $\sigma_{j-1}\cdots\sigma_1$ gives $x_j$, so
$\Img(g_j^{\mathrm{tB}}-I)$ is spanned by
\[
 v_j:=\rho_q(\sigma_{j-1}\cdots\sigma_1)v_1
 =-qe_0+(1-q)(e_1+\dots+e_{j-1})+e_j.
\]
This follows by induction, using
$\rho_q(\sigma_j)e_j=(1-q)e_j+e_{j+1}$ and $\rho_q(\sigma_j)e_i=e_i$
for $i\notin\{j,j{+}1\}$. The $v_j$ are linearly independent, so
$\operatorname{rank}D=n=\dim U_0$, the map $\bar D$ of
Proposition~\ref{prop:augmentation} is an isomorphism onto
$J=V_{\mathrm{red}}$, and $U_0\cong V_{\mathrm{red}}|_{B_n}$. The
character twist makes the kernel summand survive and lets the full
$n(n+1)$-dimensional module survive.
\end{remark}

\subsection{Generator spectra, invariant summands, and Long's reduced construction}\label{sec:spectrum}

The generator polynomial specializes the general determinant formula
of \cite[Thm.~3.1]{NHecke}; we include a direct block calculation.
It then determines the spectra of the summands in
Proposition~\ref{prop:decomp}, before the generic irreducibility
argument for $\mathcal R_n$.

\begin{proposition}[Characteristic polynomial]\label{prop:charpoly}
Let $n\ge2$ and $q,\tau\in\C^\times$. Take as input the
character-twisted unreduced Burau representation of
Definition~\ref{def:burau-seed} on $V=\C^{n+1}$, with
$g_j^{\mathrm{tB}}=\tau\rho_q(x_j)$ and $s_i^{\mathrm{tB}}=\rho_q(\sigma_i)$, where
$\rho_q$ is the unreduced Burau representation of $B_{n+1}$.
For every $1\le i\le n-1$ and every
$\lambda\in\C\setminus\{0,1\}$, the Long--Moody braid matrix
$S_i$ of \eqref{eq:braid-matrices} on $V^{\oplus n}$ has
characteristic polynomial
\[
\det(z-S_i)=(z-1)^{n(n-1)}(z+q)^{n-1}(z+\tau)^{n-1}
(z^2-q^{3}\tau^{2}) .
\]
For $(a,c)\in\mathcal T_n$ one has $K=0$, so the same formula holds
for the braid operator $A_i$ on $U_0$ and for every unitary
realization $U_i$ at a nonexceptional parameter; in this region the eigenvalue
multiset is
\[
\{\,1^{\times n(n-1)},\ (-q)^{\times(n-1)},\ (-\tau)^{\times(n-1)},\ \omega,\ -\omega\,\},
\qquad \omega=q^{3/2}\tau=\ee d ,
\]
with eigenangles $0,\ a+\tfrac12,\ c+\tfrac12,\ d,\ d+\tfrac12$; since
$0<a,c<d<\tfrac12$ on $\mathcal T_n$, the five values are distinct
when $a\ne c$, and for $a=c$ the multiplicities of $-q=-\tau$ add.
(Outside $\mathcal T_n$ the quotient may be smaller: for $n=2$,
$\tau=1$, $q=\ee{1/10}$ one has $\dim K=4$ and
$\det(z-A_1|_{U_0})=(z-1)(z+q)$.)
\end{proposition}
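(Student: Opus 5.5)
The plan is to reduce to a single generator and then to a small explicit determinant. All $S_i$ have the same characteristic polynomial, because the LM braid operators form a representation of $B_n$ and $\sigma_{i+1}=\delta\sigma_i\delta^{-1}$ with $\delta=\sigma_1\cdots\sigma_{n-1}$. Hence $S_{i+1}=S(\delta)S_iS(\delta)^{-1}$, and it suffices to treat $i=1$. Moreover, $\lambda$ does not enter \eqref{eq:braid-matrices} (Proposition~\ref{prop:lambda-free}), so the polynomial is automatically independent of $\lambda$.

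\emph{Spectator slots.} By \eqref{eq:braid-matrices}, $S_1$ is block diagonal. It is $s_1^{\oplus2}R_1$ on slots $1,2$, and $s_1$ on each of the remaining $n-2$ slots. The matrix $s_1=\rho_q(\sigma_1)$ has spectrum $\{1^{\times n},-q\}$. The spectator slots therefore contribute $(z-1)^{n(n-2)}(z+q)^{n-2}$.

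\emph{The $2N\times2N$ block.} Write $s=s_1$, $g=g_1$, $g'=g_2=sgs^{-1}$. The block is
\[
\begin{pmatrix}0&sg\\ s&s(I-g')\end{pmatrix}.
\]
Taking the Schur complement with respect to the scalar upper-left block $zI$, its characteristic polynomial equals
\[
\det\bigl(z^2I-z\,s(I-g')-s^2g\bigr),
\]
which is an $(n+1)\times(n+1)$ determinant. Here $g=\tau\rho_q(\sigma_0^2)$ and $g'=\tau\rho_q(\sigma_1\sigma_0^2\sigma_1^{-1})$, and all three matrices $s,g,g'$ act only on $e_0,e_1,e_2$. On each $e_k$ with $k\ge3$ they act by $1$, $\tau$, $\tau$, so the matrix is block diagonal. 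Each such coordinate contributes
\[
z^2-z(1-\tau)-\tau=(z-1)(z+\tau),
\]
which gives $(z-1)^{n-2}(z+\tau)^{n-2}$ in total. What remains is a $3\times3$ matrix polynomial on $\langle e_0,e_1,e_2\rangle$ with explicit Burau entries. Its determinant has degree $6$ and should equal
\[
(z-1)^2(z+q)(z+\tau)(z^2-q^3\tau^2).
\]
I expect this step to be the main obstacle. I would first try the substitutions $z=1$, $z=-q$, $z=-\tau$ to exhibit the kernel vectors directly. After that, matching the constant term and the degree-$5$ coefficient of the expanded determinant against the target should pin down the factor $z^2-q^3\tau^2$. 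The constant term is $-\det(s^2g)=-q^2\cdot\tau^3q^2$, up to the known factors, and the degree-$5$ coefficient is $-\operatorname{tr}(s(I-g'))$. If that is insufficient, I would expand the determinant directly.

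\emph{Assembling and specialising.} Multiplying the three contributions gives exponents $n(n-2)+(n-2)+2=n(n-1)$ for $z-1$, $(n-2)+1=n-1$ for $z+q$, and $(n-2)+1=n-1$ for $z+\tau$, which is the stated formula. On $\mathcal T_n$, Theorem~\ref{thm:twisted-burau}(1) gives $K=0$, so $A_i=S_i$. Each $U_i=\mathsf CA_i\mathsf C^{-1}$ is similar to $A_i$ and has the same polynomial. Setting $\omega=q^{3/2}\tau=\ee d$, the roots of $z^2-q^3\tau^2$ are $\pm\omega$, with eigenangles $d$ and $d+\tfrac12$. The inequalities $0<a,c<d<\tfrac12$ then separate the five eigenangles $0$, $a+\tfrac12$, $c+\tfrac12$, $d$, $d+\tfrac12$ whenever $a\ne c$. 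The parenthetical example with $n=2$, $\tau=1$ is a direct check of $K$ (Remark~\ref{rem:chosen-region}(b)) and of the induced quotient action.
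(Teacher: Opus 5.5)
Your proposal is correct and follows the paper's proof essentially step for step. You reduce to $i=1$ by conjugacy, split off the $n-2$ spectator slots, and separate the two-slot block into the raw coordinates $e_0,e_1,e_2$ and the remaining coordinates, each of which contributes $(z-1)(z+\tau)$. You then reduce the $6\times6$ core by a Schur complement to a $3\times3$ determinant, which the paper also leaves as a direct computation.

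Your planned shortcut for that last step works. At $z=1$ the $3\times3$ matrix $z^2I-z\,s(I-g')-s^2g$ has a two-dimensional kernel, so $(z-1)^2$ divides the determinant. Once $(z+q)(z+\tau)$ are also split off, your constant term $-q^4\tau^3$ and $z^5$-coefficient $-\operatorname{tr}(s(I-g'))=-(2-q-\tau)$ force the remaining monic quadratic to be $z^2-q^3\tau^2$.
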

\begin{proof}
Compute in the raw Burau basis with $i=1$ (for $n\ge3$ all $\sigma_i$
are conjugate in $B_n$, and for $n=2$ there is one generator). By
\eqref{eq:braid-matrices}, $S_1$ acts on the slots $j\ge3$ by
$s_1^{\mathrm{tB}}=\rho_q(\sigma_1)$, contributing $n-2$ copies of
$\det(z-\rho_q(\sigma_1))=(z-1)^n(z+q)$.

On the two slots $1,2$ the
operators $s_1^{\mathrm{tB}}$, $g_1^{\mathrm{tB}}$, $g_2^{\mathrm{tB}}=s_1^{\mathrm{tB}}g_1^{\mathrm{tB}}(s_1^{\mathrm{tB}})^{-1}$ act
non-trivially only on the raw Burau coordinates $0,1,2$; on each of the
remaining $n-2$ raw Burau coordinates they act by $1$, $\tau$, $\tau$, so
the two-slot block restricted to such a coordinate is
$B_\tau=\begin{pmatrix}0&\tau\\1&1-\tau\end{pmatrix}$, with
$\det(z-B_\tau)=(z-1)(z+\tau)$.

We denote by $s_1^{\mathrm{tB},\flat}$, $g_1^{\mathrm{tB},\flat}$, and $g_2^{\mathrm{tB},\flat}$
the restrictions of $s_1^{\mathrm{tB}}$, $g_1^{\mathrm{tB}}$, and $g_2^{\mathrm{tB}}$, respectively,
to the invariant coordinate subspace
$\operatorname{span}\{e_0,e_1,e_2\}$ in the raw Burau basis.
On this subspace in the two slots, the block is the $6\times6$ matrix
\[
\begin{gathered}
C(q,\tau)=\begin{pmatrix}0&s_1^{\mathrm{tB},\flat}g_1^{\mathrm{tB},\flat}\\ s_1^{\mathrm{tB},\flat}&s_1^{\mathrm{tB},\flat}(I-g_2^{\mathrm{tB},\flat})\end{pmatrix},\\
g_2^{\mathrm{tB},\flat}=s_1^{\mathrm{tB},\flat}g_1^{\mathrm{tB},\flat}(s_1^{\mathrm{tB},\flat})^{-1}.
\end{gathered}
\]
\[
s_1^{\mathrm{tB},\flat}=\begin{pmatrix}1&0&0\\0&1-q&q\\0&1&0\end{pmatrix},\qquad
g_1^{\mathrm{tB},\flat}=\tau\begin{pmatrix}1-q&q&0\\1&0&0\\0&0&1\end{pmatrix}^{2},
\]
which does not depend on $n$, and a direct computation (a Schur
complement reduces it to a $3\times3$ determinant) gives
$\det(z-C)=(z-1)^2(z+q)(z+\tau)(z^2-q^3\tau^2)$.

Hence
$S_1\sim C(q,\tau)\oplus B_\tau^{\oplus(n-2)}\oplus
\rho_q(\sigma_1)^{\oplus(n-2)}$ as a single matrix (this is not a
decomposition of the braid representation), and multiplying the three
characteristic polynomials gives the formula. On $\mathcal T_n$ the
five values are distinct iff $a\ne c$ because $0<a,c<d<\tfrac12$.
\end{proof}

\medskip\noindent
The polynomial can also be obtained by specializing the general
formula $\det(z-S_i)=\det(z-s_i)^{n-1}\det(z+s_ig_i)$ of
\cite[Thm.~3.1]{NHecke}: for the present seed,
$\det(z-s_i^{\mathrm{tB}})=(z-1)^n(z+q)$ and
$\det(z+s_i^{\mathrm{tB}}g_i^{\mathrm{tB}})=(z+\tau)^{n-1}(z^2-q^3\tau^2)$. The block
argument above is a direct proof of this specialization.

We next describe subspaces invariant under all braid generators.
Their direct sum is a decomposition of the representation; the
tensor-product identifications used for gates are specified separately
in Section~\ref{sec:gates}. The decomposition combines additivity of
the Long--Moody construction, Burau branching, and the augmentation
map with complete reducibility on $\mathcal T_n$.

\begin{proposition}[An exact decomposition]\label{prop:decomp}
For $(a,c)\in\mathcal T_n$, the unitary braid representation on
$U_0=\C^{n(n+1)}$ decomposes as
\[
U_0\;\cong\;\mathbf 1^{\oplus2}\oplus\mathrm{Bur}^{\mathrm{red}}_n(\tau)
\oplus\mathrm{Bur}^{\mathrm{red}}_n(q)\oplus\mathcal R_n(q,\tau),
\qquad \dim\mathcal R_n=n(n-1),
\]
where $\mathrm{Bur}^{\mathrm{red}}_n(\cdot)$ is the reduced Burau
representation of $B_n$ and $\mathcal R_n$ is the kernel of the
augmentation map $D$ on the Long--Moody module of the reduced
Burau seed. The generator spectrum of $\mathcal R_n$ is
$\{1^{\times(n-1)(n-2)},(-q)^{\times(n-2)},(-\tau)^{\times(n-2)},
\omega,-\omega\}$. No irreducibility is required for this decomposition;
generic irreducibility for $n=3,4$ is proved below. For $n=2$,
$\mathcal R_2$ is a two-dimensional representation of the cyclic
group $B_2$, hence reducible.
\end{proposition}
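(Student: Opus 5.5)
We split the seed first, then split each piece. On $\mathcal T_n$ the seed satisfies $1+q+\cdots+q^n\ne0$, so $V=\C\mathbf1_{n+1}\oplus V_{\mathrm{red}}$ as $F_n\rtimes B_n$-modules. The Long--Moody construction is additive in the seed: the formulas \eqref{eq:free-matrices}--\eqref{eq:braid-matrices} are built blockwise from $g_j,s_i$. Hence
\[
U_0=V^{\oplus n}=\C\mathbf1^{\oplus n}\oplus V_{\mathrm{red}}^{\oplus n}
\]
as braid modules, with $K=0$ on both pieces by Theorem~\ref{thm:twisted-burau}(1).

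\emph{The line summand.} On $\C\mathbf1$ the seed is the character $g_j=\tau$, $s_i=1$. This is the Hecke specialization of Example~\ref{ex:hecke} with $N=\nu=1$, $s=1$ and $\ee a$ replaced by $\tau$. The resulting $n$-dimensional output is therefore the unreduced Burau representation of $B_n$ at $\tau$. Since $1+\tau+\cdots+\tau^{n-1}\ne0$ (because $0<nc<1$), it splits as $\mathbf1\oplus\mathrm{Bur}^{\mathrm{red}}_n(\tau)$.

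\emph{The reduced summand.} On $V_{\mathrm{red}}^{\oplus n}$ we use the augmentation $D$ of the reduced seed. By Lemma~\ref{lem:augmentation-intertwining}, $D$ is $B_n$-equivariant onto $J=\sum_j\Img(g_j-I)\subseteq V_{\mathrm{red}}$. On $\mathcal T_n$ each $g_j-I$ is invertible on $V_{\mathrm{red}}$, because its eigenvalues $\tau$ and $\tau q^2$ differ from $1$. Hence $D$ is surjective onto $V_{\mathrm{red}}$. Its kernel $\mathcal R_n$ is then an invariant subspace of dimension $n\cdot n-n=n(n-1)$.

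Unitarity of the output (Proposition~\ref{prop:twisted-unitarization}) gives complete reducibility, so the orthogonal complement of $\mathcal R_n$ is an invariant complement isomorphic to $V_{\mathrm{red}}|_{B_n}$. Burau branching then gives $V_{\mathrm{red}}|_{B_n}=W\oplus L'$, exactly as in the proof of Lemma~\ref{lem:fulltwist}. Here $W\cong\mathrm{Bur}^{\mathrm{red}}_n(q)$ and $L'$ is trivial, using $[n]_q\ne0$ since $0<na<1$. Collecting the summands gives the stated decomposition. It remains to check that $W$ is really the reduced Burau representation of $B_n$ on strands $1,\dots,n$ in the paper's convention; this is a direct reading of the matrices $\rho_q(\sigma_i)$, $i\ge1$, on vectors with zero $e_0$-coordinate.

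\emph{Spectrum of $\mathcal R_n$.} We compute it by subtraction from Proposition~\ref{prop:charpoly}. The full generator spectrum is
\[
\{1^{\times n(n-1)},(-q)^{\times(n-1)},(-\tau)^{\times(n-1)},\omega,-\omega\}.
\]
The other summands contribute:
\begin{itemize}
\item the two trivial summands give $1^{\times 2}$;
\item $\mathrm{Bur}^{\mathrm{red}}_n(\tau)$ gives $\{1^{\times(n-2)},-\tau\}$;
\item $\mathrm{Bur}^{\mathrm{red}}_n(q)$ gives $\{1^{\times(n-2)},-q\}$, since a reduced Burau generator has eigenvalues $1^{\times(n-2)}$ and $-q$.
\end{itemize}
Removing these leaves $1^{\times(n(n-1)-2-2(n-2))}=1^{\times(n-1)(n-2)}$, together with $(-q)^{\times(n-2)}$, $(-\tau)^{\times(n-2)}$, $\omega$ and $-\omega$, as claimed.

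\emph{The case $n=2$.} Here $B_2\cong\Z$ acts on $\mathcal R_2$ by a single unitary operator, which is diagonalizable. Hence the two-dimensional representation is reducible.

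The main obstacle is the identification of the $D$-complement and its branching within the same common unitary structure, in particular checking the Burau conventions. The subtraction itself is routine once the pieces are identified.
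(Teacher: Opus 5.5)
Your proposal is correct and follows essentially the same route as the paper. Both arguments split the seed as $\mathbf 1\oplus V_{\mathrm{red}}$ and use additivity of the Long--Moody construction. They identify the scalar-seed output as the unreduced Burau representation at $\tau$ and split it. They use surjectivity of $D$ on the reduced seed together with the branching $V_{\mathrm{red}}=W\oplus L'$, invoke unitarizability to split the resulting sequences, and read off the spectrum of $\mathcal R_n$ by dividing out the other summands.

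The only difference is cosmetic. You identify the scalar-seed output through the Hecke specialization of Example~\ref{ex:hecke}, with $\ee a$ replaced by $\tau$. The paper instead matches the local block directly with the Long--Moody convention of Bigelow--Tian, and both give the same identification.
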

\begin{proof}
By Lemma~\ref{lem:augmentation-intertwining}, the augmentation
$D:V^{\oplus n}\to V$ is a braid intertwiner. The unitary seed splits as $\mathbf 1\oplus
\mathrm{Bur}^{\mathrm{red}}_{n+1}(q)$ (both with the same character
twist), and the Long--Moody construction is additive.

The trivial-seed
summand is the scalar-seed summand with $g=\tau$, $s=1$, whose output
has the local block $\left(\begin{smallmatrix}0&\tau\\1&1-\tau\end{smallmatrix}\right)$.
This is the unreduced Burau representation in the Long--Moody convention
of \cite[Proposition~2.1]{BT}, up to simultaneous change of basis from
the raw convention of Section~\ref{sec:twisted}.
For this scalar identification, note that
\cite[\S\S~1--2, Proposition~2.1]{BT} uses the relation
$g_{i+1}\sigma_i=\sigma_i g_i$ and the block
$R_i=\left(\begin{smallmatrix}0&g_i\\1&1-g_i\end{smallmatrix}\right)$,
with $\sigma_i$ multiplying every matrix entry on the left.
Specializing their free generators to $\tau$ and their braid
generators to $1$ gives the displayed block.
Since $0<c<1/n$ on $\mathcal T_n$, we have $\tau^n\ne1$ and hence
$1+\tau+\cdots+\tau^{n-1}\ne0$. Thus this unreduced Burau
representation splits as $\mathbf1\oplus\mathrm{Bur}^{\mathrm{red}}_n(\tau)$.

On the reduced-seed summand $(V_{\mathrm{red}})^{\oplus n}$, $g_1^{\mathrm{tB,red}}-I$ is
invertible on $\mathcal T_n$ (its eigenvalues are $\tau-1$ and
$\tau q^2-1$), so $D$ is surjective onto $V_{\mathrm{red}}$ with
kernel $\mathcal R_n$ of dimension $n^2-n$, and by the intertwining
relation the quotient $(V_{\mathrm{red}})^{\oplus n}/\mathcal R_n$
is the restriction of
$\mathrm{Bur}^{\mathrm{red}}_{n+1}(q)$ to $B_n$, which is
$\mathbf1\oplus\mathrm{Bur}^{\mathrm{red}}_n(q)$.
Indeed, $0<a<1/(n+1)$ makes both geometric sums in the proof of
Lemma~\ref{lem:fulltwist} nonzero. Its decomposition
$V_{\mathrm{red}}=W\oplus L'$ identifies $W$ with
$\mathrm{Bur}^{\mathrm{red}}_n(q)$ and $L'$ with a fixed line of $B_n$.

Since the whole representation is unitarizable on $\mathcal T_n$
(Proposition~\ref{prop:twisted-unitarization}(1)), every invariant subspace has an
invariant complement, so all these exact sequences split.

The spectrum of $\mathcal R_n$ follows from
Proposition~\ref{prop:charpoly} by dividing out the characteristic
polynomials of the other summands, $(z-1)^2$, $(z-1)^{n-2}(z+\tau)$
and $(z-1)^{n-2}(z+q)$.
\end{proof}

\begin{remark}[An explicit equivariant splitting]\label{rem:splitting}
Let
\[
\begin{gathered}
F:V\longrightarrow V^{\oplus n},\\
F(v)=(g_2^{\mathrm{tB}}\cdots g_n^{\mathrm{tB}}v,\ g_3^{\mathrm{tB}}\cdots g_n^{\mathrm{tB}}v,\ \dots,\ g_n^{\mathrm{tB}}v,\ v),
\end{gathered}
\]
the map in \eqref{eq:L} without the eigenvalue condition.
By \cite[Lemma~6.1]{FiniteCompanion}, $F$ and $D$ are braid
intertwiners and $DF=P_n-I$. On $\mathcal T_n$, $P_n-I$ is
invertible by Theorem~\ref{thm:twisted-burau}(2). Therefore
\[
J_0=F(P_n-I)^{-1},\qquad DJ_0=I,\qquad
V^{\oplus n}=F(V)\oplus\Ker D.
\]
This splitting is algebraic and does not require a positive form.
The further Burau splittings follow from the nonvanishing geometric
sums in the proof of Proposition~\ref{prop:decomp}; positivity makes
the resulting summands unitary.
\end{remark}

Long's reduced construction and the difference-coordinate map in
\cite[Theorem~2.11 and its proof]{Long} give the module and map below.
We record them in our Artin-action conventions and verify that, on
$\mathcal T_n$, the map restricts to an isomorphism from the
augmentation-kernel summand $\mathcal R_n$ onto the reduced construction.

\begin{proposition}[Identification with Long's reduced construction]
\label{prop:long-reduced}
On $\mathcal T_n$, the summand $\mathcal R_n$ is isomorphic to Long's
reduced construction \cite[Thm.~2.11]{Long} applied to the
character-twisted reduced Burau seed.
\end{proposition}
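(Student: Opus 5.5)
The plan is to realize Long's reduced construction as an explicit quotient or subspace of the Long--Moody module of the reduced seed $V_{\mathrm{red}}$ via difference coordinates, and then compare it with $\mathcal R_n=\Ker D\cap(V_{\mathrm{red}})^{\oplus n}$ using dimensions and equivariance.

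\emph{Recalling Long's construction.} In the proof of \cite[Thm.~2.11]{Long}, the reduced construction on a seed $W$ is the submodule (equivalently, the image under the difference map) of the Long--Moody module spanned by the relative cycles. Concretely, I would define
\[
\Delta:(V_{\mathrm{red}})^{\oplus n}\to(V_{\mathrm{red}})^{\oplus(n-1)},\qquad
\Delta(v_1,\dots,v_n)=(w_1,\dots,w_{n-1}),
\]
with $w_j$ a suitable difference of consecutive slot components twisted by the seed operators. The most likely form is that $w_j$ arises from $(g_j-I)v_j$ together with the partial sums defining $D$, so that $D$ is the ``total'' and $\Delta$ records the ``partials.'' The reduced module is the $B_n$-module on the target of dimension $(n-1)\cdot n=n(n-1)$, with braid matrices given by Long's reduced formulas translated to our Artin convention $\sigma_i x_{i+1}\sigma_i^{-1}=x_{i+1}^{-1}x_ix_{i+1}$. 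Translating conventions means replacing Long's action by its inverse/transpose where needed. I would check this on the relations $s_ig_is_i^{-1}=g_{i+1}$ exactly as in Lemma~\ref{lem:augmentation-intertwining}.

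\emph{Equivariance.} Next I would verify $\Delta S_i=S_i^{\mathrm{red,Long}}\Delta$ slot by slot. Only the components $w_{i-1},w_i,w_{i+1}$ can change, and each identity reduces to the seed Artin relations, in the same way that the proof of Lemma~\ref{lem:augmentation-intertwining} reduced to $(g_{i+1}-I)s_i=s_i(g_i-I)$ and its companion identity. This is a finite block computation independent of the Burau specifics.

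\emph{Isomorphism on $\mathcal R_n$.} Restrict $\Delta$ to $\mathcal R_n$. Both spaces have dimension $n(n-1)$, by Proposition~\ref{prop:decomp} and by construction, so it suffices to show injectivity on $\Ker D$. Suppose $v\in\Ker D$ and $\Delta v=0$. The vanishing differences should force $(g_j-I)v_j$ to be a common vector $u$ propagated by the $g$'s. Then $D v=0$ becomes $\sum_j(g_j-I)v_j=0$, which together with the difference relations yields $(P_n-I)$ applied to something equals $0$, or $n u=0$ in the simplest normalization. On $\mathcal T_n$, $P_n-I$ is invertible on $V_{\mathrm{red}}$ by Theorem~\ref{thm:twisted-burau}(2), since the eigenangles $nc$, $nc+a$, $nc+(n+1)a$ lie in $(0,1)$. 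Moreover $g_j-I$ is invertible on $V_{\mathrm{red}}$, as noted in the proof of Proposition~\ref{prop:decomp}. Together these give $v=0$.

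Alternatively, if $\Delta$ is instead a surjection whose kernel is a complement, I would use the splitting $V^{\oplus n}=F(V)\oplus\Ker D$ of Remark~\ref{rem:splitting} and show that $\Ker\Delta=F(V_{\mathrm{red}})$.

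\emph{Main obstacle.} I expect the hard part to be pinning down Long's difference coordinates in our conventions, so that $\Delta$ is genuinely equivariant and its kernel meets $\Ker D$ trivially exactly when $P_n-I$ and $g_j-I$ are invertible. The invertibility hypotheses of $\mathcal T_n$ enter only in this step. As a consistency check, the spectra should agree with Proposition~\ref{prop:decomp}.
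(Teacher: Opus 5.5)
Your architecture is right, and your ``alternative'' paragraph is in fact the paper's argument. But the proposal never supplies the one thing the proposition consists of: the explicit intertwiner. The missing idea is Long's difference map (with $V=V_{\mathrm{red}}$ and the restricted seed operators)
\[
 \Psi:V^{\oplus n}\to V^{\oplus(n-1)},\qquad (\Psi v)_j=v_j-g_{j+1}v_{j+1}\quad(1\le j<n).
\]
It is surjective (solve backwards starting from $v_n=0$). Moreover, $\Psi v=0$ says exactly that $v=F(v_n)$, so $\Ker\Psi=F(V)$. If $v\in\Ker D\cap\Ker\Psi$, then $0=DF(v_n)=(P_n-I)v_n$, hence $v=0$, because $P_n-I$ is invertible on $\mathcal T_n$. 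Comparing dimensions then gives the isomorphism $\Ker D\cong V^{\oplus(n-1)}$; it even forces $\dim\Ker D=n(n-1)$. No invertibility of $g_j-I$ is needed at this step.

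Equivariance is a short case check. It uses $g_{i+1}s_i=s_ig_i$, commutation of $s_i$ with $g_j$ for $j\ne i,i+1$, and $g_is_ig_i=s_ig_ig_{i+1}$. The last identity holds because conjugation by $s_i$ fixes $g_ig_{i+1}$ by the Artin relations. Writing $u=\Psi v$, the check yields exactly Long's matrices:
\[
(\Psi S_iv)_{i-1}=s_i(u_{i-1}+g_iu_i),\quad
(\Psi S_iv)_i=-s_ig_iu_i,\quad
(\Psi S_iv)_{i+1}=s_i(u_i+u_{i+1}),
\]
and $s_iu_j$ in every other component.

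Your favored guess, the partial sums $p_j(v)=\sum_{k\le j}(g_k-I)v_k$, is also equivariant on $\Ker D$. However, it produces a different presentation. Its $i$th component is $s_i(p_{i-1}-g_ip_i+g_ip_{i+1})$, with $p_0=0$ and $p_n=Dv=0$, and every other component is $s_ip_j$. This is a row-type action, not Long's column-type matrices, so you would still need a further identification with his construction. Its injectivity on $\Ker D$ comes from $K=0$ (all $(g_k-I)v_k$ vanish), not from $P_n-I$. Your injectivity paragraph, which invokes both hypotheses and a speculative ``$nu=0$'', therefore mixes the two candidate maps without proving the claim for either.

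Finally, be careful with ``replacing Long's action by its inverse/transpose where needed.'' The inverse-transpose is the contragredient, whose generator spectrum is the inverse of the multiset in Proposition~\ref{prop:decomp}. It is therefore in general not isomorphic to $\mathcal R_n$, so any convention change must itself be checked to be an isomorphism. Working directly with the explicit $\Psi$ in the paper's Artin conventions avoids this issue entirely.
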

\begin{proof}
In this proof, let $V=V_{\mathrm{red}}$, so that $\dim V=n$,
and use the restricted operators $g_j^{\mathrm{tB,red}}$ and
$s_i^{\mathrm{tB,red}}$ of Definition~\ref{def:reduced-burau-seed}.
Here $D$, $F$, and $P_n$ are those of the general construction with
$g_j=g_j^{\mathrm{tB,red}}$ and $s_i=s_i^{\mathrm{tB,red}}$. Write
$\mathcal R_n=\Ker D\subset V^{\oplus n}$ and define
\[
 \Psi:V^{\oplus n}\longrightarrow V^{\oplus(n-1)},\qquad
 (\Psi v)_j=v_j-g_{j+1}^{\mathrm{tB,red}}v_{j+1}\quad(1\le j<n).
\]
Then $\Psi$ is surjective and $\Ker\Psi=F(V)$, where $F$ is the equivariant
map of Remark~\ref{rem:splitting}. Since $D F=P_n-I$ and
$P_n-I$ is invertible on $\mathcal T_n$, one has
$\Ker D\cap\Ker\Psi=0$. Thus $\Psi|_{\mathcal R_n}$ is injective.
Since
\[
\dim\Ker D=n(n-1)=\dim V^{\oplus(n-1)},
\]
it is an isomorphism from $\mathcal R_n=\Ker D$ onto $V^{\oplus(n-1)}$.

Writing $u=\Psi v$, the Artin relations give $\Psi S_i=A_i^{\mathrm{red}}\Psi$, where, omitting
nonexistent endpoint coordinates,
\[
(A_i^{\mathrm{red}}u)_j=
\begin{cases}
s_i^{\mathrm{tB,red}}(u_{i-1}+g_i^{\mathrm{tB,red}} u_i),&j=i-1,\\
-s_i^{\mathrm{tB,red}} g_i^{\mathrm{tB,red}} u_i,&j=i,\\
s_i^{\mathrm{tB,red}}(u_i+u_{i+1}),&j=i+1,\\
s_i^{\mathrm{tB,red}} u_j,&\text{otherwise}.
\end{cases}
\]
These are Long's reduced matrices \cite[Thm.~2.11]{Long}; in the
present conventions his $\rho(g_{i+1}\sigma_i)$ is $s_i^{\mathrm{tB,red}} g_i^{\mathrm{tB,red}}$.
Thus $\Psi|_{\mathcal R_n}$ is the desired braid intertwiner.
\end{proof}

The decomposition leaves open whether $\mathcal R_n$ has further
invariant summands. For $n=3,4$, the following result shows that it
is generically irreducible, of dimension $6$ and $12$, respectively.

\begin{proposition}[Generic irreducibility of $\mathcal R_n$ for $n=3,4$]
\label{prop:generic-irred}
For $n\in\{3,4\}$ there is a non-zero Laurent polynomial
$f_n(q,\tau)$ such that $\mathcal R_n(q,\tau)$ is irreducible for all
$(a,c)\in\mathcal T_n$ with $f_n(\ee a,\ee c)\neq0$; in particular on a
dense open subset of $\mathcal T_n$.
\end{proposition}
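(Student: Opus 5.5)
The plan is to certify irreducibility by an algebraic criterion whose failure locus is cut out by a polynomial, and then to exhibit a single parameter point where the criterion holds. Work with Long's reduced model from Proposition~\ref{prop:long-reduced}. On $V_{\mathrm{red}}^{\oplus(n-1)}$, of dimension $d_n:=n(n-1)\in\{6,12\}$, the matrices $A_i^{\mathrm{red}}$ are given explicitly. Their entries are Laurent polynomials in $q,\tau$ once we use a basis of $V_{\mathrm{red}}=\Ker\ell_q$ with Laurent-polynomial coordinates, for example $e_j-q e_{j-1}$. Irreducibility over $\C$ of a finite-dimensional representation is equivalent, by Burnside's theorem, to the associative algebra generated by the $A_i^{\mathrm{red}}$ being all of $\operatorname{End}(\C^{d_n})$. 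It is therefore equivalent to the existence of $d_n^2$ words in the generators whose vectorized matrices are linearly independent.

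\textbf{Step 1: a witness set of words.} Fix, for each $n\in\{3,4\}$, an explicit finite list $w_1,\dots,w_{d_n^2}$ of positive braid words, say of bounded length, found by a breadth-first search at one sample rational point. Set
\[
f_n(q,\tau):=\det\bigl[\operatorname{vec}A^{\mathrm{red}}(w_1)\,\big|\,\cdots\,\big|\,\operatorname{vec}A^{\mathrm{red}}(w_{d_n^2})\bigr],
\]
a $36\times36$ resp.\ $144\times144$ determinant. Its entries are Laurent polynomials, so $f_n$ is a Laurent polynomial in $q,\tau$. Wherever $f_n(\ee a,\ee c)\ne0$, the words span $\operatorname{End}$, so the only invariant subspaces of $\mathcal R_n\cong$ Long's reduced module are $0$ and the whole space. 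This gives irreducibility at that point.

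\textbf{Step 2: nonvanishing.} It suffices to show $f_n\not\equiv0$. We do this by evaluating at one convenient point, e.g.\ a point with small integer or rational coordinates where exact arithmetic is cheap, or modulo a prime after clearing denominators. A nonzero value there proves $f_n$ is a nonzero Laurent polynomial. The point need not lie in $\mathcal T_n$, because $f_n$ is a polynomial identity. Density and openness of $\{f_n(\ee a,\ee c)\ne0\}$ in $\mathcal T_n$ then follow from the standard fact that a nonzero Laurent polynomial cannot vanish on an open subset of the real torus image. Indeed, $(a,c)\mapsto f_n(\ee a,\ee c)$ is real-analytic and not identically zero, because the monomials $\ee{ka+mc}$ are linearly independent functions. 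Its zero set is therefore closed and nowhere dense.

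\textbf{Alternative (a cheaper certificate).} If the Burnside determinant is too large to handle, we can use a spectral argument instead. By Proposition~\ref{prop:decomp}, $\omega$ and $-\omega$ are simple eigenvalues of $A_1$ on $\mathcal R_n$. Let $v_\omega$ be an $\omega$-eigenvector; it has Laurent-polynomial coordinates after clearing denominators. Since $\mathcal R_n$ is unitary, and hence completely reducible, any nonzero invariant subspace $W$ has an invariant complement $W'$. The simple eigenvector $v_\omega$ lies in one of $W$ or $W'$. It then suffices to show two things. First, the cyclic span $\C[B_n]v_\omega$ is everything, which is a rank condition on a Krylov-type matrix of words applied to $v_\omega$. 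Second, the same holds for the dual/adjoint, so that any summand must contain $v_\omega$. Each is again a nonvanishing-minor condition whose minors multiply to give $f_n$.

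\textbf{Main obstacle.} The hard step is Step~2, the actual certification that $f_n\not\equiv0$. This is an exact symbolic or modular computation with $12\times12$ matrices for $n=4$, which is not doable by hand. I would first try modular evaluation at an integer point $(q,\tau)$ over $\mathbb F_p$ for a moderate prime. A nonzero residue suffices, since it implies the integer value is nonzero. I would record the chosen words and the point in the reproducibility files.
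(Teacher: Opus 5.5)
Your proof is correct, but it takes a different route from the paper.

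\textbf{The paper's route.} It never isolates $\mathcal R_n$ in the computation. It bounds the commutant of the whole $M$-dimensional representation on $U_0$ in the raw Burau basis. The lower bound $\dim\operatorname{End}_{B_n}(U_0)\ge 4+1+1+\dim\operatorname{End}_{B_n}(\mathcal R_n)\ge7$ comes from Proposition~\ref{prop:decomp}. The matching upper bound holds off the zero set of an $(M^2-7)$-minor of the linear system $S_iX=XS_i$, certified modulo $1009$ at $q=2$, $\tau=3$. This gives $\operatorname{End}_{B_n}(\mathcal R_n)=\C$, and unitarity on $\mathcal T_n$ (complete reducibility) is then essential to conclude irreducibility.

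\textbf{Your route and what it buys.} You transport to Long's model via Proposition~\ref{prop:long-reduced} and use the easy direction of Burnside's theorem with an explicit spanning set of words. Your certificate proves absolute irreducibility wherever $f_n\ne0$ without any appeal to unitarity, so it also reaches non-unitary points, which the paper's remark explicitly leaves out. It is also not forced to vanish on the diagonal $a=c$. By contrast, every $(M^2-7)$-minor of the paper's system must vanish there, because the two reduced Burau summands become isomorphic and $\dim\operatorname{End}_{B_n}(U_0)\ge9$. Your determinants are smaller as well: $36\times36$ and $144\times144$, against minors of size $137$ and $393$.

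\textbf{What the paper's route buys.} It needs no word search, no reduced basis, and no identification with Long's module. Where its minor is nonzero, it determines the entire commutant of $U_0$, which shows that $\mathcal R_n$ has no nonzero intertwiners with the other summands.

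Both proofs rest on one finite modular computation. Yours still has to be run, but it is guaranteed to succeed in principle, since Burnside's theorem supplies spanning words at any irreducible point.

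Two small points about your alternative argument. Since $\omega=q^{3/2}\tau$, the eigenvector $v_\omega$ is defined only after adjoining $q^{1/2}$. Its Krylov minor must therefore be multiplied by its conjugate under $q^{1/2}\mapsto-q^{1/2}$ to give a Laurent polynomial in $q,\tau$. Also, once you invoke complete reducibility, cyclicity of $v_\omega$ alone already suffices; the dual condition is needed only if you want to avoid complete reducibility.
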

\begin{proof}
On $\mathcal T_n$, Proposition~\ref{prop:decomp} gives
\[
\dim\operatorname{End}_{B_n}(U_0)
\ge 4+1+1+\dim\operatorname{End}_{B_n}(\mathcal R_n)\ge7.
\]
Here $4$ comes from the two trivial summands, and the next two
terms from the scalar endomorphisms of the two reduced Burau summands.
We prove generic irreducibility by showing that this lower bound is
attained wherever a suitable minor is nonzero.

The commutant of the braid representation on $U_0$ is the solution
space of the linear system $S_iX-XS_i=0$
($1\le i\le n-1$), whose coefficient matrix
$\mathcal C_n(q,\tau)$ has entries in $\Z[q,q^{-1},\tau]$ (raw Burau
basis). Its $M^2$ unknowns are the entries of $X$, and its kernel is
exactly $\operatorname{End}_{B_n}(U_0)$. An exact computation over $\mathbb F_{1009}$ at the
specialization $q=2$, $\tau=3$ exhibits a non-zero minor of size
$M^2-7$ (ranks $137$ of $144$ for $n=3$ and $393$ of $400$
for $n=4$; the selected determinants modulo $1009$ are $786$ and
$962$, respectively, and their row and column indices are recorded by
the ancillary script). The kernel dimensions at these specializations
are therefore $144-137=7$ and $400-393=7$.

The corresponding minor of $\mathcal C_n(q,\tau)$ is a
Laurent polynomial $f_n$ that is non-zero modulo $1009$, hence
non-zero, so $\dim\operatorname{End}_{B_n}(U_0)\le7$ wherever
$f_n(q,\tau)\ne0$. The upper and lower bounds agree wherever $f_n\ne0$,
forcing $\operatorname{End}_{B_n}(\mathcal R_n)=\C$. Since $\mathcal R_n$ is
unitarizable on $\mathcal T_n$, it is completely reducible, and a
one-dimensional commutant forces irreducibility. A non-zero Laurent
polynomial does not vanish on any non-empty open subset of the torus,
so $\{f_n\ne0\}\cap\mathcal T_n$ is dense and open.
\end{proof}

\begin{remark}
The argument does not cover $n\ge5$, the whole of $\mathcal T_n$, or
the specialization $q=2$, $\tau=3$ itself (where complete reducibility
is not available). The same specialization technique is used for
other Long--Moody families in \cite[arXiv version, Theorem~5.8, Lemma~5.9]{ST}. The identification with Long's reduced construction is established in
Proposition~\ref{prop:long-reduced}. The present generic argument does not
locate the entire exceptional set.
\end{remark}

The next section uses the generator spectra to obtain properties
independent of a chosen unitary realization. Section~\ref{sec:synthesis}
then uses the matrix structure in one common coordinate realization
to construct elementary factorizations of all generators.

\Needspace{20\baselineskip}
\section{Spectral consequences: projective order, Clifford frames, and entanglement}\label{sec:gates}

We derive quantum-gate properties from the spectrum computed in
Section~\ref{sec:spectrum}, for $n\ge2$ and $(a,c)\in\mathcal T_n$.
The general notions of entanglement, Weyl--Heisenberg frames, Clifford
groups, and projective order are recalled in
Appendix~\ref{app:quantum-background}; Table~\ref{tab:quantum-dictionary}
summarizes them. The results below apply these notions and the
product-vector preserver theorem \cite[Thm.~4.1]{Johnston} and the
universality theorems cited below to this representation family.
Here $V=\C^{n+1}$, $K=0$, and
$U_0=V^{\oplus n}$ has dimension $M:=n(n+1)$.
The construction supplies the vector-space identification
\[
 V^{\oplus n}\longrightarrow\C^n\otimes V,
 \qquad (v_1,\ldots,v_n)\longmapsto\sum_{j=1}^n e_j\otimes v_j,
\]
where $e_1,\ldots,e_n$ is the standard basis of $\C^n$; the first
factor records the $n$ slots.
A chosen positive invariant inner product on $U_0$ need not be a product
inner product under this identification. After unitarization, we
consider any unitary identification
\[
 \varphi:\C^M\longrightarrow\C^{d_1}\otimes\C^{d_2},
 \qquad d_1d_2=M,\quad d_1,d_2\ge2,
\]
and write $U_\varphi=\varphi U\varphi^{-1}$ for $U\in U(M)$.
We call such a $\varphi$ a (nontrivial) \emph{bipartition}.
Theorems~\ref{thm:nonclifford} and~\ref{thm:entangling} range over all
Weyl--Heisenberg frames and all nontrivial unitary bipartite
identifications, respectively, making their conclusions invariant
under the common unitary conjugation of Theorem~\ref{thm:gauge}.

\begin{corollary}[The spectrum determines the seed]\label{cor:injective}
For every $n\ge2$ and $(a,c)\in\mathcal T_n$, the eigenvalue multiset
of one generator $U_i$ determines the ordered pair $(q,\tau)$, hence
$(a,c)$.

Consequently distinct seeds in $\mathcal T_n$ give pairwise
non-conjugate unitary braid representations, while by
Theorem~\ref{thm:gauge} all choices of $l\in\Wdef$ for one seed give conjugate
ones.
\end{corollary}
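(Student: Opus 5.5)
The plan is to read $(q,\tau)$ off the eigenvalue multiset of Proposition~\ref{prop:charpoly}. That multiset is
\[
\{1^{\times n(n-1)},\ (-q)^{\times(n-1)},\ (-\tau)^{\times(n-1)},\ \omega,\ -\omega\},
\]
with eigenangles $0$, $a+\tfrac12$, $c+\tfrac12$, $d$ and $d+\tfrac12$, where $d=\tfrac32a+c$. The spectrum is invariant under conjugation, so the same multiset holds for every unitary realization $U_i$ and every $l\in\Wdef$. On $\mathcal T_n$ we have $0<a,c<d<\tfrac12$. Hence the eigenangles in $(0,\tfrac12)$ consist of $d$ alone, with multiplicity one. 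The eigenangles in $(\tfrac12,1)$ are $a+\tfrac12$ and $c+\tfrac12$, each with multiplicity $n-1$, together with $d+\tfrac12$ with multiplicity one, which we already know from $d$. The eigenangle $0$ carries everything else.

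\textbf{Step 1.} Recover $d$ as the unique eigenangle in $(0,\tfrac12)$. Then remove the eigenvalues $1$, $\omega=\ee d$ and $-\omega=\ee{d+\tfrac12}$ from the multiset. What remains is the multiset $\{(-q)^{\times(n-1)},(-\tau)^{\times(n-1)}\}$, which gives the unordered pair $\{a,c\}$. One point needs care: $d+\tfrac12$ might coincide with $a+\tfrac12$ or $c+\tfrac12$. This cannot happen, because $a,c<d$ strictly. Indeed $d-a=\tfrac a2+c>0$ and $d-c=\tfrac32a>0$.

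\textbf{Step 2.} Resolve the ordering using $d=\tfrac32a+c$. Write the unordered pair as $\{x,y\}$. If $x=y$, then $a=c$ and we are done. Otherwise, suppose both assignments $\tfrac32x+y=d$ and $\tfrac32y+x=d$ held. Subtracting gives $\tfrac12(x-y)=0$, a contradiction. So exactly one assignment matches $d$, and this fixes $a$ and $c$. The pair $(q,\tau)=(\ee a,\ee c)$ then follows.

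This ordering step is the only real obstacle. Everything else amounts to bookkeeping of the eigenangle ranges given by the inequalities on $\mathcal T_n$.

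\textbf{Consequence.} Suppose two seeds in $\mathcal T_n$ gave conjugate unitary braid representations. Then the conjugation would match the spectra of corresponding generators, and by Steps~1--2 the seeds would have the same $(q,\tau)$. So distinct seeds give non-conjugate representations. Finally, Theorem~\ref{thm:gauge}, via Proposition~\ref{prop:twisted-unitarization}(1), gives conjugacy across all $l\in\Wdef$ for a fixed seed.
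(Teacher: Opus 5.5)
Your proposal is correct and follows essentially the same route as the paper: isolate $\omega$ using $0<a,c<d<\tfrac12$, remove $1^{\times n(n-1)}$ and $\pm\omega$ to obtain the unordered pair $\{q,\tau\}$, and break the symmetry using the asymmetric dependence $\omega=q^{3/2}\tau$. The differences are only tactical. The paper identifies $\{\omega,-\omega\}$ as the unique antipodal pair and then inverts algebraically via $q=\omega^2/(q\tau)^2$. You instead locate $d$ as the unique eigenangle in $(0,\tfrac12)$ and test the two assignments against $d=\tfrac32a+c$; equivalently, $a=2d-2(a+c)$.
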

\begin{proof}
Since $0<a,c<d<\tfrac12$, the pair $\{\omega,-\omega\}$ is the unique antipodal
pair in the spectrum: the other non-trivial eigenangles are
$a+\tfrac12$ and $c+\tfrac12$, neither antipodal to $\omega$ (as $d>a,c$)
nor to each other (their difference is less than $\tfrac12$ in
absolute value), and $-1$ is not an eigenvalue. Removing $\{\omega,-\omega\}$
and the multiplicity $n(n-1)$ of $1$ leaves the polynomial
$(z+q)^{n-1}(z+\tau)^{n-1}$, which determines the unordered pair
$\{q,\tau\}$, also when $q=\tau$. With $p:=q\tau$ and $v:=\omega^2=q^3\tau^2$
one then has $q=v/p^2$ and $\tau=p/q$. The argument uses neither
$n-1>1$ nor $a\ne c$.
\end{proof}

The projective finiteness of the Clifford normalizers, proved after
Definition~\ref{def:wh-frame}, gives the following spectral criterion.

\begin{theorem}[Projective order]\label{thm:nonclifford}
Let $n\ge2$ and $(a,c)\in\mathcal T_n$. A unitary generator $U_i$ on
$\C^{n(n+1)}$ has finite projective order if and only if
$a,c\in\Q$.

If $a\notin\Q$ or $c\notin\Q$, then for every finite-dimensional
Weyl--Heisenberg frame on $\C^M$, $U_i$ does not normalize its finite
Weyl--Heisenberg subgroup, hence lies in no associated generalized
Clifford group. More generally, it lies in no subgroup of $U(M)$
that is finite modulo phases.

For $n=3$, $a=c=\tfrac1{20}$, all eigenangles are multiples of
$\tfrac1{40}$ and $U_i^{40}=I$.
\end{theorem}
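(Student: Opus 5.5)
The argument runs entirely through the eigenvalue multiset of Proposition~\ref{prop:charpoly}. Since $U_i$ is unitary, it is diagonalizable with eigenvalues $\{1^{\times n(n-1)},(-q)^{\times(n-1)},(-\tau)^{\times(n-1)},\omega,-\omega\}$. Finite projective order means $U_i^k=\zeta I$ for some $k\ge1$ and scalar $\zeta$. Because $U_i$ is diagonalizable, this holds if and only if all $k$th powers of its eigenvalues coincide.

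The eigenvalue $1$ occurs, since $n(n-1)\ge2>0$. So finite projective order is equivalent to some positive power killing every eigenvalue, that is, to $\ee{a+\tfrac12}$, $\ee{c+\tfrac12}$, $\ee d$ and $\ee{d+\tfrac12}$ all being roots of unity.

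\emph{Forward direction.} If the order is finite, then $-q$ and $-\tau$ are roots of unity. Hence $a+\tfrac12,\ c+\tfrac12\in\Q$, and so $a,c\in\Q$.

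\emph{Converse.} If $a,c\in\Q$, then $d=\tfrac32a+c\in\Q$ as well. All listed eigenangles are then rational with a common denominator $k$, so $U_i^k=I$. The example $n=3$, $a=c=\tfrac1{20}$ is a direct check: $d=\tfrac18$, and the eigenangles $0,\tfrac{11}{20},\tfrac18,\tfrac58$ all lie in $\tfrac1{40}\Z$. Hence $U_i^{40}=I$.

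For the second paragraph I will show that any subgroup $G\subset U(M)$ that is finite modulo phases forces finite projective order. If $G/(G\cap U(1)I)$ is finite of order $k$, then every $U\in G$ satisfies $U^k\in U(1)I$. So an element of infinite projective order lies in no such subgroup.

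It remains to put each generalized Clifford group, i.e.\ the normalizer in $U(M)$ of a finite Weyl--Heisenberg subgroup $\mathcal W$, in this class. This is the fact announced as ``proved after Definition~\ref{def:wh-frame}'', and I take it as given. The underlying argument is as follows. Conjugation by a normalizing $U$ gives an automorphism of the finite group $\mathcal W$, and there are only finitely many such automorphisms. The kernel of $U\mapsto(\text{conjugation on }\mathcal W)$ consists of unitaries commuting with $\mathcal W$. Since $\mathcal W$ acts irreducibly on $\C^M$, Schur's lemma makes this kernel the scalars. Hence the normalizer is finite modulo phases.

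Combining this with the first paragraph: when $a\notin\Q$ or $c\notin\Q$, $U_i$ has infinite projective order, so it normalizes no finite Weyl--Heisenberg subgroup. The only step needing care is irreducibility of $\mathcal W$, which should be part of the definition of a frame. Everything else is a direct reading of the spectrum. This reading is independent of the chosen $l$ and unitarizing factor, since all unitary realizations are conjugate by Theorem~\ref{thm:gauge}.
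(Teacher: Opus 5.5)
Your proof is correct and follows the paper's argument: the eigenvalue $1$ forces any scalar power $U_i^k=\zeta I$ to have $\zeta=1$, the eigenvalues $-q$ and $-\tau$ give the forward direction, and rationality of all eigenangles gives the converse. The non-Clifford claim likewise comes, as in the paper, from the normalizer being finite modulo phases (finite automorphism group, scalar centralizer); one nuance is that irreducibility is not built into Definition~\ref{def:wh-frame}, but the paper obtains the scalar centralizer from the fact that the transported tensor Weyl monomials span $\operatorname{End}(\mathcal H)$, which is equivalent to the irreducibility you invoke.
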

\begin{proof}
Since $1$ is an eigenvalue of $U_i$, any scalar power
$U_i^k=zI$ must have $z=1$. Thus ordinary and projective order
coincide for these full-space generators.

The spectrum contains $1$, $-q$ and $-\tau$
(Proposition~\ref{prop:charpoly}); the ratios $-q=\ee{a+\frac12}$ and
$-\tau=\ee{c+\frac12}$ are roots of unity only if $a,c\in\Q$, so
irrational $a$ or $c$ excludes a scalar power. Conversely, for
$a,c\in\Q$ all eigenangles $0,a+\frac12,c+\frac12,d,d+\frac12$ are
rational and $U_i$, being diagonalizable, has finite order.
The non-Clifford conclusion follows from the projective finiteness
of the normalizer proved after Definition~\ref{def:wh-frame}.
\end{proof}

\begin{theorem}[Not a product operator for any bipartition]
\label{thm:entangling}
Let $n\ge2$ and $(a,c)\in\mathcal T_n$ with $1,a,c$ linearly
independent over $\Q$. Then for every unitary identification
$\varphi:\C^{n(n+1)}\to\C^{d_1}\otimes\C^{d_2}$ with
$d_1,d_2\ge2$ and every $i$, the operator
$U_{i,\varphi}:=\varphi U_i\varphi^{-1}$ is not of the form
$X\otimes Y$.

Consequently, for each $\varphi$ and $i$ there
exist nonzero $x\in\C^{d_1}$ and $y\in\C^{d_2}$ such that
$U_{i,\varphi}(x\otimes y)$ is not decomposable.
\end{theorem}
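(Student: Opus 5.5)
Suppose, for contradiction, that $U_{i,\varphi}=X\otimes Y$ for some unitary identification $\varphi$ with $d_1,d_2\ge2$. Unitary equivalence preserves the spectrum, so $U_{i,\varphi}$ has the eigenvalue multiset of Proposition~\ref{prop:charpoly}:
\[
\{1^{\times n(n-1)},\ (-q)^{\times(n-1)},\ (-\tau)^{\times(n-1)},\ \omega,\ -\omega\}.
\]
Because $U_{i,\varphi}$ is unitary and invertible, we may rescale so that $X$ and $Y$ are unitary: write $X\otimes Y=(tX)\otimes(t^{-1}Y)$ and use that $(X^\dagger X)\otimes(Y^\dagger Y)=I$ forces $X^\dagger X$ and $Y^\dagger Y$ to be scalars. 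Both factors are then diagonalizable, with eigenvalues $\alpha_1,\dots,\alpha_{d_1}$ and $\beta_1,\dots,\beta_{d_2}$. The spectrum of $U_{i,\varphi}$ is the product multiset $\{\alpha_r\beta_s\}$. The aim is to show that, under the $\Q$-independence hypothesis, no product multiset can equal the one displayed above.

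\textbf{Step 1 (angle form).} Pass to eigenangles modulo $\Z$, writing $\alpha_r=\ee{x_r}$ and $\beta_s=\ee{y_s}$. The spectrum becomes a multiset of sums $x_r+y_s$ in $\R/\Z$, which must coincide with
\[
\{0^{\times n(n-1)},\ (a+\tfrac12)^{\times(n-1)},\ (c+\tfrac12)^{\times(n-1)},\ d,\ d+\tfrac12\},\qquad d=\tfrac32a+c.
\]
The rectangle identity $(x_r+y_s)+(x_{r'}+y_{s'})=(x_r+y_{s'})+(x_{r'}+y_s)$ then gives a $\Z$-linear relation modulo $1$ among any four entries arranged at the corners of a $2\times2$ sub-rectangle. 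Consider the classes in $\R/\Q$. There $0$, $a+\frac12$, $c+\frac12$, $d$ and $d+\frac12$ reduce to $0,a,c,\tfrac32a+c,\tfrac32a+c$. By the hypothesis on $1,a,c$, these are independent apart from the obvious coincidences.

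\textbf{Step 2 (the $\pm\omega$ entries).} The key point is that $\omega$ and $-\omega$ each occur exactly once. Place $\omega$ at position $(r,s)$. Since $d_1,d_2\ge2$, choose $r'\ne r$ and $s'\ne s$. The rectangle relation gives $d+e_{r's'}\equiv e_{rs'}+e_{r's}$ modulo $\Q$, where each $e$ lies in $\{0,a,c,d\}$ modulo $\Q$. We then enumerate the possibilities.
\begin{itemize}
\item If neither $e_{rs'}$ nor $e_{r's}$ is the class $d$, the right-hand side lies in $\{0,a,c,2a,a+c,2c\}$. The left-hand side, $d$ or $2d$ plus an element of $\{0,a,c\}$, always has a $\tfrac32a$ or $3a$ component that does not match, except in the case $d+e=2c$ or similar. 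Independence rules all of these out: comparing coefficients of $a$ and $c$ shows that $\tfrac32+\epsilon_a$ cannot equal an integer in $\{0,1,2\}$ with a matching $c$-coefficient.
\item So one of $e_{rs'}$, $e_{r's}$ must have class $d$. That is, $-\omega$ lies in the same row or column as $\omega$, and the remaining entry $e_{r's'}$ is forced to be $\equiv$ the fourth corner.
\end{itemize}
Apply this to all choices of $r'\ne r$ and $s'\ne s$. Since $-\omega$ occurs only once, it cannot supply the needed $d$-class entry for every $(r',s')$ when both $d_1\ge2$ and $d_2\ge2$. The single exception is $d_1=d_2=2$ with a unique partner, and that case contradicts $M=n(n+1)\ge6>4$. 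A more careful version of this count, using $r'$ ranging over $d_1-1\ge1$ values and $s'$ over $d_2-1$ values with $(d_1-1)(d_2-1)+\dots$, should yield the contradiction in general.

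\textbf{Step 3 (indecomposable image).} For the consequence, note that a unitary $U$ mapping every product vector to a product vector is, by the product-vector preserver theorem \cite[Thm.~4.1]{Johnston}, either $X\otimes Y$ or, when $d_1=d_2$, $(X\otimes Y)\circ\mathrm{SWAP}$. The first form is excluded by Steps 1--2. For the swap form, the spectrum of $(X\otimes Y)\mathrm{SWAP}$ is the spectrum of $YX$ on the symmetric and antisymmetric parts, a multiset of the form $\{\mu_r\mu_s\}^{1/2}$-type values. A similar $\Q$-independence argument on $\pm\omega$ excludes this case too. I expect Step 2 to be the main obstacle: the combinatorial counting must handle every bipartition $d_1d_2=n(n+1)$ uniformly. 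If the rectangle argument above proves too delicate, the fallback is to compare the minimal $\Q$-span of the angle sets directly.
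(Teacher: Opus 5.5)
Your Step 2 has a genuine gap, and more careful counting will not close it. Passing to $\R/\Q$ identifies $\omega$ with $-\omega$ (both become $\tfrac32a+c$), and the half-integer shift between them is exactly the information the contradiction needs.

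Your own deduction shows only two things: $-\omega$ lies in the row or the column of $\omega$, and the two remaining corners agree modulo $\Q$. This is consistent with every bipartition having $d_1=2$ or $d_2=2$, not only $d_1=d_2=2$ as you claim. Such a bipartition always exists, because $n(n+1)$ is even. In fact no argument that works modulo $\Q$ can succeed. Take $n=3$ and $(d_1,d_2)=(6,2)$. Let $X$ have angle classes $0,0,0,a,c,\tfrac32a+c$ and let $Y$ have angles $0,\tfrac12$. Modulo $\Q$ the sums form the multiset $\{0^{\times6},a^{\times2},c^{\times2},(\tfrac32a+c)^{\times2}\}$. This is exactly the spectrum of $U_i$ reduced modulo $\Q$, with $\pm\omega$ in one row, as your analysis allows. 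The hedge that a refined count ``should yield the contradiction'' therefore cannot be made good.

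The missing idea is to keep the rectangle relation modulo $\Z$. Use the $\Q$-independence of $1,a,c$ only to conclude that a congruence between affine forms in $(a,c)$ must hold identically, with an integral constant. Write $\vartheta_0=0$, $\vartheta_1=a+\tfrac12$, $\vartheta_2=c+\tfrac12$, $\vartheta_3=d$, $\vartheta_4=d+\tfrac12$.

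\begin{itemize}
\item If $X$ or $Y$ is scalar, every multiplicity is a multiple of $d_1$ or $d_2$. This contradicts the simplicity of $\pm\omega$.
\item Otherwise choose $\xi\ne\xi'$ among the angles of $X$ and $\eta\ne\eta'$ among those of $Y$. The identity $(\xi+\eta)+(\xi'+\eta')\equiv(\xi+\eta')+(\xi'+\eta)$ gives a relation $\vartheta_i+\vartheta_j\equiv\vartheta_k+\vartheta_l\pmod 1$ between different index pairs.
\item Among the fifteen pair sums, the only two with equal $(a,c)$-coefficients and integral difference are $2\vartheta_3$ and $2\vartheta_4$, with $2\vartheta_3-2\vartheta_4=-1$. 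By contrast, $\vartheta_j+\vartheta_3$ and $\vartheta_j+\vartheta_4$ differ by $\tfrac12$, which is precisely what your reduction discarded.
\item Hence, up to swapping, the rectangle is $\bigl(\begin{smallmatrix}\vartheta_3&\vartheta_4\\ \vartheta_4&\vartheta_3\end{smallmatrix}\bigr)$. So $\omega$ and $-\omega$ each occur twice, a contradiction that holds uniformly in $(d_1,d_2)$.
\end{itemize}

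In Step 3, the swap alternative in Johnston's theorem needs no spectral argument. It requires $d_1=d_2$, i.e.\ $n(n+1)$ to be a perfect square, which is excluded by $n^2<n(n+1)<(n+1)^2$.
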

\begin{proof}
\emph{Reduction to a tensor-product spectrum.}
Fix $\varphi$ and write $U_i$ for its matrix in these tensor
coordinates throughout the proof.
A unitary mapping every product vector to a product vector is a
product operator, or a product operator composed with the swap when
$d_1=d_2$ \cite[Thm.~4.1]{Johnston}; for equal factor dimensions,
see also \cite[arXiv version, Theorem~1.4]{BrylinskiBrylinski}.
The swap alternative would require
the total dimension $n(n+1)=d_1^2$ to be a square, which is excluded by
$n^2<n(n+1)<(n+1)^2$. So it suffices to exclude $U_i=X\otimes Y$, and
by rescaling the factors reciprocally we may take $X,Y$ unitary.

Let $\Xi$, $\mathrm H$ be the sets of eigenangles of $X$, $Y$; the set of
eigenangles of $U_i$ is $\Xi+\mathrm H$ (mod $1$) with multiplicities
$m_U(s)=\sum_{\xi+\eta=s}m_X(\xi)m_Y(\eta)$.

\emph{Case 1: one factor has a single eigenvalue.}
If $\Xi$ or $\mathrm H$ is a singleton, every $m_U(s)$ is a multiple
of $d_1$ or $d_2$, contradicting the simple eigenvalues $\pm\omega$
(Proposition~\ref{prop:charpoly}).

\emph{Case 2: both factors have at least two eigenvalues.}
Pick $\xi\ne\xi'$,
$\eta\ne\eta'$; the four sums satisfy
$(\xi+\eta)+(\xi'+\eta')\equiv(\xi+\eta')+(\xi'+\eta)$ with
$\{\xi+\eta,\xi'+\eta'\}\ne\{\xi+\eta',\xi'+\eta\}$ as multisets, a
relation $\vartheta_i+\vartheta_j\equiv \vartheta_k+\vartheta_l\pmod\Z$ between unordered pairs of the
five affine forms $\vartheta_0=0$, $\vartheta_1=a+\tfrac12$, $\vartheta_2=c+\tfrac12$,
$\vartheta_3=d$, $\vartheta_4=d+\tfrac12$. Since $1,a,c$ are $\Q$-linearly independent
such a relation must hold identically in $(a,c)$; among the fifteen
unordered pair sums, equal $(a,c)$-coefficients occur only in the
groups $\{\vartheta_0+\vartheta_3,\vartheta_0+\vartheta_4\}$, $\{\vartheta_1+\vartheta_3,\vartheta_1+\vartheta_4\}$,
$\{\vartheta_2+\vartheta_3,\vartheta_2+\vartheta_4\}$, $\{2\vartheta_3,\vartheta_3+\vartheta_4,2\vartheta_4\}$, and the only integral
difference is $2\vartheta_3-2\vartheta_4=-1$. Hence the relation is
$2\vartheta_3\equiv2\vartheta_4$. Up to exchanging rows or columns,
the four eigenangles must therefore be arranged as
\[
 \begin{pmatrix}
 \xi+\eta&\xi+\eta'\\
 \xi'+\eta&\xi'+\eta'
 \end{pmatrix}
 \equiv
 \begin{pmatrix}
 \vartheta_3&\vartheta_4\\
 \vartheta_4&\vartheta_3
 \end{pmatrix}\pmod1.
\]

\emph{Multiplicity contradiction.}
Each of $\vartheta_3,\vartheta_4$ is thus realized by two distinct
eigenvalue pairs of the factors. Both multiplicities are at least two,
contradicting simplicity of $\omega$ and $-\omega$.
\end{proof}

\begin{corollary}[Entangling non-Clifford unitary gates]
\label{cor:quantum-gates}
If $(a,c)\in\mathcal T_n$ and $1,a,c$ are linearly independent over
$\Q$, then every unitary generator $U_i$ is simultaneously of
infinite projective order, outside every generalized Clifford group
associated with a finite-dimensional Weyl--Heisenberg frame, and
entangling with respect to every nontrivial bipartite tensor
identification of the ambient Hilbert space.

In particular, for $n=2$, after any chosen identification
$\C^6\cong\C^2\otimes\C^3$, the construction gives a unitary
entangling non-Clifford qubit--qutrit gate.
\end{corollary}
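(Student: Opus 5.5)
The corollary assembles three results already proved, so the plan is to check that each hypothesis is met and then combine them.

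First, linear independence of $1,a,c$ over $\Q$ forces $a\notin\Q$ and $c\notin\Q$. A rational $a$ would give the relation $a\cdot1-a=0$ with rational coefficients, and the same holds for $c$. Theorem~\ref{thm:nonclifford} then gives that each $U_i$ has infinite projective order. Its second paragraph gives that $U_i$ normalizes no finite Weyl--Heisenberg subgroup of any frame on $\C^M$, and therefore lies in no associated generalized Clifford group. Second, the same hypothesis is exactly the one in Theorem~\ref{thm:entangling}. That theorem gives, for every unitary identification $\varphi:\C^M\to\C^{d_1}\otimes\C^{d_2}$ with $d_1,d_2\ge2$, a product vector whose image under $U_{i,\varphi}$ is not decomposable. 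This is the definition of an entangling operator for that bipartition.

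Both theorems quantify over all frames and all bipartitions. Theorem~\ref{thm:gauge} shows that different choices of $l\in\Wdef$, or of unitarizing factor $\mathsf C$, change $U_i$ only by a common unitary conjugation. Such a conjugation can be absorbed into the frame or into $\varphi$. The conclusions therefore do not depend on the chosen unitary realization, and "every unitary generator" is justified.

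For the particular case, take $n=2$, so that $M=6$. The bipartition $\C^6\cong\C^2\otimes\C^3$ has $d_1=2$ and $d_2=3$, which satisfies the hypotheses of Theorem~\ref{thm:entangling}. It remains to check that $\mathcal T_2$ contains parameters with $1,a,c$ independent over $\Q$. The set $\mathcal T_2$ is a nonempty open region in the plane. The exceptional set consists of countably many rational affine lines, each of measure zero, so the remaining parameters form a set of full measure and in particular are nonempty. Any such point gives the claimed entangling non-Clifford qubit--qutrit gate.

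None of the steps is a real obstacle. The only point that needs care is confirming that "entangling" in the corollary matches the conclusion of Theorem~\ref{thm:entangling} as defined in Definition~\ref{def:entangling-gates}, namely a product vector mapped to a non-product vector.
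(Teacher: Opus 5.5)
Your proposal is correct and follows the paper's proof, which simply notes that $\Q$-independence of $1,a,c$ makes at least one of $a,c$ irrational, so Theorem~\ref{thm:nonclifford} applies, and takes the entangling claim directly from Theorem~\ref{thm:entangling}. Your additional remarks are correct and only make explicit what the paper leaves implicit: that both $a$ and $c$ are irrational, that the conclusions do not depend on the chosen unitary realization, and that the $\Q$-independent locus in $\mathcal T_2$ is nonempty because it is the complement of countably many lines.
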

\begin{proof}
The $\Q$-linear independence implies that at least one of $a,c$ is
irrational, so Theorem~\ref{thm:nonclifford} applies; the entangling
claim is Theorem~\ref{thm:entangling}.
\end{proof}

\begin{remark}[Parameter exceptions and scope]\label{rem:entangling-scope}
(a) In Theorem~\ref{thm:entangling}, the $\Q$-independence hypothesis
can be weakened to $(a,c)\in\mathcal T_n\setminus\mathcal E_n$.
Using the five affine forms from the proof, put
\[
 \mathcal I=\{(j,k):0\le j\le k\le4\},\qquad
 h_{(j,k)}=\vartheta_j+\vartheta_k,
\]
and set
\[
 \mathcal E_n=
 \bigcup_{\substack{p,p'\in\mathcal I,\ p\ne p'\\
             \{p,p'\}\ne\{(3,3),(4,4)\}}}
 \{(a,c)\in\mathcal T_n:h_p(a,c)-h_{p'}(a,c)\in\Z\}.
\]
Only the comparison between $(3,3)$ and $(4,4)$ is omitted; all
other comparisons involving either pair are retained. After empty
loci are discarded, this is a finite union of intersections of
$\mathcal T_n$ with affine lines. This weakening concerns entanglement;
the projective-order and non-Clifford conclusions use
Theorem~\ref{thm:nonclifford}. For example,
$n=2$, $(a,c)=(1/100,3/100)$ lies outside $\mathcal E_2$ and its
generator has finite projective order $200$.

(b) The quantifier is one product input for each unitary bipartite
identification of the $n(n+1)$-dimensional output and each generator.
The input may depend on the identification and the generator.
Classical simulability is a separate question, discussed in
Section~\ref{sec:outlook}.

(c) Under the same parameter hypothesis as in
Theorem~\ref{thm:entangling}, Proposition~\ref{prop:decomp} shows that the same
spectral argument for entanglement applies to the summand $\mathcal R_n$
for $n\ge3$: its five eigenvalues have the same pair-sum relations,
$\pm\omega$ remain simple, and
$(n-1)^2<n(n-1)<n^2$ excludes the swap alternative.
Thus it is entangling for every nontrivial unitary bipartite
identification. For $n=2$, $\mathcal R_2$ has spectrum
$\{\omega,-\omega\}$ and projective order $2$ for every seed;
its prime dimension admits no nontrivial bipartition. The full-space
conclusion of Theorem~\ref{thm:nonclifford} does not extend to this summand.
\end{remark}

\begin{proposition}[Non-density of the full braid image]\label{prop:nondensity}
\label{par:nonuniversality}
For any fixed unitary realization, the braid image on the full
space $U_0$ is not projectively dense in $PU(U_0)$, despite
Corollary~\ref{cor:quantum-gates}.
Taking an orthogonal
invariant decomposition $U_0=\bigoplus_\alpha W_\alpha$, its image
therefore lies in the stabilizer
\[
 \{T\in U(U_0):T(W_\alpha)=W_\alpha\text{ for every }\alpha\}
 \cong\prod_\alpha U(W_\alpha).
\]
This is a proper closed subgroup containing all scalar unitaries,
so its projective image is also a proper closed subgroup.
Thus braid words alone cannot approximate all full-space gates.
\end{proposition}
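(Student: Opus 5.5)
The plan is to exhibit a nonzero proper invariant subspace and then argue that the group of unitaries preserving a fixed orthogonal decomposition is a proper closed subgroup. Density is then ruled out by a closure argument.

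First, fix a unitary realization $U_i=\mathsf C A_i\mathsf C^{-1}$ with $\mathsf C^\dagger\mathsf C=H_l$, $l\in\Wdef$.
\begin{itemize}
\item By Proposition~\ref{prop:decomp}, $U_0$ contains a nonzero proper invariant subspace. The simplest choice is one trivial summand, or Proposition~\ref{prop:augmentation}(1), since $K=0$ and $R=n(n+1)>N=n+1$ so that $\Ker\bar D$ is proper and nonzero.
\item Transport this subspace by $\mathsf C$. Because the realization is unitary, the orthogonal complement $W^\perp$ of an invariant subspace $W$ is invariant.
\item Iterating gives an orthogonal decomposition $U_0=\bigoplus_\alpha W_\alpha$ with at least two nonzero summands.
\end{itemize}
Every braid word $U(b)$ maps each $W_\alpha$ into itself, so the braid image lies in the stabilizer $G:=\{T:T(W_\alpha)=W_\alpha\ \forall\alpha\}\cong\prod_\alpha U(W_\alpha)$.

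Next, I would verify the properties of $G$.
\begin{itemize}
\item $G$ is closed, because the conditions are closed equations: $T$ commutes with the orthogonal projectors onto the $W_\alpha$.
\item $G$ contains the scalars $U(1)I$.
\item $G$ is proper. Pick unit vectors $w_1\in W_1$ and $w_2\in W_2$. The unitary that swaps them and fixes their orthogonal complement does not preserve $W_1$.
\end{itemize}

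For the projective statement, let $\pi:U(U_0)\to PU(U_0)$ be the quotient map. Because $G$ contains all scalars, $G=\pi^{-1}(\pi(G))$. Hence $\pi(G)$ is closed, since $\pi$ is a quotient map and $G$ is a saturated closed set, and $\pi(G)$ is proper. The closure of the projective braid image therefore lies in $\pi(G)\ne PU(U_0)$, so the image is not dense.

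The only point needing care is this saturation/closedness step, and it is routine. The phrase ``despite Corollary~\ref{cor:quantum-gates}'' is commentary and requires no proof. Finally, choosing a different realization changes everything by the conjugation $W$ of Theorem~\ref{thm:gauge}, so the conclusion holds for every fixed realization.
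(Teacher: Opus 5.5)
Your proposal is correct and follows the paper's argument. The paper's proof only observes that every braid preserves the proper summands of Proposition~\ref{prop:decomp}; the closedness, properness and saturation facts you verify are the ones the statement itself asserts, and using $\Ker\bar D$ from Proposition~\ref{prop:augmentation}(1) instead is a harmless variant of the same idea.
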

\begin{proof}
Every braid preserves the
proper summands in Proposition~\ref{prop:decomp}.
\end{proof}

Density on a selected summand $\mathcal R_n$, or universality after
additional operations have been supplied, is a different question.
Generic irreducibility of $\mathcal R_n$ in the stated ranks does not
settle density: a proper compact subgroup can also act irreducibly.

For equal local dimensions, Brylinski--Brylinski call a gate
\emph{primitive} when it preserves product vectors and prove that an
imprimitive gate, together with arbitrary one-qudit gates, is exactly
universal \cite[arXiv version, Theorems~1.3--1.4]{BrylinskiBrylinski}.
Harrow's corresponding result allows unequal factor dimensions
\cite[arXiv version, Corollary~2]{HarrowExact}. In the coordinates above, put
\[
 K_\varphi
 =\{\varphi^{-1}(X\otimes Y)\varphi:X\in U(d_1),\ Y\in U(d_2)\}.
\]
For $U$ entangling with respect to $\varphi$, these results give
$\langle U,K_\varphi\rangle=U(M)$ when arbitrary operations in
$K_\varphi$ are independently available. This is exact universality
with the additional local gates (the elements of $K_\varphi$); it does not supply those
operations as braid words in the present representation.
Section~\ref{app:fixed-encoding} applies this distinction to a fixed
qubit--qutrit encoding.

\subsection{A finite Clifford reference example and exceptional parameters}
\label{subsec:clifford-benchmark}
The scalar specialization in \cite[Appendix~A.2.1]{FiniteCompanion}
provides a useful reference example for the distinction between ordinary and
projective gate images.
Take $n=3$, scalar input
$g_1=g_2=g_3=i$, $s_1=s_2=1$, and the exceptional parameter $\lambda=i$.
After passage to the two-dimensional quotient and a unitary change of
coordinates, the braid generators are the matrices $A,B$ below (these symbols are local to this
reference example):
\begin{equation}\label{eq:clifford-benchmark}
 A=\frac12\begin{pmatrix}1-i&1+i\\1+i&1-i\end{pmatrix},\qquad
 B=\begin{pmatrix}1&0\\0&-i\end{pmatrix}.
\end{equation}
This is the classical finite Burau specialization of
\cite{FunarKohno}; its ordinary braid image has order $96$.
With $H_{\rm Had}$ and $S_{\rm ph}$ as in
Appendix~\ref{app:gates-basics}, one has $A=H_{\rm Had}S_{\rm ph}^{\dagger}H_{\rm Had}$ and
$B=S_{\rm ph}^{\dagger}$.  Hence the projective image is the one-qubit
projective Clifford group, isomorphic to $\mathfrak S_4$ and of order
$24$; the ordinary image has scalar center $\mu_4$.  The same matrices occur, with the inverse convention, in Ising-anyon
braiding \cite[arXiv version, Eq.~(12)]{GeorgievIsing}; see also \cite{AGW}.  Thus this example is a finite Clifford
reference example, not a universality example.

For the same fixed input, the generic braid representation is
$\mathbf1\oplus\varphi$, where $\varphi:B_3\to U(2)$ is defined by
$\varphi(\sigma_1)=A$ and $\varphi(\sigma_2)=B$, whereas at
$\lambda=i$ it is $\varphi$.  Both ordinary images have order $96$, but
their projective images have orders $96$ and $24$, respectively
\cite[Propositions~6.3 and~6.7, Appendix~A.2.1]{FiniteCompanion}. Indeed, the map
$g\mapsto[\operatorname{diag}(1,g)]$ is injective on the ordinary group:
if $\operatorname{diag}(1,g)$ is scalar, its first entry forces that
scalar to be $1$.  A phase that is global on the
two-dimensional quotient at an exceptional parameter becomes a relative phase after the
trivial summand $\mathbf1$ is restored.

Thus exceptional parameters can change the projective gate image even
when the ordinary image order is unchanged. The kernel formula at
exceptional parameters is proved in
\cite[Proposition~6.7]{FiniteCompanion}; the equivalence of ordinary
and projective braid finiteness for finite full input is proved in
\cite[Proposition~6.9]{FiniteCompanion}. These results are not repeated
here. For this scalar input, the full $F_3\rtimes B_3$ image is finite
only at $\lambda=i$, although the ordinary braid image is finite for
every $\lambda\ne0,1$ \cite[Appendix~A.3.2]{FiniteCompanion}.
More generally, the parameter-independence of ordinary braid finiteness
in \cite[Theorem~6.6]{FiniteCompanion} assumes finite seed braid image;
it is not a finiteness assertion for arbitrary unitary input.

The phase convention matters at the level of the ordinary image.  For
example
\[
 ABA=e^{-\pi i/4}H_{\rm Had}.
\]
The Hadamard matrix itself is not in the displayed order-$96$ lift
(all its matrix entries lie in $\mathbb Q(i)$, whereas
$1/\sqrt2\notin\mathbb Q(i)$), although its projective class is
reachable.  After adjoining the trivial summand $\mathbf1$, this distinction is no
longer removable by a global phase.  This illustrates why the quantum
interpretation should keep ordinary and projective images separate.

\subsubsection{Open questions: irreducibility and image closure}
The exact decomposition leaves questions about irreducibility and
image closure. The following numerical observations concern these
questions and, separately, a different class of KLM seeds.

\begin{remark}[Exploratory numerical observations]
\label{rem:numerics}
(a) The decomposition of Proposition~\ref{prop:decomp} was first found
numerically for $(n,a,c)=(3,0.0713,0.1137)$, $(3,0.11,0.05)$ and
$(4,0.0521,0.1313)$: the commutant of the unitarized representation
has dimension $7$ in each case, the orthogonal decomposition into
summands of dimensions $1,1,n-1,n-1,n(n-1)$ is exact to $10^{-13}$,
and the commutant of $\mathcal R_n$ is one-dimensional (a
floating-point computation, reproduced by the ancillary script;
Proposition~\ref{prop:generic-irred} is the exact statement). The
closure of the image of $\mathcal R_n$ is not studied here; note that
its determinant is $-q^{n+1}\tau^n$ on each generator, so for rational
$(a,c)$ the determinant image is finite.

(b) We also made an exploratory
test with the standard Jones--Temperley--Lieb path-model representations
\cite{AJL,JonesHecke,Wenzl}.  For $q=\ee{1/\ell}$, let $\mathcal P_{m,\ell}$
be the Hilbert space with orthonormal basis the length-$m$ paths on the
$A_{\ell-1}$ graph starting at its left endpoint; the usual path-model
Temperley--Lieb operators $E_i$ satisfy
\[
 E_i^2=\delta_\ell E_i,\qquad
 \delta_\ell=2\cos(\pi/\ell).
\]
The implementation uses the braid normalization
\[
 \rho(\sigma_i)=\alpha_\ell I+\beta_\ell E_i,\qquad
 \alpha_\ell=i e^{-\pi i/(2\ell)},\qquad
 \beta_\ell=\alpha_\ell^{-1},
\]
with no additional scalar twist; thus $q=\alpha_\ell^{-4}$.
These matrices give a unitary representation of $B_m$ on this space.
The terminal vertex of a path is preserved, so
$\mathcal P_{m,\ell}$ is the orthogonal direct sum of its
\emph{endpoint sectors}; these are the sectors used below.  This is the
standard path realization of the Jones--Wenzl sectors, equivalently the
$SU(2)$ quantum-group/modular-functor sectors \cite{FLW}, with the
phase convention fixed above. We restrict each endpoint sector of
the $B_{n+1}$ representation to
$F_n\rtimes B_n\cong B_{1,n}$ as in Section~\ref{subsec:klm}, and use
that restriction as the KLM seed.

The ancillary script checks every endpoint sector of dimension at least
two for $n=3$, $\ell\in\{4,5,6,7,8,10,12,16\}$, and for $n=4$,
$\ell\in\{5,7\}$. It tests the ambient LM braid matrices
\eqref{eq:braid-matrices} only in sectors for which its numerical kernel
test gives $K=0$; otherwise it reports the kernel dimension and skips the
sector, without forming a quotient. Finite-precision rank tests
suggest the presence of non-trivial Jordan blocks, and sampled random
words of length $300$ have large operator norms, of orders
$10^{3}$--$10^{15}$. These finite-precision observations motivate
an exact Jordan obstruction; by themselves they establish neither
non-semisimplicity nor unboundedness.
\end{remark}

\begin{samepage}
\begin{problem}
Determine the remaining irreducibility locus of $\mathcal R_n(q,\tau)$,
within $\mathcal T_n$. Determine the closure
of the image on its irreducible sectors, and in the generic irreducible
case whether it contains $SU(n(n-1))$. Also prove the Jordan obstruction
for the Jones--Temperley--Lieb endpoint-sector seeds in
Remark~\ref{rem:numerics}(b).
\end{problem}
\end{samepage}


\subsection{The case \texorpdfstring{$n=2$}{n=2} in a spectral encoding: controlled gates}
\label{app:fixed-encoding}
This subsection gives exact circuit identities for $n=2$ in one
spectral encoding, with local gates supplied independently of
braid words. Since $B_2$ is cyclic, the construction diagonalizes a
single generator; Section~\ref{sec:synthesis} uses a different common
encoding for all generators.

\subsubsection{Spectral encoding of \texorpdfstring{$\C^6$}{C6} and local phase factors}
Fix $(a,c)\in\mathcal T_2$ and one unitary generator $U_1$.
Let $e_0,e_1$ and $f_0,f_1,f_2$ be the standard bases of $\C^2$
and $\C^3$, respectively, and order the product basis as
\[
 e_0\otimes f_0,\ e_0\otimes f_1,\ e_0\otimes f_2,\
 e_1\otimes f_0,\ e_1\otimes f_1,\ e_1\otimes f_2.
\]
Choose an orthonormal eigenbasis of $U_1$ with eigenvalues ordered
as $1,1,-q,\omega,-\omega,-\tau$, using
Proposition~\ref{prop:charpoly}. Map the displayed product basis to
that eigenbasis. This defines a unitary map
$\mathcal J:\C^2\otimes\C^3\to U_0$, which is kept fixed for the
chosen seed and realization. We call $\mathcal J$ the
\emph{spectral encoding} of $U_1$. Thus the identification used at the
start of this section is $\varphi=\mathcal J^\dagger$.
In these coordinates,
\begin{equation}\label{eq:encoded-D}
 D_{a,c}:=\mathcal J^\dagger U_1\mathcal J
 =\operatorname{diag}(1,1,-q,\omega,-\omega,-\tau).
\end{equation}
This eigenvalue ordering is part of the encoding. A different global
permutation of eigenvectors need not be local. In particular, the new
factors need not coincide with the raw slot--seed factors in the
identification $V^{\oplus n}\cong\C^n\otimes V$ at the start of this section.
As the parameters vary, the displayed diagonal matrices live on one
fixed model space, although their comparison maps $\mathcal J$ to
specified KLM realizations may vary.

\begin{proposition}[Local phase reduction]\label{prop:local-phase}
Put $P_b=e_be_b^\dagger$ for $b=0,1$, and set
\[
 L_2=\operatorname{diag}(1,\omega),\qquad
 L_3=\operatorname{diag}(1,1,-q),\qquad
 \eta_a=e^{-5\pi ia}=q^{-5/2},
\]
where $q^{1/2}=e^{\pi ia}$. Then
\begin{equation}\label{eq:encoded-local}
 D_{a,c}=(L_2\otimes L_3)C_a,\qquad
 C_a=P_0\otimes I_3+P_1\otimes V_a,\quad
 V_a=\operatorname{diag}(1,-1,\eta_a).
\end{equation}
Thus $C_a=\operatorname{diag}(1,1,1,1,-1,\eta_a)$, and $c$ can
be removed by local phase operations in this encoding.
\end{proposition}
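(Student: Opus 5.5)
The statement is a direct identity between diagonal matrices in the fixed product basis ordering of the spectral encoding, so I would prove it by computing both sides of \eqref{eq:encoded-local} entrywise. No representation-theoretic input is needed beyond the encoded form \eqref{eq:encoded-D}, which already uses Proposition~\ref{prop:charpoly} and the chosen eigenvalue ordering.

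\emph{Step 1: the local factor.} In the ordering $e_b\otimes f_k$ (with $b$ the outer index), the Kronecker product $L_2\otimes L_3$ is diagonal with entries $(1,1,-q,\omega,\omega,-q\omega)$.

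\emph{Step 2: the controlled factor.} By definition $C_a=P_0\otimes I_3+P_1\otimes V_a$. This is $\operatorname{diag}(1,1,1)$ on the $e_0$ block and $V_a=\operatorname{diag}(1,-1,\eta_a)$ on the $e_1$ block. That gives the displayed $\operatorname{diag}(1,1,1,1,-1,\eta_a)$.

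\emph{Step 3: the product.} Diagonal matrices commute, so the product is $\operatorname{diag}(1,1,-q,\omega,-\omega,-q\omega\eta_a)$. The first five entries agree with $D_{a,c}$ immediately. For the last entry I use $\omega=\ee d=\ee{3a/2}\ee c=q^{3/2}\tau$ and $\eta_a=q^{-5/2}$. Then
\[
q\,\omega\,\eta_a=q^{1+3/2-5/2}\tau=\tau ,
\]
so the last entry is $-\tau$, as required.

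\emph{Step 4: the $c$-independence claim.} $C_a$ depends only on $a$, and $c$ enters only through $\omega$ in $L_2$. Since $L_2\otimes L_3$ is a product of one-qubit and one-qutrit diagonal unitaries, $c$ is removed by the local phase operation $(L_2\otimes L_3)^{-1}$.

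The only delicate point, rather than a genuine obstacle, is branch consistency in Step 3. The half-powers must all use the fixed convention $q^{1/2}=\ee{a/2}$, so that $q^{3/2}$ in $\omega$ and $q^{-5/2}$ in $\eta_a$ are taken on the same branch. With $\omega=\ee{\tfrac32a+c}$ as defined in Section~\ref{sec:twisted}, this holds, so the cancellation is exact and introduces no sign ambiguity.
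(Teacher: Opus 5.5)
Your proposal is correct and follows the same route as the paper. Both compare the diagonal entries directly: the first five agree immediately, and the sixth reduces to $-q\omega\eta_a=-\tau$ using $\omega=q^{3/2}\tau$ and the fixed branch $q^{1/2}=\ee{a/2}$.
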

\begin{proof}
The first five diagonal entries agree directly; the last is
$-\omega q\eta_a=-\tau$, since $\omega=q^{3/2}\tau$.
\end{proof}
The two summands $e_0\otimes\C^3$ and $e_1\otimes\C^3$ are
invariant, and $C_a$ restricts to $I_3$ and $V_a$, respectively.
Equivalently, it is block diagonal for the eigenspace decomposition
of $(P_0-P_1)\otimes I_3$. This is a controlled diagonal qutrit
unitary \cite[arXiv version, Section~IV.B and Fig.~8]{DiWei};
see also \cite{JiangControlled}.
The reduction is compatible with Corollary~\ref{cor:injective}:
conjugacy and equivalence under independent local multiplication on
the left and right are different equivalence relations. We remove
$c$ here; we do not claim a complete classification of the remaining
$a$ under unrestricted local equivalence.

\subsubsection{Exact encoded CZ, CNOT, and controlled phases}
We use the Schmidt coefficients and entanglement entropy of
Definition~\ref{def:schmidt-data}.
Let
\[
 \mathcal C_{01}=\C^2\otimes\operatorname{span}(f_0,f_1),\qquad
 \mathcal C_{02}=\C^2\otimes\operatorname{span}(f_0,f_2).
\]
These are four-dimensional invariant subspaces for both $C_a$ and
$D_{a,c}$, on which the restrictions give exact logical two-qubit gates.

\begin{proposition}[Encoded controlled gates]\label{prop:encoded-controlled}
In the respective ordered tensor bases,
\begin{equation}\label{eq:encoded-CZ}
 C_a|_{\mathcal C_{01}}=\operatorname{CZ}
  =\operatorname{diag}(1,1,1,-1),\qquad
 C_a|_{\mathcal C_{02}}=\operatorname{diag}(1,1,1,e^{-5\pi ia}).
\end{equation}
Let $H_{01}\in U(3)$ act as $H_{\mathrm{Had}}$ on
$\operatorname{span}(f_0,f_1)$ and fix $f_2$. Then
\[
 (I_2\otimes H_{01})C_a(I_2\otimes H_{01})
 =P_0\otimes I_3+P_1\otimes
 (f_0f_1^\dagger+f_1f_0^\dagger+\eta_a f_2f_2^\dagger).
\]
Its restriction to $\mathcal C_{01}$ is exactly CNOT and has no
leakage from that subspace. These identities hold throughout
$\mathcal T_2$, including rational parameters.
\end{proposition}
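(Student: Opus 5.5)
The plan is a direct computation with diagonal matrices in the ordered product basis; every operator involved is diagonal or nearly so in that basis.

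For the first identity, restrict $C_a=\operatorname{diag}(1,1,1,1,-1,\eta_a)$ from Proposition~\ref{prop:local-phase} to $\mathcal C_{01}$. This subspace has ordered basis $e_0\otimes f_0,e_0\otimes f_1,e_1\otimes f_0,e_1\otimes f_1$, i.e.\ positions $1,2,4,5$ of the six-element product basis. Reading off the diagonal entries gives $(1,1,1,-1)=\operatorname{CZ}$. For $\mathcal C_{02}$ the positions are $1,3,4,6$, and the entries are $(1,1,1,\eta_a)$ with $\eta_a=e^{-5\pi ia}$. Invariance of both subspaces under $C_a$, and under $D_{a,c}=(L_2\otimes L_3)C_a$, is immediate because all these operators are diagonal in the product basis.

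For the conjugation identity, expand $C_a=P_0\otimes I_3+P_1\otimes V_a$ and conjugate termwise by $I_2\otimes H_{01}$. The first term stays $P_0\otimes I_3$, since $H_{01}^2=I_3$ ($H_{\mathrm{Had}}$ is a Hermitian involution). For the second term, compute $H_{01}V_aH_{01}$ blockwise. On $\operatorname{span}(f_0,f_2)$-complement, $V_a$ restricts to $\operatorname{diag}(1,-1)=Z$ on $\operatorname{span}(f_0,f_1)$, and $H_{\mathrm{Had}}ZH_{\mathrm{Had}}=X$, which gives $f_0f_1^\dagger+f_1f_0^\dagger$. On $f_2$, $H_{01}$ acts trivially, so the entry $\eta_a$ survives. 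The cross terms vanish because $H_{01}$ preserves the decomposition $\operatorname{span}(f_0,f_1)\oplus\C f_2$.

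Restricting the result to $\mathcal C_{01}$ gives $P_0\otimes I_2+P_1\otimes X=\operatorname{CNOT}$. The operator also preserves $\mathcal C_{01}$ and its orthogonal complement $\C^2\otimes\C f_2$, so there is no leakage. None of these steps uses rationality or genericity of $(a,c)$; only the closed forms of $L_2,L_3,\eta_a$ enter, so the identities hold on all of $\mathcal T_2$.

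The only point needing care, and the main obstacle such as it is, is bookkeeping the ordering of the tensor basis when identifying the restricted $4\times4$ matrices with the standard CZ/CNOT conventions, with control on the first factor.
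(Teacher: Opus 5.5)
Your proposal is correct and follows the paper's proof. Like the paper, you read off the diagonal entries of $C_a$ at the positions of $\mathcal C_{01}$ and $\mathcal C_{02}$, and use $H_{\mathrm{Had}}ZH_{\mathrm{Had}}=X$ on $\operatorname{span}(f_0,f_1)$ with $f_2$ fixed (the phrase ``$\operatorname{span}(f_0,f_2)$-complement'' in your third paragraph is a slip for the block $\operatorname{span}(f_0,f_1)$, as your computation itself shows).
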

\begin{proof}
Select the relevant diagonal entries for the two restrictions.
Conjugating $\operatorname{diag}(1,-1)$ by $H_{\mathrm{Had}}$
gives $X$, while $f_2$ stays fixed.
\end{proof}
For example the product input
$(e_0+e_1)\otimes(f_0+f_1)/2$ is sent by $C_a$ to
\[
 \frac12\bigl(e_0\otimes(f_0+f_1)
                  +e_1\otimes(f_0-f_1)\bigr).
\]
Its two Schmidt coefficients are $1/\sqrt2$, so the output has
one bit of entanglement entropy, the maximum for a pure state of a
qubit--qutrit pair. The same input achieves this maximum for
$D_{a,c}$, whose remaining factor is local. This maximum is for the
declared encoding, not a uniform statement about entanglement entropy
in all encodings.

The second restriction in \eqref{eq:encoded-CZ} is a standard
controlled phase. With $c=1/20$, for example, $a=1/10$ gives
controlled-$S_{\mathrm{ph}}^\dagger$, and $a=1/20$ gives
controlled-$T_{\mathrm{ph}}^\dagger$, with $T_{\mathrm{ph}}$ as in
Appendix~\ref{app:clifford}.
Both seeds lie in $\mathcal T_2$ and realize these familiar logical
gates by specialization.
In the logical two-qubit Pauli frame (the Weyl--Heisenberg frame of
Definition~\ref{def:wh-frame} with $d_1=d_2=2$), a controlled phase with
angle $\phi$ is non-Clifford when $\phi\notin\pi\Z$: conjugation of
$X\otimes I$ has off-diagonal blocks
$\operatorname{diag}(1,e^{i\phi})$ and its adjoint, which cannot form
a Pauli tensor in that case. This verifies the two examples, but the
local corrections $L_2^\dagger,L_3^\dagger$ need not be Clifford;
no saving in non-Clifford resources is claimed. Nor is a
six-dimensional gate called a two-qubit gate without specifying the
four-dimensional subspace.

\subsubsection{An explicit word and its cost}
Define the level-selective controlled phase
\[
 \operatorname{CP}_j(\phi)
 =I_6+(e^{i\phi}-1)P_1\otimes f_jf_j^\dagger
 \qquad(j=1,2).
\]
Then the two factors commute and
\begin{equation}\label{eq:two-controlled-phases}
 C_a=\operatorname{CP}_1(\pi)\operatorname{CP}_2(-5\pi a).
\end{equation}
Thus two such operations and the local layer $L_2\otimes L_3$
synthesize $D_{a,c}$ in this gate library.

For a comparison in a controlled-two-level-swap library, let
$J_{0j}$ interchange $f_0,f_j$ and fix the third basis vector, and put
\[
 F_{0j}=P_0\otimes I_3+P_1\otimes J_{0j},\qquad
 Z_{0j}=f_0f_0^\dagger-f_jf_j^\dagger,\qquad
 R_{0j}(t)=\exp(itZ_{0j}).
\]
The $F_{0j}$ are generalized controlled-$X$ (GCX) gates
\cite[arXiv version, Section~II]{DiWei}. Since
$J_{0j}Z_{0j}J_{0j}=-Z_{0j}$, direct multiplication gives
\begin{equation}\label{eq:controlled-rotation}
 \begin{split}
 B_{0j}(t)&:=(I_2\otimes R_{0j}(t/2))F_{0j}
               (I_2\otimes R_{0j}(-t/2))F_{0j}\\
          &=P_0\otimes I_3+P_1\otimes R_{0j}(t).
 \end{split}
\end{equation}
Set $\phi_1=\pi$, $\phi_2=-5\pi a$, and
\[
 \gamma=(\phi_1+\phi_2)/3,\qquad
 \alpha=\gamma-\phi_1,\qquad \beta=\gamma-\phi_2.
\]
Since $V_a=e^{i\gamma}R_{01}(\alpha)R_{02}(\beta)$, we obtain
\begin{equation}\label{eq:four-GCX}
 C_a=(\operatorname{diag}(1,e^{i\gamma})\otimes I_3)
       B_{01}(\alpha)B_{02}(\beta).
\end{equation}
This uses four GCX gates, four local qutrit rotations and a local
control phase, before merging adjacent local gates. Together
with \eqref{eq:encoded-local} it is an exact circuit for $D_{a,c}$.
The four-GCX count agrees with the $N=3$ case of the general
$2(N-1)$ construction for controlled one-qudit unitaries
\cite[arXiv version, Section~IV.B]{DiWei}; the construction
restricts to $\C^2\otimes\C^N$ by retaining two control levels.

At $a=1/5$, the spectrum of $V_a$ is $(1,-1,-1)$ and one GCX
suffices. Explicitly set
\[
 W_{02}=\begin{pmatrix}
 1/\sqrt2&0&-1/\sqrt2\\0&1&0\\1/\sqrt2&0&1/\sqrt2
 \end{pmatrix}.
\]
Then $W_{02}J_{02}W_{02}^\dagger=\operatorname{diag}(-1,1,1)$, and hence
\[
 C_{1/5}=(\operatorname{diag}(1,-1)\otimes W_{02})
              F_{02}(I_2\otimes W_{02}^\dagger).
\]
This holds for any $0<c<1/5$, as required by $\mathcal T_2$.
The gate is entangling, so zero GCX gates plus local gates cannot
suffice. The one-GCX count is therefore optimal at this
specialization in precisely this resource model.

\subsubsection{Universality given independent local gates}
With $D_{a,c}$ in the declared encoding, the local phase corrections
and two local $H_{01}$ gates supply an exact logical CNOT on
$\mathcal C_{01}$. If arbitrary logical one-qubit operations are also
independently available, these gates are universal on that logical
pair by the standard CNOT universality theorem \cite{BarencoGates}.

More strongly, if all $U(2)\otimes U(3)$ operations are supplied
independently, $D_{a,c}$ is entangling for every
$(a,c)\in\mathcal T_2$ in this encoding, so it and those local
gates are exactly universal on $\C^6$ by
\cite[arXiv version, Corollary~2]{HarrowExact}. That theorem explicitly allows
unequal local dimensions. These consequences of independently supplied
local gates are distinct from braid-only nonuniversality in
Section~\ref{sec:gates}. The example connects a unitary realization obtained from a positive invariant form
to familiar logical gates through a verifiable circuit.

\section{Common-basis reflection synthesis}\label{sec:synthesis}

We use the standard reflection-length formula, Householder methods,
and state-preparation and controlled-phase decompositions
\cite{HLR,IKV,MIC,SBM,BarencoGates,GoubaultHouseholder}.
The specific step for this family is to construct one orthonormal
coordinate system in which \emph{every} braid generator has the small
blocks of Theorem~\ref{thm:syn-blocks}. Their supports give the
$O(M)$ elementary-word bound of Theorem~\ref{thm:syn-cost}; the
finite-alphabet bound then uses one-qubit approximation.

Throughout this section we use the \emph{slot-major order} on
$V^{\oplus n}\cong\C^n\otimes V$: the basis vectors $e_j\otimes f_k$
are listed slot by slot, with zero-based coordinate index
$(j-1)N+k$, where $1\le j\le n$, $0\le k\le n$, and $N=n+1$.
For the seed basis constructed in Section~\ref{subsec:syn-seed},
the blocks in Theorem~\ref{thm:syn-blocks} occupy contiguous
coordinate intervals in this order, as required by
Lemma~\ref{lem:syn-cholesky}.

\noindent\fbox{\begin{minipage}{\dimexpr\linewidth-2\fboxsep-2\fboxrule\relax}
\small
\textbf{Construction recipe in common orthonormal coordinates.}
Choose $n\ge2$, $a,c>0$ with $nc+(n+1)a<1$, and
$0<l<1-nc-(n+1)a$. Set $q=e^{2\pi ia}$, $\tau=e^{2\pi ic}$,
$\lambda=e^{2\pi il}$, $N=n+1$, and $M=nN$.
\begin{enumerate}
\setlength{\itemsep}{2pt}\setlength{\parskip}{0pt}
\item Form the upper bidiagonal $T_q$ in Lemma~\ref{lem:syn-seed}.
For the standard columns $E_j\in\C^n$ put $u_j=T_qE_j$ and
$\rho_q(\sigma_{j-1})=1\oplus[I_n-(1+q)u_ju_j^\dagger]$.
\item Put $s_i^{\mathrm{tB}}=\rho_q(\sigma_i)$, $w_1=f_1$,
$w_{j+1}=s_j^{\mathrm{tB}}w_j$ and
$g_j^{\mathrm{tB}}=\tau[I_N+(q^2-1)w_jw_j^\dagger]$. These are the unitary
input matrices in the fixed seed basis $f_0,\ldots,f_n$.
\item In slot-major order form $A_i$ from
\eqref{eq:braid-matrices}: its two-slot block is
$\left(\begin{smallmatrix}0&s_i^{\mathrm{tB}} g_i^{\mathrm{tB}}\\
s_i^{\mathrm{tB}}&s_i^{\mathrm{tB}}(I-g_{i+1}^{\mathrm{tB}})\end{smallmatrix}\right)$,
and it acts by $s_i^{\mathrm{tB}}$ on other slots. Form
$H_l=\widetilde H(\lambda)$ from \eqref{eq:blocks}, with the fixed
branch $\lambda^{1/2}=e^{\pi il}$.
\item Compute \emph{one} upper Cholesky factor $\mathsf{C}_l$ with
$\mathsf{C}_l^\dagger \mathsf{C}_l=H_l$ and set $U_i=\mathsf{C}_lA_i\mathsf{C}_l^{-1}$ for every $i$.
Theorem~\ref{thm:twisted-burau} guarantees $H_l>0$.
\end{enumerate}
Embed the $n$ slots and $n+1$ seed coordinates by the fixed isometry
$J_n$ in Section~\ref{subsec:syn-elementary}, and extend by the identity
on unused labels. The small blocks in this realization give a word
of $O(M)$ letters in the elementary generating set $\mathcal G_m$
(Definition~\ref{def:elementary-generators}) for each generator.
\end{minipage}}

Throughout this section fix $n\ge2$, $(a,c)\in\mathcal T_n$, $N=n+1$,
$M=nN$, and a parameter $0<l<1-nc-(n+1)a$. Theorem~\ref{thm:twisted-burau}
then gives a positive canonical KLM form on $U_0=V^{\oplus n}$.
In one common orthonormal coordinate system and a fixed encoding,
we express all braid generators as short elementary words, including
the preparation of reflection directions. This encoding differs from
that of Section~\ref{app:fixed-encoding}; Table~\ref{tab:circuit-dictionary}
gives the terminology.

\subsection{Reflection length}
We use the phase reflections $R(v,\theta)$ and the reflection lengths
$\ell_{\mathrm{ref}}$ and $\ell_{\mathrm{ref}}^{\mathrm{proj}}$ defined
in Definition~\ref{def:phase-reflections}. The following standard fact is recalled for completeness;
see \cite[arXiv version, Example~2.15]{HLR} for its unitary-group antecedent.

\begin{lemma}[Standard reflection length]\label{lem:syn-length}
For $U\in U(d_0)$,
\[
 \ell_{\mathrm{ref}}(U)=\operatorname{rank}(U-I),\qquad
 \ell_{\mathrm{ref}}^{\mathrm{proj}}(U)
 =d_0-\max_{\zeta\in\spec(U)}\dim\Ker(U-\zeta I).
\]
\end{lemma}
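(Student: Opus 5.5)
We need the definitions of phase reflections from the appendix, which we have not seen. Presumably $R(v,\theta)=I+(e^{i\theta}-1)vv^\dagger$ for a unit vector $v$, and $\ell_{\mathrm{ref}}(U)$ is the minimal number of such factors with product $U$. The projective length $\ell^{\mathrm{proj}}_{\mathrm{ref}}(U)$ is the minimum of $\ell_{\mathrm{ref}}(zU)$ over scalars $z\in U(1)$. The plan is to prove the first formula by matching an upper bound from the spectral decomposition with a lower bound from a rank count, and then to obtain the projective formula by minimizing over $z$.

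\emph{Upper bound.} Diagonalize $U=\sum_k \e(\theta_k)v_kv_k^\dagger$ in an orthonormal eigenbasis. Because the $v_k$ are orthonormal, the rank-one reflections $R(v_k,2\pi\theta_k)$ commute, and their product is $U$. Factors with $\theta_k=0$ are the identity and can be dropped. This leaves exactly $\#\{k:\theta_k\neq0\}=\operatorname{rank}(U-I)$ factors; the count uses that $U$ is normal, so algebraic and geometric multiplicities agree.

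\emph{Lower bound.} Suppose $U=R_1\cdots R_m$ with each $R_t=I+(\e(\theta_t)-1)v_tv_t^\dagger$. Each $R_t$ fixes $v_t^\perp$ pointwise. Hence the product fixes $\bigcap_t v_t^\perp$ pointwise, and this intersection has dimension at least $d_0-m$. So $\dim\Ker(U-I)\ge d_0-m$, that is, $m\ge\operatorname{rank}(U-I)$. An equivalent route is subadditivity: $\operatorname{rank}(AB-I)\le\operatorname{rank}(A-I)+\operatorname{rank}(B-I)$, which follows from $AB-I=A(B-I)+(A-I)$ and induction.

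\emph{Projective statement.} Apply the first formula to $zU$: $\ell_{\mathrm{ref}}(zU)=d_0-\dim\Ker(zU-I)=d_0-\dim\Ker(U-z^{-1}I)$. Minimizing over $z\in U(1)$ means maximizing the eigenspace dimension over $\zeta=z^{-1}$. The maximum is attained at some $\zeta\in\spec(U)$; any $\zeta\notin\spec(U)$ gives dimension $0$, which is never larger. Every eigenvalue of a unitary lies on the unit circle, so each such $\zeta$ is admissible.

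I do not expect a substantive obstacle. The only point needing care is the convention: the definition must allow arbitrary phases $\theta$, including genuine reflections at $\theta=\pi$, and must use unit (or normalized) vectors $v$. If the definition instead restricts to a fixed phase, the upper bound would fail, so I would check Definition~\ref{def:phase-reflections} first.
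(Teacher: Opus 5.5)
Your proposal is correct and follows the paper's proof. The orthonormal eigenbasis gives the upper bound, and a rank count gives the lower bound. The paper uses the subadditivity $\operatorname{rank}(AB-I)\le\operatorname{rank}(A-I)+\operatorname{rank}(B-I)$, which you list as your alternative; your common-fixed-subspace argument, $\bigcap_t v_t^\perp\subseteq\Ker(U-I)$, is an equivalent phrasing. Minimizing over the overall phase then gives the projective formula, and your assumed convention agrees with Definition~\ref{def:phase-reflections}, which allows unrestricted unit directions and angles.
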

\begin{proof}
An orthonormal eigenbasis gives one phase reflection for every
nonidentity eigenvalue, counted with multiplicity. Conversely,
$AB-I=(A-I)B+(B-I)$ implies
$\operatorname{rank}(AB-I)\le\operatorname{rank}(A-I)+
\operatorname{rank}(B-I)$, so fewer factors cannot suffice.
Minimize $\operatorname{rank}(e^{-i\phi}U-I)$ over $\phi$ for the
projective assertion.
\end{proof}

\begin{corollary}[Reflection count for the unitary family]
\label{cor:syn-length}
Each full-space unitary braid generator satisfies
\[
 \ell_{\mathrm{ref}}(U_i)
 =\ell_{\mathrm{ref}}^{\mathrm{proj}}(U_i)=2n.
\]
\end{corollary}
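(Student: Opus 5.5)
The plan is to apply Lemma~\ref{lem:syn-length} directly to the eigenvalue multiset of Proposition~\ref{prop:charpoly}. Since $U_i=\mathsf C A_i\mathsf C^{-1}$ is unitary, it is diagonalizable, so every geometric multiplicity equals the algebraic multiplicity read off from the characteristic polynomial. On $\mathcal T_n$ we have $K=0$, and the spectrum of $U_i$ is
\[
\{\,1^{\times n(n-1)},\ (-q)^{\times(n-1)},\ (-\tau)^{\times(n-1)},\ \omega,\ -\omega\,\}.
\]

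For the ordinary length, $\operatorname{rank}(U_i-I)$ equals the number of eigenvalues different from $1$, counted with multiplicity. Thus
\[
\ell_{\mathrm{ref}}(U_i)=M-n(n-1)=n(n+1)-n(n-1)=2n.
\]
We need to check that none of $-q,-\tau,\pm\omega$ equals $1$. The eigenangles are $a+\tfrac12$, $c+\tfrac12$, $d$ and $d+\tfrac12$. Since $0<a,c<d<\tfrac12$ on $\mathcal T_n$, all of these lie strictly inside $(0,1)$, so none of the four eigenvalues is $1$.

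For the projective length, we must show that the largest eigenspace dimension is $n(n-1)$, attained at $\zeta=1$. The other eigenspace dimensions are at most $2(n-1)$ (the case $-q=-\tau$, i.e.\ $a=c$), and $\omega,-\omega$ are simple. It remains to compare:
\begin{itemize}
\item For $n\ge3$, $n(n-1)\ge 2(n-1)$, so $\zeta=1$ attains the maximum.
\item For $n=2$, the eigenvalue $1$ has multiplicity $2$, and when $a=c$ the merged eigenvalue $-q$ also has multiplicity $2$. The maximum is still $2$.
\end{itemize}
In every case the maximum over $\zeta$ is $n(n-1)$, so
\[
\ell_{\mathrm{ref}}^{\mathrm{proj}}(U_i)=M-n(n-1)=2n.
\]

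The only delicate point is the possible coalescence of eigenvalues: when $a=c$ the multiplicities of $-q$ and $-\tau$ add, and for small $n$ this could compete with the $1$-eigenspace. The argument should therefore state the bound $2(n-1)\le n(n-1)$ explicitly and treat $n=2$ separately. No other coalescence is possible on $\mathcal T_n$, by the distinctness remark in Proposition~\ref{prop:charpoly}.
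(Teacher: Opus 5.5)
Your proof is correct and follows the paper's own argument. You read off the multiplicities from Proposition~\ref{prop:charpoly}, observe that the $1$-eigenspace of dimension $n(n-1)$ has maximal multiplicity (the $a=c$ coalescence gives $2(n-1)\le n(n-1)$, a tie only at $n=2$), and then apply Lemma~\ref{lem:syn-length}. The paper leaves implicit your checks that $-q,-\tau,\pm\omega\ne1$ and that the unitary $U_i$ is diagonalizable.
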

\begin{proof}
By Proposition~\ref{prop:charpoly}, the identity eigenspace has dimension
$n(n-1)$ and is of maximum multiplicity. This remains true at $a=c$,
where the two multiplicities $n-1$ add (with a tie possible for $n=2$).
Apply Lemma~\ref{lem:syn-length}.
\end{proof}
The computed spectrum thus determines the phase-reflection count.
The next step is to prepare its directions in a \emph{common}
coordinate system and count the elementary factors. An arbitrary phase
reflection is not counted as one elementary factor.

\Needspace{10\baselineskip}
\subsection{Common coordinates for the Burau seed}\label{subsec:syn-seed}
In the raw Burau coordinates of Section~\ref{sec:twisted}, put
$h_j=-qe_{j-1}+e_j$ for $1\le j\le n$. These columns form a basis of
$V_{\mathrm{red}}$. Let $B_j^{\mathrm{root}}$ be the reduced matrix of
$\sigma_{j-1}$ in this basis. It is the identity except in row $j$,
whose adjacent entries are $1,-q,q$, with missing endpoint entries omitted.
For the standard column $E_j\in\C^n$ set
\[
 (J_q)_{jj}=1,\qquad
 (J_q)_{j,j+1}=-\frac{q}{1+q},\qquad
 (J_q)_{j+1,j}=-\frac1{1+q}.
\]
All other entries are zero, and direct calculation gives
\[
 B_j^{\mathrm{root}}=I-(1+q)E_jE_j^\dagger J_q,
 \qquad (B_j^{\mathrm{root}})^\dagger J_qB_j^{\mathrm{root}}=J_q.
\]
These formulas fix the normalization and coordinates of the Burau
invariant form \cite{Squier}.
Writing $\theta_q=\pi a$, a diagonal unitary makes $J_q$ real
tridiagonal with off-diagonal entries $-1/(2\cos\theta_q)$.
Its eigenvalues are
\[
 1-\frac{\cos(k\pi/(n+1))}{\cos\theta_q},\qquad 1\le k\le n,
\]
and are positive since $0<a<1/(n+1)$.

\begin{lemma}[Bidiagonal seed factor]\label{lem:syn-seed}
The upper Cholesky factor $T_q$, with $T_q^\dagger T_q=J_q$ and
positive diagonal, is bidiagonal. Its nonzero entries satisfy
\[
 t_j^2:=(T_q)_{jj}^2
  =\frac{\sin((j+1)\theta_q)}{2\cos\theta_q\sin(j\theta_q)},
 \qquad (T_q)_{j,j+1}=-\frac{e^{i\theta_q}}{2\cos\theta_q\,t_j}.
\]
Consequently
\[
 T_qB_j^{\mathrm{root}}T_q^{-1}=I-(1+q)u_ju_j^\dagger,
 \qquad u_j=T_qE_j,\quad \|u_j\|=1.
\]
The support of $u_j$ is contained in coordinates $j-1,j$, and that
of $u_1$ is contained in coordinate $1$.
\end{lemma}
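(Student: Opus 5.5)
The plan is to treat $J_q$ as a Hermitian positive-definite tridiagonal matrix and use the standard fact that the upper Cholesky factor of a tridiagonal matrix is upper bidiagonal. Write $T_q$ with diagonal $t_j>0$ and superdiagonal $b_j=(T_q)_{j,j+1}$. The identity $T_q^\dagger T_q=J_q$ then reduces to three sets of scalar equations. The diagonal equations are $t_1^2=1$ and $|b_{j-1}|^2+t_j^2=1$. The superdiagonal equations are $t_jb_j=(J_q)_{j,j+1}=-q/(1+q)$. Every other entry of $T_q^\dagger T_q$ vanishes automatically. Since the Cholesky factor with positive diagonal is unique, it suffices to exhibit a bidiagonal solution.

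The first step is to simplify the off-diagonal entry. Using $q=e^{2i\theta_q}$ gives $-q/(1+q)=-e^{i\theta_q}/(2\cos\theta_q)$. The superdiagonal equation therefore forces $b_j=-e^{i\theta_q}/(2\cos\theta_q\,t_j)$, which is the claimed formula. Substituting into the diagonal equation gives the recursion
\[
t_{j}^2=1-\frac{1}{4\cos^2\theta_q\,t_{j-1}^2}.
\]
I will verify that the claimed closed form satisfies this recursion. Its initial value is $t_1^2=\sin 2\theta_q/(2\cos\theta_q\sin\theta_q)=1$. For the inductive step, write $s_k=\sin(k\theta_q)$. The recursion becomes $2\cos\theta_q\,s_{j+1}=s_j+s_{j+2}$ after clearing denominators, which is the sine addition formula $s_j+s_{j+2}=2\cos\theta_q\,s_{j+1}$. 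Positivity of each $t_j^2$ holds because $(j+1)\theta_q\le(n+1)\pi a<\pi$ for $j\le n$. This is also consistent with the positive eigenvalues already noted for $J_q$.

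The second step is the conjugation identity. From $B_j^{\mathrm{root}}=I-(1+q)E_jE_j^\dagger J_q$ and $J_q=T_q^\dagger T_q$ we get
\[
T_qB_j^{\mathrm{root}}T_q^{-1}=I-(1+q)\,T_qE_j\,E_j^\dagger T_q^\dagger=I-(1+q)u_ju_j^\dagger.
\]
For the norm, $\|u_j\|^2=E_j^\dagger T_q^\dagger T_qE_j=(J_q)_{jj}=1$. For the support, $T_q$ is upper bidiagonal, so its column $j$ has nonzero entries only in rows $j-1$ and $j$. With the paper's labelling of the reduced coordinates, this gives support in coordinates $j-1,j$, and column $1$ is supported at coordinate $1$ alone.

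The only real obstacle is bookkeeping. The indexing conventions, one-based columns $E_j$ versus the coordinate labels used in the support statement, must match. The phase $e^{i\theta_q}$ must come out with the right sign, which depends on the branch $q^{1/2}=e^{i\theta_q}$. I will check the whole computation against the case $n=2$ explicitly, where $T_q$ is $2\times2$ and everything can be computed by hand.
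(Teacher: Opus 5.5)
Your proposal is correct and follows the paper's proof: bidiagonality of the Cholesky factor of the positive tridiagonal $J_q$, the recurrence $t_1=1$, $t_{j+1}^2=1-(4\cos^2\theta_q\,t_j^2)^{-1}$ checked by the sine addition formula, and the conjugation identity from $J_qT_q^{-1}=T_q^\dagger$ together with $\|T_qE_j\|^2=(J_q)_{jj}=1$. The one point you leave implicit is that the subdiagonal entries of $T_q^\dagger T_q=J_q$ do not vanish but match automatically, because $J_q$ is Hermitian for $|q|=1$; with that noted, your existence-plus-uniqueness argument for the bidiagonal factor is complete.
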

\begin{proof}
Cholesky elimination of a positive tridiagonal matrix is bidiagonal.
The diagonal recurrence is $t_1=1$ and
$t_{j+1}^2=1-(4\cos^2\theta_q\,t_j^2)^{-1}$; the displayed sine
formula satisfies it. Use $J_qT_q^{-1}=T_q^\dagger$ to conjugate the
reflection formula, and $\|T_qE_j\|^2=(J_q)_{jj}=1$.
\end{proof}
Adjoin the trivial summand first and denote the orthonormal seed basis
by $f_0,f_1,\ldots,f_n$. Then $s_i^{\mathrm{tB}}=\rho_q(\sigma_i)$ is the
identity except on $\operatorname{span}(f_i,f_{i+1})$, for $1\le i<n$.
The free generators are given in the same coordinates by
\[
 g_1^{\mathrm{tB}}=\tau\rho_q(\sigma_0)^2,\qquad
 g_{j+1}^{\mathrm{tB}}=s_j^{\mathrm{tB}}g_j^{\mathrm{tB}}(s_j^{\mathrm{tB}})^\dagger.
\]
In particular,
\begin{equation}\label{eq:syn-free-generators}
 g_j^{\mathrm{tB}}=\tau[I+(q^2-1)w_jw_j^\dagger],\qquad
 w_1=f_1,\quad w_{j+1}=s_j^{\mathrm{tB}}w_j.
\end{equation}
No generator-dependent seed basis is used.

\subsection{Block structure after the common unitarization}
\begin{lemma}[Ordered block preservation]\label{lem:syn-cholesky}
Suppose $A$ is block diagonal for a partition into contiguous coordinate
intervals, $H>0$, and $A^\dagger HA=H$. If $\mathsf{C}$ is upper triangular
and $\mathsf{C}^\dagger \mathsf{C}=H$, then $\mathsf{C}A\mathsf{C}^{-1}$ is block diagonal for the same
partition, with diagonal blocks $\mathsf{C}_{jj}A_j\mathsf{C}_{jj}^{-1}$. Identity blocks
remain identity. The same $\mathsf{C}$ works for a family preserving $H$, even
when its members have different partitions into contiguous intervals.
\end{lemma}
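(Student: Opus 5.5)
The plan is to show that $U:=\mathsf{C}A\mathsf{C}^{-1}$ is simultaneously unitary and block upper triangular for the given partition, and then to use the elementary fact that such a matrix is block diagonal.

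\emph{Unitarity.} From $A^\dagger HA=H$ and $H=\mathsf{C}^\dagger\mathsf{C}$ we get
\[
U^\dagger U=\mathsf{C}^{-\dagger}A^\dagger\mathsf{C}^\dagger\mathsf{C}A\mathsf{C}^{-1}
=\mathsf{C}^{-\dagger}A^\dagger HA\,\mathsf{C}^{-1}
=\mathsf{C}^{-\dagger}H\mathsf{C}^{-1}=I .
\]

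\emph{Block upper triangularity.} Since the blocks are contiguous coordinate intervals, every upper triangular matrix is block upper triangular for this partition. This applies to $\mathsf{C}$. It also applies to $\mathsf{C}^{-1}$, because $\mathsf{C}$ is invertible (its diagonal is nonzero, as $H>0$). The block diagonal matrix $A$ is trivially block upper triangular. Block upper triangular matrices for a fixed ordered partition form an algebra. In a product of such matrices, each diagonal block is the product of the corresponding diagonal blocks. Moreover, $(\mathsf{C}^{-1})_{jj}=\mathsf{C}_{jj}^{-1}$. Hence $U$ is block upper triangular with diagonal blocks $\mathsf{C}_{jj}A_j\mathsf{C}_{jj}^{-1}$.

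\emph{Block diagonality.} The inverse of an invertible block upper triangular matrix is again block upper triangular, so $U^{-1}$ is block upper triangular. On the other hand, $U^{-1}=U^\dagger$ is block lower triangular. Therefore $U^{-1}$, and hence $U$, is block diagonal. Its diagonal blocks are $\mathsf{C}_{jj}A_j\mathsf{C}_{jj}^{-1}$. If $A_j=I$, this block is $\mathsf{C}_{jj}\mathsf{C}_{jj}^{-1}=I$, so identity blocks remain identity blocks.

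\emph{Families.} The factor $\mathsf{C}$ depends only on $H$ and not on $A$. For a family $(A_i)$ preserving the same $H$, we apply the argument above to each $A_i$ separately, each with its own contiguous partition. The single factor $\mathsf{C}$ therefore serves all members simultaneously.

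The only step needing care is the triangular-to-diagonal passage. It relies on the partition being into \emph{contiguous} intervals in the order used by the Cholesky factorization; otherwise, upper triangular $\mathsf{C}$ would not be block upper triangular. This is why the slot-major ordering matters.
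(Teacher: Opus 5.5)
Your proof is correct and follows the paper's route: $\mathsf{C}A\mathsf{C}^{-1}$ is unitary, and it is block upper triangular with the stated diagonal blocks because $\mathsf{C}$ and $\mathsf{C}^{-1}$ are upper triangular, hence block upper triangular for any contiguous partition; block diagonality then follows. The only difference is cosmetic and lies in the last step, where you compare $U^{-1}$ (block upper triangular) with $U^\dagger$ (block lower triangular) instead of using the paper's argument via invariance of the first coordinate block and its orthogonal complement plus induction; both are standard and equally valid.
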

\begin{proof}
Both $\mathsf{C}$ and $\mathsf{C}^{-1}$ are block upper triangular for every such partition.
Thus $\mathsf{C}A\mathsf{C}^{-1}$ is block upper triangular with the stated diagonal
blocks; it is also unitary. A block upper triangular unitary is block
diagonal, since its first coordinate block and its orthogonal complement
are invariant; induction finishes the proof. This does not assert that
$\mathsf{C}$ itself is sparse.
\end{proof}

Use slot-major coordinates
\[
 e_1\otimes f_0,\ldots,e_1\otimes f_n,
 e_2\otimes f_0,\ldots,e_n\otimes f_n.
\]
The two-slot block of \eqref{eq:braid-matrices} is contiguous on
slots $i,i+1$. In every other slot it gives $s_i^{\mathrm{tB}}$, hence one
contiguous two-coordinate block and singleton identity blocks.

\begin{theorem}[Common-basis small-block realization]\label{thm:syn-blocks}
Form the canonical positive KLM matrix $H_l=\Ht(e(l))$ in the seed
coordinates of Lemma~\ref{lem:syn-seed}, and let $\mathsf{C}_l$ be its upper
Cholesky factor. In the common unitary realization
$U_i=\mathsf{C}_lA_i\mathsf{C}_l^{-1}$, each generator consists of one $2(n+1)$-dimensional
\emph{core} coordinate block $C_i$, $n-2$ two-dimensional
\emph{spectator} blocks $D_{ij}$, and
singleton identity blocks. The possible active coordinates are
\[
 (\{i,i+1\}\times\{0,\ldots,n\})
 \ \cup\ 
 ((\{1,\ldots,n\}\setminus\{i,i+1\})\times\{i,i+1\}),
\]
whose number is $4n-2$. The block polynomials are
\begin{align*}
 \det(z-C_i)&=(z-1)^n(z+q)(z+\tau)^{n-1}(z^2-q^3\tau^2),\\
 \det(z-D_{ij})&=(z-1)(z+q).
\end{align*}
In particular at least $(n-1)(n-2)$ individual computational-basis
vectors are fixed by each generator.
\end{theorem}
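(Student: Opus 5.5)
The plan is to apply Lemma~\ref{lem:syn-cholesky} to each $A_i$ separately, using one upper Cholesky factor $\mathsf C_l$ for all of them. The only work is to verify, in the seed basis of Lemma~\ref{lem:syn-seed} and slot-major order, that each $A_i$ is block diagonal with respect to some partition of the coordinates into contiguous intervals. Once that holds, invariance $A_i^\dagger H_lA_i=H_l$ (Theorem~\ref{thm:klm-form}) and $H_l>0$ (Theorem~\ref{thm:twisted-burau}) let the lemma transport the block structure to $U_i=\mathsf C_lA_i\mathsf C_l^{-1}$. Identity blocks remain identity, and each diagonal block is conjugate to the corresponding block of $A_i$, so its characteristic polynomial is unchanged.

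First we read off the partition for $A_i$ from \eqref{eq:braid-matrices}. In every slot $j\notin\{i,i+1\}$, $A_i$ acts by $s_i^{\mathrm{tB}}$. By the construction in Section~\ref{subsec:syn-seed}, this operator is the identity except on $\operatorname{span}(f_i,f_{i+1})$, so it contributes one two-coordinate block $D_{ij}$ on $e_j\otimes\{f_i,f_{i+1}\}$, which is contiguous in slot-major order, together with singleton identity blocks. On slots $i,i+1$, the two-slot block occupies the $2(n+1)$ consecutive coordinates $e_i\otimes f_0,\ldots,e_{i+1}\otimes f_n$; this is the core $C_i$. The resulting partition into contiguous intervals is the one needed for the lemma.

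Counting active coordinates gives $2(n+1)$ from the core plus $2(n-2)$ from the spectators, a total of $4n-2$. The remaining $M-(4n-2)=(n-1)(n-2)$ coordinates sit in singleton identity blocks, so those computational-basis vectors are fixed. Each spectator block is the restriction of $\rho_q(\sigma_i)$ to $\operatorname{span}(f_i,f_{i+1})$. Since $\rho_q(\sigma_i)$ has spectrum $\{1^{\times n},-q\}$ on $\C^{n+1}$ and is the identity off this two-plane, $\det(z-D_{ij})=(z-1)(z+q)$.

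The core polynomial comes from Proposition~\ref{prop:charpoly}, which gives $\det(z-A_i)=(z-1)^{n(n-1)}(z+q)^{n-1}(z+\tau)^{n-1}(z^2-q^3\tau^2)$; the characteristic polynomial is basis-independent, so it holds equally in the unitary seed basis. Dividing by the $n-2$ spectator factors $(z-1)(z+q)$ and by $(z-1)^{(n-1)(n-2)}$ for the singletons leaves
\[
(z-1)^{n(n-1)-(n-2)-(n-1)(n-2)}(z+q)(z+\tau)^{n-1}(z^2-q^3\tau^2).
\]
The exponent of $z-1$ here simplifies to $n$, as claimed. As a cross-check, the block computation in the proof of Proposition~\ref{prop:charpoly} gives the same result: the core equals $C(q,\tau)\oplus B_\tau^{\oplus(n-2)}$ up to coordinate permutation.

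The step most likely to hide an obstacle is confirming that the core and every spectator interval really are contiguous in slot-major order after the change to the unitary seed basis $f_0,\ldots,f_n$, and that $s_i^{\mathrm{tB}}$ acts on adjacent coordinates $f_i,f_{i+1}$ in that basis. This follows from the bidiagonal form of $T_q$ in Lemma~\ref{lem:syn-seed}: $u_{i+1}$ is supported on coordinates $i,i+1$ of the reduced part, and since the trivial summand is adjoined first as $f_0$, these shift to $f_i,f_{i+1}$. Lemma~\ref{lem:syn-cholesky} explicitly allows the partition to depend on $i$, so a single $\mathsf C_l$ serves all generators.
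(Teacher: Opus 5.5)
Your proposal is correct and follows the paper's own proof. Like the paper, you use one Cholesky factor $\mathsf C_l$ for every generator, apply Lemma~\ref{lem:syn-cholesky} to the $i$-dependent contiguous partition, read off the spectator spectrum $\{1,-q\}$, and obtain the core polynomial by dividing the polynomial of Proposition~\ref{prop:charpoly}. You also justify the contiguity via the bidiagonal support of $T_q$, which the paper leaves implicit. One small correction to your closing cross-check: the raw-basis splitting $C(q,\tau)\oplus B_\tau^{\oplus(n-2)}$ holds literally only for $i=1$, and it holds up to similarity (seed change of basis and Cholesky block), not coordinate permutation. Nothing in the main argument depends on that remark.
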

\begin{proof}
Theorem~\ref{thm:twisted-burau} supplies $H_l>0$ and its invariance.
Apply Lemma~\ref{lem:syn-cholesky} to each interval partition just
described, always using the same $\mathsf{C}_l$. Each spectator block has
spectrum $\{1,-q\}$. Divide the polynomial of
Proposition~\ref{prop:charpoly} by these $n-2$ polynomials and by
$(z-1)^{(n-1)(n-2)}$ for the singleton blocks to get the core polynomial.
Counting the active coordinates proves the remaining assertions.
\end{proof}
The partitions depend on $i$, but the coordinates do not. This is
coordinate-block sparsity, not tensor localization by a fixed
Yang--Baxter $R$-matrix in the sense of \cite{RW}.

\begin{corollary}[Small supports for an optimal reflection product]
\label{cor:syn-directions}
The optimal $2n$ factors of Corollary~\ref{cor:syn-length} can be chosen
as $n+2$ directions supported in the core of size $2(n+1)$ and one
direction in each of the $n-2$ spectator blocks of size $2$.
\end{corollary}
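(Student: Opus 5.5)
The plan is to work block by block in the common realization of Theorem~\ref{thm:syn-blocks}. Since $U_i$ is block diagonal for the orthogonal coordinate decomposition into the core, the $n-2$ spectators, and the singletons, every eigenvector of a diagonal block is also an eigenvector of $U_i$ supported in that block. For each block $Y$ with a unitary eigendecomposition $Y=\sum_k\zeta_k v_kv_k^\dagger$, I would take the phase reflections $R(v_k,\arg\zeta_k)$ for the eigenvalues $\zeta_k\ne1$. Each such factor acts as the identity outside its block. The factors inside one block commute because their directions are orthogonal, and factors from different blocks commute because their supports are disjoint. The product of all of them is therefore $U_i$.

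The first step is to count the factors. By Theorem~\ref{thm:syn-blocks}, each spectator $D_{ij}$ has $\det(z-D_{ij})=(z-1)(z+q)$. Since $-q\ne1$, it contributes exactly one factor, whose direction is the unit $(-q)$-eigenvector in that two-dimensional block. The core polynomial $(z-1)^n(z+q)(z+\tau)^{n-1}(z^2-q^3\tau^2)$ has $1+(n-1)+2=n+2$ roots different from $1$, counted with multiplicity, so the core contributes $n+2$ factors with directions in the $2(n+1)$ core coordinates. On $\mathcal T_n$ none of $-q,-\tau,\pm\omega$ equals $1$, since $0<a,c<d<\tfrac12$. The singleton blocks are identity and contribute nothing. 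The total is $(n+2)+(n-2)=2n$.

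The second step is to show this product is optimal in both senses. By Corollary~\ref{cor:syn-length}, $\ell_{\mathrm{ref}}(U_i)=\ell_{\mathrm{ref}}^{\mathrm{proj}}(U_i)=2n$. A product of $2n$ phase reflections equal to $U_i$ is therefore optimal for the unscaled length, and for the projective length it is optimal with trivial phase $\phi=0$. That choice is legitimate because the eigenvalue $1$ has maximal multiplicity by the same corollary.

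The only point needing care is the case $a=c$, or other coincidences of eigenvalues inside the core. Then the eigenspaces of $C_i$ may be higher dimensional, so I would choose an orthonormal basis of each eigenspace inside the core coordinates. This is possible because $C_i$ is unitary, being a diagonal block of a unitary matrix in orthonormal coordinates, and hence normal. The multiplicity count is unchanged, so the argument goes through without further conditions.
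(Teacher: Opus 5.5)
Your proposal is correct and follows the paper's own argument. You choose orthonormal eigenvectors of each unitary diagonal block separately, count the $n+2$ nonidentity eigenvalues of the core and the single eigenvalue $-q$ of each spectator, and multiply the commuting phase reflections on these mutually orthogonal directions to recover $U_i$, with repeated eigenvalues such as $a=c$ handled by orthonormal bases of the eigenspaces, as the paper also remarks.
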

\begin{proof}
Choose orthonormal eigenvectors separately in the coordinate blocks.
The core has $n+2$ nonidentity eigenvalues counted with multiplicity,
and each spectator has one. The phase reflections on these mutually
orthogonal directions multiply to $U_i$. The argument includes $a=c$.
\end{proof}

\subsection{Elementary words of length \texorpdfstring{$O(M)$}{O(M)}}
\label{subsec:syn-elementary}
\begin{definition}[Elementary generators and words]
\label{def:circuits}\label{def:elementary-generators}
For an integer $m\ge1$, on $\mathcal H_m=(\C^2)^{\otimes m}$ write
\[
 e_x=e_{x_1}\otimes\cdots\otimes e_{x_m},\qquad
 x=(x_1,\ldots,x_m)\in\mathbb F_2^m,\qquad
 [x]_2=\sum_{j=1}^m2^{m-j}x_j.
\]
We also index $e_x$ by the integer $[x]_2$: the first tensor factor
is the most significant bit. Write $e_{\mathbf0}=e_0^{\otimes m}$.
For $1\le j\le m$ and $j\ne k$, define
\[
 \iota_j(A)=I_2^{\otimes(j-1)}\otimes A\otimes I_2^{\otimes(m-j)},
 \quad A\in U(2),\qquad
 \operatorname{CNOT}_{jk}e_x=e_{x+x_j\epsilon_k},
\]
where $\epsilon_k$ is the $k$th coordinate vector and addition is
in $\mathbb F_2^m$. The elementary generating set is
\[
 \mathcal G_m=\bigcup_{j=1}^m\iota_j(U(2))
 \ \cup\ \{\operatorname{CNOT}_{jk}:1\le j,k\le m,\ j\ne k\}.
\]
A \emph{word} is a finite sequence $\mathfrak c=(g_1,\ldots,g_s)$
over the declared generating set, with evaluation and size
\[
\begin{gathered}
 \operatorname{ev}(\mathfrak c)=g_s\cdots g_1,\qquad
 |\mathfrak c|=s,\\
 \ell_{\mathcal G_m}(W)
 =\min\{|\mathfrak c|:\operatorname{ev}(\mathfrak c)=W,
 \ \mathfrak c\text{ is a word over }\mathcal G_m\}.
\end{gathered}
\]
The empty word evaluates to the identity. In quantum computing such
a word is called a \emph{circuit}, and its size is the \emph{gate count}.
Its \emph{depth} counts successive layers, allowing letters on disjoint
factors in one layer. The minimum word length $\ell_{\mathcal G_m}$
is distinct from the size of a particular word.
\end{definition}
The set $\mathcal G_m$ is infinite and closed under adjoints.
Allowing CNOT on every ordered pair is the all-to-all convention
\cite{BarencoGates,SBM}.

Put $p=\lceil\log_2 n\rceil$, $b=\lceil\log_2(n+1)\rceil$, and
$m=p+b$. With $E_1,\ldots,E_n$ the standard slot basis, fix
\begin{equation}\label{eq:syn-encoding}
\begin{gathered}
J_n:\C^M\longrightarrow\mathcal H_m,\qquad
J_n(E_{s+1}\otimes f_t)=e_{\operatorname{label}(s,t)},\qquad
\operatorname{label}(s,t)=2^bs+t,\\
0\le s<n,\quad 0\le t\le n,\qquad
\widehat U_i=J_nU_iJ_n^\dagger+(I-J_nJ_n^\dagger).
\end{gathered}
\end{equation}
The first $p$ factors encode the slot and the last $b$ the seed
coordinate (the slot-major order written in binary). The same $J_n$ is used for all generators and their words;
$M\le2^m<4M$. An ancilla-free word uses these $m$ factors alone;
padding labels are not additional working factors
(Definition~\ref{def:circuit-synthesis}).

For a unit vector $v\in\operatorname{im}J_n$, preparation supplies
$W_ve_{\mathbf0}=e^{i\beta}v$. The phase cancels in
\begin{equation}\label{eq:syn-phase-preparation}
 R(v,\theta)=W_v\,[I+(e^{i\theta}-1)
 e_{\mathbf0}e_{\mathbf0}^\dagger]W_v^\dagger,
\end{equation}
so preparation and its inverse implement the reflection
\cite[Lemmas~8--9]{MIC}.

\Needspace{15\baselineskip}
\begin{theorem}[Linear-in-dimension elementary-word bound]
\label{thm:syn-cost}
For every padded generator from Theorem~\ref{thm:syn-blocks}, there is
a word $\mathfrak c_i$ over $\mathcal G_m$ with
\[
\operatorname{ev}(\mathfrak c_i)=\widehat U_i,\qquad
|\mathfrak c_i|=O(n^2)=O(M).
\]
This word is exact and ancilla-free, including preparation and
unpreparation of the reflection directions. In particular,
$\ell_{\mathcal G_m}(\widehat U_i)=O(M)$.
A given braid word of length $L$ therefore yields an elementary word
of size $O(LM)$.
\end{theorem}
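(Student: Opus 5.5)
We plan to use Corollary~\ref{cor:syn-directions} to write $U_i$ as a product of $2n$ phase reflections $R(v,\theta)$ on mutually orthogonal unit vectors. There are $n+2$ core directions, supported on the $2(n+1)$ coordinates $\{i,i+1\}\times\{0,\ldots,n\}$, and $n-2$ spectator directions, each supported on two coordinates $(s,i),(s,i+1)$. We pad by the identity on unused labels; a reflection with direction in $\operatorname{im}J_n$ acts trivially on the padding, so the padded product is $\widehat U_i$. Each reflection is implemented by \eqref{eq:syn-phase-preparation}. This needs three ingredients: a preparation unitary $W_v$ with $W_ve_{\mathbf0}=e^{i\beta}v$, its adjoint (the word reversed, letters inverted, same length), and the single controlled phase $I+(e^{i\theta}-1)e_{\mathbf0}e_{\mathbf0}^\dagger$ on the all-zero string. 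The total count is then $\sum_v(2|W_v|)+2n\cdot|\text{zero-phase}|$. It therefore suffices to bound each $|W_v|$ and the zero-phase gate.

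\textbf{Zero-controlled phase.} The gate $I+(e^{i\theta}-1)e_{\mathbf0}e_{\mathbf0}^\dagger$ is a multi-controlled phase on $m$ qubits after conjugating by $X$ on all factors. Without ancillas, the standard constructions \cite{BarencoGates} give $O(m^2)$ letters, and linear-size ancilla-free variants give $O(m)$. Since $m=O(\log M)$, $2n$ copies cost $O(n\log^2 n)=O(M)$ in either case, so no sharp construction is needed here.

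\textbf{Spectator preparation.} A spectator direction is supported on the two labels $2^bs+i$ and $2^bs+i+1$. To prepare it, first map $e_{\mathbf0}$ to $e_{\operatorname{label}(s,i)}$ by $O(m)$ Pauli $X$ letters. Then apply a two-level unitary on the pair of labels. That two-level unitary is realized by a Gray-code path of CNOTs, costing $O(m)$, followed by a fully controlled $U(2)$ on the final differing bit, again costing $O(m^2)$ ancilla-free \cite{BarencoGates}. The totals are $O(m^2)$ per spectator direction and $O(nm^2)=O(M)$ overall.

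\textbf{Core preparation, the main obstacle.} A core direction has up to $2(n+1)$ nonzero amplitudes on labels with slot bits in $\{i-1,i\}$ (zero-based) and arbitrary seed bits. We would reduce this to the sparse state-preparation bound of Li--Luo \cite{LiLuo} with $s=2(n+1)$ nonzeros on $m$ qubits. That bound gives an ancilla-free $O(ms/\log m+m)=O(n\log n/\log\log n)$ word per direction. Summed over the $n+2$ directions this gives $O(n^2\log n/\log\log n)$, which slightly exceeds $O(n^2)$. To reach $O(M)$ we would instead exploit the structure: write $v=e_{i-1}\otimes x+e_i\otimes y$ in slot-major form. The preparation then proceeds in three stages.
\begin{enumerate}
\item Prepare the one-qubit split $\|x\|,\|y\|$ on the slot register.
\item Apply Pauli $X$ and CNOT letters to move the two slot labels $i-1,i$ into the prescribed positions; these differ in a Gray-code fashion, costing $O(p)$.
\item Prepare $x/\|x\|$ and $y/\|y\|$ on the $b$ seed qubits, each controlled on the slot qubit. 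An arbitrary state on $b$ qubits needs $O(2^b)=O(n)$ CNOTs and rotations by the uniformly controlled rotation scheme \cite{MottonenStatePreparation}. One additional control is absorbed by using uniformly controlled rotations over $b+1$ qubits, still $O(2^{b+1})=O(n)$.
\end{enumerate}
Thus $|W_v|=O(n)$, and the $n+2$ core directions cost $O(n^2)$. Adding all terms gives $|\mathfrak c_i|=O(n^2)=O(M)$, using $M\le 2^m<4M$. For a braid word of length $L$ we concatenate the words for its generators and inverses (inverses by reversal), giving $O(LM)$. The delicate points to verify are that the uniformly controlled construction really stays ancilla-free on exactly these $m$ qubits, and that no extra $\log$ factor enters from moving between the two slot labels.
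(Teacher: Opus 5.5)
Your proposal is correct and follows the paper's proof: the same $2n$-reflection product, the same $O(m^2)$ ancilla-free zero-controlled phase from \cite{BarencoGates}, and the same core trick, namely using X and CNOT letters to compress the two consecutive slot labels to $\mathbf 0$ and $\epsilon_p$, so that a $(b+1)$-qubit state preparation of size $O(2^{b+1})=O(n)$ suffices. Your spectator step uses a fully controlled two-level gate costing $O(m^2)$, whereas the paper uses a permutation followed by a single uncontrolled one-qubit gate costing $O(m)$, which suffices because only the action on one basis vector matters; this changes nothing in the $O(n^2)$ total.
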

\begin{proof}
\emph{Core directions.}
A core direction uses the consecutive slot labels $x=i-1$ and $x+1=i$.
Identify them with their $p$-bit strings and put
$d=x\mathbin\oplus(x+1)$, so $d_p=1$. On slot labels define
\[
 T_x:z\longmapsto z\mathbin\oplus x,\qquad
 C=\prod_{\substack{1\le j<p\\d_j=1}}\operatorname{CNOT}_{p,j},
 \qquad P=C\circ T_x.
\]
The same symbols denote the induced permutation matrices, acting
trivially on seed factors. Translation uses at most $p$ X letters;
the commuting factors of $C$ number at most $p-1$.
Thus $P(x)=\mathbf0$, $P(x+1)=\epsilon_p$, and
\[
 Pv=e_0^{\otimes(p-1)}\otimes v_{\mathrm{core}},\qquad
 v_{\mathrm{core}}\in(\C^2)^{\otimes(b+1)}.
\]
Standard preparation $Q$ on those last $b+1$ factors uses
$O(2^{b+1})=O(n)$ letters \cite[\S~IV, Theorem~9]{SBM}.
The matrix $P^{-1}(I\otimes Q)$ therefore prepares $v$ with total
size $O(p+n)=O(n)$.

\emph{Spectator directions.}
For a spectator direction, translate its fixed
slot to zero and apply the same construction to its two consecutive
seed labels. One-factor preparation followed by the inverse
permutation then costs $O(m)$, including the support labels.

\emph{Controlled phase and total cost.}
Conjugating the control wires by X gates changes zero-controls to
one-controls. The all-zero phase is therefore an $(m-1)$-controlled
one-qubit phase and admits ancilla-free $O(m^2)$ synthesis \cite[arXiv version, Corollary~7.6]{BarencoGates}.
Including preparation and its inverse for the directions in
Corollary~\ref{cor:syn-directions} therefore costs
\[
 O\bigl((n+2)n+(n-2)m+2nm^2\bigr)
 =O(n^2+n\log^2n)=O(n^2).
\]
Each completed reflection fixes the orthogonal complement of its
direction, hence fixes the padding pointwise. Its preparation word
and individual elementary factors need not do so.
Since $\mathcal G_m$ is closed under adjoints, reversing a word and
taking adjoints of its letters gives the inverse braid generator with
the same size.
\end{proof}
The implementation in Section~\ref{subsec:resource-counts} uses
$m-2$ clean ancillary qubits, returned to zero;
Section~\ref{app:matched-benchmark} implements an ancilla-free version.

\paragraph{Compilation over a fixed finite gate alphabet.}
\label{par:syn-finite-alphabet}
Fix a finite set $\mathcal G_{\mathrm{fin}}\subset SU(2)$, closed under
adjoints and generating a dense subgroup of $SU(2)$. With placements
$\iota_j$ from Definition~\ref{def:elementary-generators}, put
\[
 \mathcal F_m=\bigcup_{j=1}^m\iota_j(\mathcal G_{\mathrm{fin}})
 \ \cup\ \{\operatorname{CNOT}_{jk}:1\le j,k\le m,\ j\ne k\}.
\]
The local alphabet is independent of $n$, the
seed, $l$, and accuracy, while its placements depend on $m$.
For two unitaries on the full padded register, put
\[
 d_{\mathrm{proj}}(V,W)
 :=\min_{\phi\in\R}\|V-e^{i\phi}W\|_{\mathrm{op}},
 \qquad
 \alpha_{\mathrm{SK}}:=\frac{\log 5}{\log(3/2)}=3.96936\ldots.
\]
The accuracy below concerns words over this finite alphabet,
not noise in their physical implementation.
Compressed descriptions of repeated words are not counted
as single letters.

\Needspace{24\baselineskip}
Applying the one-qubit Solovay--Kitaev theorem
\cite{DawsonNielsenSK} to the exact words above gives the following
approximation bound.

\begin{samepage}
\begin{corollary}[Finite-alphabet synthesis with prescribed accuracy]
\label{cor:syn-finite-precision}
Under the hypotheses of Theorem~\ref{thm:syn-blocks}, for every
$0<\varepsilon<1$ the following hold.
\begin{enumerate}
\item[(i)] \textup{\textbf{One generator.}}
For every generator there is a word
$\widetilde{\mathfrak c}_i$ over $\mathcal F_m$ with matrix
$\widetilde U_i=\operatorname{ev}(\widetilde{\mathfrak c}_i)$ such that
\[
 \begin{split}
 d_{\mathrm{proj}}(\widehat U_i,\widetilde U_i)&\le\varepsilon,\\
 |\widetilde{\mathfrak c}_i|
 &=O\!\left(M[1+\log(M/\varepsilon)]^{\alpha_{\mathrm{SK}}}\right).
 \end{split}
\]
\item[(ii)] \textup{\textbf{A braid word.}}
More generally, a braid word $w$ of length $L\ge1$ has an approximating
word $\widetilde{\mathfrak c}(w)$ over $\mathcal F_m$, with matrix
$\widetilde U(w)=\operatorname{ev}(\widetilde{\mathfrak c}(w))$, satisfying
\[
 \begin{split}
 d_{\mathrm{proj}}(\widehat U(w),\widetilde U(w))&\le\varepsilon,\\
 |\widetilde{\mathfrak c}(w)|
 &=O\!\left(LM[1+\log(LM/\varepsilon)]^{\alpha_{\mathrm{SK}}}\right).
 \end{split}
\]
\item[(iii)] \textup{\textbf{Ancillas, CNOT counts, and constants.}}
Both constructions are ancilla-free on the same $m$ tensor factors.
The CNOT count remains $O(M)$ per generator and $O(LM)$ per word.

The hidden constants may depend on
$\mathcal G_{\mathrm{fin}}$, but are independent of $n,L,\varepsilon$
and of the admissible seed and parameter $l$.
\end{enumerate}
\end{corollary}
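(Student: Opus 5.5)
We take the exact ancilla-free word $\mathfrak c_i$ of Theorem~\ref{thm:syn-cost} as the starting point. It has $O(M)$ letters, and each letter is either a CNOT or a one-factor placement $\iota_j(A)$ with $A\in U(2)$. The plan is to replace each one-qubit letter by a Solovay--Kitaev word over $\mathcal G_{\mathrm{fin}}$, keep every CNOT unchanged, and then add up the errors.

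\emph{Phase reduction.} Write each $A\in U(2)$ as $A=e^{i\phi_A}A'$ with $A'\in SU(2)$. Then $\iota_j(A)=e^{i\phi_A}\iota_j(A')$ on the full register. Scalar phases commute with every letter, so they collect into one global phase of the whole word. This is why we measure accuracy with $d_{\mathrm{proj}}$ and can work entirely in $SU(2)$.

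\emph{Approximating each letter.} By the one-qubit Solovay--Kitaev theorem \cite{DawsonNielsenSK}, each $A'$ has a word over $\mathcal G_{\mathrm{fin}}$ of length $O(\log^{\alpha_{\mathrm{SK}}}(1/\delta))$ that is within $\delta$ of $A'$ in operator norm. The implied constant depends only on $\mathcal G_{\mathrm{fin}}$. We apply this with a fixed small $\delta_0$ when $\delta$ is large, which is what produces the factor $[1+\log(\cdot)]$. The placement $\iota_j$ preserves operator norm, so the placed letter is also within $\delta$. Let $K$ be the number of one-qubit letters, so $K=O(M)$ for one generator and $K=O(LM)$ for a braid word of length $L$.

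\emph{Error accumulation.} Unitarity gives the telescoping bound
\[
\Bigl\|\prod_k X_k-\prod_k Y_k\Bigr\|_{\mathrm{op}}\le\sum_k\|X_k-Y_k\|_{\mathrm{op}}.
\]
We choose $\delta=\varepsilon/K$, so the total error is at most $\varepsilon$ before the phase is optimized, and therefore $d_{\mathrm{proj}}\le\varepsilon$. The length is then
\[
O\bigl(K[1+\log(K/\varepsilon)]^{\alpha_{\mathrm{SK}}}\bigr).
\]
Substituting $K=O(M)$ gives~(i), and $K=O(LM)$ gives~(ii). For~(ii), we first concatenate the $L$ exact generator words; inverse generators are handled by the adjoint reversal in Theorem~\ref{thm:syn-cost}. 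We approximate the concatenated word once, rather than composing the per-generator approximations from~(i); composing them would give a worse logarithm. The constant $O(\cdot)$ absorbs replacing $\log(cK/\varepsilon)$ by $\log(M/\varepsilon)$.

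\emph{Remaining claims and the main care point.} Part~(iii) follows directly. No qubits are added, and CNOT letters are copied unchanged, so their count stays $O(M)$ per generator and $O(LM)$ per word. The Solovay--Kitaev constants depend only on $\mathcal G_{\mathrm{fin}}$. The constant in Theorem~\ref{thm:syn-cost} is uniform in the seed and in $l$: it depends only on the counting, not on the angles. The one point I expect to need care is exactly this uniformity. Theorem~\ref{thm:syn-cost} must bound the letter count independently of the seed, which holds because the state-preparation counts of \cite{SBM} and \cite{BarencoGates} are dimension counts. The operator norm of each letter is always $1$, so conditioning of the Cholesky factor $\mathsf C_l$ does not enter the exact-to-approximate step.
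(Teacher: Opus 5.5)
Your proof is correct and follows the paper's argument: split each $U(2)$ letter of the exact word from Theorem~\ref{thm:syn-cost} into a phase times an $SU(2)$ part, apply one-qubit Solovay--Kitaev at accuracy $\varepsilon$ divided by the letter count, keep the CNOTs, and telescope. For (ii), the paper instead compiles each generator once to accuracy $\varepsilon/L$, reuses that word for repeated occurrences, and uses its adjoint for inverses; this gives the same $\log(LM/\varepsilon)$ factor, so your remark that composing per-generator approximations would give a worse logarithm is inaccurate, although it does not affect your bound.
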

\end{samepage}
\begin{proof}
\emph{One generator.}
Theorem~\ref{thm:syn-cost} gives an exact word over $\mathcal G_m$ with
$N_{\mathrm{gate}}=O(M)$ letters. For each one-qubit factor $V\in U(2)$ choose
$V=e^{i\beta}V_0$ with $V_0\in SU(2)$. By the one-qubit
Solovay--Kitaev theorem in \cite[arXiv version, Sections~2--3, especially Eq.~(8)]{DawsonNielsenSK},
$V_0$ admits an approximation from $\mathcal G_{\mathrm{fin}}$ with
operator-norm error at most $\delta$ and word length
$O([1+\log(1/\delta)]^{\alpha_{\mathrm{SK}}})$. Tensoring with identities
preserves the norm error, and
$\iota_j(V)=e^{i\beta}\iota_j(V_0)$ shows that the removed scalar is
a global phase on the whole register. CNOT factors are left unchanged.
In particular, any
controlled operation has already been decomposed into one-qubit gates
and CNOT before scalar phases are removed.

For unitary factors $B_j,\widetilde B_j$ and phases $\phi_j$ satisfying
$\|B_j-e^{i\phi_j}\widetilde B_j\|_{\mathrm{op}}\le\delta_j$, the
usual telescoping expansion gives
\[
 \left\|B_{N_{\mathrm{gate}}}\cdots B_1
       -e^{i\sum_j\phi_j}\widetilde B_{N_{\mathrm{gate}}}\cdots\widetilde B_1
 \right\|_{\mathrm{op}}
 \le \sum_{j=1}^{N_{\mathrm{gate}}}\delta_j.
\]
Choosing $\delta=\varepsilon/N_{\mathrm{gate}}$ for the one-qubit factors proves the
generator estimate. This uses Solovay--Kitaev only in the fixed group
$SU(2)$, not in a group whose dimension grows with $M$.

\emph{A braid word.}
For a word of length $L$, compile each required generator to error
$\eta=\varepsilon/L$, and use the adjoint word for its inverse;
inverse-closure preserves the gate alphabet and the error bound.
Use the same compiled generator for repeated occurrences. The same
telescoping inequality bounds the word error by $L\eta$, and multiplying
the per-generator size by $L$ gives the asserted bound. No ancillary
qubits or new CNOT gates are introduced by the one-qubit replacements.
\end{proof}

\paragraph{Improved exponents.}
The standard exponent $\alpha_{\mathrm{SK}}$ can be improved using
stronger one-qubit approximation results. If a fixed one-qubit alphabet
has worst-case approximation length $F_{\mathcal G}(\delta)$, the same
proof gives size $O(N_{\mathrm{gate}}[1+F_{\mathcal G}(\varepsilon/N_{\mathrm{gate}})])$, where
$N_{\mathrm{gate}}=O(M)$ or $O(LM)$. In particular, Kuperberg's improved
Solovay--Kitaev theorem \cite[Theorem~1.1]{KuperbergSK} allows
$\alpha_{\mathrm{SK}}$ in Corollary~\ref{cor:syn-finite-precision} to
be replaced by any fixed
\[
 \alpha>\frac{\log 2}{\log((1+\sqrt5)/2)}=1.44042\ldots.
\]
In this version the constants may also depend on $\alpha$.

\paragraph{What the bound covers.}
The distance is projective on the \emph{entire} padded register: phases
relative to its padding are included, while one overall phase is not.
Thus a controlled use of the whole compiled operator requires a separate
relative-phase convention and is not guaranteed by this corollary.
The approximants need not obey the braid relations exactly, preserve the
exact reflection length, or fix the padding pointwise. If $P$ is the
orthogonal projection onto the valid labels, then
\[
 \|(I-P)\widetilde U(w)P\|_{\mathrm{op}}\le\varepsilon,
\]
since $\widehat U(w)$ preserves those labels; the resulting leakage
probability is at most $\varepsilon^2$ for a unit input in their span.

\paragraph{Classical preprocessing cost.}
For the chosen seed and $l$, constructing $H_l$, its Cholesky factor
(an $O(M^3)$ arithmetic procedure), block eigenvectors, and rotation
angles is classical preprocessing. Conversion from a preassigned
physical encoding is a separate operation. We bound neither the bit
complexity of obtaining these data nor conditioning uniformly near a
boundary. An effective finite-alphabet compiler requires target and
alphabet entries to certified precision: splitting each local error
budget between a certified unitary target approximation and its
synthesis preserves the proof up to constants.
Section~\ref{subsec:syn-error-margins} supplies conditional error budgets
and positivity margins; Section~\ref{app:syn-counts} measures words with
arbitrary-angle letters.

After preprocessing, the same blocks apply one generator to a
classically stored state vector in $O((2N)^2+4(n-2))=O(M)$ arithmetic
operations. A word of length $L$ thus takes $O(LM)$ operations and
$O(M)$ state-vector memory.

\paragraph{Comparison with prior synthesis results.}
The common-block construction above supplies the supports to which
we apply the established preparation and reflection methods.

General sparse-state bounds, including ancilla tradeoffs, are given in
\cite[Theorems~1, 2, 7]{LiLuo}. Its ancilla-free Theorem~1 uses the same
one-qubit/CNOT model as Theorem~\ref{thm:syn-cost}.
The Toffoli-cost bounds of \cite{RupprechtWoelk} use a different
cost measure. A \emph{Toffoli gate} is the
three-qubit gate that flips its target bit exactly when both control
bits equal $1$. For example, the general
ancilla-free preparation bound $O(ms/\log m+m)$ in
\cite[Theorem~1]{LiLuo} for $s$ nonzero amplitudes
would give $O(n\log n/\log\log n)$ at $s=O(n)$ here. Our two-slot
support permits $O(n)$ preparation without ancillas by an explicit
$O(m)$ affine compression. With more ancillary space, general sparse
methods can also attain this order. Here the common block structure
of the KLM family supplies the special support used by those primitives.

Likewise, sparse-unitary oracle algorithms \cite{JordanWocjan} and
quantum implementations of Jones path-model braids \cite{AJL} predate
this work. The former assume sparsity and entry/position access with
specified costs; our proof constructs explicit elementary words.
The bound $O(M)$ is exponential in
the number of data qubits and is not polynomial-time braid simulation
in the same sense as the latter. Relative to treating the whole matrix
as dense, the upper bound improves the quadratic-in-dimension order
of generic synthesis \cite{SBM,KrolAlArs}. These are upper-bound
comparisons; Section~\ref{app:matched-benchmark} reports a matched-resource
experiment against an independently implemented one-qubit-stopping
\emph{quantum Shannon decomposition} (QSD) baseline. QSD recursively
factors a multiqubit unitary into controlled rotations and smaller
unitaries \cite{SBM}; the variant used in that experiment is specified
there.

\subsection{Reflection error budgets and positivity margins}
\label{subsec:syn-error-margins}
The finite-alphabet bound starts with specified unitary targets. We now
separate perturbations of their reflection data from the numerical
conditioning of the common positive form using elementary norm
estimates.
All operator norms in this subsection are spectral norms.

\begin{lemma}[Phase and direction errors]\label{lem:syn-reflection-error}
For orthogonal projections $P,\widetilde P$ on the same space and real
$\theta,\widetilde\theta$, put
$R(P,\theta)=I+(e^{i\theta}-1)P$. Then
\[
 \|R(P,\theta)-R(\widetilde P,\widetilde\theta)\|_{\mathrm{op}}
 \le 2\|P-\widetilde P\|_{\mathrm{op}}
       +|e^{i\theta}-e^{i\widetilde\theta}|.
\]
For unit vectors $v,\widetilde v$ and their rank-one projections,
\[
 \|vv^\dagger-\widetilde v\widetilde v^\dagger\|_{\mathrm{op}}
   =\sqrt{1-|v^\dagger\widetilde v|^2}.
\]
Consequently products of $r$ such factors, with corresponding projection
errors at most $\eta_j$ and phase chord errors at most $\zeta_j$, differ
in operator norm by at most $\sum_{j=1}^r(2\eta_j+\zeta_j)$.
The approximate projections need not be mutually orthogonal.
\end{lemma}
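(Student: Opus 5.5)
The proof has three parts. First I bound the difference of two phase reflections by splitting it, then I compute the distance between two rank-one projections from their two-dimensional span, and finally I telescope over a product.

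\emph{Difference of two reflections.} Write
\[
R(P,\theta)-R(\widetilde P,\widetilde\theta)=(e^{i\theta}-1)(P-\widetilde P)+(e^{i\theta}-e^{i\widetilde\theta})\widetilde P.
\]
The first term has norm at most $|e^{i\theta}-1|\,\|P-\widetilde P\|_{\mathrm{op}}\le2\|P-\widetilde P\|_{\mathrm{op}}$, since $|e^{i\theta}-1|\le2$. The second term has norm at most $|e^{i\theta}-e^{i\widetilde\theta}|$, since $\|\widetilde P\|_{\mathrm{op}}\le1$. This proves the first inequality. Projections are the only input here; no orthogonality between them is needed.

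\emph{Distance between rank-one projections.} The operator $vv^\dagger-\widetilde v\widetilde v^\dagger$ is Hermitian and vanishes on $\operatorname{span}(v,\widetilde v)^\perp$. If $v,\widetilde v$ are parallel, both sides of the claimed identity are zero. Otherwise the operator acts on a two-dimensional space, where its trace is $1-1=0$. Hence its eigenvalues are $\pm\mu$, and $\|\cdot\|_{\mathrm{op}}=\mu$. I compute $\mu$ from the Hilbert--Schmidt norm:
\[
2\mu^2=\operatorname{tr}\bigl(vv^\dagger-\widetilde v\widetilde v^\dagger\bigr)^2=1+1-2|v^\dagger\widetilde v|^2 .
\]
This gives $\mu=\sqrt{1-|v^\dagger\widetilde v|^2}$.

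\emph{Products.} For unitaries $B_j,\widetilde B_j$ I use the telescoping identity
\[
B_r\cdots B_1-\widetilde B_r\cdots\widetilde B_1=\sum_{j}B_r\cdots B_{j+1}(B_j-\widetilde B_j)\widetilde B_{j-1}\cdots\widetilde B_1 ,
\]
the same one used in Corollary~\ref{cor:syn-finite-precision}. Each $R(P,\theta)$ is unitary for every orthogonal projection $P$ and real $\theta$. This holds even when the approximate projections are not mutually orthogonal, because unitarity concerns each factor separately. The outer factors therefore have norm one, and the $j$th term is bounded by $2\eta_j+\zeta_j$ using the first part.

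The only mildly delicate point is the two-dimensional trace argument for the exact identity; the rest is routine.
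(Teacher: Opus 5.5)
Your proposal is correct and follows the paper's proof essentially step for step. You use the same splitting $(e^{i\theta}-1)(P-\widetilde P)+(e^{i\theta}-e^{i\widetilde\theta})\widetilde P$, the same restriction to $\operatorname{span}(v,\widetilde v)$ for the rank-one identity, and the same telescoping over unitary factors; your trace-zero and Hilbert--Schmidt computation of the eigenvalues $\pm\sqrt{1-|v^\dagger\widetilde v|^2}$ simply fills in a detail the paper leaves implicit.
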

\begin{proof}
Subtract the factors as
\[
 (e^{i\theta}-1)(P-\widetilde P)
 +(e^{i\theta}-e^{i\widetilde\theta})\widetilde P
\]
and use $|e^{i\theta}-1|\le2$ and $\|\widetilde P\|_{\mathrm{op}}\le1$.
The rank-one formula follows by restricting to the span of the two
vectors; the nonzero eigenvalues of the difference are
$\pm\sqrt{1-|v^\dagger\widetilde v|^2}$. Every perturbed factor is
unitary, so telescoping the product gives the last assertion.
\end{proof}

For the $2n$ factors in Corollary~\ref{cor:syn-directions}, uniform
budgets $\eta_j\le\eta$ and $\zeta_j\le\zeta$ therefore give an error
at most $2n(2\eta+\zeta)$ per generator. A phase error of at most
$\zeta$ radians, measured modulo $2\pi$, suffices for the chord bound.
If a word of length $L$ uses these perturbed products (and their exact
adjoints for inverse generators), and finite-alphabet compilation of
that word adds projective error at most $\varepsilon_{\mathrm{comp}}$,
then
\[
 d_{\mathrm{proj}}(\widehat U(w),\widetilde U(w))
 \le 2nL(2\eta+\zeta)+\varepsilon_{\mathrm{comp}}.
\]
For example, allocating $\eta\le\varepsilon/(16nL)$,
$\zeta\le\varepsilon/(8nL)$, and
$\varepsilon_{\mathrm{comp}}\le\varepsilon/2$ makes the total at most
$\varepsilon$. To retain the small-support synthesis bound, the
approximate unit directions must stay in their prescribed coordinate
blocks; the norm estimate itself does not require that restriction.
An independently measured base-factorization error must also be added
when numerical factors, rather than an exact spectral product, are used.

The projection formulation removes the arbitrary phase of an eigenvector.
At a repeated eigenvalue, one may instead apply the same estimate to its
whole spectral projection. Indeed $R(P,\theta)$ for a higher-rank
projection is the product of the rank-one phase factors in any orthonormal
basis of its range. No continuity or uniqueness of individual numerical
eigenvectors is assumed. The lemma does not, by itself, bound projection
errors from rounded seed data or matrix entries.

\begin{definition}[Fixed-coordinate positive-definiteness margin]\label{def:pd-margin}
Keep the unitary-seed, slot-major coordinates in
Theorem~\ref{thm:syn-blocks}, before the final conjugation by $\mathsf{C}_l$.
For $H=H_l>0$ define
\[
 h_{\min}=\lambda_{\min}(H),\quad h_{\max}=\lambda_{\max}(H),\quad
 \mu_{\mathrm{pos}}(H)=\frac{h_{\min}}{h_{\max}},\quad
 \kappa_2(H)=\frac1{\mu_{\mathrm{pos}}(H)}.
\]
Thus $\mu_{\mathrm{pos}}$ is the smallest eigenvalue after normalizing
$H$ to spectral norm one. It is invariant under positive scalar
rescaling, but not under arbitrary nonunitary coordinate changes.
\end{definition}
After conjugation the form is $I$; its condition number would not measure
the conditioning of the preprocessing problem. General perturbation
analysis for Hermitian Cholesky factors is developed in \cite{DOVCholesky};
we use only the following elementary estimates.

\begin{proposition}[Positive-form and residual margins]
\label{prop:syn-positivity-margin}
For $E=E^\dagger$ and $e=\|E\|_{\mathrm{op}}<h_{\min}$,
\[
 H+E>0,\qquad
 \lambda_{\min}(H+E)\ge h_{\min}-e,\qquad
 \kappa_2(H+E)\le\frac{h_{\max}+e}{h_{\min}-e}.
\]
In particular, a relative error $e/h_{\max}<\mu_{\mathrm{pos}}(H)$
preserves positivity. For an a posteriori version, let
$\widehat H=\widehat H^\dagger$, let $\widehat{\mathsf{C}}$ be nonsingular, and
suppose bounds are available with
\[
 \|H-\widehat H\|_{\mathrm{op}}\le e_{\mathrm{data}},\quad
 \|\widehat{\mathsf{C}}^\dagger\widehat{\mathsf{C}}-\widehat H\|_{\mathrm{op}}
       \le\rho_{\mathrm{fac}},\quad
 0<s_\star\le\sigma_{\min}(\widehat{\mathsf{C}}).
\]
Then
\[
 \lambda_{\min}(H)
 \ge s_\star^2-\rho_{\mathrm{fac}}-e_{\mathrm{data}}.
\]
A strictly positive right-hand side certifies positivity, even without
assuming it in advance. A Frobenius-norm upper bound may replace the
spectral-norm factor residual.
\end{proposition}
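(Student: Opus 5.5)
The first part follows from Weyl's inequality. The a posteriori bound then follows from a Courant--Fischer argument plus the triangle inequality.

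\emph{First part.} Since $E$ is Hermitian with $\|E\|_{\mathrm{op}}=e$, we have $-eI\le E\le eI$ in the Loewner order. Hence
\[
 \lambda_{\min}(H+E)\ge h_{\min}-e>0,\qquad
 \lambda_{\max}(H+E)\le h_{\max}+e .
\]
This gives $H+E>0$ and the stated bound on $\kappa_2(H+E)$, the ratio of the largest to the smallest eigenvalue. The relative version is immediate: $e/h_{\max}<\mu_{\mathrm{pos}}(H)=h_{\min}/h_{\max}$ is the same as $e<h_{\min}$.

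\emph{A posteriori part.} For every unit vector $v$ we have
\[
 v^\dagger\widehat{\mathsf C}^\dagger\widehat{\mathsf C}v=\|\widehat{\mathsf C}v\|^2\ge\sigma_{\min}(\widehat{\mathsf C})^2\ge s_\star^2 .
\]
Write
\[
 H=\widehat{\mathsf C}^\dagger\widehat{\mathsf C}
 -\bigl(\widehat{\mathsf C}^\dagger\widehat{\mathsf C}-\widehat H\bigr)
 +(H-\widehat H).
\]
Taking $v^\dagger(\cdot)v$ and bounding each correction term by its operator norm gives
\[
 v^\dagger Hv\ge s_\star^2-\rho_{\mathrm{fac}}-e_{\mathrm{data}} .
\]
Minimizing over unit $v$ yields the bound on $\lambda_{\min}(H)$; here $H$ and $\widehat H$ are Hermitian, so the Rayleigh quotient characterizes $\lambda_{\min}$. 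This argument never assumes positivity of $H$, so a strictly positive right-hand side certifies $H>0$. Finally, $\|X\|_{\mathrm{op}}\le\|X\|_F$, so a Frobenius bound on the factor residual may replace the spectral one.

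The argument has no real obstacle. The one point to watch is that the residual $\widehat{\mathsf C}^\dagger\widehat{\mathsf C}-\widehat H$ is Hermitian, which holds because $\widehat H$ is assumed Hermitian. The Rayleigh bound $|v^\dagger Xv|\le\|X\|_{\mathrm{op}}$ holds for any $X$ in any case, so nothing further is needed.
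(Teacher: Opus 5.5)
Your proof is correct and follows the paper's argument. The first part uses Rayleigh-quotient (equivalently Loewner/Weyl) bounds, and the a posteriori part uses the same decomposition $H=\widehat{\mathsf C}^\dagger\widehat{\mathsf C}-(\widehat{\mathsf C}^\dagger\widehat{\mathsf C}-\widehat H)+(H-\widehat H)$ with termwise Rayleigh bounds, plus $\|\cdot\|_{\mathrm{op}}\le\|\cdot\|_F$ for the Frobenius variant, which the paper leaves implicit.
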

\begin{proof}
For every unit vector $x$, $|x^\dagger Ex|\le e$. Taking the minimum
and maximum Rayleigh quotients proves the first assertions. For the
last assertion, write $H=\widehat{\mathsf{C}}^\dagger\widehat{\mathsf{C}}+
(H-\widehat H)-(\widehat{\mathsf{C}}^\dagger\widehat{\mathsf{C}}-\widehat H)$ and apply
the same argument.
\end{proof}

A certified bound for exact transcendental input data requires
reliable enclosures for $e_{\mathrm{data}}$,
$\rho_{\mathrm{fac}}$, and $s_\star$, including rounding errors in their
evaluation; successful floating-point Cholesky alone does not supply
these enclosures. Section~\ref{app:syn-margins} reports numerical
diagnostics and higher-precision cross-checks.

\paragraph{Unitarity defects from factor and invariance residuals.}
For the displayed matrices, in exact arithmetic, put
\[
 E_{\mathrm{fac}}=\widehat{\mathsf{C}}^\dagger\widehat{\mathsf{C}}-\widehat H,\quad
 E_{\mathrm{inv}}=\widehat A^\dagger\widehat H\widehat A-\widehat H,
 \qquad V=\widehat{\mathsf{C}}\widehat A\widehat{\mathsf{C}}^{-1}.
\]
Then
\[
 V^\dagger V-I=\widehat{\mathsf{C}}^{-\dagger}
 (E_{\mathrm{inv}}+\widehat A^\dagger E_{\mathrm{fac}}\widehat A
                  -E_{\mathrm{fac}})\widehat{\mathsf{C}}^{-1},
\]
so
\[
 \|V^\dagger V-I\|_{\mathrm{op}}
 \le\frac{\|E_{\mathrm{inv}}\|_{\mathrm{op}}+
 (\|\widehat A\|_{\mathrm{op}}^2+1)\|E_{\mathrm{fac}}\|_{\mathrm{op}}}
 {\sigma_{\min}(\widehat{\mathsf{C}})^2}.
\]
Thus the inverse margin matters in addition to the residuals. Forming
$V$ numerically introduces further triangular-solve and multiplication
errors. This estimate measures a unitarity defect, not distance to the
intended generator, and independent polar corrections would not ensure
the braid relations.

\begin{proposition}[Endpoint rates for a fixed seed]
\label{prop:syn-endpoint-rates}
Fix $n\ge2$, $(a,c)\in\mathcal T_n$ and the unitary-seed coordinates
used above, and put $l_{\mathrm f}=l_{\mathrm{first}}$ as defined in
Theorem~\ref{thm:twisted-burau}. Extend the family continuously to the
endpoints and write $H_0^{\mathrm{end}}:=\lim_{l\downarrow0}H_l$. Then
\[
 H_0^{\mathrm{end}}=D^\dagger D,\qquad
 \operatorname{rank}H_0^{\mathrm{end}}=n+1,\qquad
 \operatorname{rank}H_{l_{\mathrm f}}=M-1,
\]
and $H_0^{\mathrm{end}}+H_{l_{\mathrm f}}>0$. For $0<l<l_{\mathrm f}$,
\[
 H_l=
 \frac{\sin(\pi(l_{\mathrm f}-l))}{\sin(\pi l_{\mathrm f})}H_0^{\mathrm{end}}
 +\frac{\sin(\pi l)}{\sin(\pi l_{\mathrm f})}H_{l_{\mathrm f}}.
\]
Consequently, with constants allowed to depend on the fixed seed,
dimension and coordinates,
\begin{align*}
 \lambda_{\min}(H_l)&=\Theta(l),&
 \kappa_2(H_l)&=\Theta(l^{-1})&& (l\downarrow0),\\
 \lambda_{\min}(H_l)&=\Theta(l_{\mathrm f}-l),&
 \kappa_2(H_l)&=\Theta((l_{\mathrm f}-l)^{-1})
 &&(l\uparrow l_{\mathrm f}).
\end{align*}
The continuous positive margin $\mu_{\mathrm{pos}}(H_l)$ attains its
maximum at some point strictly inside this interval.
\end{proposition}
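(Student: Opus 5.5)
The plan rests on the linear-combination formula of Remark~\ref{rem:pencil}. That remark gives
\[
H_l=\cos(\pi l)\,D^\dagger D+\sin(\pi l)\,B .
\]
Each block of \eqref{eq:blocks} is a combination of $\ee{\pm l/2}$, so the family $H_l$ extends continuously (indeed analytically) to $l\in[0,1]$. Letting $l\downarrow0$ gives $H_0^{\mathrm{end}}=D^\dagger D$ directly. Its rank equals $\operatorname{rank}D$. By Remark~\ref{rem:splitting}, $DF=P_n-I$ is invertible on $\mathcal T_n$, so $D$ is surjective onto $V$ and $\operatorname{rank}H_0^{\mathrm{end}}=N=n+1$.

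At $l=l_{\mathrm f}$ the parameter is exceptional with $r(l_{\mathrm f})=1$ by Theorem~\ref{thm:twisted-burau}(2), and $K=0$. Theorem~\ref{thm:klm-form} says the radical of $\Ht$ is $K+L_\lambda=L_\lambda$, which has dimension $r=1$. Hence $\operatorname{rank}H_{l_{\mathrm f}}=M-1$. For positivity, Theorem~\ref{thm:import-sig} at the wall gives quotient signature $(R-m,m-r)$ with $m=r=1$, so the quotient form is positive definite and $H_{l_{\mathrm f}}\ge0$. Its kernel is the line $L_\lambda=\iota_\lambda(\Ker(\lambda P_n-I))$. For $0\neq v\in\Ker(\lambda P_n-I)$ we have
\[
D\iota_\lambda v=DFv=(P_n-I)v=(\lambda^{-1}-1)v\neq0 .
\]
So $D^\dagger D$ is strictly positive on that kernel. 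Since both summands are positive semidefinite and their kernels meet trivially, $H_0^{\mathrm{end}}+H_{l_{\mathrm f}}>0$.

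The interpolation formula is a trigonometric identity. The space $\operatorname{span}\{\cos\pi l,\sin\pi l\}$ is two-dimensional, so any element is determined by its values at $0$ and $l_{\mathrm f}$. The function $\sin(\pi(l_{\mathrm f}-l))/\sin(\pi l_{\mathrm f})$ takes value $1$ at $0$ and $0$ at $l_{\mathrm f}$, and $\sin(\pi l)/\sin(\pi l_{\mathrm f})$ does the reverse; here $\sin\pi l_{\mathrm f}\ne0$ because $0<l_{\mathrm f}<1$. Writing
\[
H_l=\alpha(l)H_0^{\mathrm{end}}+\beta(l)H_{l_{\mathrm f}},
\]
both coefficients are positive on $(0,l_{\mathrm f})$.

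For the rates, take $h>0$ with
\[
H_0^{\mathrm{end}}+H_{l_{\mathrm f}}\ge h I .
\]
Then $H_l\ge\min(\alpha,\beta)\,hI$, and $\min(\alpha,\beta)$ is $\Theta(l)$ near $0$ and $\Theta(l_{\mathrm f}-l)$ near $l_{\mathrm f}$. For the matching upper bound near $0$, take a unit vector $x\in\Ker D$, which is nonzero since $M>N$. Then
\[
x^\dagger H_lx=\beta(l)\,x^\dagger H_{l_{\mathrm f}}x=O(l) .
\]
Near $l_{\mathrm f}$, take $x$ spanning $L_\lambda$; this gives $O(l_{\mathrm f}-l)$. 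Meanwhile $\lambda_{\max}(H_l)$ stays bounded above and below by positive constants, since $H_l\to H_0^{\mathrm{end}}\neq0$ or $H_l\to H_{l_{\mathrm f}}\ne0$ at the ends. This gives the $\kappa_2$ rates.

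Finally, $\mu_{\mathrm{pos}}(H_l)$ is continuous on $[0,l_{\mathrm f}]$, vanishes at both endpoints (both endpoint matrices are singular), and is positive inside. Hence its maximum is attained in the interior.

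The step I expect to need the most care is the wall analysis: identifying the exact kernel of $H_{l_{\mathrm f}}$ and its semidefiniteness from the wall signature statement. The fallback for semidefiniteness is continuity: $H_l>0$ on the open interval, so the limit is $\ge0$.
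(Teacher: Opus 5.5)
Your proposal is correct and follows essentially the same route as the paper: the pencil of Remark~\ref{rem:pencil}, elimination of $B$ between $l$ and $l_{\mathrm f}$ to get the interpolation, the lower bound $\min(\alpha_l,\beta_l)\,\lambda_{\min}(H_0^{\mathrm{end}}+H_{l_{\mathrm f}})$ together with test vectors in the two endpoint kernels for the rates, and compactness for the interior maximizer. The only small difference is how you prove $H_0^{\mathrm{end}}+H_{l_{\mathrm f}}>0$: you compute directly that $D\iota_\lambda v=(\lambda^{-1}-1)v\neq0$ on the wall radical, whereas the paper notes that a common kernel vector would lie in $\Ker H_l$ for interior $l$, contradicting $H_l>0$.
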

\begin{proof}
\emph{Interpolation.}
The endpoint ranks follow from surjectivity of $D$ and the
one-dimensional radical at the first wall. Eliminate $B$ from
Remark~\ref{rem:pencil} evaluated at $l$ and at $l_{\mathrm f}$ to
obtain the interpolation formula. Both coefficients are positive.
The endpoint forms are positive semidefinite; a common kernel vector
would therefore belong to $\Ker H_l$, contradicting $H_l>0$.
Thus $G_{\mathrm e}:=H_0^{\mathrm{end}}+H_{l_{\mathrm f}}>0$.

\emph{Endpoint bounds.}
Write the two interpolation coefficients as $\alpha_l,\beta_l$.
The lower bound
\[
 \lambda_{\min}(H_l)\ge
 \min(\alpha_l,\beta_l)\lambda_{\min}(G_{\mathrm e})
\]
holds throughout the interval. A unit vector in $\Ker H_0^{\mathrm{end}}$ gives
$\lambda_{\min}(H_l)\le\beta_l\|H_{l_{\mathrm f}}\|_{\mathrm{op}}$;
a unit vector in $\Ker H_{l_{\mathrm f}}$ gives
$\lambda_{\min}(H_l)\le\alpha_l\|H_0^{\mathrm{end}}\|_{\mathrm{op}}$.
Now $\beta_l=\Theta(l)$ at the left endpoint and
$\alpha_l=\Theta(l_{\mathrm f}-l)$ at the right endpoint, while
$\lambda_{\max}(H_l)$ tends to a positive limit at each endpoint.
These estimates prove the assertions on the eigenvalue and condition
number.

\emph{Interior maximizer.}
The margin extends continuously by zero at both endpoints
and is positive in the interior, so compactness gives an interior
maximizer.
\end{proof}

The two degenerations have different meanings: $l=0$ is the excluded
parameter $\lambda=1$, with kernel dimension $n^2-1$, whereas
$l=l_{\mathrm f}$ is a genuine spectral wall with a one-dimensional
kernel. These rates describe the conditioning of this positive-form
preprocessing, not loss of unitarity of the exact generators in the
open interval. Their simultaneous unitary-equivalence class remains
fixed by Theorem~\ref{thm:gauge}. Maximizing the margin is a possible
criterion for selecting $l$, not a unique choice or a guarantee
that the midpoint is optimal. The parameter $l$ and common encoding
must be fixed before compilation; no uniform-in-seed or uniform-in-$n$
conditioning bound is asserted.

\section{Numerical records: gate counts, conditioning, and a matched comparison}
\label{app:syn-counts}
This section evaluates words in one-qubit unitaries and CNOT gates,
using the conventions of Definition~\ref{def:elementary-generators}.
We count their CNOT letters; this is distinct from both the total number
of factors and the minimum length over all words representing the target
(Table~\ref{tab:circuit-dictionary}).

The two comparisons use different resource models.
Section~\ref{subsec:resource-counts} records an unoptimized implementation
on $m$ data qubits and $m-2$ clean ancillary qubits, returned to zero:
these are words over $\mathcal G_{2m-2}$. They are compared with published
ancilla-free upper bounds. Section~\ref{app:matched-benchmark} runs two
reference compilers on the same padded matrices in the same coordinates,
using words over $\mathcal G_m$ on the $m$ data qubits alone.
Section~\ref{app:syn-margins} measures conditioning of the positive form.
Exported angles, matrices, timings and error measurements are numerical
data; higher-precision checks are not validated interval enclosures.
We do not benchmark the finite-alphabet bound of
Corollary~\ref{cor:syn-finite-precision} or report a $T$-count
(Definition~\ref{def:finite-alphabet}). File inventories and reproduction
procedures are collected in Appendix~\ref{app:repro}.

\subsection{CNOT counts of the constructed words}
\label{subsec:resource-counts}
The one-qubit rotations used by the state-preparation routine are
\[
 R_y(\theta)=\begin{pmatrix}
 \cos(\theta/2)&-\sin(\theta/2)\\
 \sin(\theta/2)&\cos(\theta/2)
 \end{pmatrix},\qquad
 R_z(\theta)=\operatorname{diag}(e^{-i\theta/2},e^{i\theta/2}).
\]
A \emph{uniformly controlled} rotation applies a separate angle for
each computational-basis value of the control register; its matrix is
$\sum_r e_re_r^\dagger\otimes R_y(\theta_r)$, or the corresponding
$R_z$ expression. The \emph{magnitude tree} recursively uses these
$R_y$ rotations to distribute the squared amplitudes between binary
branches. The \emph{phase tree} uses controlled $R_z$ rotations to
set the relative phases. This magnitude--phase construction uses
the uniformly controlled rotations of
\cite{MottonenStatePreparation}; see also \cite[\S~IV, Theorem~9]{SBM}.
Our implementation retains the separate controlled-rotation circuits,
without the CNOT cancellations between the magnitude and phase
stages described in \cite{MottonenStatePreparation}.
On $k$ qubits it uses $2^{k+1}-4$ CNOTs, retaining all angles.
We also write \emph{CX} for CNOT. Define
\[
 h_s(i)=\operatorname{popcount}((i-1)\mathbin\oplus i)-1,\qquad
 h_b(i)=\operatorname{popcount}(i\mathbin\oplus(i+1))-1,
\]
where $\oplus$ is bitwise XOR and $\operatorname{popcount}$ counts
one bits. To implement an all-zero phase on $m$ bits, compute the
conjunction of $m-1$ controls with $m-2$ Toffoli gates, apply a
controlled phase using two CNOTs, and uncompute. Selecting zero
controls uses X gates. Six CNOTs per exact Toffoli gives
$12(m-2)+2$ CNOTs for this step. The resulting count for generator $i$
is
\[
\begin{split}
 N_{\mathrm{CX}}(i)={}&2(n+2)(2^{b+2}-4+h_s(i))\\
 &+2(n-2)h_b(i)+2n(12(m-2)+2).
\end{split}
\]
Both preparation and unpreparation, and all controlled phases, are
included. The exported circuit retains adjacent inverse pairs and
zero-angle rotations; its count is before routing to a restricted
connectivity architecture.

The following are exported CNOT counts, over all generators of each
listed rank. The last two columns are \emph{published generic upper
bounds}: the QSD formula of \cite[Eq.~(19), Table~I]{SBM} and the later
\emph{block-ZXZ} (BZXZ) bound of \cite[Section~V]{KrolAlArs}.
BZXZ is an alternative recursive decomposition into block factors and
controlled rotations. These columns evaluate the published formulas
at the register size, rather than compiling the present matrices.
For the same $m$ data bits they are respectively
\[
 \frac{23}{48}4^m-\frac32 2^m+\frac43,\qquad
 \frac{22}{48}4^m-\frac32 2^m+\frac53.
\]
\begin{center}\small
\begin{tabular}{@{}rrcrrr@{}}
\toprule
$n$ & $M$ & Data/anc. & Produced CX & QSD bound & BZXZ bound\\
\midrule
2 & 6 & 3/1 & 152 & 20 & 19\\
3 & 12 & 4/2 & 278--286 & 100 & 95\\
4 & 20 & 5/3 & 644--652 & 444 & 423\\
5 & 30 & 6/4 & 898--920 & 1868 & 1783\\
6 & 42 & 6/4 & 1056--1080 & 1868 & 1783\\
8 & 72 & 7/5 & 2204--2232 & 7660 & 7319\\
10 & 110 & 8/6 & 2936--2992 & 31020 & 29655\\
12 & 156 & 8/6 & 3476--3540 & 31020 & 29655\\
\bottomrule
\end{tabular}
\par\smallskip
\begin{minipage}{0.95\linewidth}\footnotesize
Data/anc.: numbers of data and clean ancillary qubits;
CX: CNOT; QSD: quantum Shannon decomposition;
BZXZ: block-ZXZ decomposition. Produced CX counts the CNOT factors
in the exported words. The QSD 2006 and BZXZ 2024 columns are generic
ancilla-free upper bounds, evaluated at the same number of data qubits.
The QSD bound includes the optimal three-CNOT two-qubit base case
and both additional optimizations in the Appendix of \cite{SBM}.
\end{minipage}
\end{center}
The generic bounds require no ancillas, whereas the exported counts
use the clean ancillas shown. The comparisons therefore do not
establish a matched-resource advantage.
For $n=2,3,4$ the unoptimized CNOT counts exceed both generic
bounds. Sparse-state methods \cite{LiLuo,RupprechtWoelk} are not
benchmarked by this table. The ancilla-free result of
\cite[Theorem~1]{LiLuo} uses the same one-qubit/CNOT model; the
family-specific distinction is the prescribed two-slot support of the
reflection directions. The bounds in \cite{RupprechtWoelk} use Toffoli
cost instead (Section~\ref{subsec:syn-elementary}).

\paragraph{Parameters and verification scope.}
The main cases use
\[
 a=\frac{\sqrt2}{16(n+1)},\qquad c=\frac{\sqrt3}{16n},\qquad
 l=\frac{1-nc-(n+1)a}{3}.
\]
The independent script checks all 42 generators for
$n=2,3,4,5,6,8,10,12$. It also checks the coincident-eigenvalue cases
$a=c$ at $n=2,3,4,6$, and verifies the identities for the reduced Burau Hermitian matrix and
reflections symbolically for $n=2,\ldots,6$.
The main-case maximum Frobenius residual for the predicted block
pattern is $5.19\times10^{-15}$, and for the reflection product it is $7.11\times10^{-15}$.
All six complete elementary circuits for $n\le4$ are simulated on
every input basis column, including unused padding labels. Their
maximum Frobenius error is $6.35\times10^{-14}$, and the norm of amplitude in
nonzero ancillary states is at most $2.81\times10^{-15}$.
Larger cases verify the matrix factorization and the preparation of
each direction, not every column of the complete ancillary circuit.

\subsection{Conditioning and endpoint diagnostics}
\label{app:syn-margins}
We measure the conditioning of $H_l$ in the fixed unitary-seed,
slot-major coordinates, before the final Cholesky conjugation
by $\mathsf C_l$. The numerical assembly uses the sine formula in
Lemma~\ref{lem:syn-seed} and the free-generator formulas in
\eqref{eq:syn-free-generators}.
The trivial seed coordinate is first, followed by the reduced seed
coordinates, as in Theorem~\ref{thm:syn-blocks}. The implementation
is \texttt{anc/error\_margins/verify\_error\_margins.py}.
The script keeps the Hermitian part of the floating-point assembly,
records its initial Hermiticity defect, and applies upper Cholesky.

In the following tables, floating-point quantities approximate their
exact counterparts. Write
\[
 \rho_F=\frac{\|\widehat{\mathsf{C}}^\dagger\widehat{\mathsf{C}}-\widehat H\|_F}
                    {\|\widehat H\|_{\mathrm{op}}},\qquad
 d_U=\max_i\|V_i^\dagger V_i-I\|_{\mathrm{op}},\quad
 V_i=\widehat{\mathsf{C}}\widehat A_i\widehat{\mathsf{C}}^{-1}.
\]
The values of $d_U$ measure the unitarity defect after the numerical
triangular solves. The quantity $\rho_F$ measures the Cholesky
factorization residual; it does not bound the distance to the exact
unitary target. No polar correction is applied. At the main parameters specified in
Section~\ref{subsec:resource-counts}, the selected cases are:
\begin{center}\small
\begin{tabular}{rrrrrr}
\toprule
$n$ & $M$ & $h_{\min}$ & $\mu_{\mathrm{pos}}$ & $\kappa_2(H)$ & $\rho_F$\\
\midrule
2 & 6 & $4.59\times10^{-1}$ & 0.2912 & 3.4338 & $1.75\times10^{-16}$ \\
4 & 20 & $2.19\times10^{-1}$ & 0.2411 & 4.1470 & $2.30\times10^{-16}$ \\
8 & 72 & $1.08\times10^{-1}$ & 0.2223 & 4.4983 & $3.85\times10^{-16}$ \\
12 & 156 & $7.21\times10^{-2}$ & 0.2167 & 4.6149 & $6.67\times10^{-16}$ \\
\bottomrule
\end{tabular}
\end{center}
This finite sample has moderate KLM conditioning. Absolute $h_{\min}$
depends on scale, whereas the
normalized margin does not. The condition number of the input Hermitian matrix is recorded
separately in the machine-readable results and is not substituted for
the condition number of $H_l$.

To vary $l$ while keeping the seed fixed, take $n=4$,
$a=0.04$, $c=0.07$, so $l_{\mathrm{first}}=0.52$, and write
$t=l/l_{\mathrm{first}}$. Selected values from the twelve-point sweep
are:
\begin{center}\small
\begin{tabular}{rrrrr}
\toprule
$t$ & $\mu_{\mathrm{pos}}$ & $\kappa_2(H)$ & $\rho_F$ & $d_U$\\
\midrule
$10^{-8}$ & $4.59\times10^{-9}$ & $2.18\times10^{8}$ & $1.52\times10^{-16}$ & $8.76\times10^{-9}$ \\
$10^{-6}$ & $4.59\times10^{-7}$ & $2.18\times10^{6}$ & $1.73\times10^{-16}$ & $1.38\times10^{-10}$ \\
$10^{-2}$ & $4.53\times10^{-3}$ & $2.21\times10^{2}$ & $1.32\times10^{-16}$ & $1.02\times10^{-14}$ \\
$1/2$ & $1.15\times10^{-1}$ & $8.68\times10^{0}$ & $1.72\times10^{-16}$ & $8.54\times10^{-16}$ \\
$1-10^{-2}$ & $5.83\times10^{-3}$ & $1.72\times10^{2}$ & $2.31\times10^{-16}$ & $3.21\times10^{-15}$ \\
$1-10^{-6}$ & $5.93\times10^{-7}$ & $1.69\times10^{6}$ & $2.09\times10^{-16}$ & $2.00\times10^{-13}$ \\
$1-10^{-8}$ & $5.93\times10^{-9}$ & $1.69\times10^{8}$ & $1.46\times10^{-16}$ & $2.22\times10^{-12}$ \\
\bottomrule
\end{tabular}
\end{center}
The small Cholesky residual persists when the relative positive margin
is small. The measured defect can be substantially larger near an
endpoint, as anticipated by the inverse-singular-value factor in
Section~\ref{subsec:syn-error-margins}. Neither endpoint itself is used
in the table; their different degenerations are described after
Proposition~\ref{prop:syn-endpoint-rates}.

Distance from these definite-window endpoints alone does not control the
input boundary. At $n=4$, $a=0.05$, $l=0.2$, decreasing $c$ from $0.05$
to $10^{-7}$ changes $\kappa_2(H)$ from approximately $10.31$ to
$2.099\times10^6$, although $l_{\mathrm{first}}\to0.75$. This is
consistent with the nonzero $K$ appearing at $c=0$. Conversely, the
four $a=c$ checks have condition numbers between $4.80$ and $5.42$:
eigenvalue multiplicity of a generator does not itself force the
positive form to be ill-conditioned in these samples.

There are 31 conditioning cases: eight main cases, four $a=c$ cases,
twelve fixed-seed choices of $l$, and seven input-boundary cases.
The assembly was cross-checked against the synthesis implementation
used in Section~\ref{subsec:resource-counts} at $n=2,3,4,6$. Controlled within-block perturbations of its actual $2n$
reflection directions verify Lemma~\ref{lem:syn-reflection-error} in
four further tests. Another 27 tests use generic directions with
arbitrary vector phases and possibly nonorthogonal approximate
projections. Six Hermitian-perturbation tests verify the positive-margin
inequality and its sharpness for an adverse rank-one perturbation.

Seven small cases, including the two $10^{-8}$ endpoint distances,
were reassembled independently with 70 decimal digits. Here the
binary64 parameter inputs are treated as exact real numbers, so the
comparison isolates evaluation errors from initial parameter rounding.
The largest relative discrepancy in $h_{\min}$ was about
$4.42\times10^{-8}$, in the ill-conditioned near-wall case; in the
main $n=4$ case it was below $10^{-15}$. For the conditions on
validated data and residuals needed for certification, see
Proposition~\ref{prop:syn-positivity-margin}.

\paragraph{Endpoint rates and parameter selection.}
The matched-benchmark data also record 48 endpoint-rate diagnostics on six
fixed seeds. They verify the interpolation identity and the endpoint
ranks numerically, as illustrations of
Proposition~\ref{prop:syn-endpoint-rates}.
For the earlier seed $n=4$, $a=0.04$, $c=0.07$, a 201-point grid and
local refinement suggest $l\simeq0.2762844$ with
$\kappa_2(H_l)\simeq8.647915$, versus $8.677317$ at the midpoint
$l=0.26$. This numerical candidate is not used to retune the benchmark.

\subsection{Comparison with a Quantum Shannon reference under matched resources}
\label{app:matched-benchmark}
We compare the CNOT counts of words produced by two algorithms on the
same padded targets, using the common protocol below.

\Needspace{28\baselineskip}
\paragraph{Protocol and algorithms.}
\begin{center}\small
\renewcommand{\arraystretch}{1.15}
\begin{tabular}{@{}>{\raggedright\arraybackslash}p{0.14\linewidth}
                  >{\raggedright\arraybackslash}p{0.81\linewidth}@{}}
\toprule
Method & Construction and evaluation conditions \\
\midrule
Structured & Supported reflection product
(Corollary~\ref{cor:syn-directions}). For the all-zero phase, choose the
shorter of a diagonal phase tree and the explicit ancilla-free
square-root recursion of \cite[arXiv version, Section~7.2]{BarencoGates}; borrowed data
wires are restored for every input. Taking this minimum preserves the
polynomial controlled-phase upper bound. Every tested case selects
the phase tree. \\
\addlinespace
QSD reference & Cosine--sine decomposition, unitary demultiplexing
(block-diagonal factorization into common outer unitaries and a
controlled diagonal factor), and recursive QSD
\cite[Theorems~10, 12 and 13]{SBM}, using SciPy's CS and complex Schur
routines. Recursion stops at one-qubit unitaries; the three-CNOT
two-qubit base, the two additional optimizations in the Appendix
of \cite{SBM}, and block-ZXZ are not used.
The unsimplified generic count is $(3/4)4^m-(3/2)2^m$,
the QSD ($\ell=1$) bound in \cite[Table~I]{SBM}; reported counts
come from generated and simulated gate lists. \\
\addlinespace
Common & Same slot/seed bit order, Cholesky realization and identity
padding; $m$ data qubits, arbitrary one-qubit $U(2)$ gates, all-to-all
CNOT and zero ancillary qubits. Identical one-qubit instruction
conventions and deterministic postprocessing: fuse one-qubit gates,
cancel identical CNOTs across disjoint instructions, remove near-scalar
identities at threshold $10^{-13}$, and track global phase.
Accept measured phase-aligned operator-norm error at most $10^{-8}$
relative to the saved floating-point target, on the whole padded
register. Simulate every computational-basis input column, including
unused padding labels. \\
\bottomrule
\end{tabular}
\end{center}
The trace-overlap phase gives an upper bound on $d_{\mathrm{proj}}$.
Distances from the analytically defined matrices additionally require
validated data and evaluation errors
(Section~\ref{subsec:syn-error-margins}).

\paragraph{Measured circuits.}
The main data use the fixed parameter rule of the earlier experiment,
without optimizing $l$ or the encodings for either compiler.
All generators for $n=2,3,4,5,6$ and generators $1,4,7$ for $n=8$
are included. Ranges below are over the listed generators, not over
all admissible parameters.
\begin{center}\small
\begin{tabular}{rrrrrr}
\toprule
$n$ & $M$ & $m$ & Targets & Structured CX & QSD reference CX\\
\midrule
2 & 6 & 3 & 1 & 116 & 36 \\
3 & 12 & 4 & 2 & 198--202 & 144--164 \\
4 & 20 & 5 & 3 & 554--576 & 718--720 \\
5 & 30 & 6 & 4 & 986--1012 & 2016--2976 \\
6 & 42 & 6 & 5 & 1162--1192 & 2304--2976 \\
8 & 72 & 7 & 3 & 3178--3218 & 12096 \\
\bottomrule
\end{tabular}
\end{center}
For seven data qubits, the reported QSD counts equal the
unsimplified generic count $12096$.

Each method is run three times after numerical-library warm-up.
The median construction-plus-postprocessing times, in seconds, have
the following ranges. Target/Cholesky construction and full-matrix
verification are excluded from both clocks and recorded separately.
\begin{center}\small
\begin{tabular}{rrr}
\toprule
$n$ & Structured & QSD reference\\
\midrule
2 & 0.004 & 0.002 \\
3 & 0.005 & 0.008 \\
4 & 0.011--0.017 & 0.040--0.057 \\
5 & 0.021--0.043 & 0.111--0.186 \\
6 & 0.027--0.043 & 0.141--0.197 \\
8 & 0.079--0.087 & 0.775--0.928 \\
\bottomrule
\end{tabular}
\end{center}
These timings concern Python reference implementations with one BLAS
thread, not asymptotic classical compilation bounds or hardware gate
times. The recorded environment is Python~3.13.5, NumPy~2.3.5 and
SciPy~1.17.0 on x86-64 Linux.

At $n=4$, all three generators were also tested for $a=c=0.04$ and
for $(a,c)=(0.04,0.07)$, always with
$l=(1-nc-(n+1)a)/3$. Including these six supplemental targets gives
24 targets and 48 fully simulated circuits. The maximum phase-aligned
operator-norm error is $9.31\times10^{-14}$, comfortably below the
common threshold; padding action and leakage are also recorded.
Ten Haar-unitary tests through five qubits independently check the
reference QSD construction, and six full-register tests check the
borrowed-data all-zero-phase implementation.

\paragraph{Interpretation.}
The structured implementation produces more CNOT factors at $n=2,3$
and fewer than this particular QSD reference for the remaining tested
targets.
The comparison is restricted to these two reference implementations;
optimized QSD, block-ZXZ and quantum-software baselines remain to be
tested. Installation of the Qiskit software package was blocked by the
execution environment's network restriction, so the supplied Qiskit
runner is marked unexecuted.

Complete targets, gate lists and execution instructions are described
in Appendix~\ref{app:repro-matched}.
\section{Concluding remarks and open problems}\label{sec:outlook}

The prior invariant-form and signature results supply the unitary
family analyzed here. Tables~\ref{tab:results-burau} and~\ref{tab:results-other}
separate this parameter calculation from the structural and spectral
conclusions and the elementary-word bounds in a constructed common
encoding. Appendix~\ref{app:additional-definite-regions} applies the
same signature formula beyond $\mathcal T_n$; the structure and
synthesis results retain their stated domains.

Further questions concern the closures of the braid images on chosen
summands such as $\mathcal R_n$, certified classical computation of
the matrix factors, and word lengths for encodings and generating sets
with restricted connectivity. For image closures and the separate
role of independently supplied local operations, see
Section~\ref{sec:gates}.

Another algebraic problem is to determine which parameter loci admit
the quadratic-tensor descriptions of \cite{BauerLloyd}, and to identify
obstructions. That framework describes certain classically tractable
tensors by quadratic functions on abelian groups or, more generally,
(super) Hopf algebras. Membership of the present braid family in such
structured classes remains open.

\begin{remark}[Deligne--Simpson rank data]\label{rem:deligne-simpson}
The multiplicative Deligne--Simpson problem specifies conjugacy classes
and a product relation; its standard quiver data include ranks of
partial products of $(M_j-\xi I)$ \cite[arXiv version, Introduction, p.~2]{CBS}.
Suppose a monodromy tuple is simultaneously unitarizable and
$U_j=\mathsf{C}M_j\mathsf{C}^{-1}$. For an eigenvalue $\xi_j$ one has $|\xi_j|=1$, and
Lemma~\ref{lem:syn-length} immediately gives
\[
 \operatorname{rank}(M_j-\xi_j I)
  =\ell_{\mathrm{ref}}(\xi_j^{-1}U_j).
\]
Choosing an eigenvalue of maximum multiplicity gives the projective
reflection length, interpreting the existing rank data as a resource
count. Direction-preparation costs require additional structure,
supplied here for the Burau--KLM family. Simultaneous positive-definite
unitarizability and extension to $F_n\rtimes B_n$ are separate
requirements beyond prescribed unit-circle spectra; local monodromies
and braid generators must be distinguished.
\end{remark}

\clearpage
\appendix

\section{Quantum-computing background for mathematicians}
\label{app:quantum-background}

This appendix fixes the finite-dimensional quantum-information language
used for gates, encodings, and circuits in this paper. It assumes linear
algebra and basic representation theory, but no quantum-computing
background. General references are \cite{Watrous,WatrousCourse}; the more
specialized results used below are cited where they enter.
Tables~\ref{tab:quantum-dictionary} and~\ref{tab:circuit-dictionary}
summarize the terminology for the gate properties and circuit
constructions, respectively.

\subsection{States, probabilities, and composite systems}
\label{app:states}
A \emph{qudit} is a tensor factor $\C^d$ with its standard basis;
\emph{qubit} and \emph{qutrit} refer to $d=2$ and $d=3$.

We begin with coordinates. A finite quantum system with $N$ levels is
modeled by $\mathcal H=\C^N$, equipped with its standard orthonormal
basis $e_0,\ldots,e_{N-1}$ and inner product
\[
 \langle v,w\rangle=v^\dagger w
 =\sum_{j=0}^{N-1}\overline{v_j}w_j.
\]
Thus vectors are columns, $v^\dagger$ is the conjugate-transpose row,
and operators are matrices in $\operatorname{Mat}_N(\C)$. The inner
product is conjugate-linear in its first argument. Choosing an
orthonormal basis identifies any $N$-dimensional complex Hilbert
space with this model; the choice also fixes the computational basis.
The integer $N$ in this appendix denotes a system's dimension, not
necessarily the input dimension used in the main text.

\begin{definition}[Pure states]\label{def:pure-states}
A \emph{pure state} is a column $\psi=\sum_j\psi_je_j$ with
$\psi^\dagger\psi=1$, where $\psi$ and $e^{it}\psi$ represent the
same state. Equivalently it is the rank-one orthogonal projection
\[
 P_\psi=\psi\psi^\dagger,
 \qquad (P_\psi)_{jk}=\psi_j\overline{\psi_k}.
\]
Thus pure states form $\mathbb P(\C^N)$.
\end{definition}
A \emph{qubit} means $N=2$:
\[
 e_0=\begin{pmatrix}1\\0\end{pmatrix},\quad
 e_1=\begin{pmatrix}0\\1\end{pmatrix},\quad
 \psi=\begin{pmatrix}\alpha\\\beta\end{pmatrix},\qquad
 |\alpha|^2+|\beta|^2=1.
\]
A qutrit has $N=3$, and a $d$-level qudit has $N=d$.

\begin{definition}[Density matrices]\label{def:density-matrices}
A general state is a \emph{density matrix}
$\varrho\in\operatorname{Mat}_N(\C)$ with $\varrho\succeq0$ and
$\operatorname{Tr}\varrho=1$.
\end{definition}
A probabilistic preparation of unit
columns $\psi_j$ with probabilities $p_j$ has
\[
 \varrho=\sum_jp_j\psi_j\psi_j^\dagger,
 \qquad p_j\ge0,\quad\sum_jp_j=1.
\]
Different preparations can give the same density matrix. A measurement
in the computational basis has outcome probabilities
\[
 \Pr(j)=\operatorname{Tr}(e_je_j^\dagger\varrho)=\varrho_{jj};
 \qquad \Pr(j)=|\psi_j|^2\quad\hbox{for a pure state.}
\]
This is the Born rule in coordinates.
\begin{definition}[Measurement probabilities]\label{def:measurements}
A \emph{positive operator-valued measure} is a finite family of positive
semidefinite matrices $M_j$ with $\sum_jM_j=I_N$. On a state $\varrho$,
its outcome probabilities are $\Pr(j)=\operatorname{Tr}(M_j\varrho)$.
The matrices $M_j$ alone need not specify the post-measurement state.
\end{definition}

\paragraph{Dictionary for bra and ket notation.}
Physics notation abbreviates these same matrix operations:
\[
 |j\rangle=e_j,\qquad |\psi\rangle=\psi,\qquad
 \langle\psi|=\psi^\dagger,\qquad
 \langle\psi|\phi\rangle=\psi^\dagger\phi,\qquad
 |\psi\rangle\langle\phi|=\psi\phi^\dagger.
\]
Here a ket is a column and a bra is its conjugate-transpose row;
$\psi\phi^\dagger$ acts by $v\mapsto\psi(\phi^\dagger v)$.
These are not additional mathematical objects. We use columns and
matrices below, so this notation need not be learned first.

For a composite system we specify $\mathcal H_A=\C^p$ and
$\mathcal H_B=\C^q$, with standard bases $e_j$ and $f_k$. We order
$e_j\otimes f_k$ lexicographically, identifying $\C^p\otimes\C^q$
with $\C^{pq}$. A product vector $x\otimes y$ has coordinates
$(x_jy_k)_{j,k}$; a general vector has coordinates $(c_{jk})_{j,k}$.
In physics notation $e_j\otimes f_k$ is also written $|jk\rangle$.
An $s$-qubit register has dimension $2^s$, whereas the full KLM
space in this paper has dimension $n(n+1)$, not $2^n$.

\begin{definition}[Partial trace]\label{def:partial-trace}
The subsystem density matrix $\varrho_A=\operatorname{Tr}_B\varrho$
is given in these bases by
\[
 (\varrho_A)_{ij}=\sum_{k=0}^{q-1}\varrho_{(i,k),(j,k)}.
\]
Equivalently $\operatorname{Tr}(T\varrho_A)
=\operatorname{Tr}((T\otimes I_q)\varrho)$ for every
$T\in\operatorname{Mat}_p(\C)$.
\end{definition}
For two qubits the Bell vector is
\[
 \Phi^+=\frac{e_0\otimes e_0+e_1\otimes e_1}{\sqrt2}
       =\frac1{\sqrt2}(1,0,0,1)^T,
 \qquad
 P_{\Phi^+}=\frac12\begin{pmatrix}
 1&0&0&1\\0&0&0&0\\0&0&0&0\\1&0&0&1
 \end{pmatrix}.
\]
Both partial traces are $I_2/2$, although $P_{\Phi^+}$ is pure.
The classical mixture $\operatorname{diag}(1/2,0,0,1/2)$ has the
same computational-basis probabilities but lacks the off-diagonal
entries. For instance $\operatorname{Tr}((X\otimes X)P_{\Phi^+})=1$,
whereas the corresponding expectation for that mixture is zero;
$X$ is defined next.

\subsection{Unitary gates, phases, and braid words}
\label{app:gates-basics}
\begin{definition}[Unitary gates]\label{def:quantum-gates}
A reversible quantum gate on $\C^N$ is a matrix $U$ satisfying
$U^\dagger U=I_N$. It acts on pure vectors by $\psi\mapsto U\psi$
and on density matrices by $\varrho\mapsto U\varrho U^\dagger$.
For a specified orthonormal control basis $e_0,\ldots,e_{p-1}$,
a \emph{controlled unitary} on $\C^p\otimes\mathcal H_B$ has the form
\[
 \sum_{r=0}^{p-1}(e_re_r^\dagger)\otimes V_r,
 \qquad V_r\in U(\mathcal H_B).
\]
It preserves each subspace $\C e_r\otimes\mathcal H_B$ and acts
there by $V_r$.
\end{definition}
Multiplying $U$ by a scalar of modulus one leaves its induced action
on $\mathbb P(\mathcal H)$ and on density matrices unchanged. Consequently
$PU(\mathcal H)=U(\mathcal H)/U(1)$ is the appropriate group when
only an isolated gate's action on states is considered.

\begin{definition}[Projective order and image]\label{def:projective-entangling}
Let $N\ge1$ and write
\[
 \pi_N:U(N)\longrightarrow PU(N)
 :=U(N)/\{zI_N:|z|=1\}.
\]
For $U\in U(N)$, its \emph{projective order} is the group-theoretic
order of $[U]=\pi_N(U)$. Equivalently, it is the least integer $k\ge1$
for which $U^k=e^{i\theta}I_N$ for some $\theta\in\R$; if none exists,
the projective order is infinite. For a subgroup $G\subset U(N)$,
its \emph{ordinary image} is $G$ itself and its \emph{projective
image} is $\pi_N(G)$.
\end{definition}
Thus ordinary order asks when a matrix becomes the identity, whereas
projective order asks when it becomes a scalar. For example,
$e^{2\pi i\alpha}I_N$ with $\alpha\notin\Q$ has projective order
one and infinite ordinary order. Projective order is independent of
any tensor-product structure or basis. Finiteness of the
order of each generator alone does not establish finiteness of the
group that they generate.

Passing to the projective group does not
make every occurrence of a scalar physically irrelevant. On a direct
sum,
\[
 I_{\mathcal H_0}\oplus U\quad\hbox{and}\quad
 I_{\mathcal H_0}\oplus e^{i\theta}U
\]
are generally different projective operators: the phase is relative
to $\mathcal H_0$. Section~\ref{subsec:clifford-benchmark} is an
explicit instance of this distinction. Likewise, choosing a controlled
implementation of a gate requires specifying its relative phase on
the control branches.

Useful qubit gates are
\[
 X=\begin{pmatrix}0&1\\1&0\end{pmatrix},\quad
 Z=\begin{pmatrix}1&0\\0&-1\end{pmatrix},\quad
 H_{\mathrm{Had}}=\frac1{\sqrt2}
       \begin{pmatrix}1&1\\1&-1\end{pmatrix},\quad
 S_{\mathrm{ph}}=\operatorname{diag}(1,i).
\]
On two qubits, in the ordered basis
$e_0\otimes e_0,e_0\otimes e_1,e_1\otimes e_0,e_1\otimes e_1$,
the controlled-NOT gate is the permutation matrix
\[
 \operatorname{CNOT}=
 \begin{pmatrix}1&0&0&0\\0&1&0&0\\0&0&0&1\\0&0&1&0\end{pmatrix},
 \quad
 \operatorname{CNOT}(e_x\otimes e_y)
  =e_x\otimes e_{y+x\bmod2}.
\]
A direct calculation gives
\[
 \operatorname{CNOT}(H_{\mathrm{Had}}\otimes I_2)(e_0\otimes e_0)
 =\Phi^+.
\]
This example exhibits both a tensor-product convention and an input
state; neither is supplied by an abstract list of eigenvalues alone.

Given a representation $U:B_n\to U(\mathcal H)$, a braid word
$b=\sigma_{i_t}^{\epsilon_t}\cdots\sigma_{i_1}^{\epsilon_1}$ acts as
\[
 U(b)=U_{i_t}^{\epsilon_t}\cdots U_{i_1}^{\epsilon_1},
 \qquad\epsilon_j\in\{1,-1\},
\]
where the rightmost factor acts first and $U_i^{-1}=U_i^\dagger$.
The braid relations identify different words with exactly the same
operator. Word length is an algebraic resource count for this chosen
generating set. It becomes a physical gate count only after an
implementation model is specified, and it is not automatically circuit
depth, which also depends on which operations may be performed in
parallel.

Unitarizing a representation is a different operation. The factor
$\mathsf C$ in $\mathsf C^\dagger\mathsf C=H_+$ identifies the Hilbert space with inner
product $v^\dagger H_+w$ with the standard Euclidean Hilbert space;
$\mathsf C A_i\mathsf C^{-1}$ then expresses the same abstract action in orthonormal
coordinates. The matrix $\mathsf C$ need not itself be unitary for the
standard inner product, so its existence does not describe a
physically available preprocessing gate.

\subsection{Entanglement and the choice of tensor factors}
\label{app:entanglement}
\begin{definition}[Product and entangled vectors]\label{def:product-vectors}
A bipartite \emph{tensor-product structure} on $\mathcal H$ is a
specified unitary identification
\[
 \varphi:\mathcal H\longrightarrow\mathcal H_A\otimes\mathcal H_B,
 \qquad \dim\mathcal H_A,\dim\mathcal H_B\ge2.
\]
A nonzero vector $\psi\in\mathcal H$ is a \emph{product vector}
if $\varphi\psi=x\otimes y$ for nonzero $x\in\mathcal H_A$ and
$y\in\mathcal H_B$; it is \emph{entangled} otherwise. Thus product
vectors are the nonzero decomposable tensors in these coordinates.
\end{definition}
The nonzero decomposable tensors form a set, not a linear subspace.
Their projective classes form the \emph{Segre variety}, the image of
\[
 \mathbb P(\mathcal H_A)\times\mathbb P(\mathcal H_B)
 \longrightarrow\mathbb P(\mathcal H_A\otimes\mathcal H_B),
 \qquad ([x],[y])\longmapsto[x\otimes y].
\]
In orthonormal bases, write
$\varphi\psi=\sum_{j,k}c_{jk}e_j\otimes f_k$. It is a product
vector exactly when the coefficient matrix $(c_{jk})$ has rank one.

\begin{definition}[Schmidt coefficients and entropy]\label{def:schmidt-data}
For a unit vector $\psi$, the nonzero singular values
$s_1,\ldots,s_r$ of $(c_{jk})$ are its \emph{Schmidt coefficients}.
The singular-value decomposition gives
\[
 \varphi\psi=\sum_{a=1}^{r}s_a u_a\otimes v_a,
 \qquad s_a>0,\quad\sum_{a=1}^{r}s_a^2=1,
\]
with orthonormal $u_a$ and $v_a$. The \emph{Schmidt rank} is $r$,
and the \emph{entanglement entropy} is
\[
 S(\psi)=-\sum_{a=1}^{r}s_a^2\log_2(s_a^2).
\]
\end{definition}
The Schmidt rank is one precisely for product states. The nonzero
eigenvalues of either reduced density operator are the $s_a^2$,
so $S(\psi)$ is the entropy of either subsystem. These quantities
depend on the tensor-product structure, but not on the choice of
orthonormal bases within the factors.
\begin{definition}[Separable mixed states]\label{def:separable-states}
For mixed states, \emph{separable} means a convex combination of
product density operators.
\end{definition} The entangling statements in this paper
concern pure product inputs, not a classification of mixed-state
separability.

\begin{definition}[Entangling gates]\label{def:entangling-gates}
Fix $\varphi:\mathcal H\to\mathcal H_A\otimes\mathcal H_B$ as in
Definition~\ref{def:product-vectors}. For $U\in U(\mathcal H)$, write
\[
 U_\varphi=\varphi U\varphi^{-1}:
 \mathcal H_A\otimes\mathcal H_B
 \longrightarrow\mathcal H_A\otimes\mathcal H_B.
\]
The gate $U$ is \emph{entangling} for this structure if there exist
nonzero $x\in\mathcal H_A$, $y\in\mathcal H_B$ such that
$U_\varphi(x\otimes y)$ is not decomposable. Equivalently, it does
not preserve the set of nonzero decomposable tensors, or its
projectivization does not preserve the Segre variety. Restricting
$x,y$ to unit vectors gives the same condition.
A \emph{local} or \emph{product gate} instead satisfies
$U_\varphi=X\otimes Y$ with $X\in U(\mathcal H_A)$ and
$Y\in U(\mathcal H_B)$. Such operators form the image of the
tensor-product representation of
$U(\mathcal H_A)\times U(\mathcal H_B)$; that representation is
not injective, since $(zI,z^{-1}I)$ acts trivially for $|z|=1$.
\end{definition}
It need not entangle every product input. In particular, CNOT
entangles $e_+\otimes e_0$, where
$e_+=(e_0+e_1)/\sqrt2$, but sends every
computational-basis vector to another product vector.
A unitary preserving all product vectors is a product
unitary, or, for equal factor dimensions, a product unitary composed
with the swap \cite[Thm.~4.1]{Johnston}. The swap itself does not
create entanglement. For $M=n(n+1)$ the equal-factor case is absent,
because $n^2<M<(n+1)^2$.

A decomposition $\mathcal H=\mathcal H_1\oplus\mathcal H_2$
and a factorization $\mathcal H=\mathcal H_A\otimes\mathcal H_B$
are different structures. Invariant summands of a braid
representation are not automatically separate qudits. Moreover, a
global change of orthonormal coordinates need not be local for a
previously specified tensor product. For instance, a global unitary
can take $e_0\otimes e_0$ to $\Phi^+$. This is why
Theorem~\ref{thm:entangling} explicitly quantifies over all unitary
identifications with bipartite tensor products: its spectral argument
excludes product operators for each such identification. It asserts
neither a canonical encoding nor a single input vector entangled
simultaneously for every encoding.

\subsection{Pauli operators, Clifford gates, and classical simulation}
\label{app:clifford}
The operators $X$ and $Z$ above satisfy $ZX=-XZ$.
\begin{definition}[Qubit Pauli and Clifford groups]\label{def:qubit-clifford}
On a qubit register,
the finite Pauli group consists of tensor products of $I,X,Z,XZ$
with scalar phases from $\{1,i,-1,-i\}$. Its Clifford group is the
unitary normalizer.
\end{definition}
A gate outside this normalizer is called \emph{non-Clifford}.
The same construction can be made for other local dimensions, as follows.

\begin{definition}[Finite-dimensional Weyl--Heisenberg frame]
\label{def:wh-frame}
Let $M=d_1\cdots d_k$ with $d_j\ge2$, and put
$h_{\mathbf d}=\operatorname{lcm}(d_1,\ldots,d_k)$. On
$\bigotimes_{j=1}^k\C^{d_j}$ define the local shift and phase operators
\[
X_j|r\rangle=|r+1\bmod d_j\rangle,\qquad
Z_j|r\rangle=\exp(2\pi i r/d_j)|r\rangle,
\]
where $|r\rangle=e_r$ denotes the $r$th standard basis vector;
the operators act as the identity on the other factors. We use the finite
Weyl--Heisenberg group
\[
 \mathcal P_{\mathbf d}
 =\langle X_j,Z_j\ (1\le j\le k),\,
       e^{\pi i/h_{\mathbf d}}I\rangle.
\]
Thus its allowed scalar phases are the $2h_{\mathbf d}$-th roots
of unity, not all of $U(1)$. A \emph{Weyl--Heisenberg frame} on an
$M$-dimensional Hilbert space $\mathcal H$ specifies the local dimensions,
their standard bases, and a unitary identification
$\varphi:\mathcal H\to\bigotimes_j\C^{d_j}$. Transport the finite
group to $\mathcal H$ by
\[
 \mathcal P_{\varphi,\mathbf d}
 :=\varphi^{-1}\mathcal P_{\mathbf d}\varphi.
\]
Its \emph{generalized Clifford group} is the normalizer
\[
 N_{U(\mathcal H)}(\mathcal P_{\varphi,\mathbf d})
 =\{U\in U(\mathcal H):
 U\mathcal P_{\varphi,\mathbf d}U^{-1}
       =\mathcal P_{\varphi,\mathbf d}\}.
\]
It contains all scalar unitaries. This convention extends the
finite-dimensional Pauli and Clifford conventions of \cite[arXiv version, \S\S~II.A--B]{HDM}.
\end{definition}

The projective Clifford group is finite, including for unequal local
dimensions. Indeed, conjugation on the finite group
$\mathcal P_{\varphi,\mathbf d}$ has image in the finite group
$\operatorname{Aut}(\mathcal P_{\varphi,\mathbf d})$. Its kernel is the
unitary centralizer of $\mathcal P_{\varphi,\mathbf d}$, which is
$U(1)I_{\mathcal H}$: the transported tensor Weyl monomials span
$\operatorname{End}(\mathcal H)$.
Consequently the normalizer modulo $U(1)I$ is finite.

For example,
\[
 H_{\mathrm{Had}}XH_{\mathrm{Had}}^\dagger=Z,\qquad
 H_{\mathrm{Had}}ZH_{\mathrm{Had}}^\dagger=X,\qquad
 S_{\mathrm{ph}}XS_{\mathrm{ph}}^\dagger=iXZ.
\]
CNOT conjugates Pauli tensors to Pauli tensors as well. These gates
are therefore Clifford gates relative to the indicated frame
\cite{HDM,Gottesman}.

In contrast,
$T_{\mathrm{ph}}=\operatorname{diag}(1,e^{\pi i/4})$ is
non-Clifford in the standard qubit frame, since
\[
 T_{\mathrm{ph}}XT_{\mathrm{ph}}^\dagger
  =\frac{X+Y}{\sqrt2},\qquad Y:=iXZ,
\]
which is not a scalar multiple of a Pauli operator. Nevertheless
$T_{\mathrm{ph}}^8=I$. Thus non-Cliffordness does not imply infinite
projective order. Theorem~\ref{thm:nonclifford} uses the converse
sufficient test: infinite projective order rules out membership in
\emph{every} finite-dimensional Weyl--Heisenberg Clifford frame,
not just the initially chosen one. For rational $(a,c)$ it proves
finite order, but does not by itself decide whether a gate normalizes
the Weyl--Heisenberg group in a specified frame, or whether the group
generated by all braid generators is finite.

For an $s$-qubit register, take an abelian Pauli subgroup of order
$2^s$ that does not contain $-I$. Its common fixed space is
one-dimensional, and a unit vector spanning it is a stabilizer state.
Such a state can be recorded by a number of binary data polynomial
in the qubit count, even though its vector has exponentially many
coordinates. Clifford conjugation updates its Pauli stabilizers.
This yields efficient classical simulation of circuits with
stabilizer preparations, Clifford gates and Pauli measurements
(including the usual classically controlled updates)
\cite{Gottesman}. The restrictions on preparations and measurements
matter: the statement is not a simulation theorem for Clifford gates
with arbitrary states or arbitrary observables. Entanglement alone
is not an obstruction to this simulation, as the Bell-state example
already illustrates.

Likewise, proving that a gate is non-Clifford is not a lower bound
on the classical complexity of a computational model containing it.
Complexity requires an asymptotic family, specified inputs and
measurements, and a precision model; see the discussion in
Section~\ref{sec:outlook}.

\subsection{Universality and image closure}
\label{app:universality}
\begin{definition}[Exact and approximate universality]\label{def:universal-gates}
For a fixed $N$-dimensional logical system, let $\mathcal G\subset U(N)$
be the available gates, with inverses allowed. It is \emph{exactly
universal} if
\[
 \pi_N(\langle\mathcal G\rangle)=PU(N),
\]
and \emph{approximately universal} if
\[
 \overline{\pi_N(\langle\mathcal G\rangle)}=PU(N).
\]
Here the closure is in the usual topology of $PU(N)$.
Exact universality means that every unitary, up to scalar phase, is
a finite product of the available gates and their inverses.
Approximate universality equivalently means that for every $V\in U(N)$
and $\epsilon>0$, some word $W$ in $\mathcal G^{\pm1}$ satisfies
\[
 \min_{\theta\in\R}\|V-e^{i\theta}W\|_{\mathrm{op}}<\epsilon.
\]
\end{definition}
Here universality refers to the ability to generate or approximate
arbitrary unitary gates up to global phase; it is distinct from a
universal property in category theory.
Approximate universality is a topological assertion. A synthesis theorem additionally
controls word length and the cost of finding such a word as functions
of the tolerance and the target's description.
For the particular gates constructed here,
Corollary~\ref{cor:syn-finite-precision} supplies a finite-alphabet
circuit-size bound at specified error, not an image-density statement
about the original braid generators. Physically local gate
sets and scalable registers require further uniformity conditions.
Dense braid images in particular topological models are the subject
of results such as \cite{FLW}; they do not follow just from unitarity.

The distinction between an invariant direct sum and a tensor product
also matters here: a gate may be entangling while preserving proper
invariant summands. Section~\ref{sec:gates} explains why the full braid
image in this paper is not universal and why irreducibility of an
individual summand does not settle its image closure.
An invariant encoded subspace, extra gates mixing summands, or a
measurement-assisted model changes the universality question and
must be specified separately. No such additional resources are
assumed in the present spectral theorems.
Section~\ref{app:fixed-encoding} gives an elementary example under
explicitly added local operations and distinguishes that resource
model from braid-only image density.

\Needspace{16\baselineskip}
\subsection{Encodings and logical gates}
\label{app:encodings}
\begin{definition}[Encodings and logical gates]\label{def:encodings}
An encoding of a $k$-dimensional logical system into an $N$-dimensional
physical system is a matrix
$J\in\operatorname{Mat}_{N\times k}(\C)$ with $J^\dagger J=I_k$.
Thus $J:\C^k\to\C^N$ is an isometry. Its image $\mathcal C$ is the
encoded subspace, and $P=JJ^\dagger$ is its orthogonal projection.
More generally, write $J:\mathcal H_{\mathrm{log}}\to
\mathcal H_{\mathrm{phys}}$ for this isometry. If $k=N$ and the
logical space is specified as $\bigotimes_j\C^{d_j}$, then $J$ is
unitary and determines the tensor-product structure
$\varphi=J^{-1}=J^\dagger$ on the physical space. Specified
orthonormal bases in the factors give a \emph{computational basis}
by taking their tensor products and transporting them by $J$.
For a proper encoded subspace, $J^\dagger$ is an inverse only on
$\operatorname{im}J$, and no tensor-product structure on the whole
physical space follows.
A physical unitary implements a logical unitary $V$ without leaving
the encoded subspace when $U_{\mathrm{phys}}J=JV$.
\end{definition}
An invariant subspace of a unitary representation supplies such an
abstract encoding and logical action. Preparing, accessing, or measuring
that subspace in a device requires a separate specification.

\begingroup
\small
\setlength{\LTcapwidth}{\linewidth}
\renewcommand{\arraystretch}{1.12}
\begin{longtable}{@{}>{\raggedright\arraybackslash}p{0.27\textwidth}
                      >{\raggedright\arraybackslash}p{0.51\textwidth}
                      >{\raggedright\arraybackslash}p{0.16\textwidth}@{}}
\caption{Quantum-gate terminology: a guide to the definitions.}
\label{tab:quantum-dictionary}\\
\hline
\textbf{Term} & \textbf{Mathematical meaning} & \textbf{Reference}\\
\hline
\endfirsthead
\multicolumn{3}{l}{\tablename~\thetable\ (continued)}\\
\hline
\textbf{Term} & \textbf{Mathematical meaning} & \textbf{Reference}\\
\hline
\endhead
\hline
\endfoot
Tensor-product structure & Specified unitary identification with a tensor product
& Def.~\ref{def:product-vectors}\\
Computational basis / encoding & Chosen product basis / isometric inclusion
& Def.~\ref{def:encodings}\\
Product / entangled vector & Decomposable / non-decomposable for the chosen factors
& Def.~\ref{def:product-vectors}\\
Schmidt coefficients / entropy & Singular values / entropy of their squares
& Def.~\ref{def:schmidt-data}\\
Local / entangling gate & Product operator / operator taking some product vector to an entangled one
& Def.~\ref{def:entangling-gates}\\
Weyl--Heisenberg frame / Clifford group & Specified finite Weyl--Heisenberg group / its unitary normalizer
& Def.~\ref{def:wh-frame}\\
Projective order & Order modulo scalar phases
& Def.~\ref{def:projective-entangling}\\
Exact / approximate universality & Equality / density of the generated projective group in $PU(\mathcal H)$
& Def.~\ref{def:universal-gates}\\
Controlled unitary & Block-diagonal action for a specified control basis
& Def.~\ref{def:quantum-gates}\\
\end{longtable}
\endgroup

\subsection{Circuits, synthesis, and approximation errors}
\label{app:circuits}
Definition~\ref{def:elementary-generators} in
Section~\ref{subsec:syn-elementary} fixes the generating set,
words, evaluation, size, and minimum word length.

\begin{definition}[Synthesis, ancillary systems, and padding]
\label{def:circuit-synthesis}
\emph{Synthesis} constructs a word whose evaluation equals or
approximates a target unitary. An \emph{ancilla} is an extra system
prepared in a specified state and used as working space.
A realization of $W\in U(\mathcal H_m)$ with $t$ \emph{clean ancillas}
uses a word on $\mathcal H_m\otimes(\C^2)^{\otimes t}$ whose
operator $W'$ satisfies
\[
 W'(v\otimes\eta_t)=(Wv)\otimes\eta_t
 \quad\text{for every }v\in\mathcal H_m,
 \qquad \eta_t=e_0^{\otimes t}.
\]
This initializes and restores the extra factors; it imposes no
condition outside that initialized subspace. An \emph{ancilla-free}
word uses $\mathcal H_m$ alone.
A \emph{borrowed} (or \emph{dirty}) work qubit is a data qubit used
temporarily without initialization. The completed subcircuit must act
as the desired operator tensored with the identity on that wire,
including for arbitrary inputs entangled with a reference system.
A \emph{one-control} applies its operation when the control bit is
$1$; a \emph{zero-control} applies it when the bit is $0$, and is
obtained by conjugating the one-control with $X$ on that bit.

For \emph{padding}, an isometric coordinate inclusion
$J:\C^M\to\mathcal H_m$, $2^m\ge M$, gives
\[
 \widehat U=JUJ^\dagger+(I_{2^m}-JJ^\dagger).
\]
The unused subspace is fixed pointwise and is not an extra tensor
factor. All generators use the same $J$; the choice
\eqref{eq:syn-encoding} retains separate slot and seed registers.
For $\mathcal C=\operatorname{im}J$ and $P=JJ^\dagger$, a physical
operator $\widetilde U$ \emph{leaks} from $\mathcal C$ if it does
not preserve $\mathcal C$. Its \emph{leakage} on unit inputs in
$\mathcal C$ is
$\|(I-P)\widetilde U P\|_{\mathrm{op}}$.
\end{definition}

\begin{definition}[Phase reflections and reflection length]
\label{def:phase-reflections}
For a unit vector $v\in\C^{d_0}$ and $\theta\in\R$, the rank-one
phase reflection
\[
 R(v,\theta)=I+(e^{i\theta}-1)vv^\dagger
\]
multiplies $v$ by $e^{i\theta}$ and fixes $v^\perp$
\cite{IKV,MIC}. The length $\ell_{\mathrm{ref}}(U)$ is the minimum
number of such factors, with unrestricted directions and angles;
$\ell_{\mathrm{ref}}^{\mathrm{proj}}(U)$ allows an overall scalar phase.
These lengths use a different generating set from $\mathcal G_m$.
On $\mathcal H_m$, \emph{state preparation} supplies a word with
matrix $W_v$ satisfying $W_ve_{\mathbf0}=e^{i\beta}v$.
The phase cancels in \eqref{eq:syn-phase-preparation}, which conjugates
the \emph{all-zero controlled phase} $R(e_{\mathbf0},\theta)$.
Preparing the direction includes arranging its support labels.
If $v\in\operatorname{im}J$, the completed reflection fixes
$(\operatorname{im}J)^\perp$; its individual factors need not do so.
\end{definition}

\begin{definition}[Approximation error and finite gate alphabets]
\label{def:approximation-error}\label{def:finite-alphabet}
Use the operator norm $\|A\|_{\mathrm{op}}=\sup_{\|x\|=1}\|Ax\|$
on the whole register. For unitary targets and approximants, the
error is $\|U-\widetilde U\|_{\mathrm{op}}$; minimizing over scalar
phases gives the projective distance $d_{\mathrm{proj}}$ defined in
Section~\ref{par:syn-finite-alphabet}. That paragraph also defines
the fixed finite local alphabet $\mathcal G_{\mathrm{fin}}$ and its
placements together with exact CNOT gates, $\mathcal F_m$.
For $\varepsilon>0$, minimizing over words $\mathfrak c$ on
$\mathcal F_m$ defines the \emph{approximate word length}
\[
 \ell_{\mathcal F_m}^{\mathrm{proj}}(V;\varepsilon)
 =\min\{|\mathfrak c|:
 d_{\mathrm{proj}}(V,\operatorname{ev}(\mathfrak c))\le\varepsilon\}.
\]
In a specified Clifford+$T$ word, the \emph{$T$-count} counts
$T_{\mathrm{ph}}$ and $T_{\mathrm{ph}}^\dagger$ letters, rather than all
letters or a minimum over words.
\end{definition}
Corollary~\ref{cor:syn-finite-precision} gives constructive upper
bounds using one-factor Solovay--Kitaev approximation
\cite{DawsonNielsenSK}. Its proof accounts for phase removal and
accumulated error on the whole padded register; approximants need
not fix padding or satisfy the braid relations exactly.
Section~\ref{subsec:syn-error-margins} treats errors in classical
preprocessing separately.

\begingroup
\small
\setlength{\LTcapwidth}{\linewidth}
\renewcommand{\arraystretch}{1.12}
\begin{longtable}{@{}>{\raggedright\arraybackslash}p{0.27\textwidth}
                      >{\raggedright\arraybackslash}p{0.51\textwidth}
                      >{\raggedright\arraybackslash}p{0.16\textwidth}@{}}
\caption{Circuit terminology: a guide to words and their implementation.}
\label{tab:circuit-dictionary}\\
\hline
\textbf{Term} & \textbf{Mathematical meaning} & \textbf{Reference}\\
\hline
\endfirsthead
\multicolumn{3}{l}{\tablename~\thetable\ (continued)}\\
\hline
\textbf{Term} & \textbf{Mathematical meaning} & \textbf{Reference}\\
\hline
\endhead
\hline
\endfoot
One-qubit gate / CNOT & One-factor unitary / controlled bit permutation
& Def.~\ref{def:elementary-generators}\\
Circuit / gate count & Word $\mathfrak c$ / size $|\mathfrak c|$
& Def.~\ref{def:circuits}\\
Minimum gate count / depth & Minimum word size / number of permitted layers
& Def.~\ref{def:circuits}\\
Padding / clean ancillas & Identity extension / initialized extra factors restored
& Def.~\ref{def:circuit-synthesis}\\
Borrowed wire / zero-control & Uninitialized work wire restored / control on the value $0$
& Def.~\ref{def:circuit-synthesis}\\
Leakage & Size of the component leaving the encoded subspace
& Def.~\ref{def:circuit-synthesis}\\
State preparation / reflection & Preparation of a direction / conjugation of the all-zero phase
& Def.~\ref{def:phase-reflections}\\
Finite alphabet / approximate synthesis & Words over $\mathcal F_m$ within the prescribed tolerance
& Def.~\ref{def:finite-alphabet}\\
$T$-count & Selected-letter count in a specified Clifford+$T$ word
& Def.~\ref{def:finite-alphabet}\\
\end{longtable}
\endgroup

\section{Reproducibility and supplementary records}\label{app:repro}
This appendix maps exact algebraic computations, floating-point matrix
and circuit records, and reproduction procedures to the ancillary files.

\subsection{Algebraic, spectral and encoded-gate checks}
The file \texttt{anc/README.txt} lists runnable scripts, with dependencies
in \texttt{anc/requirements.txt}. Finite-size implementation checks
of the action matrices, invariant forms, signatures, and spectral
identities are provided by \texttt{verify\_klm\_v2.py},
\texttt{twisted\_\allowbreak burau\_\allowbreak checks.py}, and
\texttt{charpoly\_\allowbreak and\_\allowbreak relations.py}.
These checks supplement the general proofs;
their finite ranges do not establish statements for arbitrary $n$.
The script \texttt{certificate\_\allowbreak locus\_\allowbreak checks.py} checks
Examples~\ref{ex:additional-negative-window}
and~\ref{ex:additional-positive-window}, using exact rational floor
identities and separate floating-point matrix tests.

The exact finite-field computation in \texttt{exact\_commutant.py}
is part of the proof of Proposition~\ref{prop:generic-irred}.
Its output \texttt{exact\_commutant.json} records the selected minor's
row and column indices and determinant residues.
The floating-point observations in Remark~\ref{rem:numerics}(a)
and~(b) are reproduced by \texttt{commutant\_numerics.py} and
\texttt{jones\_jordan\_scan.py}, respectively.
The script \texttt{fixed\_encoding\_checks.py} checks the matrix
factorizations, encoded restrictions, and circuit identities in
Section~\ref{app:fixed-encoding}.
The circuit counts, approximation errors, and conditioning experiments
are documented separately below and in Section~\ref{app:syn-counts}.

\subsection{Common-basis synthesis files}
\label{app:repro-synthesis}
The counts and verification scope of this implementation are reported
in Section~\ref{subsec:resource-counts}.
The entry point is
\texttt{verify\_reflection\_cost.py} in \texttt{anc/reflection\_synthesis/}.
Its \texttt{results/} directory contains a JSON summary, QASM circuits,
and dense NPZ matrix arrays. The seed register uses the low bits,
the slot register the high data bits, and ancillas are appended.
Python~3.10 or later, NumPy, SciPy, and SymPy are required. The
optional \texttt{--large} cases $n=16,24$ were not used for this table.

\subsection{Conditioning and high-precision records}
\label{app:repro-margins}
Section~\ref{app:syn-margins} contains the eigenvalue margins,
residuals, endpoint sweep and precision comparisons.
The script requires Python~3.10 or later, NumPy, SciPy, and mpmath;
its \texttt{results/diagnostics.json} records parameters, eigenvalue
margins, seed and KLM condition numbers, residuals, and all tests.

\subsection{Reference compilers and saved targets}
\label{app:repro-matched}
The protocol, numerical resource counts, timings and error checks are
reported in Section~\ref{app:matched-benchmark}; the endpoint diagnostics
are discussed in Section~\ref{app:syn-margins}.
The directory \texttt{anc/matched\_benchmark/} contains the reference
compilers, complete circuit lists, saved floating-point target matrices
with SHA-256 hashes, raw and postprocessed counts, timings and
full-register errors. The hashes identify the saved data; they do not
certify their distance from the analytically defined matrices. Re-running \texttt{run\_matched\_benchmark.py} reproduces the
reported protocol without network access. The optional
\texttt{run\_qiskit\_optional.py} is an unexecuted interface to the
Qiskit quantum-circuit software library.

\section{Definite parameter regions beyond \texorpdfstring{$\mathcal T_n$}{Tn}}
\label{app:additional-definite-regions}

The triangle $\mathcal T_n$ gives a simple common parameter domain for
the results in the main text, but it does not exhaust the definite
regions of the canonical form. We apply the signature formula of
\cite{Nsig}, recalled in Theorem~\ref{thm:import-sig}, to classify
these regions explicitly. The calculation concerns
the character-twisted unreduced Burau seed and the same choice
$\lambda^{1/2}=\exp(\pi il)$ as in Definition~\ref{def:eigenangles}.

\Needspace{24\baselineskip}
\begin{proposition}[Canonical definite windows]
\label{prop:additional-definite-regions}
Let $n\ge2$, $0<a<1/(n+1)$, $0<c<1$, and $c+2a\ne1$.
Put $q=\ee a$, $\tau=\ee c$, and $\lambda=\ee l$ with
$0<l<1$ and $\ee l\in E^c$.
Then $K=0$, $R=n(n+1)$, and the canonical form $\Ht(\ee l)$ is
definite exactly in the following four cases:
\begin{center}
\renewcommand{\arraystretch}{1.25}
\begin{tabular}{@{}lll@{}}
\toprule
Sign & Conditions on $(a,c)$ & Window in $l$ \\
\midrule
Positive & $nc+(n+1)a<1$ & $0<l<1-nc-(n+1)a$ \\
Positive ($n=2$) & $\frac14<a<\frac13,\quad 1-2a<c<\frac12$
  & $0<l<1-2c$ \\
Negative A & $1-\frac1n<c<1-2a$ & $n(1-c)<l<1$ \\
Negative B & $1-\frac an<c<1$ & $n+1-nc-a<l<1$ \\
\bottomrule
\end{tabular}
\end{center}
The first row is $\mathcal T_n$; the other three lie outside it.
In Negative A, the inequalities imply $a<1/(2n)$.
In either negative window, $-\Ht(\ee l)$ is a positive definite
invariant form.
\end{proposition}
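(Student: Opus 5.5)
The plan is to rerun the computation in the proof of Theorem~\ref{thm:twisted-burau} without assuming that the eigenangles stay below $1$, and then solve the two equations $m(l)=0$ and $m(l)=R$.

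\emph{Step 1: $K=0$ and $R=n(n+1)$.} By Theorem~\ref{thm:twisted-burau}(1), $\spec(g_j^{\mathrm{tB}})=\{\tau^{\times n},\tau q^2\}$. The hypotheses $0<c<1$ and $c+2a\ne1$ give $\tau\ne1$ and $\tau q^2\ne1$. Hence every $g_j-I$ is invertible, $K=0$, $R=n(n+1)$, and $U_0=V^{\oplus n}$. The hypothesis $0<a<1/(n+1)$ keeps the seed unitary, so Theorem~\ref{thm:import-sig} applies.

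\emph{Step 2: a closed formula for $m(l)$.} Put $\varepsilon:=1$ if $c+2a>1$ and $\varepsilon:=0$ otherwise. The seed eigenangles are then $c$ (with multiplicity $n$) and $c+2a-\varepsilon$ for each $j$, so
\[
 \sum_{j,k}a_j^{(k)}=n\bigl((n+1)c+2a-\varepsilon\bigr).
\]
By Lemma~\ref{lem:fulltwist}, $\lambda P_n$ has eigenvalues $\ee{l+\gamma_k}$ with the real (unreduced) numbers
\[
 \gamma_k\in\{nc,\ (nc+a)^{\times(n-1)},\ nc+(n+1)a\}.
\]
Their eigenangles are $\beta_k(l)=l+\gamma_k-\lfloor l+\gamma_k\rfloor$. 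Substituting into Definition~\ref{def:eigenangles}, the terms in $l$, $c$ and $a$ cancel because $\sum_k\gamma_k=n(n+1)c+2na$. This leaves
\[
 m(l)=-n\varepsilon+\lfloor l+nc\rfloor+(n-1)\lfloor l+nc+a\rfloor+\lfloor l+nc+(n+1)a\rfloor .
\]
At nonexceptional $l$ we have $r(l)=0$. Hence the form is positive definite iff $m(l)=0$ and negative definite iff $m(l)=n(n+1)$.

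\emph{Step 3: case analysis.} The floors are nondecreasing in $l$ and lie in $\{0,\dots,n+1\}$, since $nc+(n+1)a<n+1$ and $l<1$. Their weighted sum $S(l)$ therefore ranges over $[0,(n+1)^2]$, and I will split according to $\varepsilon$.

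For $\varepsilon=0$, the condition $m=0$ forces every floor to vanish. This gives $l+nc+(n+1)a<1$, i.e.\ the first row with $\mathcal T_n$. The condition $m=n(n+1)$ needs $S(l)=n(n+1)$. With floor values $f_0\le f_1\le f_2$ for the three terms, $f_0+(n-1)f_1+f_2=n(n+1)$ with $f_2\le f_0+1$ forces either $f_0=f_1=f_2=n$ up to a shift, or $(f_0,f_1,f_2)=(n,n+1,n+1)$-type patterns. Translating these into inequalities on $l$ should yield Negative A, where $\lfloor l+nc\rfloor=n$ requires $l>n(1-c)$ and $c>1-1/n$, together with $c<1-2a$.

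For $\varepsilon=1$ (so $c>1-2a$), $m=0$ needs $S(l)=n$. The possible floor patterns are $(0,1,1)$ with $n=2$, or $(1,1,1)$. I expect the pattern $(0,1,1)$ with $n=2$ to give the extra positive row: it requires $l<1-2c$, $2c+a\ge1$ (automatic), and $c<1/2$. The pattern $(1,1,1)$ with $l>1-nc$ needs $nc+(n+1)a<2$ together with compatibility. The target $m=n(n+1)$ becomes $S(l)=n(n+2)$, which should reduce to Negative B, i.e.\ $l>n+1-nc-a$ with $c>1-a/n$.

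\emph{Main obstacle.} The hard part is this bookkeeping: checking that every admissible floor pattern has been listed, and that the resulting $l$-intervals are nonempty exactly under the stated $(a,c)$ conditions. This includes the claims that Negative A forces $a<1/(2n)$, and that the patterns $(1,1,1)$ and $(0,1,1)$ for $n\ge3$ are empty. For each pattern I would first write the interval of $l$ as a max of lower bounds and a min of upper bounds. I would then intersect it with $0<a<1/(n+1)$, $0<c<1$, and the sign of $c+2a-1$. These finite inequality systems would be cross-checked against \texttt{certificate\_locus\_checks.py}.

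Finally, in the negative windows $-\Ht(\ee l)$ is positive and invariant by Theorem~\ref{thm:klm-form}, which gives the last assertion.
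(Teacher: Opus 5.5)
Your Steps 1--2 are correct and match the paper's setup in different bookkeeping. Writing $\lfloor l+x\rfloor=\lfloor x\rfloor+[\,l+\{x\}\ge1\,]$ turns your formula into the paper's $m(l)=w+\#\{k:\gamma_k+l\ge1\}$ with $w=b_0+(n-1)b_1+b_2-nh$. The gap is Step~3, which is the whole content of the proposition. You do not carry it out, and the patterns you do list are partly wrong.

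The missing ingredient is a set of a priori bounds on the floors $f_0\le f_1\le f_2\le f_0+1$. They come from the sign of $c+2a-1$ and from $a<1/(n+1)$, and with them each case has a unique solution.
\begin{itemize}
\item \emph{$\varepsilon=0$, negative.} Here $c<1-2a$ gives $nc+(n+1)a<n-(n-1)a<n$, so every floor is at most $n$. Hence $S(l)=n(n+1)$ forces $f_0=f_1=f_2=n$. Your alternative $(n,n+1,n+1)$ cannot occur here: it violates the bound and sums to $n(n+2)$ anyway. The phrase ``up to a shift'' has no meaning. Then $f_0=n$ iff $l>n(1-c)$, which is compatible with $l<1$ iff $c>1-1/n$; this is Negative~A.
\item \emph{$\varepsilon=1$, positive.} The equation $S=n$ has the single solution $(0,1,1)$ for every $n$. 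The pattern $(1,1,1)$ gives $S=n+1$, i.e.\ $m=1$, so it is not a candidate at all. The restriction to $n=2$ does not come from this arithmetic and must be derived. From $c>1-2a>1-2/(n+1)$ one gets $nc>n-2+2/(n+1)$, hence $f_0\ge n-2\ge1$ for $n\ge3$. For $n=2$, $f_0=0$ iff $l<1-2c$, so $c<\tfrac12$; together with $c>1-2a$ this forces $a>\tfrac14$. The conditions $f_1=f_2=1$ are automatic, from $2c+a>2-3a>1$ and $l+2c+3a<1+3a<2$.
\item \emph{$\varepsilon=1$, negative.} Use $f_0\le n$, which holds because $l+nc<n+1$. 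Then $S\le n+(n-1)(n+1)+(n+1)=n(n+2)$, with equality iff $(f_0,f_1,f_2)=(n,n+1,n+1)$, i.e.\ $l>n+1-nc-a$. This window is nonempty iff $c>1-a/n$, which already implies $c>1-2a$; this is Negative~B.
\end{itemize}
With these bounds in place your argument coincides with the paper's case analysis. Without them, the exhaustiveness of the four rows is unproved.
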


\begin{proof}
The seed is unitarizable for the stated range of $a$.
Its free-group generators have eigenvalues
$\tau$ (with multiplicity $n$) and $\tau q^2$.
Neither is $1$ under the hypotheses, so $K=0$ and $R=n(n+1)$.
Write $\{x\}=x-\lfloor x\rfloor$ and set
\[
\begin{gathered}
h=\lfloor c+2a\rfloor\in\{0,1\},\qquad
b_0=\lfloor nc\rfloor,\quad b_1=\lfloor nc+a\rfloor,\quad
b_2=\lfloor nc+(n+1)a\rfloor,\\
(\gamma_1,\ldots,\gamma_{n+1})=
\bigl(\{nc\},\underbrace{\{nc+a\},\ldots,\{nc+a\}}_{n-1},
\{nc+(n+1)a\}\bigr).
\end{gathered}
\]
By Lemma~\ref{lem:fulltwist}, these $\gamma_k$ are the eigenangles
of $P_n$, counted with multiplicity. The sum of the seed eigenangles
is $n(n+1)c+2na-nh$, whereas
\[
\sum_k\gamma_k=n(n+1)c+2na-b_0-(n-1)b_1-b_2.
\]
Consequently Theorem~\ref{thm:import-sig} gives
\begin{equation}\label{eq:additional-floor-signature}
\begin{gathered}
w=b_0+(n-1)b_1+b_2-nh,\qquad
m(l)=w+\#\{k:\gamma_k+l\ge1\},\\
\sig\Ht(\ee l)=(R-m(l),m(l)).
\end{gathered}
\end{equation}
There is no equality $\gamma_k+l=1$ at a nonexceptional parameter.

\emph{Case $h=0$: positive definiteness.}
Here $c<1-2a$.
Since
$nc+(n+1)a<n-(n-1)a<n$, all three $b_j$ lie between
$0$ and $n-1$. Thus $0\le w\le n^2-1=R-(n+1)$.
Positive definiteness requires $m(l)=0$, hence $w=0$;
this holds exactly when $b_2=0$, or $nc+(n+1)a<1$.
The crossing count is then zero exactly for
$0<l<1-\max_k\gamma_k=1-nc-(n+1)a$, giving the first row.

\emph{Case $h=0$: negative definiteness.}
The equality $m(l)=R$ forces both
$w=n^2-1$ and a crossing count of $n+1$.
The former requires $b_0=b_1=b_2=n-1$, while the latter
requires every $\gamma_k>0$.
Together these conditions are equivalent to $nc>n-1$.
In this case $\min_k\gamma_k=nc-(n-1)$, so all crossings have
occurred exactly when $l>1-\min_k\gamma_k=n(1-c)$.
This is Negative A.

\emph{Case $h=1$: negative definiteness.}
Here $c>1-2a$.
The upper bounds $b_0\le n-1$ and $b_1,b_2\le n$ again give
$w\le n^2-1$.
Negative definiteness therefore requires
$(b_0,b_1,b_2)=(n-1,n,n)$ and every $\gamma_k>0$.
These conditions are equivalent to $nc+a>n$.
Here the smallest eigenangle is $nc+a-n$: the other two distinct
entries exceed it by $1-a$ and $na$, respectively.
Hence the negative window is $n+1-nc-a<l<1$, giving Negative B.

\emph{Case $h=1$: positive definiteness.}
If $n\ge3$, the lower bounds
$b_0,b_1\ge n-2$ and $b_2\ge n-1$ give
\[
w\ge n^2-2n-1>0,
\]
so $m(l)=0$ is impossible.
If $n=2$, then $b_0\ge0$ and $b_1,b_2\ge1$; thus $w\ge0$,
and $w=0$ holds exactly for $(b_0,b_1,b_2)=(0,1,1)$.
Under $c>1-2a$ and $0<a<1/3$, this is equivalent to
$1-2a<c<1/2$, which also forces $a>1/4$.
The eigenangles are then $2c$, $2c+a-1$, and $2c+3a-1$.
Since $3a<1$, the largest is $2c$, giving the second row.
These cases exhaust $h\in\{0,1\}$ and both definite signs.
\end{proof}

\begin{example}[A negative definite window]
\label{ex:additional-negative-window}
For $n=2$ and $(a,c)=(1/100,9/10)$, Negative A applies.
Here $w=3$ and the boundary-loop eigenangles are
$4/5$, $81/100$, and $83/100$.
Thus $\sig\Ht(\ee l)=(0,6)$ for $1/5<l<1$,
and $-\Ht(\ee l)$ supplies a positive definite invariant form.
\end{example}

\begin{example}[An additional positive definite window]
\label{ex:additional-positive-window}
For $n=2$ and $(a,c)=(3/10,9/20)$, the second row applies.
Here $w=0$ and the boundary-loop eigenangles are
$9/10$, $1/5$, and $4/5$.
Thus $\sig\Ht(\ee l)=(6,0)$ for $0<l<1/10$.
\end{example}

On $\mathcal T_n$, neither the seed eigenangles nor the boundary-loop
eigenangles require reduction modulo $1$, and the positive window
has the simple description in the first row. The other three rows also
yield unitary realizations after a possible change of sign. We do not
extend the main text's unitary decomposition, gate conclusions or
synthesis estimates to those regions here. This restriction does not
narrow the domain of algebraic identities already proved more generally:
Proposition~\ref{prop:charpoly} gives the ambient Long--Moody
characteristic polynomial for every $q,\tau\in\C^\times$.

The classification above concerns the canonical form only; it does not
classify the existence of an arbitrary positive definite invariant
form. It excludes $c=0$ and $c+2a=1$, where $K\ne0$ and the dimension
changes, and it does not classify definite quotients at exceptional
values of $\lambda$.

\section*{Acknowledgements}
The author thanks Rapha\"el Belliard, Manabu Hagiwara, Greyson
Potter, and Megan van Zyl for helpful discussions.

The author is especially grateful to Asian Oceanian Women in Mathematics
and Yukari Ito. Participation in its activities provided opportunities
to meet researchers in quantum computing, and these encounters gave
rise to the present work.

This work was supported by JST SPRING, Grant Number JPMJSP2109.

This manuscript was prepared with the assistance of AI language
models, which were used for literature search, drafting and editing
the exposition, and assistance with code for symbolic, numerical and
finite-field computational certificates. The author takes responsibility for the mathematical content.

\section*{Statements and Declarations}
The author has no competing interests to declare. Data availability:
the scripts reproducing all numerical and symbolic checks are provided
as ancillary files.


\begin{thebibliography}{99}\small
\bibitem{FiniteCompanion}
H.~Negami,
\emph{Finite images of braid group representations and algebraic solutions of KZ-type equations},
preprint, \href{https://arxiv.org/abs/2609.25805v1}{arXiv:2609.25805v1} (2026).

\bibitem{BauerLloyd}
A.~Bauer, S.~Lloyd,
\emph{Quadratic tensors as a unification of Clifford, Gaussian, and free-fermion physics},
preprint, \href{https://arxiv.org/abs/2601.15396v1}{arXiv:2601.15396v1} (2026).

\bibitem{GeorgievIsing}
L.~S.~Georgiev,
\href{https://doi.org/10.1088/1751-8113/42/22/225203}{\emph{Ultimate braid-group generators for coordinate exchanges of Ising anyons from the multi-anyon Pfaffian wavefunctions}},
J.\ Phys.\ A: Math.\ Theor.\ 42 (2009), 225203;
\href{https://arxiv.org/abs/0812.2334v2}{arXiv:0812.2334v2}.

\bibitem{AGW}
A.~Ahlbrecht, L.~S.~Georgiev, R.~F.~Werner,
\href{https://doi.org/10.1103/PhysRevA.79.032311}{\emph{Implementation of Clifford gates in the Ising-anyon topological quantum computer}},
Phys.\ Rev.\ A 79 (2009), 032311.

\bibitem{FunarKohno}
L.~Funar, T.~Kohno,
\href{https://doi.org/10.1007/s10711-013-9847-0}{\emph{On Burau's representations at roots of unity}},
Geom.\ Dedicata 169 (2014), 145--163; \href{https://arxiv.org/abs/0907.0568}{arXiv:0907.0568}.

\bibitem{BT}
S.~Bigelow, J.~P.~Tian,
\href{https://doi.org/10.1142/S0219199708003186}{\emph{Generalized Long--Moody representations of braid groups}},
Commun.\ Contemp.\ Math.\ 10 (2008), suppl.~1, 1093--1102; \href{https://arxiv.org/abs/0807.2881}{arXiv:0807.2881}.

\bibitem{AJL}
D.~Aharonov, V.~Jones, Z.~Landau,
\href{https://doi.org/10.1007/s00453-008-9168-0}{\emph{A polynomial quantum algorithm for approximating the Jones polynomial}},
Algorithmica 55 (2009), 395--421; \href{https://arxiv.org/abs/quant-ph/0511096}{arXiv:quant-ph/0511096}.

\bibitem{JonesHecke}
V.~F.~R.~Jones,
\href{https://doi.org/10.2307/1971403}{\emph{Hecke algebra representations of braid groups and link polynomials}},
Ann.\ of Math.\ (2) 126 (1987), 335--388.

\bibitem{Wenzl}
H.~Wenzl,
\href{https://doi.org/10.1007/BF01404457}{\emph{Hecke algebras of type $A_n$ and subfactors}},
Invent.\ Math.\ 92 (1988), 349--383.

\bibitem{FLW}
M.~H.~Freedman, M.~Larsen, Z.~Wang,
\href{https://doi.org/10.1007/s002200200645}{\emph{A modular functor which is universal for quantum computation}},
Comm.\ Math.\ Phys.\ 227 (2002), 605--622.

\bibitem{HN}
K.~Hiroe, H.~Negami,
\emph{Long--Moody construction of braid representations and Katz middle convolution},
preprint, \href{https://arxiv.org/abs/2303.05770v1}{arXiv:2303.05770v1} (2023).

\bibitem{HDM}
E.~Hostens, J.~Dehaene, B.~De~Moor,
\href{https://doi.org/10.1103/PhysRevA.71.042315}{\emph{Stabilizer states and Clifford operations for systems of arbitrary dimensions and modular arithmetic}},
Phys.\ Rev.\ A 71 (2005), 042315; \href{https://arxiv.org/abs/quant-ph/0408190v2}{arXiv:quant-ph/0408190v2}.

\bibitem{Johnston}
N.~Johnston,
\href{https://doi.org/10.1080/03081087.2011.596540}{\emph{Characterizing operations preserving separability measures via linear preserver problems}},
Linear Multilinear Algebra 59 (2011), 1171--1187; \href{https://arxiv.org/abs/1008.3633}{arXiv:1008.3633}.

\bibitem{Katz}
N.~M.~Katz,
\href{https://doi.org/10.1515/9781400882595}{\emph{Rigid Local Systems}},
Ann.\ of Math.\ Studies 139, Princeton Univ.\ Press, 1996.

\bibitem{Long}
D.~D.~Long,
\href{https://doi.org/10.4310/CAG.1994.v2.n2.a3}{\emph{Constructing representations of braid groups}},
Comm.\ Anal.\ Geom.\ 2 (1994), 217--238.

\bibitem{NKLM}
H.~Negami,
\emph{Long--Moody construction of braid group representations and Haraoka's multiplicative middle convolution for KZ-type equations},
preprint, \href{https://arxiv.org/abs/2503.14840v3}{arXiv:2503.14840v3} (2026).

\bibitem{NHecke}
H.~Negami, \emph{Hecke algebra representations from the Katz--Long--Moody construction}, preprint,
\href{https://arxiv.org/abs/2608.01727v2}{arXiv:2608.01727v2} (2026).

\bibitem{Nsig}
H.~Negami, \emph{A closed signature formula for the Katz--Long--Moody Hermitian form}, preprint,
\href{https://arxiv.org/abs/2609.11793v1}{arXiv:2609.11793v1} (2026).

\bibitem{RW}
E.~C.~Rowell, Z.~Wang, \href{https://doi.org/10.1007/s00220-011-1386-7}{\emph{Localization of unitary braid group representations}}, Comm.\ Math.\ Phys.\ 311 (2012), 595--615.

\bibitem{ST}
A.~Souli\'e, A.~Takano, \href{https://doi.org/10.1016/j.topol.2022.108393}{\emph{Extensions of the Tong--Yang--Ma representation}}, Topology Appl.\ 325 (2023), 108393; \href{https://arxiv.org/abs/2012.03767v5}{arXiv:2012.03767v5}.

\bibitem{Squier}
C.~C.~Squier, \href{https://doi.org/10.1090/S0002-9939-1984-0727232-8}{\emph{The Burau representation is unitary}}, Proc.\ Amer.\ Math.\ Soc.\ 90 (1984), 199--202.

\bibitem{Watrous}
J.~Watrous, \href{https://doi.org/10.1017/9781316848142}{\emph{The Theory of Quantum Information}}, Cambridge University Press, 2018.

\bibitem{WatrousCourse}
J.~Watrous, \emph{Understanding Quantum Information and Computation}, lecture notes, preprint,
\href{https://arxiv.org/abs/2507.11536v1}{arXiv:2507.11536v1} (2025).

\bibitem{Gottesman}
D.~Gottesman, \emph{The Heisenberg representation of quantum computers}, in \emph{Group22: Proceedings of the XXII International Colloquium on Group Theoretical Methods in Physics}, S.~P.~Corney, R.~Delbourgo, P.~D.~Jarvis (eds.), International Press, 1999, 32--43; expanded preprint, \href{https://arxiv.org/abs/quant-ph/9807006}{arXiv:quant-ph/9807006}.

\bibitem{JiangControlled}
G.-L.~Jiang, H.-R.~Wei, G.-Z.~Song, M.~Hua, \href{https://doi.org/10.1002/andp.202200317}{\emph{Synthesis and upper bound of Schmidt rank of bipartite controlled-unitary gates}}, Annalen der Physik 534 (2022), 2200317; \href{https://arxiv.org/abs/2209.04799v1}{arXiv:2209.04799v1}.

\bibitem{DiWei}
Y.-M.~Di, H.-R.~Wei,
\href{https://doi.org/10.1103/PhysRevA.87.012325}{\emph{Synthesis of multivalued quantum logic circuits by elementary gates}},
Phys.\ Rev.\ A 87 (2013), 012325;
\href{https://arxiv.org/abs/1302.0056v1}{arXiv:1302.0056v1}.

\bibitem{BarencoGates}
A.~Barenco, C.~H.~Bennett, R.~Cleve, D.~P.~DiVincenzo, N.~Margolus, P.~Shor, T.~Sleator, J.~A.~Smolin, H.~Weinfurter, \href{https://doi.org/10.1103/PhysRevA.52.3457}{\emph{Elementary gates for quantum computation}}, Phys.\ Rev.\ A 52 (1995), 3457--3467; \href{https://arxiv.org/abs/quant-ph/9503016v1}{arXiv:quant-ph/9503016v1}.

\bibitem{HarrowExact}
A.~W.~Harrow, \href{https://doi.org/10.26421/QIC9.9-10-4}{\emph{Exact universality from any entangling gate without inverses}}, Quantum Inf.\ Comput.\ 9 (2009), 773--777; \href{https://arxiv.org/abs/0806.0631v1}{arXiv:0806.0631v1}.

\bibitem{HLR}
J.~Huang, J.~B.~Lewis, V.~Reiner, \href{https://doi.org/10.1112/jlms.12013}{\emph{Absolute order in general linear groups}}, J.\ London Math.\ Soc.\ (2) 95 (2017), 223--247; \href{https://arxiv.org/abs/1506.03332v2}{arXiv:1506.03332v2}.

\bibitem{IKV}
P.~A.~Ivanov, E.~S.~Kyoseva, N.~V.~Vitanov, \href{https://doi.org/10.1103/PhysRevA.74.022323}{\emph{Engineering of arbitrary $U(N)$ transformations by quantum Householder reflections}}, Phys.\ Rev.\ A 74 (2006), 022323; \href{https://arxiv.org/abs/0708.2811}{arXiv:0708.2811}.

\bibitem{MIC}
E.~Malvetti, R.~Iten, R.~Colbeck, \href{https://doi.org/10.22331/q-2021-03-15-412}{\emph{Quantum circuits for sparse isometries}}, Quantum 5 (2021), 412; \href{https://arxiv.org/abs/2006.00016v2}{arXiv:2006.00016v2}.

\bibitem{SBM}
V.~V.~Shende, S.~S.~Bullock, I.~L.~Markov, \href{https://doi.org/10.1109/TCAD.2005.855930}{\emph{Synthesis of quantum-logic circuits}}, IEEE Trans.\ Computer-Aided Design Integr.\ Circuits Syst.\ 25 (2006), 1000--1010; \href{https://arxiv.org/abs/quant-ph/0406176v5}{arXiv:quant-ph/0406176v5}.

\bibitem{LiLuo}
L.~Li, J.~Luo, \href{https://doi.org/10.4230/LIPIcs.ICALP.2025.113}{\emph{Nearly optimal circuit size for sparse quantum state preparation}}, 52nd ICALP, LIPIcs 334 (2025), 113:1--113:19; \href{https://arxiv.org/abs/2406.16142v2}{arXiv:2406.16142v2}.

\bibitem{RupprechtWoelk}
F.~Rupprecht, S.~W\"olk, \href{https://doi.org/10.22331/q-2026-09-10-2208}{\emph{Sparse quantum state preparation with improved Toffoli cost}}, Quantum 10 (2026), 2208; \href{https://arxiv.org/abs/2601.09388v2}{arXiv:2601.09388v2}.

\bibitem{JordanWocjan} S.~P.~Jordan, P.~Wocjan,
\href{https://doi.org/10.1103/PhysRevA.80.062301}{\emph{Efficient quantum circuits for arbitrary sparse unitaries}},
Phys.\ Rev.\ A 80 (2009), 062301; \href{https://arxiv.org/abs/0904.2211v1}{arXiv:0904.2211v1}.

\bibitem{KrolAlArs} A.~M.~Krol, Z.~Al-Ars,
\href{https://doi.org/10.1103/PhysRevApplied.22.034019}{\emph{Beyond quantum Shannon decomposition: Circuit construction for
$n$-qubit gates based on block-$ZXZ$ decomposition}},
Phys.\ Rev.\ Applied 22 (2024), 034019; \href{https://arxiv.org/abs/2403.13692v2}{arXiv:2403.13692v2}.

\bibitem{GLPS} \begingroup\raggedright N.~Geer, A.~D.~Lauda, B.~Patureau-Mirand, J.~Sussan,
\href{https://doi.org/10.4153/S0008414X25101235}{\emph{Density and unitarity of the Burau representation from a
non-semisimple TQFT}}, Canad.\ J.\ Math., First View (2025), 1--29; \href{https://arxiv.org/abs/2402.13242v2}{arXiv:2402.13242v2}.\par\endgroup

\bibitem{CBS} W.~Crawley-Boevey, P.~Shaw,
\href{https://doi.org/10.1016/j.aim.2005.02.003}{\emph{Multiplicative preprojective algebras, middle convolution and
the Deligne--Simpson problem}}, Adv.\ Math.\ 201 (2006), 180--208; \href{https://arxiv.org/abs/math/0404186v2}{arXiv:math/0404186v2}.

\bibitem{KolpakovSwitch} \begingroup\raggedright A.~Kolpakov,
\emph{Burau representation, Squier's form, and non-Abelian anyons},
preprint, \href{https://arxiv.org/abs/2510.18186v4}{arXiv:2510.18186v4}
(2025; revised 2026).\par\endgroup

\bibitem{DawsonNielsenSK} C.~M.~Dawson, M.~A.~Nielsen,
\href{https://doi.org/10.26421/QIC6.1-6}{\emph{The Solovay--Kitaev algorithm}},
Quantum Inf.\ Comput.\ 6 (2006), 81--95; \href{https://arxiv.org/abs/quant-ph/0505030v2}{arXiv:quant-ph/0505030v2}.

\bibitem{KuperbergSK} \begingroup\raggedright G.~Kuperberg,
\emph{Breaking the cubic barrier in the Solovay--Kitaev algorithm},
preprint, \href{https://arxiv.org/abs/2306.13158v2}{arXiv:2306.13158v2}
(2023; revised 2025).\par\endgroup

\bibitem{DOVCholesky} Z.~Drma\v c, M.~Omladi\v c, K.~Veseli\'c,
\href{https://doi.org/10.1137/S0895479893244717}{\emph{On the perturbation of the Cholesky factorization}},
SIAM J.\ Matrix Anal.\ Appl.\ 15 (1994), 1319--1332.

\bibitem{NayakReview} C.~Nayak, S.~H.~Simon, A.~Stern, M.~Freedman, S.~Das Sarma,
\href{https://doi.org/10.1103/RevModPhys.80.1083}{\emph{Non-Abelian anyons and topological quantum computation}},
Rev.\ Mod.\ Phys.\ 80 (2008), 1083--1159; \href{https://arxiv.org/abs/0707.1889v2}{arXiv:0707.1889v2}.

\bibitem{DRWReview} C.~Delaney, E.~C.~Rowell, Z.~Wang,
\href{https://doi.org/10.15446/recolma.v50n2.62211}{\emph{Local unitary representations of the braid group and their
applications to quantum computing}}, Rev.\ Colombiana Mat.\ 50 (2016), 211--276; \href{https://arxiv.org/abs/1604.06429v1}{arXiv:1604.06429v1}.

\bibitem{DRMiddle} M.~Dettweiler, S.~Reiter,
\emph{On the middle convolution}, preprint,
\href{https://arxiv.org/abs/math/0305311v1}{arXiv:math/0305311v1} (2003).

\bibitem{GoubaultHouseholder} \begingroup\raggedright
T.~Goubault de Brugi\`ere, M.~Baboulin, B.~Valiron, C.~Allouche,
\href{https://doi.org/10.1016/j.cpc.2019.107001}{\emph{Quantum circuits synthesis using Householder transformations}},
Comput.\ Phys.\ Commun.\ 248 (2020), 107001; \href{https://arxiv.org/abs/2004.07710}{arXiv:2004.07710}.\par\endgroup

\bibitem{BrylinskiBrylinski} J.-L.~Brylinski, R.~Brylinski,
\href{https://doi.org/10.1201/9781420035377-5}{\emph{Universal quantum gates}},
in R.~K.~Brylinski and G.~Chen (eds.),
\emph{Mathematics of Quantum Computation}, Chapman \& Hall/CRC, 2002;
\href{https://arxiv.org/abs/quant-ph/0108062}{arXiv:quant-ph/0108062}.

\bibitem{MottonenStatePreparation} M.~M\"ott\"onen, J.~J.~Vartiainen, V.~Bergholm, M.~M.~Salomaa,
\href{https://doi.org/10.26421/QIC5.6-5}{\emph{Transformation of quantum states using uniformly controlled rotations}},
Quantum Inf.\ Comput.\ 5 (2005), 467--473; \href{https://arxiv.org/abs/quant-ph/0407010}{arXiv:quant-ph/0407010}.

\end{thebibliography}
\end{document}